\numberwithin{equation}{section}
\numberwithin{theorem}{section}
\DeclareMathOperator{\dist}{dist}
\newcommandx{\unsure}[2][1=]{\todo[linecolor=red,backgroundcolor=red!25,bordercolor=red,#1]{#2}}
\newcommandx{\change}[2][1=]{\todo[linecolor=blue,backgroundcolor=blue!25,bordercolor=blue,#1]{#2}}
\newcommandx{\info}[2][1=]{\todo[linecolor=OliveGreen,backgroundcolor=OliveGreen!25,bordercolor=OliveGreen,#1]{#2}}
\newcommandx{\improvement}[2][1=]{\todo[linecolor=Plum,backgroundcolor=Plum!25,bordercolor=Plum,#1]{#2}}
\newlist{steps}{enumerate}{1}
\setlist[steps, 1]{label = Step \arabic*:}
\begin{document}

\begin{bibunit}[my-plainnat]
\title{High-Dimensional Markov-switching Ordinary Differential Processes}
\author[1]{Katherine Tsai}
\author[2]{Mladen Kolar}
\author[3]{Sanmi Koyejo}
\affil[1]{Department of Electrical and Computer Engineering, University of Illinois Urbana-Champaign}
\affil[2]{Marshall School of Business, University of Southern California}
\affil[3]{Department of Computer Science, Stanford University}
\date{\today}

\maketitle

\begin{abstract}
We investigate the parameter recovery of Markov-switching ordinary differential processes from discrete observations, where the differential equations are nonlinear additive models. This framework has been widely applied in biological systems, control systems, and other domains; however, limited research has been conducted on reconstructing the generating processes from observations. In contrast, many physical systems, such as human brains, cannot be directly experimented upon and rely on observations to infer the underlying systems.
To address this gap, this manuscript presents a comprehensive study of the model, encompassing algorithm design, optimization guarantees, and quantification of statistical errors. Specifically, we develop a two-stage algorithm that first recovers the continuous sample path from discrete samples and then estimates the parameters of the processes. We provide novel theoretical insights into the statistical error and linear convergence guarantee when the processes are $\beta$-mixing. Our analysis is based on the truncation of the latent posterior processes and demonstrates that the truncated processes approximate the true processes under mixing conditions. We apply this model to investigate the differences in resting-state brain networks between the ADHD group and normal controls, revealing differences in the transition rate matrices of the two groups.
\end{abstract}

\noindent {\bf Keywords: high-dimensional time series; ordinary differential equations; regime switchings; latent models; expectation-maximization algorithm }

\section{Introduction}

Ordinary differential equations have been widely used to explore the dynamics of complex physical systems, including chemical reactions~\citep{boninsegna2018sparse}, disease progression~\citep{wu2005statistical}, and neuroscience~\citep{friston2003dynamic}. 
Several data-driven approaches have been proposed in recent years to estimate differential equations~\citep{dattner2015optimal, chen2017network, pfister2019learning}. The setting typically consists of the observation of a $p$-dimensional continuous function $X(t)=(X_1(t),\ldots, X_p(t))^\top$ and an initial value $X(0)=x_0$. The task is to recover the underlying differential equations $\dot{X}(t)=(f_1(t),\ldots, f_p(t))^\top$. From a practical standpoint, the continuous function $X(t)$ is rarely observed, and a discretely sampled and noisy version of $X(t)$ is observed instead:
\[
Y_n = X(t_n)+\varepsilon_n\quad n =1 ,\ldots, N,
\]
where $\varepsilon_n$ for $n=1,\ldots, N$ are independent additive noises. 
In the context of the data-driven approach,  one often parameterizes the function $X(t)$ with parameter $\Theta^\star$, namely $X(t;\Theta^\star)$, and aims to recover the parameter set $\Theta^\star$. The major interest yet often goes beyond the recovery of the parameter set but uses the parameters as proxies to understand the complex interactions of ``nodes", namely $X_1,\ldots, X_p$, in dynamical systems. If the differential equation $f_j(t)$ is a function of $X_i$, then the dynamics of $X_j$ is said to be controlled by $X_i$. In the graphical representation, we say that there is an directed edge from $i$ to $j$. Discovering such relations help us better understand the brain network mechanism~\citep{friston2003dynamic} and the gene expression~\citep{chen1999modeling}. However, prior methods~\citep{dattner2015optimal, chen2017network, pfister2019learning} only focus inferring a single graph from the dynamic systems, while in practice, the relationship between nodes often changes over time. For example, there is growing evidence that the brain networks are time-varying~\citep{lurie2020questions}, necessitating a need to develop more expressive ODE models to capture the complex biological systems.

Our proposal to model the time-varying ODEs is motivated by the neuronal dynamics at the resting-state. Several scientific findings suggest that the dynamics of brain networks follow  repetitive patterns, which can be best described as regime switchings~\citep{vidaurre2017brain, park2021state}. That is, for each regime, $\ell$, there are associated ODEs $f^\ell_i(t)$ for $i=1,\ldots,p$ that encode the underlying graph. Oftentimes, the exogenous mechanism that causes the regime switching is unobserved and is naturally formulated as a hidden Markov model. Consequently, we propose a hidden Markov model framework where the emission processes are modeled as ordinary differential processes. Our model differs from prior work in neuroscience applications in the sense that we aim to estimate the directed edges, i.e., the effective connectivity~\citep{friston2003dynamic}, compared to most prior work that focuses on measuring the undirected edges, i.e., the functional connectivity~\citep{vidaurre2017brain, tsai2022nonconvex, tsai2024latent}. Such modeling provides deeper understanding about the generation processes of brain signals. Conversely, our modeling perspective resembles ~\citet{park2021state} which took the Bayesian approach~\citep{friston2015empirical} and studied the low-dimensional setting. We further allow the differential equations to be non-linear. 

We study the high-dimensional setting, where the number of differential equations $p$ might exceed the number of observed time points $N$. Under this scenario, imposing a sparsity related penalty function has shown improved estimation performance empirically and theoretically. Such practice has been employed in several methodologies such as recovering the structures of graphical models~\citep{friedman2008sparse}, principal component analysis~\citep{zou2006sparse}, and linear models~\citep{yuan2006model}. However, imposing sparsity structures on the ODEs is more challenging as the form of $f_i^\ell(t)$ is unknown. To address this issue, ~\citet{henderson2014network, chen2017network} approximated $f_i^\ell(t)$ with an additive model, which consequently reduces the computational cost and allows us to directly impose sparse structures. The backbone of our model builds upon~\citet{chen2017network} and we integrate it to a Markov-switching framework.

\subsection{Related work and our contributions}
A close sibling to our model is the Markov-switching vector autoregressive model, which has been a popular model in many domains~\citep{krolzig2013markov}. In this model, the dynamics of the time-series takes the vector autoregressive form with the underlying autoregressive parameter controlled by a latent finite-state Markov chain. The expectation-maximization algorithm has been widely applied to estimate the log-likelihood in the presence of unobserved latent variables~\citep{dempster1977maximum, hamilton1989new}. 
 Similar to the optimization procedure of the Markov-switching autoregressive model, we use the Expectation Maximization (EM) algorithm to estimate the parameters, which is often known as the Baum–Welch algorithm~\citep{baum1970maximization}. 
\citet{monbet2017sparse, chavez2023penalized} extended the standard Markov-switching autoregressive model to the high-dimensional setting by incorporating sparse structures on the parameters. Recently,~\citet{li2022estimation} pushed the field forward by investigating the theoretical properties of the convergence guarantee and the statistical error.


Outside the field of statistics and data science, Markov-switching differential processes, sometimes known as Markov-modulated dynamical systems, have been well-explored in the field of control theory~\citep{khasminskii2011stochastic}. Existing studies~\citep{yin2010hybrid} focus on analyzing the ergodicity and the stability of the system under switchings. Understanding such properties assists in designing control systems that are stable. In contrast, our work focuses on learning the dynamical system from data, i.e., recovering the switching brain networks from fMRI signals. Earlier work from~\citet{hahn2009parameter} studied parameter recovery from data from the Bayesian perspective. 
However, to the best of our knowledge, there remains sparse work on the theoretical aspects of this model, in both the low-dimensional and high-dimensional settings. 

Our contributions include (i) designing an algorithm with provable convergence guarantees, (ii) analyzing the statistical error under finite sample size, and (iii) investigating the conditions when graph recovery is feasible. Establishing the above theoretical guarantees face several challenges as the observed samples are dependent and hence standard concentration inequalities for \emph{i.i.d.} data can not be applied here. In contrast, there exists rich literature for establishing concentration inequalities for stochastic processes, with prime focus on mixing processes~\citep{vidyasagar2013learning}. \citet{yu1994rates, karandikar2002rates} blocked the sequences into chunks and uses the mixing property to simplifies the situation to \emph{i.i.d.} setting. ~\citet{merlevede2011bernstein} developed a Bernstein
type bound for mixing processes. 
Fortunately, as discussed later, there exists a rich class of Markov-switching ODEs that are mixing. We exploit this property and establish statistical guarantees by adopting and extending the results from~\citet{merlevede2011bernstein}.

It is known that an EM algorithm typically converges to local optima or saddle points~\citep{mclachlan2007algorithm}. ~\cite{wang2014high, yi2015regularized, balakrishnan2017statistical} have carefully investigated further when such undesirable behaviors would and would not occur. 
In particular, \citet{balakrishnan2017statistical} showed that the EM algorithm exhibits a linear rate of convergence to a neighborhood of the global optimum under a suitable choice of initial estimator and local regularity conditions. 
However, all this work focuses on the analysis of \emph{i.i.d.} data, which is not applicable in our case. 
More recently, ~\citet{yang2017statistical} studied the convergence of the hidden Markov model with isotropic Gaussian emission, where the sequence of the observed samples become dependent because of the hidden Markov chain; ~\citet{li2022estimation} studied the convergence property of the Markov-switching autoregressive model. To derive theories under the dependent samples setting, both~\citet{yang2017statistical} and~\citet{li2022estimation} employed a truncated EM mechanism to approximate the original EM algorithm. We adopt this idea in our analysis. However, establishing the theoretical guarantees is still non-trivial, as we need several additional steps and sophisticated techniques to show that the truncated EM well approximates the original EM under the proposed log-likelihood function.




The rest of paper is structured as follows. In Section~\ref{sec:background}, we introduce the setting of the Markov-switching ODEs, the conditions of the underlying data generation process, and finally the problem of interest. In Section~\ref{sec:method}, we propose a two-step collocation framework to carry out the estimation. The first step is to recover the underlying continuous trajectory from the discrete observations followed by a Expectation-Maximization (EM) algorithm to estimate the parameters. In Section~\ref{sec:theory}, we derive the convergence guarantee of the proposed algorithm and study the conditions when parameter recovery is feasible. In section~\ref{sec:simulation}, we demonstrate that the performance of the proposed model through simulated tasks.  In Section~\ref{sec:experiment}, we validate the proposed model on real data. 
We conclude the manuscript by discussing some open problems as future directions in Section~\ref{sec:discussion}.

\section{Background and problem setup}\label{sec:background}
We begin with introducing common notation in Section~\ref{ssec:notation}. In Section~\ref{ssec:setup}, we discuss the details of the generation processes of the underlying stochastic processes and introduce the problem. Finally, in Section~\ref{ssec:approximateODE}, we discuss approximating the differential equations under a slow switching rate. 
\subsection{Notation}\label{ssec:notation}
Define $L_2([0,1])$ be the space of $L_2$-integrable functions on $[0,1]$. For any function $f,g\in L_2([0,1])$, define $\dotp{f}{g}=\int_0^1f(t)g(t)\mathrm{d}t$ and $\opnorm{f}{}=(\dotp{f}{f})^{1/2}$. Let $x\in\RR^p$ be a vector, $\norm{x}_2=(\sum_{i=1}^p x_i^2)^{1/2}$, $\norm{x}_1=\sum_{i=1}^p\abr{x_i}$, and $\norm{x}_\infty = \max_{i=1,\ldots,p} \abr{x_i}$. Let $A\in\RR^{d\times d}$ be a positive semi-definite matrix, define $\norm{x}_A=(x^\top A x)^{1/2}$. Let $x_i\in \RR^m$ and $x=(x_1^\top,\ldots,x^\top_p)^\top\in\RR^{pm}$ be a stacked vector and $A=\{A_i\in \RR^{m\times m}:$ $i=1,\ldots,p\}$ with $A_i$ being a positive semi-definite matrix. We denote $\norm{x}_{\infty, A}=\max_i\norm{x_i}_A$, $\norm{x}_{1,A}=\sum_i\norm{x_i}_A$. Denote $\sigma_{\min}(A)$ as the minimum nonzero singular value of $A$, and $\sigma_{max}(A)$ as the maximum singular value of $A$. Given $x,y$, we denote $x\lesssim y$ if there exists a constant $c>0$ such that $x\leq cy$. 

\subsection{Setup}\label{ssec:setup}
To begin, we consider two unobserved continuous-time stochastic processes $X(t), Z(t)$ for $t\in[0,1]$ and one observed discrete stochastic process $Y_n$ for $n=0,\ldots, N$, sampled uniformly across $[0, 1]$. We define the sampling period as $h=1/N$ and $t_n = n/N$. 
Suppose that $X(t)\in\RR^p$ is the differential process and $(Z(t))_{t\in[0,1]}, Z(t)\in\{1,\ldots,k\}$ is a finite-state continuous-time Markov chain. In the neuroscience application, $\{Y_n\}_{n=\{0,\ldots, N\}}$ is the observed time-course of fMRI. $(X(t))_{t\in[0,1]}$ represents the neuron dynamics~\citep{friston2003dynamic} filtered by the haemodynamic response~\citep{rajapakse1998modeling} and the underlying dynamics are governed by the brain states $(Z(t))_{t\in[0,1]}$~\citep{vidaurre2017brain,vidaurre2018discovering}. Hence, both $(X(t))_{t\in[0,1]}$, the neuronal activity, and $(Z(t))_{t\in[0,1]}$, the brain state, are unobserved.

 Assume that $(Z(t))_{t\in[0,1]}$ is time-homogeneous, irreducible and positive recurrent, which admits a unique stationary distribution. Define the transition rate matrix as $Q^\star\in\RR^{k\times k}$. 

Now let us describe the observed processes. Let $\{Y_n\}_{n=\{0,\ldots, N\}}$, $Y_n=(Y_{n,1},\ldots, Y_{n,p})^\top\in\RR^p$ be the discrete-time noisy observations of the ODE process $X(t)$, sampled uniformly at $0=t_0,\ldots, t_N=1$.  
We study the following Markov-switching additive ODE model:
\begin{align}
    Y_n &= {{X(t_n)}}+\varepsilon_n\sigma^\star\label{eq:definey}\\
    \dot{X}(t) &= 
    \begin{bmatrix}
    \frac{\mathrm{d}X_1(t)}{\mathrm{d}t}
    \\\vdots\\
    \frac{\mathrm{d}X_p(t)}{\mathrm{d}t}
    \end{bmatrix}
    =
    \begin{bmatrix}
    \sum_{j}f^{Z(t)}_{1j}(X_j(t))
    \\\vdots\\
    \sum_{j}f^{Z(t)}_{pj}(X_j(t))
    \end{bmatrix}\label{eq:definex}
\end{align}
where $f_{ij}^\ell:\RR\rightarrow\RR$ for $i,j=1,\ldots,p$, $\ell=1,\ldots, k$ and  $\varepsilon_n\sim\mathcal{N}(0, I)$ is an \emph{i.i.d.} noise variable for $n=0,\ldots, N$. Here we assume the differential equation is a nonlinear additive model. The additive model is inspired by~\citet{henderson2014network,chen2017network}, which have demonstrated that the nonlinear additive model is a good approximation of the nonparametric differential process $\dot{X}_i(t)=f_i(X(t))$ for $i=1,\ldots,p$ while retaining computational tractability. Our framework is closely related to~\citet{chen2017network}, which does not consider the switching structure. 

It appears that the function $f_{ij}^{\ell}(\cdot)$ defined in~\eqref{eq:definex} takes an unknown form. 
We adopt a similar idea from~\citet{henderson2014network, chen2017network} to approximate the unknown function $f_{ij}^{\ell}(\cdot)$ with a truncated basis expansion.  Consider a finite-dimensional basis $g(\cdot)=(g_1(\cdot),\ldots,g_m(\cdot))^\top\in\RR^m$ where $g_i$ and $g_j$ are orthonormal for $i\neq j$ and 
\begin{equation}\label{eq:approx_f}
f_{ij}^\ell(X_j(t)) = \theta_{ij}^{\ell\star}g(X_j(t))+\delta_{ij}^{\ell}(X_j(t))\quad i,j=1,\ldots, p,\;\ell=1,\ldots, k,
\end{equation}
where $\theta_{ij}^{\ell\star}\in \RR^{1\times m}$ is a row vector and $\delta_{ij}^{\ell}(X_j(t))$ is the residual function.

\paragraph{Approximate Additive ODEs under Slow Switching}\label{ssec:approximateODE}
We have laid out the form of the differential process with parameters $\theta^{\ell\star}_{ij}$ and the basis functions $g(\cdot)$. In this section, we describe the parametric form of the observed process $\{Y_n\}_{n\in\NN\cup\{0\}}$ and its approximation. With simple calculation, we can write the $i$-th node of the observed $Y_n=(Y_{n,1},\ldots, Y_{n,p})^\top$ as 
\begin{align*}
    Y_{n,i}&= X_i(t_0)+\sum_{j=1}^p\int_0^{t_n} \theta_{ij}^{Z(u)\star}g(X_j(t))\mathrm{d}u + \sum_{j=1}^p\int_0^{t_i}\delta_{ij}^{Z(u)}(X_j(u))\mathrm{d}u + \sigma^\star\varepsilon_{n,i}\\
    &= X_i(t_{n-1}) + \sum_{j=1}^p\int_{t_{i-1}}^{t_i} \theta_{ij}^{Z(u)\star}g(X_j(t))\mathrm{d}u 
    + \sum_{j=1}^p\int_{t_{i-1}}^{t_i}\delta_{ij}^{Z(u)}(X_j(u))\mathrm{d}u
    +\sigma^\star\varepsilon_{n,i}.
\end{align*}

We consider the case that the switching rate is slower than the sampling rate. That is, within two samples, the magnitude of the difference $\left\|\theta_{ij}^{Z(t_n)\star}\int_{t_{n-1}}^{t_n} g(X_j(t)) \mathrm{d}u-\int_{t_{i-1}}^{t_i} \theta_{ij}^{Z(u)\star}g(X_j(t))\mathrm{d}u\right\|_2$ is small for $i,j=1,\ldots,p$ and $\ell=1,\ldots, k$. In this case, we can write the generation process of $Y_{i,n}$ as

\begin{align}
    Y_{n, i}&= X_i(t_{n-1}) + \sum_{j=1}^p\theta_{ij}^{Z(t_n)\star}\cbr{\int_{t_{n-1}}^{t_n} g(X_j(u)) \mathrm{d}u} 
    +\rho_{n,i}
    +
    r_{n,i}+\sigma^\star\varepsilon_{n,i}\label{eq:approxY},
\end{align}
where 
\begin{align*}
    \rho_{n,i}&=\sum_{j=1}^p\int_{t_{i-1}}^{t_i} \theta_{ij}^{Z(u)\star}g(X_j(t))\mathrm{d}u - \sum_{j=1}^p\theta_{ij}^{Z(t_n)\star}\cbr{\int_{t_{n-1}}^{t_n} g(X_j(u)) \mathrm{d}u}; \\
    r_{n,i}&=\sum_{j=1}^p\int_{t_{i-1}}^{t_i}\delta_{ij}^{Z(u)}(X_j(u))\mathrm{d}u.
\end{align*}

In the neuroscience application, the pattern of $f_{ij}^{\ell\star}$ encodes the ``effective connectivity'' of brain networks~\citep{friston2003dynamic}. If $f_{ij}^{\ell\star}$ is not a zero function, then node $j$ influences the dynamics of node $i$ under brain state $\ell$. Hence, we say that there is a directional effect from $j$ to $i$, denoted as $j\rightarrow i$. In order to assess the effective connectivity of brain networks from the observed fMRI signals, we formulate this as a graph recovery problem. At state $\ell$, we define the edge set as
\[
\tilde{E}^\ell = \left\{(j,i): f_{ij}^\ell\neq 0\right\}.
\]
Our primary goal is to recover edge set $\tilde{E}^\ell$ for $\ell=1,\ldots,k$ from $\{Y_n\}_{n=0,\ldots, N}$. As recovering $f_{ij}^\ell$ is computationally intractable, an alternative is to estimate 
\[
E^\ell = \left\{(j,i):\norm{\theta_{ij}^{\ell\star}}_2>0\right\}. 
\]
According to~\eqref{eq:approx_f}, if we choose a  good enough number of basis functions, $f_{ij}^\ell(\cdot)$ is well-approximated by $\theta_{ij}^{\ell\star} g(\cdot)$. Hence, we can expect that the difference between $E^\ell$ and $\tilde{E}^\ell$, namely $(E^\ell\cup \tilde{E}^\ell)-(E^\ell\cap \tilde{E}^\ell)$, is an empty set or with small cardinality. To be more rigorous, we make the following assumptions.
\begin{assumption}\label{assumption:edgeset}
    
    There exists a set of basis functions $\{g_i(\cdot):i=1,\ldots, m\}$ such that $\tilde{E}^\ell=E^\ell$ for $\ell=1,\ldots,k$. 
\end{assumption}

Our goal is to recover the transition rate matrix $Q^\star\in\Omega_Q$, where $\Omega_Q$ is the set of transition rate matrices whose Markov chain is irreducible and positive recurrent, and the parameter set $\{\theta_{ij}^{\ell\star}:\ell=1,\ldots k, i,j=1,\ldots,p\}$ under the high-dimensional setting that $p \text{ (dimension)}$ is much greater than $ N \text{ (time points)}$.
Define the set of parameters $\Theta=(Q, \{\theta_{ij}^\ell:\ell=1,\ldots k, i,j=1,\ldots,p\}, \sigma^2)$ and the true parameter set $\Theta^\star=(Q^\star, \{\theta_{ij}^{\ell\star}:\ell=1,\ldots k, i,j=1,\ldots,p\}, \sigma^{\star2})$. We define the search space $\Omega = \Omega_Q \times \{\theta_{ij}^{\ell\star}\in\RR^{m}: \ell=1,\ldots k, i,j=1,\ldots,p\}\times \RR_+$.

\subsection{Mixing and Stationary Process}

Consider the joint discrete sampled process $(Z(t_n),X(t_n), Y_n)$, we make the assumption that the joint process exhibit the stationary and geometric $\beta$-mixing property, a key component  for analyzing the statistical properties later on. 
\begin{assumption}\label{assumption:stationary}
    The joint process $(Z(t_n),X(t_n), Y_n)$ is strictly stationary; that is, for every $n\in\NN$
    \begin{multline*}
    \rbr{(Z(t_n), X(t_n), Y_n),\ldots, (Z(t_{n+n'}), X(t_{n+n'}), Y_{n+n'})}\stackrel{d.}{=}\\
    \rbr{(Z(t_{n+\tau}), X(t_{n+\tau}), Y_{n+\tau}),\ldots, (Z(t_{n+n'+\tau}), X(t_{n+n'+\tau}), Y_{n+n'+\tau})},    
    \end{multline*}
    where $\stackrel{d.}{=}$ denotes equality in distribution. 
\end{assumption}
From a high-level perspective, the mixing conditions describe the dependency of a stochastic process: given a stochastic process $\{W_n\}_{n\in\NN}$, if we take any two random variables $W_n$, $W_{n'}$ from the process, they will become asymptotically independent as the time difference $|n-n'|$ goes to infinity. 
 These properties are well-established in the stochastic processes literature~\citep{bradley2005basic,meyn2012markov} and are standards to apply the concentration inequalities extending from \emph{i.i.d.} settings~\citep{merlevede2011bernstein, wong2020lasso}. We define the $\beta$-mixing property below.
 \begin{definition}[$\beta$-mixing]\label{definition:discretebetamixing}
     Given $\ell\in\NN\cup\{0\}$, the $\beta$-mixing coefficient is defined as,
     \[
     \beta(\ell)=\sup_n\beta(\Fcal_{-\infty}^n, \Fcal^{\infty}_{n+\ell})
     =
     \sup_{n}\norm{P_{n,\ell}-P_{-\infty}^n\otimes P_{n+\ell}^\infty}_{\text{TV}},
     \]
     where $\Fcal_{-\infty}^n=\sigma(\{X_u:-\infty\leq u\leq n)$, $\Fcal_{n+\ell}^{\infty}=\sigma(\{X_u:n+\ell\leq u\leq \infty\})$.
     The distribution $P_{n,\ell}$ is associated  with the $\sigma$-field $(\Fcal_{-\infty}^n\vee \Fcal_{n+\ell}^{\infty})$, $P_{-\infty}^n$ is associated with the $\sigma$-field $\Fcal_{-\infty}^n$, and $P_{n+\ell}^\infty$ is associated with the $\sigma$-field $\Fcal_{n+\ell}^{\infty}$. A stochastic
process is said to be absolutely regular, or $\beta$-mixing,
if $\beta(\ell)\rightarrow 0$ as $\ell\rightarrow\infty$.
\end{definition}

We say that a $\beta$-mixing process is geometrically $\beta$-mixing if the coefficient decays at a exponential rate:

\begin{definition}[Geometric $\beta$-mixing]\label{define:geobetamixing}There exists a $\gamma\in(0,1)$ and a constant $c>0$ such that
\[
\beta(\ell)\leq 2\exp(-c{\ell}).
\]
\end{definition}
Hence, we make the following assumption.

\begin{assumption}\label{assumption:mixingbeta}
     The process $(Z(t_n),X(t_n), Y_n)$ for $n=1,\ldots, N$ is geometrically $\beta$-mixing; that is, there exist constants $c>0$ such that
     $\beta(\ell)\leq 2\exp(-c\ell),\; \ell\in\NN
     $
\end{assumption}
Perhaps one may wonder if there exists a joint process of~\eqref{eq:definex} that satisfies Assumption~\ref{assumption:stationary}--\ref{assumption:mixingbeta}. We provide sufficient conditions that the joint process are mixing and describe a few examples below. 

\begin{proposition}\label{prop:general_example}
    Assume that the following properties holds:
    \begin{enumerate}
        \item $g(\cdot)$ is locally Lipschitz;
        \item There exists a constant $K_0>0$, such that for each state $\ell=1,\ldots,k$,  
    $\left\|\sum_{i=1}^p \theta_i^\ell g(x_i) \right\|_2\leq K_0(1+\|x\|_2)$;
    \item For all $x\in\RR^p$ and $\ell=1,\ldots,k$, 
    \[
    x^\top\cbr{\sum_{i=1}^p\theta_i^{\ell\star} g(x_i)}\leq \beta_\ell\norm{x}_2^2+\alpha,
    \]
    for some constants $\beta_\ell,\alpha$.
    \end{enumerate} 
    Define $A=-2\text{diag}(\beta_1,\ldots,\beta_k)-Q^\star$. Suppose that $A$ is an nonsingular M-matrix. 
    Then, under additional regularity conditions, Assumption~\ref{assumption:oppensetirreducible}--\ref{assumption:A4_1} stated in Appendix, the joint process $(Z(t_n), X(t_n), Y_n)$ is $\beta$-mixing.
\end{proposition}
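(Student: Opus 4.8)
The plan is to upgrade the assertion to the stronger property of geometric ergodicity for the \emph{continuous-time} joint process $(Z(t), X(t))_{t\ge 0}$ and then descend to the uniformly sampled, noise-augmented process. First I would observe that, although $X(t)$ evolves deterministically given a realization of the switching path, the pair $(Z(t), X(t))$ is itself a strong Markov process---a piecewise-deterministic Markov process whose continuous component follows $\dot X = \sum_i \theta_i^{Z(t)\star} g(X_i)$ between the jumps of the finite-state chain $Z$. Condition~1 (local Lipschitzness of $g$) and Condition~2 (linear growth of the vector field) guarantee global existence, uniqueness and non-explosion of the flow in each regime, so the process is well-defined on $[0,\infty)$ and Feller; this is exactly the regime-switching dynamical-systems setting whose ergodicity machinery (\citet{yin2010hybrid}) I would adapt.

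The analytic core is a Foster--Lyapunov drift inequality. I would take the test function $V(x,\ell) = c_\ell\,(1 + \norm{x}_2^2)$ with a strictly positive weight vector $c = (c_1, \ldots, c_k)^\top$ to be chosen. Applying the extended generator $\mathcal{L}$ of $(Z,X)$, using Condition~3 on the drift and the fact that $Q^\star$ has zero row sums, and collecting all bounded terms into a constant $b<\infty$, yields
\[
\mathcal{L} V(x,\ell) \le -(A c)_\ell\,\norm{x}_2^2 + b, \qquad A = -2\,\text{diag}(\beta_1,\ldots,\beta_k) - Q^\star,
\]
with $A$ exactly the matrix defined in the statement. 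Because $A$ is assumed to be a nonsingular M-matrix, one of its standard equivalent characterizations supplies a vector $c>0$ with $Ac>0$ componentwise; fixing such a $c$ makes each coefficient $-(A c)_\ell$ strictly negative. Since $\norm{x}_2^2 = V(x,\ell)/c_\ell - 1$, relating $\norm{x}_2^2$ back to $V$ converts this into the geometric drift condition $\mathcal{L}V(x,\ell) \le -\lambda V(x,\ell) + b'$ for some $\lambda>0$, i.e.\ drift toward a compact sublevel set.

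With the drift in hand, I would invoke the Meyn--Tweedie criteria (\citet{meyn2012markov}): a geometric Foster--Lyapunov inequality together with $\psi$-irreducibility, aperiodicity, and petiteness of compact sets gives $V$-uniform, hence geometric, ergodicity. This is precisely where the auxiliary regularity conditions (Assumptions~\ref{assumption:oppensetirreducible}--\ref{assumption:A4_1}) enter: since the $X$-dynamics carry no intrinsic diffusion, irreducibility cannot be read off Lebesgue measure and must instead be obtained by a reachability/controllability argument---the switching mechanism must be able to steer the flow into any open set---so that compact sets are petite and a minorization holds. Geometric ergodicity of the continuous-time chain transfers to geometric $\beta$-mixing of that chain, and hence of its sampled skeleton $(Z(t_n), X(t_n))$, because for Markov processes the $\beta$-mixing coefficients are controlled by the total-variation rate of convergence to stationarity. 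Finally, since $Y_n = X(t_n) + \sigma^\star \varepsilon_n$ is a measurable function of $X(t_n)$ and \emph{independent} noise $\varepsilon_n$, appending $Y_n$ does not inflate the mixing coefficients, so $(Z(t_n), X(t_n), Y_n)$ inherits geometric $\beta$-mixing. The main obstacle is the irreducibility/petiteness step, not the drift computation: the degeneracy of the deterministic $X$-flow defeats the usual Harris-chain irreducibility arguments, and constructing a minorization through the switching is the delicate part that the appendix assumptions are designed to underwrite.
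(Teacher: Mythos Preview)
Your proposal is correct and follows essentially the same route as the paper: the paper's Lemma~\ref{lemma:alt2} constructs the identical quadratic Lyapunov function $V(x,\ell)=\upsilon_\ell\norm{x}_2^2$ weighted by the positive vector supplied by the nonsingular M-matrix, feeds it into the Meyn--Tweedie criterion (Lemma~\ref{lemma:expergodic}) together with a weak-Feller/T-chain petiteness argument (Lemma~\ref{lemma:irreducible}), and then descends to the sampled, noise-augmented process via Lemma~\ref{lemma:mixingx2y} exactly as you outline. The only minor misattribution in your sketch is that Assumption~\ref{assumption:A4_1} is not part of the irreducibility/petiteness package but is the integrability hypothesis required by the Davydov/Masuda step (Lemma~\ref{lemma:ergodicity}) that upgrades geometric ergodicity to geometric $\beta$-mixing.
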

The first two conditions in Proposition~\ref{prop:general_example} guarantees that the solution is unique and are standard conditions in Markov-switching differential processes~\citep{yin2010hybrid}. The third condition and the condition that $A$ is an nonsingular M-matrix are sufficient conditions for the process to be asymptotically stable~\citep{yuan2003asymptotic}. Together with additional sufficient conditions for the process to be irreducible, Assumption~\ref{assumption:oppensetirreducible}--\ref{assumption:semicontinuous}, we can conclude that the process is $\beta$-mixing. Our analysis follows the theories in~\citet{meyn1993stability}, as we construct a Lyapunov function and verify the Foster-Lyapunov criteria. We leave the details of the analysis in Appendix.

In the following, we show that if the diffusion equations are linear and under mild conditions, the joint processes are mixing. 
\begin{proposition}[Linear Model]\label{prop:cute_example} Consider $X(t)\in\Xcal$ such that $\Xcal$ is a compact set. Let the transition rate matrix of $Z(t)$ be $Q^\star$ with unique stationary distribution $\pi=(\pi_1,\ldots, \pi_k)$. Consider the linear model 
\[
\dot{X}(t)=A_{Z(t)}X(t)\mathrm{d}t,
\] 
where $A_\ell\in\RR^{p\times p}$ for $\ell=1,\ldots, k$. 
 Let $G$ be a positive definite matrix and define $\mu_i=2^{-1}\lambda_{max}(GA_{\ell} G^{-1}+G^{-1}A_{\ell}^\top G)$.
    Suppose that there exists a positive definite matrix $G$ such that 
    \begin{equation}
        \sum_{\ell=1}^k \pi_\ell \mu_\ell<0.\label{eq:cond1}
    \end{equation}
Then, under additional regularity conditions, Assumption~\ref{assumption:oppensetirreducible}--\ref{assumption:semicontinuous},~\ref{assumption:A4}, the joint process $(Z(t_n), X(t_n), Y_n)$ is $\beta$-mixing.
\end{proposition}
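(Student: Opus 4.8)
The plan is to follow the same Foster--Lyapunov route used for Proposition~\ref{prop:general_example}: build a Lyapunov function for the joint Markov process $(Z(t),X(t))$, establish a geometric drift inequality for its extended generator, and combine this with the $\psi$-irreducibility, aperiodicity and Feller regularity furnished by Assumptions~\ref{assumption:oppensetirreducible}--\ref{assumption:semicontinuous},~\ref{assumption:A4} to invoke the geometric ergodicity criterion of~\citet{meyn1993stability}. The only substantively new element compared with Proposition~\ref{prop:general_example} is the Lyapunov construction, which must manufacture global stability of the \emph{switched} flow out of the per-mode rates $\mu_\ell$ and the \emph{averaged} hypothesis~\eqref{eq:cond1}; note that a direct appeal to Proposition~\ref{prop:general_example} (its nonsingular M-matrix condition is the $G=I$, unit-exponent instance of what follows) is strictly more demanding, so a sharper device is needed.

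First I would pass to the $G$-weighted coordinates $w=Gx$, in which the mode-$\ell$ flow reads $\dot w=GA_\ell G^{-1}w=:M_\ell w$. Because $(GA_\ell G^{-1})^\top=G^{-1}A_\ell^\top G$, the symmetric part $M_\ell+M_\ell^\top$ of $M_\ell$ has top eigenvalue exactly $2\mu_\ell$, so $\tfrac{\mathrm{d}}{\mathrm{d}t}\norm{Gx}_2^2=w^\top(M_\ell+M_\ell^\top)w\le 2\mu_\ell\norm{Gx}_2^2$ along the flow; thus $\mu_\ell$ is precisely the exponential growth rate of $\norm{Gx}_2^2$ in state $\ell$. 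This singles out $\norm{Gx}_2^2$ as the right building block and suggests the mode-weighted power family $V_\theta(x,\ell)=\xi_\ell\norm{Gx}_2^{2\theta}$ with a small exponent $\theta>0$ and strictly positive weights $\xi=(\xi_1,\ldots,\xi_k)^\top$ to be chosen. Using that the extended generator of the switching process is $\mathcal{L}f(x,\ell)=\dotp{\nabla_x f(x,\ell)}{A_\ell x}+\sum_{\ell'}q^\star_{\ell\ell'}f(x,\ell')$, a short computation gives the pointwise bound
\[
\mathcal{L}V_\theta(x,\ell)\le \norm{Gx}_2^{2\theta}\bigl[\,2\theta\mu_\ell\xi_\ell+(Q^\star\xi)_\ell\,\bigr].
\]

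The main step is to make the bracket uniformly negative. I would take $\xi$ to be the Perron eigenvector of the Metzler matrix $B(\theta):=2\theta\,\mathrm{diag}(\mu_1,\ldots,\mu_k)+Q^\star$, which exists and is strictly positive since $Q^\star$ is irreducible; writing $\lambda(\theta)$ for the associated (dominant, real) eigenvalue, the display collapses to $\mathcal{L}V_\theta(x,\ell)\le\lambda(\theta)V_\theta(x,\ell)$. The crucial fact is that $\lambda(0)=0$ with right eigenvector $\mathbf{1}$ and left eigenvector $\pi$ (the Perron data of the generator $Q^\star$), so first-order eigenvalue perturbation yields $\lambda'(0)=\pi^\top B'(0)\mathbf{1}=2\sum_{\ell}\pi_\ell\mu_\ell<0$ by~\eqref{eq:cond1}. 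Hence $\lambda(\theta)<0$ for all sufficiently small $\theta>0$, delivering the geometric drift $\mathcal{L}V_\theta\le -c\,V_\theta$ with $c=-\lambda(\theta)>0$ away from the origin.

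Finally I would assemble the pieces. Absorbing the non-smoothness of $\norm{Gx}_2^{2\theta}$ near $x=0$ into a bounded correction on a compact set $C\subset\Xcal$ (harmless since $\Xcal$ is compact) turns the bound into $\mathcal{L}V_\theta\le -c\,V_\theta+d\,\mathbbm{1}_C$; together with irreducibility and aperiodicity from Assumptions~\ref{assumption:oppensetirreducible}--\ref{assumption:semicontinuous},~\ref{assumption:A4}, the criterion of~\citet{meyn1993stability} gives geometric ergodicity of $(Z(t),X(t))$ and, on the compact $\Xcal$, a unique stationary law. Geometric ergodicity of the uniform skeleton sampled at $t_n=n/N$ is equivalent to geometric $\beta$-mixing of $(Z(t_n),X(t_n))$; and since $Y_n=X(t_n)+\sigma^\star\varepsilon_n$ with $\varepsilon_n$ i.i.d. and independent of the latent process, $Y_n$ is conditionally independent across $n$ given the $(Z,X)$ path, so appending it cannot increase the $\beta$-mixing coefficients and $(Z(t_n),X(t_n),Y_n)$ is geometrically $\beta$-mixing. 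I expect the Lyapunov construction to be the main obstacle: the delicate points are justifying $\lambda'(0)=2\langle\pi,\mu\rangle$ together with the strict positivity of the Perron eigenvector (both requiring irreducibility of $Q^\star$), and controlling the generator of the non-smooth power $\norm{Gx}_2^{2\theta}$ near $x=0$, which is exactly where compactness of $\Xcal$ earns its keep.
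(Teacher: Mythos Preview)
Your proposal is correct and follows the paper's three-step skeleton: build a Foster--Lyapunov drift for $(Z,X)$, invoke the Meyn--Tweedie criterion for geometric ergodicity (with petiteness of compact sets supplied by Assumptions~\ref{assumption:oppensetirreducible}--\ref{assumption:semicontinuous} via Lemma~\ref{lemma:irreducible}), and then pass to $\beta$-mixing of the sampled joint process, with $Y_n$ inherited through Lemma~\ref{lemma:mixingx2y}. The substantive difference is how the mode weights $\xi_\ell$ in $V_\theta(x,\ell)=\xi_\ell\|Gx\|_2^{2\theta}$ are manufactured. The paper (Lemma~\ref{lemma:betamixing_x}, following Theorem~8.8 of \citet{yin2010hybrid}) solves the Poisson equation $Q^\star c=\mu+\upsilon\mathbf{1}$ with $\upsilon=-\sum_\ell\pi_\ell\mu_\ell>0$, sets $\xi_\ell=1-\gamma c_\ell$, and obtains $\mathscr{L}V\le\gamma V\{-\upsilon+O(\gamma)\}$; the $O(\gamma)$ remainder is then absorbed into the additive constant $d$ using compactness of $\Xcal$. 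Your Perron-eigenvector choice for the irreducible Metzler family $B(\theta)=2\theta\,\mathrm{diag}(\mu)+Q^\star$ yields the cleaner exact bound $\mathscr{L}V_\theta\le\lambda(\theta)V_\theta$ with no remainder, at the price of justifying the analytic perturbation $\lambda'(0)=2\sum_\ell\pi_\ell\mu_\ell$ for the simple dominant eigenvalue. The two devices agree to first order in the exponent and are standard alternatives in the switching-systems literature; the Poisson-equation route is more explicit, while your spectral route avoids the $O(\gamma)$ bookkeeping. One small correction: Assumption~\ref{assumption:A4} is not an irreducibility or aperiodicity hypothesis but the integrability condition $\sup_{s}\int(V+1)\,\eta\,p(\cdot,s,\cdot)<\infty$ needed in Lemma~\ref{lemma:ergodicity} to convert geometric ergodicity into $\beta$-mixing; on compact $\Xcal$ with bounded $V_\theta$ it is automatic, but it belongs at that step rather than bundled with petiteness.
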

The condition~\eqref{eq:cond1} is a sufficient condition for the system to be asymptotically stable.


\section{Methodology}\label{sec:method}
In this section, we introduce the algorithm to estimate the parameters $\Theta^\star$ from observed stochastic process $\{Y_n\}_{n=0,\ldots, N}$. We adopt the two-step collocation framework that has been widely used in estimating ODEs~\citep{ramsay2007parameter, henderson2014network, wu2014sparse, dattner2015optimal, chen2017network}. We briefly outline the procedure: In the first step, we estimate the continuous trajectory $X(t)$ from the noisy discrete observations $\{Y_n\}_{n=0,\ldots, N}$ using a shrinkage wavelet-smoothing estimator~\citep{donoho1994ideal, brown1998wavelet}. Since  $Z(t)$ is unobserved, it is natural to adopt the Expectation-Maximization (EM) method. Hence, in the following step, we estimate the parameter set $\Theta^\star$ using the EM with estimated trajectory from the first step, $\hat{X}(t)$, and $\{Y_n\}_{n=0,\ldots, N}$.
\subsection{Step 1: Wavelet-smoothing}\label{ssec:waveletsmooth}
Given $\{Y_n\}_{n=0,\ldots, N}$, our first step is to estimate $X(t)$ from the discrete observations. For each dimension $i=1,\ldots,p$, we estimate the univariate function $X_i(t)$ from $\{Y_{n,i}\}_{n=0,\ldots, N}$ using the wavelet shrinkage estimator, a wavelet regression  estimator with shrinkage~\citep{ donoho1994ideal, donoho1995wavelet}. The wavelet regression is used as alternative to the local regression method~\citep{Tsybakov2008IntroductionTN} employed in~\citep{chen2017network}, who studied additive ODEs without hidden switching structures. This is because the ``switchings" cause the trajectory to be non-smooth; the trajectory at the switching point is non-differentiable, creating ``piecewise" smooth structures instead. Hence, we adopt the wavelet method that is locally adaptive. 

Let $\phi$ to be the father wavelet and $\psi$ to be the mother wavelet function. Define
$\phi_{j\ell}(t)= 2^{j/2}\phi(2^{j}t-\ell)$ for $\ell=1,\ldots, 2^{j_0}$ and $\psi_{j\ell}(t)=2^{j/2}\psi(2^j t- \ell)$ for $\ell=1,\ldots, 2^j$. 
The collection $\{\phi_{j_0\ell};\ell=1,\ldots, 2^{j_0}; \psi_{j\ell}, j\geq j_0,\ell=1,\ldots, 2^j\}$ is a set of orthonormal basis function on $L_2([0,1])$. We write the projection of $X_i(t)$ to the basis functions as
\begin{align*}
\xi_{i,j_0\ell}&=\dotp{X_i}{\phi_{j_0\ell}}=\int_{0}^1X_i(t)\phi_{j_0\ell}(t)\mathrm{d}t\quad \ell=1,\ldots 2^{j_0};\\
\eta_{i,j\ell}&=\dotp{X_i}{\psi_{j\ell}}=\int_0^1 X_i(t)\psi_{j\ell}(t)\mathrm{d}t\quad \ell=1,\ldots, 2^j, j\geq j_0.
\end{align*}
So we can write the wavelet series expansion of the function $X_i$ as
\[
X_i(t)=\sum_{\ell=1}^{2^{j_0}}\xi_{i,j_0\ell}\phi_{j_0\ell}(t)+\sum_{j=j_0}^\infty\sum_{\ell=1}^{2^j}\eta_{i,j\ell}\psi_{j\ell}(t).
\]

Let $J$ be an integer such that $N=2^J$ and define $\tilde{X}_i(t)=N^{-1/2}\sum_{n=1}^N Y_{n,i}\phi_{Jn}(t)$. The estimation procedure of the coefficients $\xi_{i,j_0\ell}, \eta_{i,j\ell}$ follows from~\citet{brown1998wavelet}. Let
\begin{align*}
\hat{\xi}_{j_0\ell}&=\dotp{\tilde{X}_i}{\phi_{j_0\ell}}\quad \ell=1,\ldots, 2^{j_0};\\
\tilde{\eta}_{j\ell}&=\dotp{\tilde{X}_i}{\psi_{j\ell}}\quad \ell=1,\ldots, 2^j,\; j=j_0,\ldots,J-1.
\end{align*}

We can estimate the coefficient, denoted as $\tilde{\xi}_{j_0\ell}$ and $\tilde{\xi}_{j_0 \ell}$, by computing wavelet transforms on $Y_{i,n}$ for $n=1,\ldots, N$.
Given $\lambda_{j\ell}=3\sigma^\star(2N^{-1}\log (N/\delta))^{1/2}$ for some $\delta\in(0,1)$, where $\sigma^\star$ is the variance of the noise, we threshold the coefficient
\begin{align}\label{eq:soft:threshold}
\hat{\eta}_{i,j\ell}=sgn(\tilde{\eta}_{i,j\ell})(|\tilde{\eta}_{i,j\ell}|-\lambda_{j\ell})_+.
\end{align}
Hence, the reconstructed $\hat{X}_i(t)$ is
\[
\hat{X}_i(t)= \sum_{\ell=1}^{2^{j_0}}\hat{\xi}_{i,j_0\ell}\phi_{j_0\ell}(t)+\sum_{j=j_0}^{J-1}\sum_{\ell=1}^{2^j}\hat{\eta}_{i,j\ell}\psi_{j\ell}(t).
\]
Repeat the procedure for $i=1,\ldots,p$, then we complete the first step. 

\subsection{Step 2: Graph estimation via EM method}\label{ssec:graphest}
We describe the EM algorithm for the continuous-time hidden Markov model with discrete observations. Given the observations $\{Y_n\}_{n=0,\ldots,N}$, the log-likelihood is 
\[
\Lcal_{0,N}(\tilde{\Theta}) = \log{\int{p(Y_0^N, Z_0^N;\tilde{\Theta})\mathrm{d}{Z_0^N}}},
\]
where we define the shorthand $Z_0^N=\{Z_n:=Z(t_n);n=0,\ldots, N\}$ and $Y_0^N=\{Y_n;n=0,\ldots, N\}$. By Jensen's inequality, we can find the lower bound of $\Lcal_{0,N}$ as
\begin{align*}
    \Lcal_{0,N}(\tilde{\Theta}) \geq \underbrace{\int \log p(Y_0^N, Z_0^N;\tilde{\Theta})\mathrm{d}p(Z_0^N\mid Y_0^N;\Theta)}_{\Lcal_{1,N}(\Theta\mid\tilde{\Theta})}+ \underbrace{\int- \log p(Z_0^N\mid Y_0^N;\Theta)\mathrm{d}p(Z_0^N\mid Y_0^N;\Theta)}_{H_N(\Theta)}.
\end{align*}
Given that $H_n(\Theta)$ does not depend on $\tilde{\Theta}$, we want to maximize the $\Lcal_{1,N}(\Theta\mid\tilde{\Theta})$ to tighten the lower bound. Hence, the EM algorithm maximizes the lower bound $\Lcal_{1,N}(\Theta\mid\tilde{\Theta})$ at each M-step and compute the log-likelihood function at each $E$-step. In the following, we express the form of $\Lcal_{1,N}(\Theta\mid\tilde{\Theta})$. Note that we can write
\begin{align}
\log p(Y_0^N, Z_0^N;\tilde{\Theta})&=\log p(Y_0^N\mid Z_0^N;\tilde{\Theta}) + \log p(Z_0^N;\tilde{\Theta})\notag\\
&= \sum_{n=1}^N\log p(Y_n\mid Y_{n-1}, Z_n;\tilde{\Theta}) + \log p(Y_0\mid Z_0;\tilde{\Theta}) + \log p(Z_0^N;\tilde{\Theta}).
\label{eq:lldecom}
\end{align}
Since the second term~\eqref{eq:lldecom} does not depend on $\tilde{\Theta}$ and $Z_0:=Z(t_0)$, its value would affect optimization of $\tilde{\Theta}$. We will drop this term. 
First, we describe the expression of the log-likelihood of the continuous-time Markov chain~\citep{liu2015efficient}, the third term of~\eqref{eq:lldecom}. Define $m_{\ell\ell'}(Y_0^N;\Theta):=\EE[m_{\ell\ell'}\mid Y_0^N;\Theta]$ be the expected number of transitions of $Z(t)$ from state $\ell$ to state $\ell'$ conditioned on $Y_0^N$ and the parameter set $\Theta$. Similarly, define $\tau_\ell(Y_0^N;\Theta):=\EE[\tau_\ell\mid Y_0^N;\Theta]$ be the expected total time that $Z(t)$ spent at state $\ell$ conditioned on $Y_0^N$ and $\Theta$. By the time-homogeneous property of the Markov chain $Z(t)$, one can express
\begin{align}
\int \log p(Z_0^N;\tilde{\Theta})\mathrm{d}p(Z_0^N\mid Y_0^N;\Theta ) = \sum_{\ell,\ell'=1}^k m_{\ell\ell'}(Y_0^N;\Theta)\log \tilde{q}_{\ell\ell'}-\tilde{q}_\ell\tau_{\ell}(Y_0^N;\Theta),\label{eq:llq}
\end{align}
where $\tilde{q}_\ell=\sum_{\ell\neq\ell'}\tilde{q}_{\ell\ell'}$.

Next, we describe the conditional log-likelihood of $Y_n$ conditioned on $Z(t_n)$ and $Y_{n-1}$. 
With $\hat{X}_i$ estimated in the last step, we can compute $\hat{\Psi}_i(t_n):=\int_{t_{n-1}}^{t_n} g(\hat{X}_i(u))\mathrm{d}u$ as an estimate of the unobserved quantity ${\Psi}_i(t_n):=\int_{t_{n-1}}^{t_n} g({X}_i(u))\mathrm{d}u$. Hence, this leads to approximate~\eqref{eq:approxY} as  
\[
Y_{n,i} \approx Y_{n-1,i}+\sigma^\star(\varepsilon_{n,i}-\varepsilon_{n-1,i}) + \sum_{j}\theta_{ij}^{Z(t_n)\star}\hat{\Psi}_j(t_n),
\]
where $\sigma^\star(\varepsilon_{n,i}-\varepsilon_{n-1,i})$ follows the distribution $\Ncal(0,2(\sigma^\star)^2)$. Hence, the residual follows $\Ncal(0,2(\sigma^\star)^2)$. By the Markov Property, we can approximate the log-likelihood $\int \log p(Y_0^N\mid Z_0^N,\tilde{\Theta})\mathrm{d}p(Z_0^N\mid Y_0^N,\Theta)$ as 
\begin{multline}
    -\sum_{n=1}^N \frac{p}{2}(\log2\tilde{\sigma}^2+\log 2\pi)
    -\frac{1}{4\tilde{\sigma}^2}\sum_{n,\ell,i=1}^{N,k,p}
p(Z(t_n)=\ell\mid Y_0^N ; \Theta)\rbr{Y_{n,i}-Y_{n-1,i} - \sum_{j=1}^p\tilde{\theta}_{ij}^{\ell}\hat{\Psi}_j(t_n)}^2.\label{eq:lly}
\end{multline}
Hence $\Lcal_{1,N}(\tilde{\Theta}\mid \Theta)$ is approximated by the sum of~\eqref{eq:llq}--\eqref{eq:lly}. In practice, we add a sparsity regularization term on $\tilde{\theta}_{ij}^\ell$ for $i,j=1,\ldots,p$ and $\ell=1,\ldots,k$. This is because if the true function $f_{ij}^\ell$ is a zero-function, then $\tilde{\theta}_{ij}^{\ell\star}$ is a zero vector.  
Taking everything together, we can define the empirical log-likelihood as
\begin{align}
    {\Lcal}_N(\tilde{\Theta}\mid \Theta)
    &=\sum_{\ell,\ell'=1}^k m_{\ell\ell'}(Y_0^N;\Theta)\log \tilde{q}_{\ell\ell'}-\tilde{q}_\ell\tau_{\ell}(Y_0^N;\Theta)-\sum_{n=1}^N \frac{p}{2}(\log2\tilde{\sigma}^2+\log 2\pi)\notag\\
    &\quad-\frac{1}{4\tilde{\sigma}^2}\sum_{n,\ell,i=1}^{N,k,p}
p(Z(t_n)=\ell\mid Y_0^N ; \Theta)\rbr{Y_{n,i}-Y_{n-1,i} - \sum_{j=1}^p\tilde{\theta}_{ij}^{\ell}\hat{\Psi}_j(t_n)}^2\notag\\
 &\quad-\lambda\sum_{\ell=1}^k\sum_{i,j=1}^p\cbr{\sum_{n=1}^N(\tilde{\theta}^\ell_{ij}\hat{\Psi}_j(t_n))^2}^{1/2},\label{eq:emprisk}
\end{align}
where $\lambda>0$.  At each $M$-step, we compute
\[
M_n(\Theta)=\argmax_{\tilde{\Theta}}\Lcal_N(\tilde{\Theta}\mid \Theta).
\]
Our analysis requires finding the optimal solution within the constraint set $\Omega = \Omega_Q \times \{\theta_{ij}^{\ell\star}\in\RR^{m}: \ell=1,\ldots k, i,j=1,\ldots,p\}\times\RR_+$, where $\Omega_Q$ is the set of all transition rate matrices whose Markov chain is positive recurrent and irreducible. When implementing the algorithm, we do not restrict the estimates to be in this constraint set to simplify the estimation procedure.  

By simple algebraic computation, the optimal solution of each $M$-step update with respect to $\Theta$ is 
\begin{align*}
    M_{n, q_{\ell\ell'}}(\Theta) &= \frac{m_{\ell\ell'}(Y_0^N;\Theta)}{\tau_\ell(Y_0^N;\Theta)}\quad \ell\neq\ell', \quad M_{n,q_{\ell\ell}}(\Theta) = - \sum_{\ell\neq\ell'}M_{n, q_{\ell\ell'}}(\Theta);\\
    M_{n,\theta_{ij}^\ell}(\Theta) & = 
    \cbr{\sum_{n=1}^Np(Z(t_n)=\ell\mid Y_0^N;\Theta)\rbr{Y_{n,i}-Y_{n-1,i}-\sum_{j\neq j'}\theta_{ij'}^\ell\hat{\Psi}_{j'}(t_n)}\hat{\Psi}_j^\top(t_n)}\times\\
    &\quad\cbr{\sum_{n=1}^N\rbr{\lambda+ p(Z(t_n)=\ell\mid Y_0^N;\Theta)}\hat{\Psi}_j(t_n)\hat{\Psi}_j(t_n)^\top}^{-1};
    \\
    M_{n,\sigma^2}(\Theta) &= \frac{1}{2pN}\sum_{n=1}^N\sum_{\ell=1}^k\sum_{i=1}^p p(Z(t_n)=\ell\mid Y_0^N;\Theta)\rbr{Y_{n,i}-Y_{n-1,i}-\sum_{j=1}^p\theta_{ij}\hat{\Psi}_j(t_n)}^2.
\end{align*}
After obtaining $M_n(\Theta)$, in the $E$-step, we compute the log-likelihood function $M_n(\Theta)\mapsto \Lcal_N(\cdot\mid M_n(\Theta))$. Specifically, $p(Z(t_n)=\ell\mid Y_0^N;\Theta)$ and $p(Z(t_{n-1})=\ell, Z(t_n)=\ell'\mid Y_0^N;\Theta)$, namely the smoothed probability, can be computed using the forward-backward algorithm commonly used for  estimating hidden Markov models~\citep{baum1970maximization}. 

It remains to compute the two quantities, $m_{\ell\ell'}(Y_0^N;\Theta)$ and $\tau_\ell(Y_0^N;\Theta)$ in~\eqref{eq:emprisk}. These steps are standards in estimating continuous-time Markov chain~\citep{bladt2005statistical,hobolth2005statistical,liu2015efficient}. The following decomposition is followed by the fact that $m_{\ell\ell'}$ is conditionally independent to $Y_0^N$ given $Z_0^N$:
\begin{align}
    m_{\ell\ell'}(Y_0^N;\Theta)&:= \EE[m_{\ell\ell'}(1)\mid Y_0^N;\Theta]\notag\\
    &= \sum_{n=1}^N\sum_{i,j=1}^k p(Z(t_{n-1})=i, Z(t_{n})=j\mid Y_0^N;\Theta)\EE[m_{\ell\ell'}(t_{n}-t_{n-1})\mid Z(t_{n-1})=i, Z(t_{n})=j;\Theta]\notag\\
    & = \sum_{n=1}^N\sum_{i,j=1}^k p(Z(t_{n-1})=i, Z(t_{n})=j\mid Y_0^N;\Theta)\EE[m_{\ell\ell'}(h)\mid Z(0)=i, Z(h)=j;\Theta],\label{eq:computemll}
\end{align}
where the last equality follows from time-homogeneity of the Markov chain. The quantity $\EE[m_{\ell\ell'}(h)\mid Z(0)=i, Z(h)=j;\Theta]$ means the expected number of transition from state $\ell$ to state $\ell'$ during the time interval $h$ given that the Markov chain starts at state $i$ and ends at state $j$ at time $h$. Furthermore, from~\citet{hobolth2005statistical}, we can decompose:
\begin{align}
\EE[m_{\ell\ell'}(h)\mid Z(0)=i, Z(h)=j;\Theta] = \frac{q_{\ell\ell'}}{P_{ij}(h)}\int_0^h P_{i\ell}(u)P_{\ell'j}(h-u)\mathrm{d}u,\label{eq:decomposem}
\end{align}
where $P_{ij}(t)=[\exp(Qt)]_{ij}$. As~\citet{liu2015efficient} have discussed, there are several ways to compute the analytical solution of the integral on the right hand side of~\eqref{eq:decomposem}. In the manuscript, we adopt the integration method developed in~\citet{van1978computing}. 

Similarly, we can express
\begin{align}
    \tau_\ell(Y_0^N;\Theta) & := \EE[\tau_\ell\mid Y_0^N;\Theta] \notag\\
    & = 
    \sum_{n=1}^N\sum_{i,j=1}^k p(Z(t_{n-1})=i,Z(t_n)=j\mid Y_0^N;\Theta)\EE[\tau_\ell(t_n-t_{n-1})\mid Z(t_{n-1})=i,  Z(t_n)=j;\Theta]\notag\\
    & = 
    \sum_{n=1}^N\sum_{i,j=1}^k p(Z(t_{n-1})=i,Z(t_n)=j\mid Y_0^N;\Theta)\EE[\tau_\ell(h)\mid  Z(0)=i, Z(h)=j;\Theta],\label{eq:computetaul}
\end{align}
where by~\citet{hobolth2005statistical}:
\[
\EE[\tau_\ell(h)\mid Z(0)=i, Z(h)=j;\Theta] = \frac{1}{P_{ij}(h)}\int_0^h P_{i\ell}(u)P_{\ell j}(h-u)\mathrm{d}u. 
\]
Given an initial estimate $\Theta^0$, we iterate between $E$-step and $M$-step until the log-likelihood converges. The complete algorithm is described in Algorithm~\ref{alg:update:ode}. After obtaining $\hat{\Theta}$ by running Algorithm~\ref{alg:update:ode}, we can compute the estimated edge set as
\[
\hat{E}^\ell = \left\{(j,i):\norm{\hat{\theta}_{ij}^\ell}_2>\epsilon_t\right\}\quad \ell=1,\ldots, k, 
\]
for a threshold $\epsilon_t>0$. 
\begin{algorithm}[t!]

 \caption{Graph Estimation}\label{alg:update:ode}
\begin{algorithmic}
\State Input: { data: $\{Y_n;n=0,\ldots, N\}$, $\{\hat{\Psi}_i(t_n);i=1,\ldots, p, n=1,\ldots, N\}$; initial parameter $\Theta^0$; regularization parameter $\lambda$; threshold parameter $\epsilon_t$ }
\State Output: parameter set $\hat{\Theta}$\;
\State $j\leftarrow 0$
 \While{$\Lcal_N$ not converged}
 \State Compute $p(Z(t_n),Z(t_{n-1})\mid Y_0^N;\Theta^j)$, $p(Z(t_n)\mid Y_0^N;\Theta)$ for $n=1,\ldots, N$\;
 \State Compute $m_{\ell\ell'}(Y_0^N;\Theta^j)$ for $\ell\neq\ell'$ using~\eqref{eq:computemll}
 \State Compute $\tau_\ell(Y_0^N;\Theta^j)$ for $\ell=1,\ldots, k$ using~\eqref{eq:computetaul}
 \State $\Theta^{j+1}\leftarrow\argmax_{\tilde{\Theta}\in\Omega} \Lcal_N(\tilde{\Theta}\mid \Theta^j)$\;
 \EndWhile
 \State $\hat{\Theta}\leftarrow \Theta^{j+1}$
 \end{algorithmic}
\end{algorithm}

\subsection{Step 3: Model  selection}\label{ssec:parameterselect}
In this section, we describe how to select parameters. We assume that the father and mother wavelet function $\phi,\psi$ in step 1 described in Section~\ref{ssec:waveletsmooth} and the family of projection basis function $\{g_i(\cdot):i\in\NN\}$ are given in Section~\ref{ssec:graphest}. There are four parameters to select: the threshold coefficient $\lambda_{j\ell}$, the number of hidden states $k$, the number of basis functions $m$, and the sparsity regularization function $\lambda$. The threshold coefficient is $\lambda_{j\ell}=3\sigma^\star\cbr{\rbr{\log N/\delta}/N}^{1/2}$ where $N$ is the number of sample size, $\delta$ is a small constant that controls the probability of the recovery of the trajectory $X_i$ in Proposition~\ref{prop:tailbound:wavelet} and $\sigma^\star$ is the noise variance. In practice, the variance of the noise is often unknown.
We adopt the method developed in Section~4.2 of~\citet{donoho1994ideal} to estimate $\sigma^\star$. Under the Gaussian noise assumption, the estimated $\sigma$ is  the median of the wavelet coefficients at the finest level $J$, where $N=2^J$, divided by $0.6745$, the inverse of the the cumulative distribution function of the standard Gaussian distribution at $0.75$. To select the remaining parameters $k,m,\lambda$, we use grid search with the Bayesian Information Criterion (BIC). Let $\Hat{\Theta}$ be the output of Algorithm~\ref{alg:update:ode} with fixed $k,m,\lambda$, we compute the BIC as
\[
\rbr{k^2-k+\sum_{i,j}^\ell\norm{\hat{\theta}_{ij}^\ell}_0}\log N - 2\sbr{\Lcal_N(\hat{\Theta}\mid\hat{\Theta})+\lambda\sum_{\ell=1}^k\sum_{i,j=1}^p\cbr{\sum_{n=1}^N(\hat{\theta}^\ell_{ij}\hat{\Psi}_j(t_n))^2}^{1/2}}.
\]
Here $k^2-k$ denotes the degree of freedom of the transition rate matrix. The second term of the above equation is the empirical log-likelihood without the sparsity regularization term. We employ a two-stage method to select the parameters. First, we fix $k$ in the grid search, we find the minimum BIC across all candidates of $m$ and $\lambda$, then we employ the ELBO method to select the number of states $k$. Then, in the following stage, given $k$, we find the optimal $m$, $\lambda$ with minimal BIC score. 
\section{Theory}\label{sec:theory}
Our goal is to assess the quality of the estimation $\hat{\Theta}$ output from Algorithm~\ref{alg:update:ode} as compared to the true parameter $\Theta^\star$. We can investigate this by studying whether the fixed point is close to the global optima of the empirical log-likelihood, or ultimately close to the global optima of the population log-likelihood. To begin with, we first study the convergence behavior of the EM algorithm for the idealized population log-likelihood. Once the convergence guarantee is established, we ask whether a similar guarantee holds for the empirical log-likelihood under the  proper choice of the regularization term $\lambda_n$. Intuitively, the empirical log-likelihood will be close to the population log-likelihood given large enough samples. However, analysis under finite sample size is challenging as samples are dependent. The secret is that when the processes are mixing, i.e., under Assumption~\ref{assumption:mixingbeta}, the truncated smoothed probability is close to the original smoothed probability in total variation distance. As an alternative, we can utilize such property and prove the convergence guarantee under the truncated sequence  Section~\ref{ssec:approximateEM}.

\subsection{Recovery of $X(t)$}\label{ssec:recoverx}

In this section, we discuss the estimation error of $X(t)$ using shrinkage wavelet regression introduced in Section~\ref{ssec:waveletsmooth}. Our analysis follows from~\citet{brown1998wavelet} where we extend the analysis on convergence in expectation to studying the behavior of the tail bound. In this paper, we consider the piecewise H\"older function classes: between two switchings of the hidden Markov chain $Z(t)$, the trajectory of $X_i(t)$ for $i=1,\ldots, p$ is in a H\"older function class. We introduce the following conditions. 

\begin{definition}
    A piecewise H\"older class $\Lambda^\alpha(M,B,d)$ on $[0,1]$ with $d$ discontinuous jumps consists of functions $f$ satisfying the following conditions:
    \begin{enumerate}
        \item The function $f$ is bounded by $B$, that is, $\abr{f}\leq B$.
        \item There exist $l\leq d$ points $0\leq a_1<\cdots<a_l\leq 1$ such that, for $a_i\leq x,y<a_{i+1}$, $i=0,1,\ldots,l$ with $a_0=0$ and $a_{l+1}=1$,
        \begin{enumerate}
            \item $\abr{f(x)-f(y)}\leq M\abr{x-y}^\alpha$ if $\alpha\leq 1$;
            \item $\abr{f^{\lfloor\alpha\rfloor}(x)-f^{\lfloor\alpha\rfloor}(y)}\leq M\abr{x-y}^{\alpha-\lfloor\alpha\rfloor}$ if $\alpha>1$.
        \end{enumerate}
    \end{enumerate}
\end{definition}

This function class contains trajectories with inhomogeneous temporal structures, adaptive to local fluctuations between two switching points. 
\begin{assumption}\label{assumption:piecewiseholder}
    There exists some finite constants $M, B, d\geq 0$ such that $X_i\in\Lambda^\alpha(M,B,d)$ for $i=1,\ldots,p$. 
\end{assumption}
The following proposition demonstrates the error rate of the estimator discussed in Section~\ref{ssec:waveletsmooth}. 
\begin{proposition}\label{prop:tailbound:wavelet} Given $\delta\in(0,1)$, and let $\hat{X}_{i}$ be soft-threshold wavelet estimator with threshold $3\sigma^\star\{(\log N/\delta)/N\}^{1/2}$ discussed in Section~\ref{ssec:waveletsmooth}. Suppose that the wavelet is $r$-regular. Under Assumption~\ref{assumption:piecewiseholder},  then the estimator $\hat{X}_{i}$ is near optimal
    \[
        \opnorm{\hat{X}_{i}-X_i}{}^2\leq C\{\log (N/\delta)/N\}^{2\alpha/(1+2\alpha)},
    \]
    with probability at least $1-3\delta$, for all $1\leq \alpha\leq r$ and all $d\leq CN^\gamma$ with constants $C>0$ and $0<\gamma<1/(1+2\alpha)$. 
\end{proposition}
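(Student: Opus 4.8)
The plan is to follow the classical wavelet-shrinkage risk analysis of Donoho--Johnstone and~\citet{brown1998wavelet}, but with two modifications: track everything on a high-probability ``good event'' rather than in expectation, and isolate the contribution of the $d$ discontinuities of $X_i$ from the smooth part. By orthonormality of the wavelet basis and Parseval's identity I would first write
\[
\opnorm{\hat X_i - X_i}{}^2 = \sum_{\ell=1}^{2^{j_0}}(\hat\xi_{i,j_0\ell}-\xi_{i,j_0\ell})^2 + \sum_{j=j_0}^{J-1}\sum_{\ell=1}^{2^j}(\hat\eta_{i,j\ell}-\eta_{i,j\ell})^2 + \sum_{j\ge J}\sum_{\ell=1}^{2^j}\eta_{i,j\ell}^2,
\]
so the error splits into a coarse-scale term, a detail (thresholded) term, and a truncation tail. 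Next I would pass to the sequence model: since the $\varepsilon_{n,i}$ are i.i.d.\ Gaussian and the discrete wavelet transform is orthogonal, the empirical coefficients obey $\tilde\eta_{i,j\ell} = \eta_{i,j\ell} + b_{i,j\ell} + \sigma^\star N^{-1/2}z_{i,j\ell}$ with $z_{i,j\ell}\sim\Ncal(0,1)$ and a deterministic discretization bias $b_{i,j\ell}$ that is negligible at the resolutions considered (the ``equispaced regression $\approx$ white noise'' step of~\citet{brown1998wavelet}).

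The heart of the tail-bound refinement is a Gaussian-maximum estimate. Using $\Pr(|Z|>t)\le 2e^{-t^2/2}$ and a union bound over the $O(N)$ coefficients, the choice $\lambda_{j\ell}=3\sigma^\star\{2N^{-1}\log(N/\delta)\}^{1/2}$ makes the event $\mathcal G=\{\max_{j,\ell}\sigma^\star N^{-1/2}|z_{i,j\ell}|\le \lambda_{j\ell}\}$ hold with probability at least $1-\delta$ (the factor $3$ gives a comfortable margin); a constant number of auxiliary concentration bounds, e.g.\ on the coarse father coefficients and on the bias term $b_{i,j\ell}$, each fail with probability at most $\delta$, which is what yields the $1-3\delta$ guarantee. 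On $\mathcal G$ I would invoke the deterministic soft-thresholding oracle inequality: whenever the noise in a coordinate lies below the threshold, the soft-thresholded estimate satisfies $(\hat\eta_{i,j\ell}-\eta_{i,j\ell})^2 \lesssim \min(\eta_{i,j\ell}^2,\lambda_{j\ell}^2)$, reducing the detail term to the deterministic sum $\sum_{j=j_0}^{J-1}\sum_\ell \min(\eta_{i,j\ell}^2,\lambda^2)$.

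I would then split this sum according to Assumption~\ref{assumption:piecewiseholder}. On each interval of smoothness $X_i$ is $\alpha$-H\"older, so its coefficients decay like $|\eta_{i,j\ell}|\lesssim 2^{-j(\alpha+1/2)}$; balancing the regime $\eta^2>\lambda^2$ (each coefficient costing $\lambda^2$) against $\eta^2<\lambda^2$ across scales, with crossover $2^{-j^\star}\asymp\lambda^{2/(2\alpha+1)}$, gives the target rate $(\lambda^2)^{2\alpha/(1+2\alpha)}\asymp\{\log(N/\delta)/N\}^{2\alpha/(1+2\alpha)}$ by the standard Besov calculation. The new ingredient is the jump part: at each scale $j$ only $O(d)$ wavelets have support meeting one of the $\le d$ discontinuities, and for those $|\eta_{i,j\ell}|\le B\,\|\psi_{j\ell}\|_1\lesssim 2^{-j/2}$ by the $L^\infty$ bound. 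Hence the jump contribution is at most $\sum_{j} O(d)\min(2^{-j},\lambda^2)\lesssim d\,\lambda^2\log N \asymp d\,(\log N)^2/N$, and the analogous truncation tail contributes $\lesssim d\,2^{-J}=d/N$; both are dominated by the main rate precisely when $d\le CN^\gamma$ with $\gamma<1/(1+2\alpha)$, since then $d/N\lesssim N^{-2\alpha/(1+2\alpha)}$. The coarse father-coefficient term contributes only $O(2^{j_0}\lambda^2)\lesssim \log(N/\delta)/N$ and the smooth truncation tail $\sum_{j\ge J}2^{-2j\alpha}\lesssim N^{-2\alpha}$, both lower order.

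The hard part will be the jump analysis and its interaction with the threshold: one must verify that each discontinuity pollutes only a constant number of coefficients per scale (which uses the compact support / $r$-regularity of the wavelet), bound those coefficients uniformly by $B$, and check that the resulting $d\,\lambda^2\log N$ term stays below $\{\log(N/\delta)/N\}^{2\alpha/(1+2\alpha)}$ --- this is exactly what forces the constraint $\gamma<1/(1+2\alpha)$. A secondary technical obstacle is controlling the discretization bias $b_{i,j\ell}$ and the father-coefficient estimates uniformly, so that the expectation-based arguments of~\citet{brown1998wavelet} can be upgraded to the stated high-probability bound without inflating the rate.
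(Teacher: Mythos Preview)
Your proposal is correct and follows essentially the same route as the paper: the same Parseval decomposition into coarse, detail, and tail pieces; the same sequence-model representation $\tilde\eta_{j\ell}=\eta_{j\ell}+d_{j\ell}+r_{j\ell}$ with discretization bias and Gaussian noise; the same high-probability good events (the paper's $\mathcal E_1$ for father coefficients and $\mathcal E_2$ for $\max_{j,\ell}|r_{j\ell}|$, giving exactly the $1-3\delta$); and the same splitting of the detail sum into jump indices (at most $L(d+2)$ per scale, each costing $\lesssim\lambda^2$) versus smooth indices, with the same crossover scale $J_1\asymp\frac{1}{1+2\alpha}\log_2(N/\log(N/\delta))$. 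The only cosmetic difference is that the paper states its soft-threshold oracle inequality as $(\hat\eta_{j\ell}-\eta_{j\ell})^2\le\frac{10}{3}\lambda_{j\ell}^2+3d_{j\ell}^2$ under $|r_{j\ell}|,|d_{j\ell}|\le\lambda_{j\ell}/3$ (their Lemma~A.3), rather than the $\min(\eta^2,\lambda^2)$ form you invoke, and then recovers the $\eta^2$ regime for $j\ge J_1$ by showing $|\tilde\eta_{j\ell}|\le\lambda_{j\ell}$ so that $\hat\eta_{j\ell}=0$; this is the same argument repackaged, and in particular the factor $3$ in the threshold is used precisely to leave room for both the noise and the bias to sit below $\lambda/3$, not merely as a ``comfortable margin.''
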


The convergence rate is the same as Theorem~3 in~\citet{brown1998wavelet}, where they showed convergence in expectation. Furthermore, it is within a $\log N$ factor of the minimax rate of the nonparametric function without switchings, $O(N^{-2\alpha/(1+2\alpha)})$~\citep{Tsybakov2008IntroductionTN}.


\subsection{Convergence of the population EM}\label{ssec:popEM}
We start the analysis with the population log-likelihood and then generalize the results to the empirical log-likelihood. This is a common analysis approach when studying the convergence property of empirical risks~\citep{loh2013regularized, yi2015regularized, balakrishnan2017statistical}. We define the the population log-likelihood as:
\begin{align}
    {\Lcal}(\tilde{\Theta}\mid \Theta)
    &= \EE\Bigg[\sum_{\ell,\ell'=1}^k m_{\ell\ell'}(Y_0^N;\Theta)\log \tilde{q}_{\ell\ell'}-\tilde{q}_\ell\tau_{\ell}(Y_0^N;\Theta)-\sum_{n=1}^N \frac{p}{2}(\log2\tilde{\sigma}^2+\log 2\pi)\notag\\
    &\quad-\frac{1}{4\tilde{\sigma}^2}\sum_{n=1}^N
 \sum_{\ell=1}^k\sum_{i=1}^pp(z(t_n)=\ell\mid Y_0^N; \Theta)\rbr{Y_{n,i}-Y_{n-1,i} - \sum_{j=1}^p\tilde{\theta}_{ij}^{\ell}\Psi_j(t_n)}^2\Bigg],\label{eq:poprisk}
\end{align}
where $\Psi_i(t_n)=\int_{t_{n-1}}^{t_n} g(X_i(u))  du$, $m_{\ell\ell'}(Y_0^N;\Theta)$ is the expected number of transitions of $Z(t)$ from state $\ell$ to state $\ell'$ conditioned on the observations $Y_0^N$ and parameters $\Theta$, and $\tau_{\ell}(Y_0^N;\Theta)$ is the expected time that $Z(t)$ stay in state $\ell$ and $\Theta$. In order to ensure running EM algorithm with the population log-likelihood~\ref{eq:poprisk} guarantee, the true parameter $\Theta^\star$ must satisfy the self-consistency property:
\[
\Theta^\star = \argmax_{\tilde{\Theta}\in\Omega} \Lcal(\tilde{\Theta}\mid \Theta^\star).
\]
The idea behind the analysis is that if the geometric landscape of $\Lcal(\cdot\mid \Theta^\star)$ at the neighborhood of $\Theta^\star$ satisfies some local regularity conditions, and if the initial point is within this local region, we can ensure that each EM-update pulls the estimate closer to $\Theta^\star$~\citep{balakrishnan2017statistical,li2022estimation}. To define the local region, we first define the distance metric.  
\begin{definition}\label{def:distance}Given three constants $r_0, r_q, r_\sigma$, we define the distance between two parameters as 
\[
\dist(\Theta, \bar{\Theta})=\sum_{\ell=1}^k\underbrace{\sum_{i=1}^p\norm{\theta_{i\cdot}^\ell-\bar{\theta}_{i\cdot}^{\ell}}_2}_{\leq r_0}+\underbrace{\sum_{\ell\neq\ell'}\abr{q_{\ell\ell'}-\bar{q}_{\ell\ell'}}}_{\leq r_q}+\underbrace{\abr{\sigma^2-\bar{\sigma}^2}}_{\leq r_\sigma}. 
\]
    
\end{definition}
 We formally define the local region of $\Theta^\star$ as $B(r_0, r_q, r_\sigma,\Theta^\star)=\{\Theta\in\Omega; \dist(\Theta^\star, \Theta)\leq k r_0+r_q+r_\sigma\}$.
Let us define
\[
M(\Theta)=\argmax_{\tilde{\Theta}\in\Omega} \Lcal(\tilde{\Theta}|\Theta).
\]
Then it follows that the maximum log-likelihood update is
\begin{align*}
M_{q_{\ell\ell'}}(\Theta) &= \frac{\EE[m_{\ell\ell'}(Y_0^N;\Theta)]}{\EE[\tau_{\ell}(Y_0^N;\Theta)]}\quad \ell\neq\ell', \quad M_{q_{\ell\ell}}(\Theta)=-\sum_{\ell'\neq\ell} M_{q_{\ell\ell'}}(\Theta);\\
M_{\theta^\ell_{ij }}(\Theta) &=
\EE\sbr{\sum_{n=1}^N {p(z(t_n)=\ell\mid Y_0^N; \Theta)}\rbr{Y_{n,i}-Y_{n-1,i}-\sum_{j\neq j'}\theta_{j'}^\ell \Psi_{j'}(t_n)}\Psi_j^\top(t_n)
}\times\\
&\quad\cbr{\sum_{n=1}^N\EE\sbr{p(z(t_n)=\ell\mid Y_0^N; \Theta)\Psi_j(t_n)\Psi_j(t_n)^\top}}^{-1};\\
M_{\sigma^2}(\Theta) &= \frac{1}{2pN}\sum_{n=1}^N\sum_{\ell=1}^k\EE\sbr{p(z(t_n)=\ell\mid Y_0^N; \Theta)\bignorm{Y_n-Y_{n-1}-\sum_{i=1}^p\theta_i^\ell \Psi_i(t_n)}_2^2}.
\end{align*}

We introduce the following local regularity assumption. 
\begin{assumption}\label{assumption:population}
There exists a constant $\kappa$ such that for any $\Theta'\in B(r_0, r_q, r_\sigma,\Theta^\star)$
\[
\max\cbr{\abr{\frac{\partial M(\Theta)}{\partial \sigma^2}\mid_{\Theta=\Theta'}},\abr{\frac{\partial M(\Theta)}{\partial q_{\ell\ell'}}\mid_{\Theta=\Theta'}},\bignorm{\frac{\partial M(\Theta)}{\partial \theta_{ij}^\ell}\mid_{\Theta=\Theta'}}_2;\ell\neq\ell', i=1,\ldots,p}\leq \kappa.
\]
\end{assumption}
To provide high-level intuition, this assumption implies that $M(\Theta)$ is a continuous function within the ball $ B(r_0, r_q, r_\sigma,\Theta^\star)$ and the change with respect to $\Theta$ is bounded by $\kappa$.  
This is equivalent to Assumption~2 in~\citet{li2022estimation} that studied the Markov-switching autoregressive model. 
\begin{proposition}
    [One-step Update of Population Log-likelihood]\label{prop:mvt} Under Assumption~\ref{assumption:population}, we have
\[
\dist(M(\Theta),\Theta^\star)\leq \kappa\dist(\Theta,\Theta^\star).
\] Specifically, for each $\ell,\ell'$ and $i$
\[
\norm{M_{\theta_{i\cdot}^\ell}(\Theta)-\theta_{i\cdot}^{\ell\star}}_2\leq \kappa \norm{\theta_{i\cdot}^{\ell}-\theta_{i\cdot}^{\ell\star}}_2, \quad 
\abr{M_{q_{\ell\ell'}}(\Theta)-q_{\ell\ell'}^\star}\leq \kappa \abr{q_{\ell\ell'}-q_{\ell\ell'}^\star}
, \quad
\abr{M_{\sigma^2}(\Theta)-\sigma^{\star 2}}\leq \kappa \abr{\sigma^2-\sigma^{\star 2}}. 
\]
\end{proposition}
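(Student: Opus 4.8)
The plan is to treat $M$ as a smooth self-map of the parameter space that fixes $\Theta^\star$, and to turn the coordinatewise derivative bound of Assumption~\ref{assumption:population} into a contraction via the (integral form of the) mean value theorem. The name of the proposition already signals the mechanism.

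First I would record the anchor point. The self-consistency property $\Theta^\star=\argmax_{\tilde{\Theta}\in\Omega}\Lcal(\tilde{\Theta}\mid\Theta^\star)$, combined with the closed-form $M(\cdot)$ expressions listed just above, shows that $\Theta^\star$ is a fixed point of $M$; in particular $M_{\theta_{i\cdot}^\ell}(\Theta^\star)=\theta_{i\cdot}^{\ell\star}$, $M_{q_{\ell\ell'}}(\Theta^\star)=q_{\ell\ell'}^\star$, and $M_{\sigma^2}(\Theta^\star)=\sigma^{\star2}$. Every quantity on the left of the claimed inequalities is therefore a genuine increment $M_c(\Theta)-M_c(\Theta^\star)$ of a single output coordinate $c$.

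Next, I would fix an output coordinate or block $c$ and parametrize the segment $\gamma(t)=\Theta^\star+t(\Theta-\Theta^\star)$ for $t\in[0,1]$. Since the ball $B(r_0,r_q,r_\sigma,\Theta^\star)$ is convex and contains both endpoints, $\gamma(t)$ stays inside it, so the derivative bound of Assumption~\ref{assumption:population} applies along the whole path. Because $M$ is vector-valued I would use the integral rather than the pointwise form of the mean value theorem, writing $M_c(\Theta)-M_c(\Theta^\star)=\int_0^1\frac{d}{dt}M_c(\gamma(t))\,dt$ and expanding the integrand by the chain rule as a sum of partial derivatives of $M_c$ paired with the increments of the matching input coordinates. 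Bounding each partial derivative by $\kappa$ and collecting the increments in the block norm of Definition~\ref{def:distance} yields, for the scalar coordinates, $\abr{M_{q_{\ell\ell'}}(\Theta)-q_{\ell\ell'}^\star}\le\kappa\abr{q_{\ell\ell'}-q_{\ell\ell'}^\star}$ and $\abr{M_{\sigma^2}(\Theta)-\sigma^{\star2}}\le\kappa\abr{\sigma^2-\sigma^{\star2}}$, and for the grouped coefficient block $\norm{M_{\theta_{i\cdot}^\ell}(\Theta)-\theta_{i\cdot}^{\ell\star}}_2\le\kappa\norm{\theta_{i\cdot}^\ell-\theta_{i\cdot}^{\ell\star}}_2$. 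Summing these over $\ell,\ell',i$ and reading off the three groups of Definition~\ref{def:distance} gives $\dist(M(\Theta),\Theta^\star)\le\kappa\,\dist(\Theta,\Theta^\star)$.

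The main obstacle is justifying that $M$ really is differentiable with the advertised block structure throughout the ball, so that Assumption~\ref{assumption:population} is meaningful and the chain rule applies. Two points need care: (i) well-definedness and smoothness of the argmax, which requires that the maximizer lie in the interior of $\Omega$ (so no boundary or constraint terms enter the stationarity conditions) and that the Gram-type matrix $\sum_{n}\EE[p(z(t_n)=\ell\mid Y_0^N;\Theta)\Psi_j(t_n)\Psi_j(t_n)^\top]$ inverted in $M_{\theta}$ stay nonsingular on $B$, which is what lets $M$ inherit smoothness from the smoothness of the smoothed probabilities in $\Theta$; and (ii) matching the per-coordinate derivative bound to the block norms, i.e.\ ensuring the chain-rule expansion for the $(\ell,i)$ output block contracts precisely against $\norm{\theta_{i\cdot}^\ell-\theta_{i\cdot}^{\ell\star}}_2$ (the grouping chosen in $\dist$ is exactly what makes this bookkeeping close). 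Once interiority and nonsingularity are in hand, the contraction itself is routine mean-value-theorem estimation.
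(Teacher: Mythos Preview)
Your approach is essentially identical to the paper's: use self-consistency to identify $\Theta^\star$ as a fixed point of $M$, then apply the mean value theorem coordinate-block by coordinate-block to turn the derivative bound of Assumption~\ref{assumption:population} into the stated contractions, and finally sum to obtain $\dist(M(\Theta),\Theta^\star)\le\kappa\,\dist(\Theta,\Theta^\star)$. The paper's own proof is in fact terser than yours---it simply writes $M_c(\Theta)-c^\star=M_c(\Theta)-M_c(\Theta^\star)$ for each block $c\in\{\sigma^2,q_{\ell\ell'},\theta_{i\cdot}^\ell\}$ and invokes the mean value theorem directly, without spelling out the integral form, the segment $\gamma(t)$, or the regularity caveats you raise; your additional discussion of interiority, nonsingularity of the Gram matrix, and block-norm bookkeeping goes beyond what the paper records.
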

This proposition implies that if $\Theta\in B(r_0, r_q, r_\sigma,\Theta^\star)$, then at each update $M(\Theta)\in B(r_0, r_q, r_\sigma,\Theta^\star)$. Hence, if the initial guess $\Theta^0\in B(r_0, r_q, r_\sigma,\Theta^\star)$, then after $\ell$ iterates of EM steps, we can guarantee that $\dist(M(\Theta^{\ell-1}),\Theta^\star)\leq \kappa^\ell\dist(\Theta,\Theta^\star)$. Ultimately, the EM algorithm would converge to the $\Theta^\star$ as $\ell$ goes to infinity. 

\subsection{Truncated  EM}\label{ssec:approximateEM}
We now turn to study the convergence properties of the empirical log-likelihood~\eqref{eq:emprisk}. The major challenge for extending the result to empirical risk is that the data are not \emph{i.i.d.} and hence requires more technical efforts to do the analysis. Our approach is inspired from~\citep{yang2017statistical, li2022estimation} that we construct an $r$-truncated function to approximate the original function~\eqref{eq:emprisk}. 
First, we show the convergence property with the $r$-truncated function.

Let us define the truncated probability of $p(Z(t_n)\mid Y_0^N;\Theta)$ as
\[
p\rbr{Z(t_n)=\ell\mid Y_{(n-r)\vee 0}^{(n+r)\wedge N};\Theta}=\hat{w}_{\ell,\Theta}(t_n)\quad n=1,\ldots, N.
\]
We replace $p(Z(t_n)\mid Y_0^N;\Theta)$ in~\eqref{eq:emprisk} with $\hat{w}_{\ell,\Theta}(t_n)$ and define the new empirical log-likelihood as $\tilde{\Lcal}_N(\tilde{\Theta}\mid\Theta)$:
\begin{align}
\tilde{\Lcal}_N(\tilde{\Theta}\mid \Theta)
    &=\sum_{\ell,\ell'=1}^k m_{\ell\ell'}(Y_0^N;\Theta)\log \tilde{q}_{\ell\ell'}-\tilde{q}_\ell\tau_{\ell}(Y_0^N;\Theta)-\sum_{n=1}^N \frac{p}{2}(\log2\tilde{\sigma}^2+\log 2\pi)\notag\\
    &\quad-\frac{1}{4\tilde{\sigma}^2}\sum_{n,\ell,i=1}^{N,k,p}
\hat{w}_{\ell,\Theta}(t_n)\rbr{Y_{n,i}-Y_{n-1,i} - \sum_{j=1}^p\tilde{\theta}_{ij}^{\ell}\hat{\Psi}_j(t_n)}^2\notag\\
 &\quad-\lambda_n\sum_{\ell=1}^k\sum_{i,j=1}^p\cbr{\sum_{n=1}^N(\tilde{\theta}^\ell_{ij}\hat{\Psi}_j(t_n))^2}^{1/2}\label{eq:emptruncaterisk}    
\end{align}

Define $\tilde{M}_n(\Theta)=\argmax_{\tilde{\Theta}\in\Omega}\tilde{\Lcal}_N(\tilde{\Theta}\mid \Theta)$ and hence
\begin{align*}
    \tilde{M}_{n,\theta_{ij}^\ell}(\Theta) & = 
    \cbr{\sum_{n=1}^N\hat{w}_{\ell,\Theta}(t_n)\rbr{Y_{n,i}-Y_{n-1,i}-\sum_{j\neq j'}\theta_{ij'}^\ell\hat{\Psi}_{j'}(t_n)}\hat{\Psi}_j^\top(t_n)}\times\\
    &\quad\cbr{\sum_{n=1}^N\rbr{\lambda_n+ \hat{w}_{\ell,\Theta}(t_n)}\hat{\Psi}_j(t_n)\hat{\Psi}_j(t_n)^\top}^{-1}. 
\end{align*}

In the following, we want to show the contraction result of executing one run of the EM step on $\tilde{\Lcal}_N$. Here, we fix a index $i$ and for each $\theta_{ij}^\ell$, we drop the state index $\ell$, the row index $i$ and $j$ to reduce the overhead on notation. We define $\theta=(\theta_{i1}^{\ell\top},\ldots,\theta_{ip}^{\ell\top})\in\RR^{1\times pm}$, where $\theta_{ij}^\ell\in\RR^m$. Furthermore, we define $\hat{\Theta}=\argmax \tilde{\Lcal}_N(\Tilde{\Theta}\mid \Theta)$ and $\hat{\theta}=(\hat{\theta}_{i1}^{\ell\top},\ldots,\hat{\theta}_{ip}^{\ell\top})$. Similarly, define the population optimal condition on $\Theta$ as $\check{\Theta}=M(\Theta)=\argmax_{\Theta'} \Lcal(\Theta'\mid \Theta)$ and  $\check\theta=(\check{\theta}_{i1}^{\ell\top},\ldots,\check{\theta}_{ip}^{\ell\top})$; $\theta^\star= (\theta_{i1}^{\ell\star\top},\ldots,\theta_{ip}^{\ell\star\top})$ for $\Theta^\star=\argmax\Lcal_{\tilde{\Theta}\in\Omega}(\tilde{\Theta}\mid \Theta^\star)$.

Define $\Psi(t_n)=(\Psi_1^\top(t_n),\ldots, \Psi_p^\top(t_n))\in\RR^{pm}$ and $\hat{\Psi}(t_n)=(\hat{\Psi}_1^\top(t_n),\ldots, \hat{\Psi}_p^\top(t_n))\in\RR^{pm}$.  Let $\norm{\theta}_{1,\hat{K}_\Psi}=\sum_{j=1}^p\norm{\theta_{ij}^\ell}_{\hat{K}_{\Psi_j}}$ and 
$\norm{\theta}_{\infty, \hat{K}_\Psi} = \max_j \norm{\theta_{ij}^\ell}_{\hat{K}_{\Psi_j}}$
, where $\norm{\theta_{ij}^\ell}_{\hat{K}_{\Psi_{ij}}}=\{\theta_{ij}^{\ell\top}\hat{K}_{\Psi_j}\theta_{ij}^\ell\}^{1/2}$ and $\hat{K}_{\Psi_j}=N^{-1}\sum_{n=1}^N\hat{\Psi}_j(t_n)\hat{\Psi}_j(t_n)^\top$. Note that the dual norm of $\norm{\cdot}_{1, \hat{K}_{\Psi}}$ is $\norm{\cdot}_{\infty, \hat{K}_{\Psi}^*}$.

We study the guarantee of the lasso estimation under two well-known conditions: the restricted eigenvalue condition and the deviation bound condition~\citep{Agawal2012fast, loh2012high}. We make slight modifications to the conditions in~\citep{loh2012high} to tailor for the structured $\ell_1$-norm, $\norm{\cdot}_{1, \hat{K}_{\Psi}}$, and $\ell_{\infty}$-norm, $\norm{\cdot}_{\infty, \hat{K}_{\Psi}^*}$ used in our case. 
\begin{assumption}[Restricted Eigenvalue]\label{assumption:RE}
    For any $\Theta\in\Omega$ and $\Delta\in\Omega$, there exists $\alpha,\tau>0$ such that
    \begin{align}
            \frac{1}{N}\sum_{n=1}^N\hat{w}_{\Theta, \ell}(t_n)\cbr{\Delta\hat{\Psi}(t_n)}^2\geq \alpha\norm{\Delta}_2^2-\tau\norm{\Delta}_{1,\hat{K}_\Psi}^2,\label{eq:re}
    \end{align}

    where $\alpha\geq50 \max_j\sigma_{\max}^2(\hat{K}_{\Psi_j})s\tau$.
\end{assumption}

Define 
\begin{align}
\Delta_\varepsilon&=\Delta_{\varepsilon}(\Theta)=\frac{1}{N}\sum_{n=1}^N\hat{w}_{\Theta,\ell}(t_n)\cbr{Y_{n,i}-Y_{n-1,i}-\sum_{j}\theta^{\ell\star }_{ij}\Psi_j(t_n)}\Psi(t_n)^\top \notag\\
    &\quad\quad\quad\quad\quad - \frac{1}{N}\sum_{n=1}^N\EE\sbr{\hat{w}_{\Theta,\ell}(t_n)\cbr{Y_{n,i}-Y_{n-1,i}-\sum_{j}\theta^{\ell\star }_{ij}\Psi_j(t_n)}\Psi(t_n)^\top}\label{eq:define:deltaepsilon};\\
\Delta_\Psi&=\Delta_\Psi(\Theta)=\frac{1}{N}\sum_{n=1}^N\hat{w}_{\Theta,\ell}(t_n)\cbr{{\Psi}(t_n){\Psi}(t_n)^\top-\hat{\Psi}(t_n)\hat{\Psi}(t_n)^\top};\label{eq:define:deltapsi}\\
    \Delta_w&=\Delta_w(\Theta)
    =\frac{1}{N}\sum_{n=1}^N\EE\sbr{
    \cbr{\hat{w}_{\Theta,\ell}(t_n)-w_{\Theta,\ell}(t_n)}\cbr{{Y}_{n,i}-Y_{n-1,i}-\sum_{j}\theta^{\ell\star }_{ij}\Psi_j(t_n)}\Psi(t_n)^\top
    }.\label{eq:define:deltaw}
\end{align}

Recall that $m$ is the number of basis function, $h$ is the sampling interval, $\sigma$ is the variance of the noise, and $r$ is the truncation length.
\begin{assumption}[Deviation Bound]\label{assumption:DB}  For any $\Theta\in\Omega$, there exists a deterministic function $\QQ$ such that 
\[
\max\cbr{\norm{ \Delta_{\varepsilon}}_{\infty,\hat{K}_{\Psi}^*},
\norm{ \theta^\star\Delta_{\Psi}}_{\infty,\hat{K}_{\Psi}^*},
\norm{ \Delta_{w}}_{\infty,\hat{K}_{\Psi}^*}
}\leq \QQ(N, p, s, m, r,\delta_1).
\]  
\end{assumption}
With the restricted eigenvalue and deviation condition, we are now ready to show the contraction result of running one EM update on $\tilde{\Lcal}_N$. 
\begin{lemma}\label{lemma:onestep:theta}
    Suppose that Assumption~\ref{assumption:RE},~\ref{assumption:DB} hold and 
    \[
     \lambda\geq 4\max\cbr{3\QQ(N, p, s, m, r,\delta_1), 
     \frac{4}{\sqrt{s}}
\frac{\sigma_{\max}{(K_{\Psi})}}{\max_j\sigma_{\max}(\hat{K}_{\Psi_j})}
{\norm{\check{\theta}-\theta^\star }_2}
     }.
    \]
   Then, we have
   \[
   \norm{\hat{\theta}-\theta^{\star}}_2\leq\frac{4}{\alpha}\rbr{
5\lambda\max_j\sigma_{\max}(\hat{K}_{\psi_j})\sqrt{s}+\sigma_{\max}(K_\Psi)\norm{\check{\theta}-\theta^\star }_2}.
   \]
\end{lemma}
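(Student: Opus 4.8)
The plan is to read the $\theta$-block of the truncated $M$-step $\hat{\theta}=\argmax_{\tilde{\Theta}\in\Omega}\tilde{\Lcal}_N(\tilde{\Theta}\mid\Theta)$ as a penalized weighted least-squares (group-Lasso) problem and to run the standard restricted-eigenvalue/deviation-bound argument, but with the weighted pair $\norm{\cdot}_{1,\hat{K}_\Psi}$ and its dual $\norm{\cdot}_{\infty,\hat{K}_\Psi^*}$ replacing the usual $\ell_1/\ell_\infty$ norms. Writing $\Delta=\hat{\theta}-\theta^\star$, I first record the basic inequality. Since $\sum_{n}(\theta_{ij}^\ell\hat{\Psi}_j(t_n))^2=N\,\theta_{ij}^{\ell\top}\hat{K}_{\Psi_j}\theta_{ij}^\ell$, the penalty in \eqref{eq:emptruncaterisk} is exactly $\sqrt{N}\,\lambda\norm{\theta}_{1,\hat{K}_\Psi}$; comparing $\tilde{\Lcal}_N$ at $\hat{\theta}$ and $\theta^\star$, expanding the quadratic, and normalizing by $N$ (absorbing the $1/(4\sigma^2)$ factor into $\lambda$) yields
\[
\frac{1}{N}\sum_{n=1}^N\hat{w}_{\Theta,\ell}(t_n)\cbr{\Delta\hat{\Psi}(t_n)}^2\le 2\,S\Delta^\top+2\lambda\rbr{\norm{\theta^\star}_{1,\hat{K}_\Psi}-\norm{\hat{\theta}}_{1,\hat{K}_\Psi}},
\]
where $S=\tfrac1N\sum_n\hat{w}_{\Theta,\ell}(t_n)\{Y_{n,i}-Y_{n-1,i}-\theta^\star\hat{\Psi}(t_n)\}\hat{\Psi}(t_n)^\top$ is the empirical score at $\theta^\star$. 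I then use the stated duality to bound $\abr{S\Delta^\top}\le\norm{S}_{\infty,\hat{K}_\Psi^*}\norm{\Delta}_{1,\hat{K}_\Psi}$.

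The heart of the proof is controlling $\norm{S}_{\infty,\hat{K}_\Psi^*}$. I would add and subtract the idealized score built from the true integrals $\Psi$, the non-truncated smoothed weights $w_{\Theta,\ell}$, and its expectation, so that $S$ splits into the three deviation pieces $\Delta_\varepsilon$, $\theta^\star\Delta_\Psi$, $\Delta_w$ of \eqref{eq:define:deltaepsilon}--\eqref{eq:define:deltaw}, plus the population first-order term $\EE[\tfrac1N\sum_n w_{\Theta,\ell}(t_n)\{Y_{n,i}-Y_{n-1,i}-\theta^\star\Psi(t_n)\}\Psi(t_n)^\top]$. The stationarity condition defining $\check{\Theta}=M(\Theta)$ forces the analogous expectation with $\check{\theta}$ in place of $\theta^\star$ to vanish; subtracting that population normal equation rewrites the last term as $(\check{\theta}-\theta^\star)\,\EE[\tfrac1N\sum_n w_{\Theta,\ell}(t_n)\Psi(t_n)\Psi(t_n)^\top]$, whose dual norm is at most $\sigma_{\max}(K_\Psi)\norm{\check{\theta}-\theta^\star}_2$ up to the $\hat{K}_{\Psi_j}$-factors incurred in passing between $\norm{\cdot}_{\infty,\hat{K}_\Psi^*}$ and $\norm{\cdot}_2$. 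The Deviation Bound (Assumption~\ref{assumption:DB}) controls each of $\norm{\Delta_\varepsilon}_{\infty,\hat{K}_\Psi^*}$, $\norm{\theta^\star\Delta_\Psi}_{\infty,\hat{K}_\Psi^*}$, $\norm{\Delta_w}_{\infty,\hat{K}_\Psi^*}$ by $\QQ$, giving $\norm{S}_{\infty,\hat{K}_\Psi^*}\le 3\QQ+\sigma_{\max}(K_\Psi)\norm{\check{\theta}-\theta^\star}_2/(\text{norm factor})$. The two-part hypothesis $\lambda\ge4\max\{3\QQ,\;\tfrac{4}{\sqrt{s}}\tfrac{\sigma_{\max}(K_\Psi)}{\max_j\sigma_{\max}(\hat{K}_{\Psi_j})}\norm{\check{\theta}-\theta^\star}_2\}$ is engineered precisely so that both contributions to $\norm{S}_{\infty,\hat{K}_\Psi^*}$ are dominated by $\lambda$.

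With the score controlled I would follow the usual two moves. First, decomposability of the block norm gives $\norm{\theta^\star}_{1,\hat{K}_\Psi}-\norm{\hat{\theta}}_{1,\hat{K}_\Psi}\le\norm{\Delta_{S_0}}_{1,\hat{K}_\Psi}-\norm{\Delta_{S_0^c}}_{1,\hat{K}_\Psi}$ on the support $S_0$ of $\theta^\star$ ($\abr{S_0}=s$); combined with the score bound this forces $\Delta$ into the cone $\norm{\Delta_{S_0^c}}_{1,\hat{K}_\Psi}\le3\norm{\Delta_{S_0}}_{1,\hat{K}_\Psi}$, hence $\norm{\Delta}_{1,\hat{K}_\Psi}\le4\norm{\Delta_{S_0}}_{1,\hat{K}_\Psi}\le4\sqrt{s}\,\max_j\sigma_{\max}^{1/2}(\hat{K}_{\Psi_j})\norm{\Delta}_2$. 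Second, I apply the Restricted Eigenvalue condition \eqref{eq:re} to the left-hand quadratic, and use the cone bound together with $\alpha\ge50\max_j\sigma_{\max}^2(\hat{K}_{\Psi_j})s\tau$ to absorb the $\tau\norm{\Delta}_{1,\hat{K}_\Psi}^2$ term, leaving $\tfrac1N\sum_n\hat{w}_{\Theta,\ell}(t_n)\{\Delta\hat{\Psi}(t_n)\}^2\ge\tfrac{\alpha}{2}\norm{\Delta}_2^2$. Feeding this lower bound and the cone upper bounds back into the basic inequality, carrying the population term $\sigma_{\max}(K_\Psi)\norm{\check{\theta}-\theta^\star}_2$ explicitly rather than folding it entirely into $\lambda$, and dividing through by $\norm{\Delta}_2$ yields the claimed $\norm{\hat{\theta}-\theta^\star}_2\le\tfrac{4}{\alpha}(5\lambda\sqrt{s}\max_j\sigma_{\max}(\hat{K}_{\Psi_j})+\sigma_{\max}(K_\Psi)\norm{\check{\theta}-\theta^\star}_2)$, where the residual $\norm{\check{\theta}-\theta^\star}_2$ is itself controlled by the population one-step contraction of Proposition~\ref{prop:mvt}.

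The main obstacle I anticipate is the score decomposition in the second step: isolating exactly $\Delta_\varepsilon$, $\theta^\star\Delta_\Psi$, $\Delta_w$ requires inserting the true $\Psi$, the non-truncated weights $w_{\Theta,\ell}$, and the expectation in the right order, and several cross terms (e.g. those pairing $Y_{n,i}-Y_{n-1,i}$ with $\hat{\Psi}-\Psi$, and the approximation residuals $\rho_{n,i},r_{n,i}$ from \eqref{eq:approxY}) must be shown to be swept into the $\QQ$-controlled pieces. A secondary but unavoidable nuisance is the bookkeeping of the norm-conversion constants: every passage between $\norm{\cdot}_{1,\hat{K}_\Psi}$, $\norm{\cdot}_{\infty,\hat{K}_\Psi^*}$ and $\norm{\cdot}_2$ produces factors of $\sigma_{\max}(\hat{K}_{\Psi_j})$, and matching these against the precise constants $5$, $4/\sqrt{s}$ and $\sigma_{\max}(K_\Psi)$ in the statement is where most of the care is needed.
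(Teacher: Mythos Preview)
Your proposal is essentially correct and follows the same route as the paper: basic inequality, decomposition of the empirical score at $\theta^\star$ into $\Delta_\varepsilon+\theta^\star\Delta_\Psi+\Delta_w$ plus the population first-order remainder $(\check{\theta}-\theta^\star)K_\Psi^w$ (via the stationarity condition for $\check{\theta}$), cone containment, and then the RE lower bound with the $\tau$-term absorbed using $\alpha\ge 50\max_j\sigma_{\max}^2(\hat{K}_{\Psi_j})s\tau$.

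The one place your description diverges slightly from the paper is the cone step. You bound the population piece in the dual norm $\norm{\cdot}_{\infty,\hat{K}_\Psi^*}$ and aim for the clean cone $\norm{\Delta_{S_0^c}}_{1,\hat{K}_\Psi}\le 3\norm{\Delta_{S_0}}_{1,\hat{K}_\Psi}$. The paper instead pairs $(\check{\theta}-\theta^\star)K_\Psi^w$ directly with $\Delta$ via Cauchy--Schwarz in $\ell_2$, obtaining the extra term $2\sigma_{\max}(K_\Psi)\norm{\Delta}_2\norm{\check{\theta}-\theta^\star}_2$ on the right of the basic inequality. This term is \emph{not} in $\norm{\Delta}_{1,\hat{K}_\Psi}$ form, so the cone becomes
\[
\tfrac{\lambda}{2}\norm{\Delta_{S^c}}_{1,\hat{K}_\Psi}\le \tfrac{3\lambda}{2}\norm{\Delta_S}_{1,\hat{K}_\Psi}+2\sigma_{\max}(K_\Psi)\norm{\Delta}_2\norm{\check{\theta}-\theta^\star}_2,
\]
and the second half of the $\lambda$-hypothesis is used here (not to dominate a dual norm) to convert the last term into $\tfrac{\lambda}{2}\max_j\sigma_{\max}(\hat{K}_{\Psi_j})\sqrt{s}\norm{\Delta}_2$, yielding $\norm{\Delta}_{1,\hat{K}_\Psi}\le 5\max_j\sigma_{\max}(\hat{K}_{\Psi_j})\sqrt{s}\norm{\Delta}_2$. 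This explains both the constant $5$ and the explicit $\sigma_{\max}(K_\Psi)\norm{\check{\theta}-\theta^\star}_2$ in the final bound, which you correctly flagged as the delicate bookkeeping.
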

This lemma shows that $\norm{\hat{\theta}-\theta^{\star}}_2$ is bounded by a statistical error governed by $\QQ(N, p, s, m, r,\delta_1)$ and $\norm{\check{\theta}-\theta^\star }_2$. As Proposition~\ref{prop:mvt} demonstrates, at each iterate of EM on the population log-likelihood $\Lcal$, $\norm{\check{\theta}-\theta^\star }_2$ contracts. This piece of result shows that running EM on the truncated log-likelihood $\tilde{\Lcal}_N$ tends to move the estimates toward $\theta^\star$ under proper condition of $\QQ(N, p, s, m, r,\delta_1)$.

A natural question is how practical Assumption~\ref{assumption:RE}-\ref{assumption:DB} are? What is the minimum number of samples required for these assumptions to hold true? While variants these two assumptions are standards in high-dimensional sparse regression~\citep{loh2012high} and sparse additive model~\citep{ravikumar2009sparse}, the main challenge to verify the assumptions is that the right hand side of~\eqref{eq:re} as well as~\eqref{eq:define:deltaepsilon}--\eqref{eq:define:deltaw} are sums of dependent variables. Hence, standard concentration inequality for \emph{i.i.d.} data can not be applied. Our proof strategy is to show that under Assumption~\ref{assumption:mixingbeta}, individual summands in \eqref{eq:re}--\eqref{eq:define:deltaw} are $\beta$-mixing as well. Consequently, we can apply concentration inequality for mixing process~\citep{merlevede2011bernstein} to verify Assumption~\ref{assumption:RE}--\ref{assumption:DB}. We leave the theoretical results and discussion on the applicability of Assumption~\ref{assumption:RE}--\ref{assumption:DB} in Lemma~\ref{lemma:RE:condition} and Lemma~\ref{lemma:deviation}, respectively.

Finally, the contraction result of $\abr{\hat{q}_{\ell\ell'}-q_{\ell\ell'}^\star}$ for $\ell\neq \ell'$ and $\abr{\sigma^2-\sigma^{2\star}}$ by running one iterate of EM on $\tilde{\Lcal}_N$ are discussed in Lemma ~\ref{lemma:onestep:sigma}--\ref{lemma:onestep:q}. We leave the results to the Appendix.

\subsection{Main results}

As discussed in Section~\ref{ssec:popEM}--\ref{ssec:approximateEM}, under suitable regularity conditions in the local region $B(r_0, \Theta^\star)$, one can show that each EM iterate on both $\tilde{\Lcal}_N$ and $\Lcal$ pulls the estimates toward $\Theta^\star$. The concern is how likely we are to observe similar contraction behaviors if we run EM algorithm on $\Lcal_N$? In this section, we show that under the reversible, mixing, stationary conditions, and an additional eigenvalue condition, this question can be resolved.  

In addition to Assumption~\ref{assumption:mixingbeta}, we provide a sufficient condition for the continuous-time Markov chain $Z(t)$ to be mixing following the condition introduced in~\citep{van2008hidden, yang2017statistical}. Recall that $h$ is the sampling period and define $P=\exp(Q h)$ for a transition rate matrix $Q$. 
Furthermore, we assume that the set $\Omega_Q$ is confined to the set whose underlying chain is reversible. 
The reversibility of Markov chain implies the following. 
\begin{assumption}\label{assumption:reversible}
For every $Q\in\Omega_Q$, let $\pi$ be the the invariant distribution such that $\pi Q = 0$. For every $i,j=1,\ldots, k$, assume that
    \[
    \pi(i)Q(i,j)=\pi(j)Q(j,i).
    \]
\end{assumption}
Let $\{Z(t_n)\}_{n=0,\ldots,N}$ be the sampled Markov chain of $(Z(t))_{t\in[0,1]}$ associated with the transition probability matrix $P:P(h)=\exp(Q h)$. It is well-known that the sampled Markov chain is also reversible.

\begin{assumption}[Mixing Condition]\label{assumption:mixing}
    There exists some constant $\zeta=\zeta(h)\in(0,1]$ such that for any $Q\in\Omega_Q$ and for all $i,j=1,\ldots, k$
    \begin{equation}\label{eq:mixing}
    \zeta \leq \frac{P_{ij}}{\pi_j}\leq \zeta^{-1}.
    \end{equation}
\end{assumption}
Noting that if the continuous-time Markov chain is irreducible then it follows that $P_{ij}>0$ for any $h>0$ as discussed in Proposition~6.1 of~\citep{lalley2012continuous}, and hence there exists a sufficiently small $\zeta$ that satisfies~\eqref{eq:mixing}. As ~\citet{yang2017statistical} showed that this condition is an sufficient condition for the sampled Markov chain $Z(t_n)$ to be geometrically $\beta$-mixing.
Assumption~\ref{assumption:reversible}, ~\ref{assumption:mixing} are common in Markov chains and are key components for the statistical analysis later on: with them, we can validate that underlying conditional filtered/smoothing processes exhibit geometric mixing property.

Now we define the minimum stationary probability as
\begin{align}
\pi_{\min}=\min_{Q\in\Omega_Q}\min_{\ell=1,\ldots,k}\pi_\ell ,\label{eq:minimum_statdis}
\end{align}
where $\pi Q=0$ is the stationary distribution.  Furthermore, $\pi_{\min}>0$ implies that every state has nontrivial occurrence probability when the Markov chain reaches to the stationary state. This holds true because any $Q\in\Omega_Q$ is irreducible and positive recurrent and hence the stationary distribution will be strictly positive for all states.

The following two lemmas show that the truncated probability is close to the original probability in the absolute value. 

\begin{lemma}\label{lemma:truncated_smoooth}
Suppose that $\delta_{\min}=\min_{\Theta\in\Omega}\min_{n=1,\ldots,N}P(Z_n, Y_n;\Theta)>0$ and let $C$ be an absolute constant.
Under Assumption~\ref{assumption:reversible},~\ref{assumption:mixing}, we have
\begin{multline*}
    \abr{P\bigg(Z_n=\ell, Z_{n+1}=\ell'\mid Y_{(n-r)\vee 0}^{(n+r)\wedge N}\bigg)
        -
        P\bigg(Z_n=\ell, Z_{n+1}=\ell'\mid Y_{0}^{N}\bigg)
        }
        \leq C\delta_{\min}^{-1}\zeta^{-8}\pi_{\min}^{-2}(1-(\zeta\pi_{\min})^2)^{r-1}
        .
\end{multline*}
Furthermore, we have
    \[
    \bigg|P(Z_n=\ell\mid Y_0^N)-P\rbr{Z_n=\ell\mid Y_{(n-r)\vee 0}^{(n+r)\wedge N}}\bigg|\leq \frac{10}{\delta_{\min}}\zeta^{-8}\pi_{\min}^{-2}\cbr{1-(\zeta\pi_{\min})^2}^{r-1},
    \]
    for any $\ell=1,\ldots,k$.
\end{lemma}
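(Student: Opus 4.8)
The plan is to use the forward--backward representation of the smoothing distribution together with the \emph{forgetting} (geometric ergodicity) of both the forward filter and the backward recursion, which is supplied by the uniform minorization implied by Assumption~\ref{assumption:mixing}. Writing $\alpha_n(\ell)=P(Z_n=\ell, Y_0^n;\Theta)$ for the forward message and $\beta_n(\ell)=P(Y_{n+1}^N\mid Z_n=\ell;\Theta)$ for the backward message, the joint smoothing probability factorizes as
\[
P\rbr{Z_n=\ell, Z_{n+1}=\ell'\mid Y_0^N}=\frac{\alpha_n(\ell)\,P_{\ell\ell'}\,g_{\ell'}(Y_{n+1})\,\beta_{n+1}(\ell')}{\sum_{a,b}\alpha_n(a)\,P_{ab}\,g_{b}(Y_{n+1})\,\beta_{n+1}(b)},
\]
where $g_b(\cdot)$ is the Gaussian emission weight. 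Truncating the conditioning window from $Y_0^N$ to $Y_{(n-r)\vee 0}^{(n+r)\wedge N}$ amounts to replacing $\alpha_n$ by a forward message initialized $r$ steps in the past and $\beta_{n+1}$ by a backward message initialized $r$ steps in the future. Hence it suffices to bound, in total variation, the discrepancy between the \emph{normalized} forward filters driven by the same observations $Y_{(n-r)\vee 0}^n$ but started from two different distributions (and symmetrically for the backward filter), and then to control the change in the normalizing denominator.

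The first key step is a one-step contraction lemma for the normalized forward filter. Assumption~\ref{assumption:mixing} yields the uniform minorization $P_{\ell\ell'}\ge \zeta\pi_{\ell'}\ge \zeta\pi_{\min}$ together with the two-sided ratio bound $P_{\ell\ell'}/P_{a\ell'}\in[\zeta^2,\zeta^{-2}]$. I would show that the observation-weighted prediction--update map $\phi\mapsto\phi^{+}$, with $\phi^{+}(\ell')\propto g_{\ell'}(Y)\sum_\ell \phi(\ell)P_{\ell\ell'}$, contracts the total-variation distance between any two inputs by a factor $1-(\zeta\pi_{\min})^2$: the lower bound $\zeta\pi_{\min}$ on the transition entries supplies a Doeblin-type minorizing component common to both filters regardless of the emission weights, while the ratio bounds from \eqref{eq:mixing} keep the normalizing constants comparable, and the product of these two effects produces the squared constant. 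Iterating over the $r$ steps separating the true initialization from the truncated one, and using that TV distance is bounded by $1$, gives
\[
\bignorm{\phi_n - \phi_n^{(r)}}_{\mathrm{TV}}\le \cbr{1-(\zeta\pi_{\min})^2}^{r-1},
\]
uniformly over the two unknown initial distributions.

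The second step treats the backward message, and this is where Assumption~\ref{assumption:reversible} enters: by detailed balance the time-reversed sampled chain has transition matrix $P$ itself, so the identical one-step contraction applies to the backward recursion and yields the analogous factor $\{1-(\zeta\pi_{\min})^2\}^{r-1}$ for the future truncation. The final step is bookkeeping: I substitute the forward and backward TV bounds into the factorized smoother and bound the perturbation of the ratio by its numerator perturbation divided by the denominator. The denominator is lower-bounded uniformly using $\delta_{\min}=\min_\Theta\min_n P(Z_n,Y_n;\Theta)>0$ together with the minorization constants, which is precisely where the prefactors $\delta_{\min}^{-1}$, the power $\zeta^{-8}$ (collecting the $\zeta^{-2}$ ratio bounds incurred at the forward predict, the backward predict, and the two normalization controls), and $\pi_{\min}^{-2}$ arise; this produces the joint bound with an absolute constant $C$. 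The marginal bound follows from the same decomposition applied directly to $P(Z_n=\ell\mid Y_0^N)=\sum_{\ell'}(\cdots)$, with the constants tracked slightly more tightly to obtain the explicit factor $10$ rather than an extra factor of $k$.

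The main obstacle is the contraction lemma of the second paragraph: unlike the unconditional chain, the filter is reweighted at every step by the emission densities $g_{\ell'}(Y_{n+1})$, which can be arbitrarily large or small and therefore threaten to undo the mixing induced by $P$. The delicate point is to verify that the \emph{uniform} lower bound $P_{\ell\ell'}\ge\zeta\pi_{\min}$ furnishes a minorizing mass that survives emission reweighting and renormalization, so that the rate $1-(\zeta\pi_{\min})^2$ holds with a constant independent of the observations; quantifying how the likelihood ratios from \eqref{eq:mixing} propagate through the normalizing constants, and thereby pinning down the exact powers of $\zeta$, is the most technical part. A secondary difficulty is controlling the denominator uniformly from below so that the $\delta_{\min}^{-1}$ prefactor is legitimate across all $\Theta\in\Omega$.
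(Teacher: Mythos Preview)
Your high-level approach is exactly the paper's: a forward--backward factorization of the smoother, Doeblin-type contraction of the normalized forward and backward recursions under Assumption~\ref{assumption:mixing}, and a separate control of the normalizing ratio that produces the $\delta_{\min}^{-1}\zeta^{-8}\pi_{\min}^{-2}$ prefactor. Two details of your sketch do not match the paper's model and would need repair.

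First, the emissions here are \emph{autoregressive}: the paper has $P(Y_n\mid Z_n,Y_{n-1})$, not $g_{\ell'}(Y_{n+1})$ (compare Figure~\ref{fig:HMMs}). This changes both the forward operator $F_n$ and the backward operator $\tilde F_n$; the paper remarks explicitly in the proof of Lemma~\ref{lemma:truncatedforward} that this is the point of departure from the standard HMM analysis. Second, your claim that reversibility makes the backward recursion ``identical'' to the forward one is not how the argument runs. The backward kernel is $\tilde K_{n\mid\ell}(j,i)=P(Z_\ell=i\mid Z_{\ell+1}=j,Y_n^{\ell+1})$, which depends on the whole past window of observations, not just one emission; reversibility is used only to obtain the two-sided bound $\zeta\pi_{\min}\pi(i)\le P_{ij}\le(\zeta\pi_{\min})^{-1}\pi(i)$, which when substituted into $\tilde K_{n\mid\ell}$ yields the minorization with constant $(\zeta\pi_{\min})^2$. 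In fact the paper gets \emph{different} rates in the two directions: $1-\zeta\pi_{\min}$ forward (Lemma~\ref{lemma:truncatedforward}) and $1-(\zeta\pi_{\min})^2$ backward (Lemma~\ref{lemma:truncatedbackward}); your argument that the squared constant arises from ``the product of the minorization and the ratio bounds'' in the forward filter is not the mechanism. Finally, what you dismiss as bookkeeping---bounding the change in the normalizing ratio $B$ of~\eqref{eq:tmp24}---is carried out as a separate Lemma~\ref{lemma:A_1} with a four-case analysis on whether $n\pm r$ hits the boundary; it is where several of the $\zeta^{-2}$ factors are actually spent.
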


We see that as $r$ increases, the absolute difference between the truncated filtered probability and the original probability decays geometrically. Given that the Markov chain has finite state, it is easy to see that the total variation distance of two probability distributions also decays geometrically. With the results from Lemma~\ref{lemma:truncated_smoooth}, we build the intuition that the empirical log-likelihood $\Lcal_N$ in~\eqref{eq:emprisk} shall be close to~\eqref{eq:emptruncaterisk} for $r$ reasonably large. Hence, we can expect that running EM on $\Lcal_N$ would give us similar outcome compared to running EM on $\tilde{\Lcal}_N$.

For each $i=1,\ldots,p$, define $K_{\Psi_j}=N^{-1}\sum_{n=1}^N\EE[\Psi_j(t_n)\Psi_j^\top(t_n)]$ and $\hat{K}_{\Psi_j}=N^{-1}\sum_{n=1}^N\hat{\Psi}_j(t_n)\hat{\Psi}_j^\top(t_n)$. We make the following assumption.
\begin{assumption}\label{assumption:eigenvalue}
    For each $i=1,\ldots,p$, there exist a finite constant $c_{\Psi}>0$ such that
    \[
    c_{\Psi}\leq \sigma_{\min}(K_{\Psi_j})\leq \sigma_{\max}(K_{\Psi_j})\leq c_{\Psi}^{-1}. 
    \]
    Furthermore, for each $\ell=1,\ldots,k$, 
    \[
\sigma_{\min}\rbr{N^{-1}\sum_{n=1}^N\EE[\hat{w}_{\Theta,\ell}(t_n)\Psi_j(t_n)\Psi_j^\top(t_n)]}\geq c_{\Psi}. 
    \]
\end{assumption}

Now combine the theoretical results from Section~\ref{ssec:recoverx}--\ref{ssec:approximateEM}, we are ready to show the convergence guarantee for running EM algorithm on $\Lcal_N$ in~\eqref{eq:emprisk}.

\begin{theorem}\label{theorem:graphrecovery}
    Suppose that Assumption ~\ref{assumption:stationary}--\ref{assumption:population}, ~\ref{assumption:reversible}--\ref{assumption:eigenvalue} hold and additionally $\kappa\in[0,1)$. Furthermore, $\hat{X}(t)$ is obtained using the method discussed in Section~\ref{ssec:waveletsmooth} with threshold $3\sigma^\star\{(\log N+3\log p-\log\delta_e)/N\}$. Let
    \[
    \sup_{t\in[0,1]}\max_{i,j}\abr{g_j(X_i(t))}\leq B,\quad
\sup_{t\in[0,1]}\max_{i,j}\abr{\dot{g}_j(X_i(t))}\leq D,
    \]
     for some absolute constants $B, D>0$. Suppose that $N\gtrsim\cbr{m^4(\log p)^4\vee  m^{5/2}s^{5/2}}$ and
    $N/(\log N + \log p)\gtrsim m^{(2\alpha+1)/\alpha}$. Let $C_1,\ldots, C_7$ be some constants, given the initial guess $\Theta^{(0)}\in B(r_0, r_q, r_\sigma,\Theta^\star)$ and 
     \[
     \lambda\geq  \sqrt{m}\max\cbr{
        C_1\rbr{1-\zeta\pi_{\min}^2}^{r-1},C_1\sqrt{ms}\rbr{\frac{\log N+\log p}{N}}^{\alpha/(2\alpha+1)}, C_2\frac{1}{\sqrt{ms}}\max_{i,\ell}\norm{{\theta}_{i\cdot}^{\ell(0)}-\theta_{i\cdot}^{\ell\star}}}.
     \]
     Suppose that $r$ is a constant and if additionally $80c_{\Psi}^{-2}\kappa+\kappa^2<1$ and $C_3c_{\Psi}^{-2}\kappa< 1$, then we have
\begin{enumerate}
    \item \begin{multline*}
\max_{i,\ell}\norm{{\theta}^{\ell(L)}_{i\cdot}-\theta_{i\cdot}^{\ell\star}}_2\leq \rbr{C_3c_{\Psi}^{-2}\kappa}^{L}\max_{i,\ell}\norm{{\theta}^{\ell(0)}_{i\cdot}-\theta_{i\cdot}^{\ell\star}}_2\\
+
\frac{C_4}{1-C_3c_{\Psi}^{-2}\kappa}\cbr{\frac{m\sqrt{s\log p}}{\sqrt{N}}
+ ms\rbr{\frac{\log N+\log p}{N}}^{\alpha/(2\alpha+1)}+
\sqrt{ms}\rbr{1-\zeta^2\pi_{\min}^2}^{r-1}};
    \end{multline*}
    \item 
    $
        \abr{{\sigma^{(L)}}^2-\sigma^{\star2}}\leq \kappa^L\abr{{\sigma^{(0)}}^2-\sigma^{\star2}}+\frac{1}{1-\kappa}\sbr{
\frac{C_5 c_{\Psi}^{-2}msr_0^2}{\sqrt{N}}+
 C_6k \zeta^{-8}\pi_{\min}^{-2}\cbr{1-(\zeta\pi_{\min})^2}^{r-1}};
    $
        \item $\sum_{\ell\neq\ell'}\abr{{q}_{\ell\ell'}^{(L)}-q_{\ell\ell'}^\star}\leq \sum_{\ell\neq\ell'}\kappa^L\abr{\check{q}_{\ell\ell'}-q_{\ell\ell'}^\star}+\frac{1}{1-\kappa}\frac{C_7k(k-1)}{\sqrt{N}},$
\end{enumerate}
with probability at least $1-4\delta_e$ and $\delta_e$ is some small constant stated in~\eqref{eq:define:deltae}. 
\end{theorem}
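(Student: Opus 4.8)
The plan is to derive a one-step error recursion for each of the three parameter blocks $(\theta,\sigma^2,q)$ and then unroll it over the $L$ EM iterations. The entire argument rests on a three-way comparison between the update the algorithm actually performs (maximizing the empirical log-likelihood $\Lcal_N$ in \eqref{eq:emprisk}), the truncated empirical update (maximizing $\tilde{\Lcal}_N$ in \eqref{eq:emptruncaterisk}), and the idealized population update $M(\cdot)$ from \eqref{eq:poprisk}. Lemma~\ref{lemma:truncated_smoooth} controls the first gap in terms of the truncation level $r$; Lemma~\ref{lemma:onestep:theta}, together with its analogues Lemma~\ref{lemma:onestep:sigma}--\ref{lemma:onestep:q}, controls the second through the deviation function $\QQ$; and Proposition~\ref{prop:mvt} supplies the contraction of the population step by the factor $\kappa$.

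First I would fix an iterate $\Theta^{(t)}\in B(r_0,r_q,r_\sigma,\Theta^\star)$ and bound the error of $\Theta^{(t+1)}$ blockwise. For the coefficients, I would chain Lemma~\ref{lemma:onestep:theta}, which reads $\norm{\hat\theta-\theta^\star}_2\leq \frac{4}{\alpha}\rbr{5\lambda\max_j\sigma_{\max}(\hat K_{\psi_j})\sqrt{s}+\sigma_{\max}(K_\Psi)\norm{\check\theta-\theta^\star}_2}$, with the population contraction $\norm{\check\theta-\theta^\star}_2\leq \kappa\norm{\theta^{(t)}-\theta^\star}_2$ from Proposition~\ref{prop:mvt}. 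Using Assumption~\ref{assumption:eigenvalue} to bound $\sigma_{\max}(K_\Psi)\leq c_\Psi^{-1}$ and the restricted-eigenvalue constant $\alpha\gtrsim c_\Psi$ from Lemma~\ref{lemma:RE:condition}, so that $\frac{4}{\alpha}\sigma_{\max}(K_\Psi)\lesssim c_\Psi^{-2}$, yields
\[
\max_{i,\ell}\norm{\theta^{\ell(t+1)}_{i\cdot}-\theta_{i\cdot}^{\ell\star}}_2\leq C_3 c_\Psi^{-2}\kappa\max_{i,\ell}\norm{\theta^{\ell(t)}_{i\cdot}-\theta_{i\cdot}^{\ell\star}}_2+\epsilon_\theta,
\]
where $\epsilon_\theta$ collects three pieces: the deviation contribution $\QQ$ instantiated through Lemma~\ref{lemma:deviation}, the wavelet reconstruction error entering via $\Delta_\Psi$ from Proposition~\ref{prop:tailbound:wavelet}, and the truncation residual from Lemma~\ref{lemma:truncated_smoooth} quantifying the gap between the $\Lcal_N$ and $\tilde{\Lcal}_N$ optimizers. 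The prescribed threshold $3\sigma^\star\{(\log N+3\log p-\log\delta_e)/N\}$ together with the sample-size conditions $N\gtrsim m^4(\log p)^4\vee m^{5/2}s^{5/2}$ and $N/(\log N+\log p)\gtrsim m^{(2\alpha+1)/\alpha}$ then convert these pieces into the explicit rates $m\sqrt{s\log p}/\sqrt{N}$, $ms(\frac{\log N+\log p}{N})^{\alpha/(2\alpha+1)}$, and $\sqrt{ms}(1-\zeta^2\pi_{\min}^2)^{r-1}$ in part~1; one also checks that the $\lambda$ lower bound in the theorem implies the one required by Lemma~\ref{lemma:onestep:theta}.

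Next I would unroll. Provided $C_3 c_\Psi^{-2}\kappa<1$, iterating the displayed bound and summing the resulting geometric series in $\epsilon_\theta$ gives part~1 with prefactor $1/(1-C_3 c_\Psi^{-2}\kappa)$. The $\sigma^2$ and $q$ blocks follow the same template through Lemma~\ref{lemma:onestep:sigma} and Lemma~\ref{lemma:onestep:q}, whose one-step contractions carry the bare factor $\kappa$ (inherited directly from Proposition~\ref{prop:mvt} without the extra $c_\Psi^{-2}$), and whose statistical errors I would again expand using Lemma~\ref{lemma:truncated_smoooth} and the concentration bounds; unrolling yields parts~2 and~3. The $\sigma^2$ error inherits the coefficient radius through the $m s r_0^2/\sqrt{N}$ term, since the residuals $Y_{n,i}-Y_{n-1,i}-\sum_j\theta_{ij}^\ell\hat\Psi_j(t_n)$ carry the current $\theta$-error, which is at most $r_0$ uniformly on the ball.

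The crux, and what I expect to be the main obstacle, is the invariance argument: proving by induction on $t$ that every iterate stays inside $B(r_0,r_q,r_\sigma,\Theta^\star)$, so that Proposition~\ref{prop:mvt}, Lemma~\ref{lemma:onestep:theta}, and the eigenvalue, restricted-eigenvalue, and deviation conditions remain applicable at each step. Assuming $\Theta^{(t)}\in B$, the one-step bounds must not merely contract but also certify $\Theta^{(t+1)}\in B$, which forces the combined contraction of the three \emph{coupled} blocks to stay below one; this is precisely the role of the scalar conditions $80 c_\Psi^{-2}\kappa+\kappa^2<1$ and $C_3 c_\Psi^{-2}\kappa<1$. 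The coupling is genuine: the $\sigma^2$ update depends on the $\theta$-error, and the smoothed-probability weights $\hat w_{\ell,\Theta}(t_n)$ depend on $\Theta^{(t)}$ through all three blocks, so invariance cannot be argued block-by-block in isolation. Finally I would take a union bound over the events of Proposition~\ref{prop:tailbound:wavelet} (wavelet recovery), Lemma~\ref{lemma:RE:condition} (restricted eigenvalue), and Lemma~\ref{lemma:deviation} (deviation bound), each failing with probability $\lesssim\delta_e$, to obtain the stated $1-4\delta_e$ guarantee.
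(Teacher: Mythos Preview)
Your proposal is correct and matches the paper's argument: wavelet recovery via Proposition~\ref{prop:tailbound:wavelet} with a union bound over the $p$ coordinates, then separate one-step recursions for $\theta$, $\sigma^2$, and $Q$ (the paper packages the first two as Lemmas~\ref{lemma:oneupdate:theta:2} and~\ref{lemma:oneupdate:sigma:2}), each chained with Proposition~\ref{prop:mvt} and unrolled through a geometric series.

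One clarification on the role of $80c_\Psi^{-2}\kappa+\kappa^2<1$. In the paper this is \emph{not} used for cross-block invariance but for a narrower point inside the $\theta$-induction: the regularization level $\lambda$ is fixed once, calibrated to $\Theta^{(0)}$, yet the hypothesis of Lemma~\ref{lemma:onestep:theta} requires $\lambda\gtrsim s^{-1/2}\max_{i,\ell}\|\check\theta_{i\cdot}^{\ell(t)}-\theta_{i\cdot}^{\ell\star}\|$ at every iterate $t$. The paper checks inductively that $s^{-1/2}\max_{i,\ell}\|\check\theta_{i\cdot}^{\ell(t+1)}-\theta_{i\cdot}^{\ell\star}\|\leq (80c_\Psi^{-2}\kappa+\kappa^2)\lambda$, so the condition guarantees the initially chosen $\lambda$ remains admissible throughout. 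Ball invariance itself is handled more simply---$N$ and $r$ are taken large enough that the one-step bound keeps each block within its radius---and the three blocks are treated sequentially (Steps~2--4 of the proof), not jointly; the only coupling the paper invokes is that the $\sigma^2$ recursion borrows the $r_0$ bound on $\theta$ already established in Step~2.
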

This result shows that the distance of the intermediate estimate at $L$-th iterate of EM is upper bounded by a geometric decaying term, a statistical error, and a truncation error due to the truncated smoothing probability. Note that the first term decays geometrically as $L$ increases, suggesting to fast (linear) rate of convergence to bounded distance away from the $\Theta^\star$. the statistical error contains the non-parametric error rate $(\log N + \log p/N)^{\alpha/2\alpha+1}$, which is propagated down by error induced from the wavelet regression as stated in Proposition~\ref{prop:tailbound:wavelet}. This is due to the error of the terms $\hat{\Psi}_i(t_n)=\int_{t_{n-1}}^{t_n} g(\hat{X}_i(u))du$ for $n=1,\ldots,N$ and $i=1,\ldots,p$ in $\Lcal_N$. One might be able to improve the statistical error rate by redesigning the estimation procedure of ${\Psi}_i(t_n)=\int_{t_{n-1}}^{t_n} g({X}_i(u))du$ and we leave this as a future direction. Here we assume that the truncated sequence $r$ is constant, and hence the truncation error is a constant. Although this is the case, graph recovery is still possible if the magnitude of $\theta_{ij}^{\ell\star}$ is large enough, as shown in the following corollary. 

Finally, recall that our goal is to show the recovery of the graphs and the transition rate matrix, namely $\Theta^\star$. The following corollary shows that recovery of $\Theta^\star$ is possible when $\dist(\hat{\Theta}, \Theta^\star)$, where $\hat{\Theta}$ is the output of Algorithm~\ref{alg:update:ode}, is small.

\begin{corollary}\label{corollary:main}
    Under the conditions stated in Theorem~\ref{theorem:graphrecovery}. Suppose that the threshold parameter $\epsilon_t$ in Algorithm~\ref{alg:update:ode} is selected such that
    \[
    \epsilon_t = \frac{2}{3}\min_{i,j,\ell}\norm{\theta_{ij}^{\ell\star}}_2. 
    \]
    If  $N,r$ satisfy
    \[
    \epsilon_t\geq \frac{C_4}{4(1-C_3c_{\Psi}^{-2}\kappa)}\cbr{\frac{m\sqrt{s\log p}}{\sqrt{N}}
+ ms\rbr{\frac{\log N+\log p}{N}}^{\alpha/(2\alpha+1)}+
\sqrt{ms}\rbr{1-\zeta^2\pi_{\min}^2}^{r-1}},
    \]
    and $L\geq \log (\epsilon_t/4r_0)/\log(C_3c_\Psi^{-2}\kappa)$. Then for each $\ell$, we can recover ${E}^\ell$ with probability at least $1-2\delta_e$. 
    If Assumption~\ref{assumption:edgeset} holds, then we can recover $\tilde{E}^\ell$ with probability at least $1-2\delta_e$.
\end{corollary}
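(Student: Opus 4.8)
The plan is to derive the corollary as a thresholding (``beta-min'') argument layered directly on top of the per-row estimation bound of Theorem~\ref{theorem:graphrecovery}(1), so almost no new analysis is required. Write $\hat{\theta}_{ij}^\ell=\theta_{ij}^{\ell(L)}$ for the output of Algorithm~\ref{alg:update:ode} and abbreviate the row-wise error as $B_{\mathrm{err}}:=\max_{i,\ell}\norm{\hat{\theta}_{i\cdot}^\ell-\theta_{i\cdot}^{\ell\star}}_2$. Since $\theta_{i\cdot}^\ell$ is the stacked vector $(\theta_{i1}^\ell,\ldots,\theta_{ip}^\ell)$, we have $\norm{\theta_{i\cdot}^\ell-\bar{\theta}_{i\cdot}^\ell}_2^2=\sum_{j}\norm{\theta_{ij}^\ell-\bar{\theta}_{ij}^\ell}_2^2$, so each block is dominated by its row: $\norm{\hat{\theta}_{ij}^\ell-\theta_{ij}^{\ell\star}}_2\le B_{\mathrm{err}}$ for every $(i,j,\ell)$. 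Hence it suffices to prove $B_{\mathrm{err}}<\tfrac12\epsilon_t$ on the high-probability event of the theorem; the edge set then separates cleanly.

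The next step is to split the bound of Theorem~\ref{theorem:graphrecovery}(1) into its two pieces and force each below $\tfrac14\epsilon_t$. For the geometric term $(C_3c_\Psi^{-2}\kappa)^L\max_{i,\ell}\norm{\theta_{i\cdot}^{\ell(0)}-\theta_{i\cdot}^{\ell\star}}_2$, I would use that $\Theta^{(0)}\in B(r_0,r_q,r_\sigma,\Theta^\star)$, so by Definition~\ref{def:distance} the initial row error is at most $r_0$; requiring $(C_3c_\Psi^{-2}\kappa)^L r_0\le\tfrac14\epsilon_t$ and taking logs (noting $C_3c_\Psi^{-2}\kappa<1$, so its logarithm is negative and the inequality flips) gives exactly the stated iteration count $L\ge\log(\epsilon_t/4r_0)/\log(C_3c_\Psi^{-2}\kappa)$. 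For the statistical-plus-truncation term $\tfrac{C_4}{1-C_3c_\Psi^{-2}\kappa}\{\cdots\}$, the displayed lower bound on $\epsilon_t$ in terms of $N$ and $r$ is precisely the calibration (after absorbing the absolute factor into $C_4$) that keeps this term at most $\tfrac14\epsilon_t$. Combining the two yields $B_{\mathrm{err}}\le\tfrac12\epsilon_t$, with strict inequality secured by taking at least one of the two pieces strictly below $\tfrac14\epsilon_t$.

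Finally I would carry out the separation. With $\epsilon_t=\tfrac23\min_{i,j,\ell}\norm{\theta_{ij}^{\ell\star}}_2$, every true edge $(j,i)\in E^\ell$ satisfies $\norm{\theta_{ij}^{\ell\star}}_2\ge\tfrac32\epsilon_t$, whence $\norm{\hat{\theta}_{ij}^\ell}_2\ge\norm{\theta_{ij}^{\ell\star}}_2-B_{\mathrm{err}}>\tfrac32\epsilon_t-\tfrac12\epsilon_t=\epsilon_t$, so $(j,i)\in\hat{E}^\ell$; every non-edge has $\theta_{ij}^{\ell\star}=0$, hence $\norm{\hat{\theta}_{ij}^\ell}_2\le B_{\mathrm{err}}\le\tfrac12\epsilon_t<\epsilon_t$, so $(j,i)\notin\hat{E}^\ell$. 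Thus $\hat{E}^\ell=E^\ell$ for every $\ell$. Edge recovery uses only conclusion~(1) of Theorem~\ref{theorem:graphrecovery}, which holds on an event of probability at least $1-2\delta_e$ (the remaining $2\delta_e$ in the theorem's $1-4\delta_e$ bound is consumed by the $\sigma^2$ and $Q$ conclusions, which are not needed here), giving the claimed probability; and under Assumption~\ref{assumption:edgeset}, $\tilde{E}^\ell=E^\ell$, so recovering $E^\ell$ recovers $\tilde{E}^\ell$ on the same event. Since the substantive work is inherited from Theorem~\ref{theorem:graphrecovery}, the corollary is essentially bookkeeping; the only delicate points are matching the absolute constants $C_3,C_4$ to the displayed $L$- and $(N,r)$-conditions so that each error piece stays strictly under $\tfrac14\epsilon_t$, and verifying the probability accounting so that the $\theta$-recovery event alone carries probability $1-2\delta_e$.
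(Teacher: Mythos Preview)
Your proposal is correct and follows essentially the same approach as the paper's proof: both split the Theorem~\ref{theorem:graphrecovery}(1) bound into the geometric and statistical-plus-truncation pieces, force each below $\tfrac14\epsilon_t$ via the stated conditions on $L$ and on $(N,r)$, and then run the standard beta-min separation using $\norm{\theta_{ij}^{\ell\star}}_2\ge\tfrac32\epsilon_t$ for true edges. Your write-up is in fact somewhat more explicit than the paper's on two points: the block-to-row domination $\norm{\hat{\theta}_{ij}^\ell-\theta_{ij}^{\ell\star}}_2\le\norm{\hat{\theta}_{i\cdot}^\ell-\theta_{i\cdot}^{\ell\star}}_2$, and the probability accounting (that only conclusion~(1) is needed, yielding $1-2\delta_e$).
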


\section{Simulations}\label{sec:simulation}
We demonstrate the effectiveness of the proposed model with simulated tasks by evaluating both the quality of the parameter estimations and the ROC. 
Our program is implemented in Python and we use package \emph{scikit-image}~\citep{van2014scikit} for running the wavelet regression in Section~\ref{ssec:waveletsmooth} and the package \emph{skglm}~\citep{skglm} (formerly \emph{group-lasso}) for computing the M-step update of $\theta_{ij}^\ell$ via maximization of  $\Lcal_N(\tilde{\Theta}\mid \Theta)$. In Section~\ref{sses:expsetup}, we first describe the evaluation metric and the estimation techniques.

\subsection{Experiment setup}\label{sses:expsetup}
In this section, we discuss the general setup of the simulations and the implementation tactics that would result in improved and stable performance. We then discuss the evaluation metric for the simulated task.  

For the data generation, we first generate the continuous trajectories from true ODE parameters, true transition rate matrix, and the initial conditions. Then we sample the trajectories evenly over the time frame. The observed samples is corrupted with \emph{i.i.d.} centered Gaussian noise. 
To sample the latent process $Z(t)$, the initial state is sampled from the stationary distribution $\pi$ such that $\pi Q^\star=0$. With $Z(t)$ and the true parameters $\Theta^\star$, we can generate $X(t)$ and $\{Y_n\}$. We will discuss the details of the simulated parameters in the following section. 

For estimation, we first apply wavelet regression as discussed in Section~\ref{ssec:waveletsmooth} and select the threshold parameter using the method discussed in Section~\ref{ssec:parameterselect}. We use Daubechies $3$ wavelets~\citep{daubechies1992ten} in all experiments. Then we approximate the numerical integral $\hat{\Psi}_i(t_n)=\int_{t_{n-1}}^{t_n}g(\hat{X}_i(u))\mathrm{d}u$ with $\{g(\hat{X}_i(t_n))+g(\hat{X}_i(t_{n-1}))\}/2N$. In the following tasks, we use polynomial basis function: $g_i(t)=t^i$ for $i=1,\ldots,m$. Then, we randomly initialize the parameters $\Theta^0$. We initialize $q_{\ell\ell'}\sim\text{Unif}(-1,0)$ for $\ell\neq\ell'$ and $\theta_{ij}^\ell\sim\Ncal(0,I)$. Although Theorem~\ref{theorem:graphrecovery} requires the initial guess to be within $B(r_0,\Theta^\star)$, we empirically find out that sweeping across all candidates of $\lambda$ with the practice of the warm start converges to good optima and gives consistent results. That is, we first take $100$ uniform samples over $[-7,-1]$ and then take the exponential over the samples as the candidate set for $\lambda$. Then, we start executing Algorithm~\ref{alg:update:ode} with the largest $\lambda=\exp(-1)$ and random initial point $\Theta^0$ stated above. After the convergence of the loop in Algorithm~\ref{alg:update:ode}, we will get $\hat{\Theta}$. We use the estimated parameter, $\hat{\Theta}$, as the initial parameter for executing Algorithm~\ref{alg:update:ode} with the next smaller $\lambda$. In addition, at each M-step of Algorithm~\ref{alg:update:ode}, updating $\theta_{ij}^\ell$ for $i,j=1,\ldots,p$ and $\ell=1,\ldots,\ell$ is equivalent as solving a variant of the linear model with group lasso constraint. In this case, we find out that practicing the warm start when running the EM algorithm, using the estimate at the previous M-step as the initial guess, also boosts the performance compared to without using the warm start.

\subsection{Data generation processes}
We consider two data generation processes, one is non-linear model and the other is linear. 
In both simulated tasks, we set the number of states to be $k=2$ and \[
Q^\star=\begin{pmatrix}
-0.27& 0.27\\
0.18& -0.18
\end{pmatrix}, \; \sigma^\star = 0.01. 
\]

\paragraph{Data generation process 1.} In the following, we consider similar additive ODEs as discussed in Section~5.1 of~\citet{chen2017network} with $p=10$:
\begin{align}
    \dot{X}_{2i-1}(t) &= \theta_{2i-1,2i-1}^{\ell\star} g(X_{2i-1}(t)) + \theta_{2i-1,2i}^{\ell\star} g(X_{2i}(t));\label{eq:dgp1:1}\\
    \dot{X}_{2i}(t) &= \theta_{2i,2i-1}^{\ell\star} g(X_{2i-1}(t)) + \theta_{2i, 2i}^{\ell\star} g(X_{2i}(t)),\label{eq:dgp1:2}
\end{align}
for $t\in[0,40]$, $i=1,\ldots,5$, and $\ell=1,2$ and initial $X(0)=(
    -2,2,2,-2,-1.5,1.5,-1,1,1,-1)^\top$. In this case, we use $g(t)=(t,t^2,t^3)^\top$. 
    The details about the parameters are discussed in Appendix~\ref{ssec:details:dgp1}. The graphs of the underlying generation process is presented in Figure~\ref{fig:dgp1:1}--\ref{fig:dgp1:2}, and the trajectories of $X(t)$ and $\{Y_n\}$ is presented in Figure~\ref{fig:dgp1_xy}.

\paragraph{Data generation process 2.} In the second task, we consider two different graphs: star graphs and a ring graph. The number of nodes is $p=20$ in this case.
In state $\ell=1$, we consider the following system:
\begin{align}
    \dot{X}_{5i+j}(t)&=\theta_{5i+j,5i+1}^{1\star}g(X_{5i+1}(t)), \quad i=0,1,2,3,\;j=2,3,4,5;\label{eq:dgp2:1}\\
    \dot{X}_{5i+1}(t)&=\sum_{j=2}^5 \theta_{5i+1,5i+j}^{1\star}g(X_{5i+j}(t)), \quad i = 0,1,2,3,
\end{align}
for $t\in[0,40]$ and $g(t)=t$ in this case. Here, $\theta_{5i+j,5i+1}^{1\star}=-0.8\pi $ and $\theta_{5i+1,5i+j}^{1\star}=0.8\pi$ for $ i=0,1,2,3,\;j=2,3,4,5$. In state $\ell=2$, we have the following system:
\begin{align}
    \dot{X}_i(t) =& \theta_{i,i-1}^{2\star}g(X_{i-1}(t)) + \theta_{i,i+1}^{2\star}g(X_{i+1}(t)), \quad i = 2,\ldots, 19;\\
    \dot{X}_1(t) =& \theta_{1,20}^{2\star}g(X_{20}(t)) + \theta_{1,2}^{2\star}g(X_2(t));\\
    \dot{X}_{20}(t) =& \theta_{20,1}^{2\star}g(X_{1}(t)) + \theta_{20,19}^{2\star}g(X_{19}(t)).\label{eq:dgp2:2}
\end{align}
for $t\in[0,40]$ and $g(t)=t$ in this case. Here, we have $\theta_{i,i-1}^{2\star}=-0.8\pi$ and $\theta_{i,i+1}^{2\star}=0.8\pi$ for $i=2,\ldots,19$. Similarly, we have $\theta_{1,20}^{2\star}=\theta_{20,19}^{2\star}=-0.8\pi$ and $\theta_{20,1}^{2\star}=\theta_{1,2}^{2\star}=0.8\pi$. The graphs of the underlying generation process is presented in Figure~\ref{fig:dgp2:1}--\ref{fig:dgp2:2}, and the trajectories of $X(t)$ and $\{Y_n\}$ is presented in Figure~\ref{fig:dgp2_xy}. 

\subsection{Model selection and estimation error}
\begin{figure}[ht!]
    \centering
    \includegraphics[width=\linewidth]{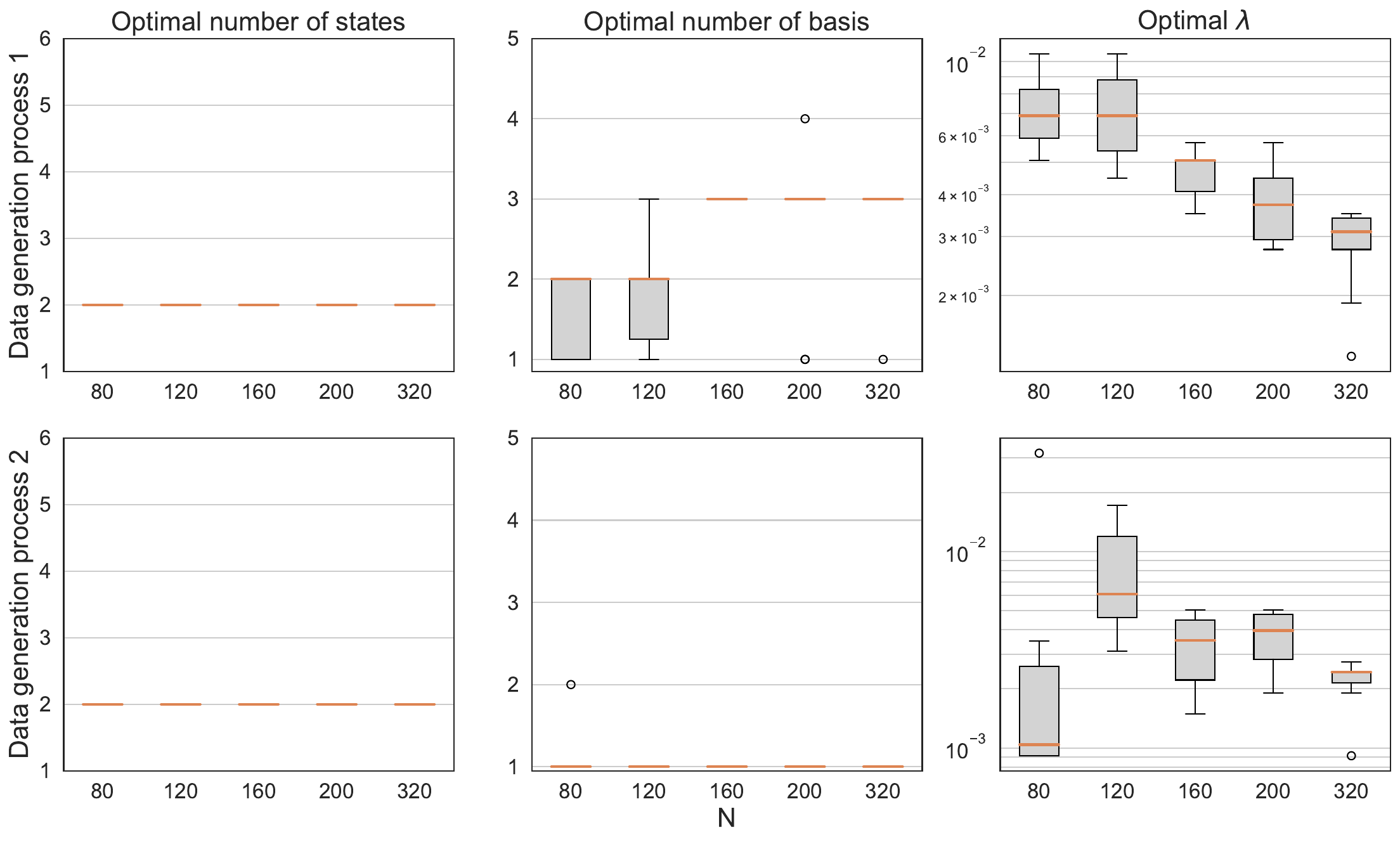}
    \caption{The box plots of model selection procedure introduced in Section~\ref{ssec:parameterselect}. The orange line denotes the median, the upper line of the box denotes the third quartile (Q$3$), and the lower line of the box denotes the first quartile (Q$1$). The whisekrs extend from the box toward Q$3+1.5$IQR and Q$1-1.5$IQR, where IQR is the inter-quartile range. The dots denote the outliers that lie outside the ends of the whiskers. The true number of states are $2$ for both cases and the true number of basis is $3$ for case $1$ and $1$ for case $2$. 
    The results indicate that when the sample size is sufficient large, the model selection procedure is able to select the true number of states and the number of basis. Furthermore, the optimal $\lambda$ decrease with the increase of sample size, whose trend matches the result of Theorem~\ref{theorem:graphrecovery}.}
    \label{fig:parameter_selection}
\end{figure}

This section demonstrates the model selection procedure introduced in Section~\ref{ssec:parameterselect} and assesses the quality of the estimates. We want to evaluate under what conditions, the procedure could recover the true hyper parameters, i.e., number of states and number of basis functions. Then, we evaluate the quality of the estimates by computing the $\ell_2$ distance of the esimtates to the true parameters.  

We evaluate the model selection procedure with various samples size and test on $10$ independent runs. We perform grid search to find the optimal parameters: we search over $\{1,2,3,4,5,6\}$ for the number of states, $\{1, 2, 3, 4 ,5\}$ for the number of basis, and the exponential of $100$ uniform samples at the interval of $[-7,-1]$ for $\lambda$. The results are shown in Fig~\ref{fig:parameter_selection}. 
The procedure is able to select the correct number of basis in both simulated examples despite the small sample size. When the sample size is sufficient large (greater than $160$), the procedure is able to select the correct number of basis most of the time. When the sample size is greater than $80$, it appears that when the sample size increases, the optimal $\lambda$ decreases.

\begin{figure}[ht!]
    \centering
    \includegraphics[width=\linewidth]{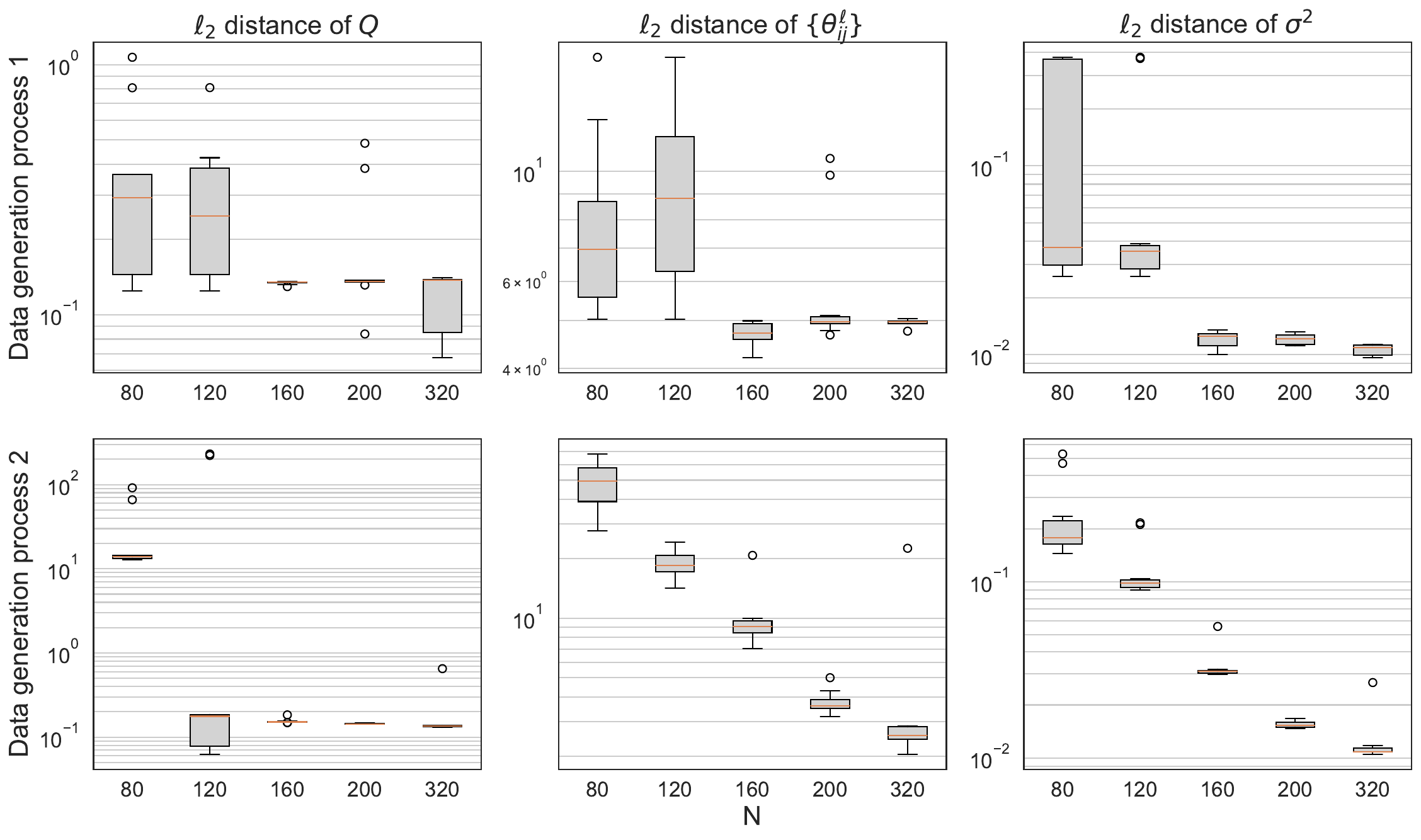}
    \caption{The $\ell_2$ distance of the estimated parameters to the ground truth. The top row shows the result of case $1$, and there is a drop of distance when the sample size is greater than $160$. The bottom row shows the result of case $2$. The distance of $\{\theta_{ij}^\ell\}$ and $\sigma^2$ consistently decrease as the sample size increases.}
    \label{fig:parameter_estimation}
\end{figure}

After selecting $k,m,\lambda$, we compute the distance of the estimates to the ground truth, as defined in Definition~\ref{def:distance}. To compute the distance, we need to match the $\ell$-th state's estimates to the true $\ell$-th state paremeters. Since the order of the states is unknown, we find the permutation of the states that minimizes the following objective function. Let $\Pcal(k)$ be the collection of all permutations of $1,\ldots,k$ and define 
\[
\Xi_1 = {\argmin_{\tilde{\Xi}\in \Pcal(k)}\sum_{\ell=1}^k\sum_{i,j=1}^p\left\|{\hat{\theta}_{ij}^{\tilde{\Xi}(\ell)} - {\theta}_{ij}^{\ell\star}}\right\|_2}, \quad \Xi_2 = \argmin_{\tilde{\Xi}\in \Pcal(k)} \sum_{\ell\neq\ell'}\abr{\hat{q}_{\tilde{\Xi}(\ell)\tilde{\Xi}(\ell')}-q_{\ell\ell'}^\star}
\]
We let $\Xi=\Xi_1$ if $\Xi_1$ is identical to $\Xi_2$. In the case that $\Xi_1$ is not identical to $\Xi_2$, which implies that the optimal permutations are not aligned, we can not compare the estimation result with the ground truth parameter. Hence, we drop the estimate of that particular $\lambda$. We empirically find out that this situation rarely happens so long as we have enough sample size $N$. After obtaining the optimal permutation set, we compute the $\ell_2$ distances of the estimates matched to the ground truths. The distance to $Q^\star$, $\{\theta_{ij}^{\ell\star}\}$, $\sigma^{2\star}$ are respectively defined as
\[
\norm{\hat{Q}(\Xi)-Q^\star}_2,\sum_{\ell=1}^k\sum_{i,j=1}^p\norm{\theta_{ij}^{\Xi(\ell)}-\theta_{ij}^{\ell\star}}_2, \abr{\hat{\sigma}^{2}-\sigma^{2\star}}.
\]
where $\hat{Q}(\Xi)=[q_{\Xi(i)\Xi(j)}]_{i,j}$. The results are presented in Fig~\ref{fig:parameter_estimation}. The estimator do not provide consistent results under small sample size as the confidence intervals of the distance metrics are larger.

\subsection{ROC with varying sample size}

To assess the performance, we assume that the number of states $k$, and the number of basis function is given. We compute the average ROC curve under varying regularization parameters $\lambda$ 
over $10$ runs of independently generated batch of samples. We sweep across $\lambda$ from the natural exponential of $100$ uniform samples from $[-7,-1]$. 
At each run, we observe a sequence $\{Y_n\}$ with $Y_n=X(t_n)+\sigma\varepsilon_n$ for $\varepsilon_n\sim\Ncal(0, I_p)$. The additive noise sequence $\{\varepsilon_n\}$ is different across runs while $X(t)$ is the same. For each run and given fixed $\lambda$, we obtain $\hat{E}^{\Xi(\ell)}$, the estimated edge set, where $\Xi$ is the optimal permutation of the index set $\{1,\ldots, k\}$. 
We proceed to compute the true positive rate (TPR) and false positive rate (FPR). The TPR of state $\ell$ is 
\[
\text{TPR}(\ell) = \frac{|\{(i,j):(i,j)\in \hat{E}^{\Xi(\ell)}\}\cap \{(i,j):(i,j)\in E^{\ell}\}|}{|E^\ell|}.
\]
Similarly, the FPR of state $\ell$ is defined as
\[
\text{FPR}(\ell) = \frac{|\{(i,j):(i,j)\in \hat{E}^{\Xi(\ell)}\}- \{(i,j):(i,j)\in E^{\ell}\}|}{p^2-|E^\ell|}.
\]
Finally, we collect the TPR and FPR for each eligible outcome, the result associated with a $\lambda$ such that $\Xi_1\equiv \Xi_2$ and plot the ROC curve for each run. 

    Given fixed time interval, we demonstrate the performance of the algorithm under different sampling frequencies, resulting in different number of sample size. We experiment with $T=40,80,120,160,200$. The results are presented in Figure~\ref{fig:roc_plot_hatgl}--\ref{fig:roc_plot_gl} and Table~\ref{tab:roc_aux}--\ref{tab:roc_auxa_v2}. The oracle method assumes that the latent process $Z(t)$ is given, and hence we do not need to compute the E-step. Figure~\ref{fig:roc_plot_hatgl} demonstrates the results running Algorithm~\ref{alg:update:ode} with $\hat{X}(t)$ estimated using the method developed in Section~\ref{ssec:waveletsmooth}. In contrast, Figure~\ref{fig:roc_plot_gl} demonstrates the results running Algorithm~\ref{alg:update:ode} with true ${X}(t)$. Perhaps not surprisingly, under small sample size, i.e., $N=40, 80, 120$, the proposed method has higher AUC, shown in Table~\ref{tab:roc_aux}--\ref{tab:roc_auxa_v2}, if $X(t)$ is given. However, under larger sample size, i.e., $N=160, 180$, there is no big difference between using $\hat{X}(t)$ or $X(t)$. Similar conclusion also holds for the oracle method.  When comparing the proposed method to the oracle method, we can see that the oracle method has higher AUC given fixed sample size. This is not surprising as getting the estimated smoothing probability close to the true probability is challenging under small sample size.

\begin{figure}
    \centering
    \includegraphics[width=\textwidth]{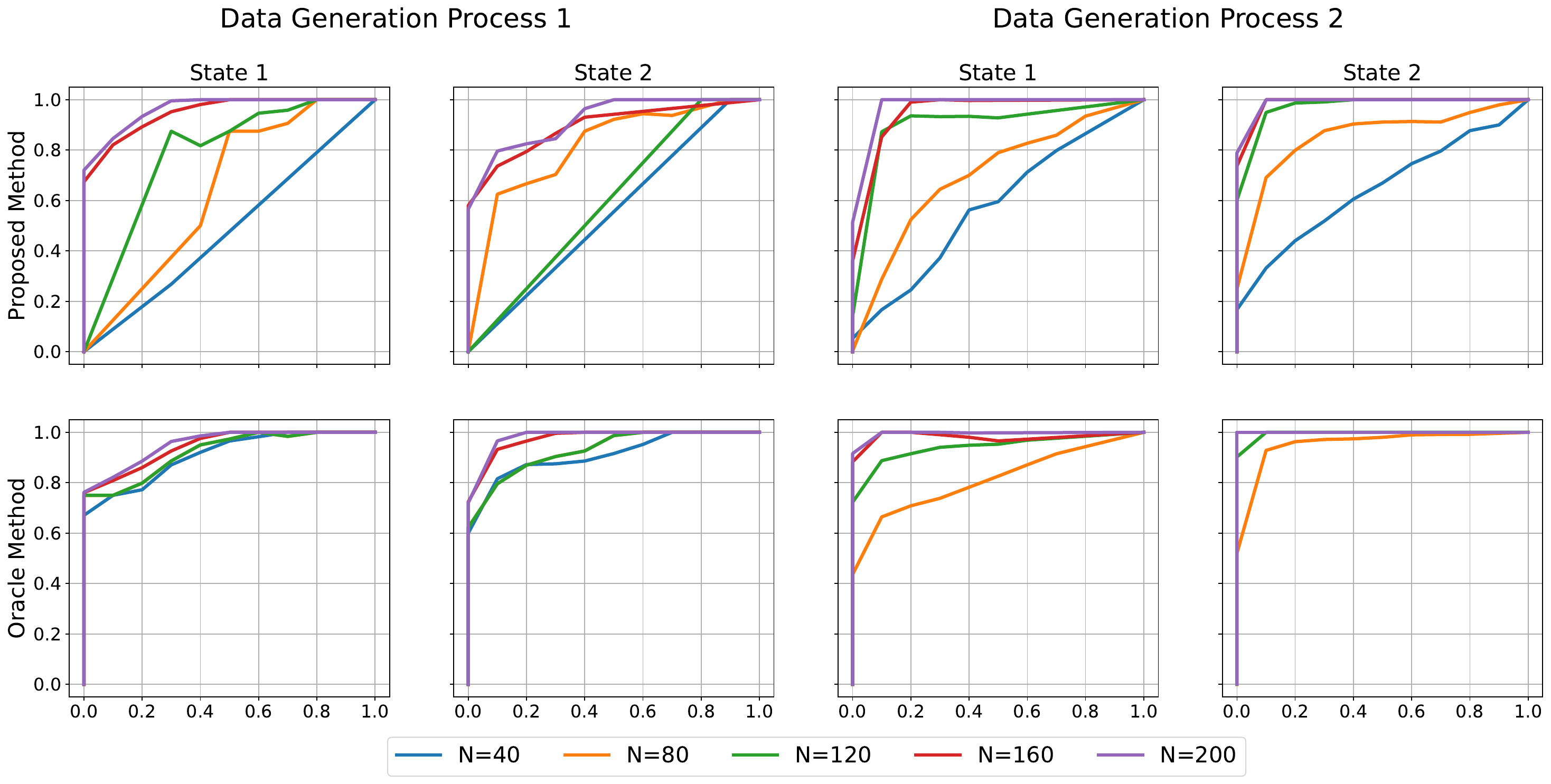}
    \caption{The average ROC of the simulated tasks over $10$ independent runs with $X(t)$ estimated using the procedure developed in Section~\ref{ssec:waveletsmooth}. \textbf{Top row.} Results of the proposed method. The AUC consistently increases as the sample size increases for all for graphs. 
    \textbf{Bottom row.} Results of the oracle method, where the latent process $Z(t)$ is assumed to be known. When the latent state is known, the AUC is larger compared to the AUC of the proposed method given same sample size.
    }
    \label{fig:roc_plot_hatgl}
\end{figure}

\begin{table}[]
    \centering
    \begin{adjustbox}{max width=\textwidth}
    \begin{tabular}{lllllllll}\toprule{} & 
    \multicolumn{4}{l}{Data Generation Process 1}
    &
    \multicolumn{4}{l}{Data Generation Process 2}\\
    &\multicolumn{2}{l}{Proposed Method} & \multicolumn{2}{l}{Oracle Method} & 
 \multicolumn{2}{l}{Proposed Method} & \multicolumn{2}{l}{Oracle Method} \\
    State &           1 &           2 &           1 &           2 &           1 &           2 &           1 &           2 \\
    N   &             &             &             &             &             &             &             &             \\\midrule
    40  &  0.49(0.020) &  0.51(0.013) &  0.89(0.011) &  0.89(0.009) &  0.53(0.014) &  0.61(0.006) &  -- & -- \\ 80  &  0.55(0.065) &  0.62(0.133) &  0.91(0.011) &  0.91(0.014) &  0.65(0.030) &  0.82(0.009) &  0.79(0.014) &  0.92(0.004) \\ 120 &  0.67(0.090) &  0.52(0.025) &  0.91(0.011) &  0.91(0.014) &  0.87(0.015) &  0.95(0.002) &  0.94(0.009) &  1.00(0.001) \\ 160 &  0.92(0.018) &  0.87(0.021) &  0.93(0.009) &  0.97(0.014) &  0.92(0.019) &  0.98(0.001) &  0.98(0.002) &  1.00(0.000) \\ 200 &  0.95(0.018) &  0.86(0.027) &  0.94(0.020) &  0.97(0.004) &  0.96(0.006) &  0.98(0.001) &  1.00(0.003) &  1.00(0.000) 
    \end{tabular}
    \end{adjustbox}
    \caption{The AUC of the simulated task in Figure~\ref{fig:roc_plot_hatgl}. Each value in the cell is the average AUC over $10$ independent runs and the value inside the parenthesis is the standard deviation. The oracle method is assumed that the latent state is known. In this case,  $X(t)$ is unknown and we get the estimates, $\hat{X}(t)$, by using the wavelet smoothing method discussed in Section~\ref{ssec:waveletsmooth}.  }
    \label{tab:roc_aux}
\end{table}

\begin{figure}
    \centering
    \includegraphics[width=\textwidth]{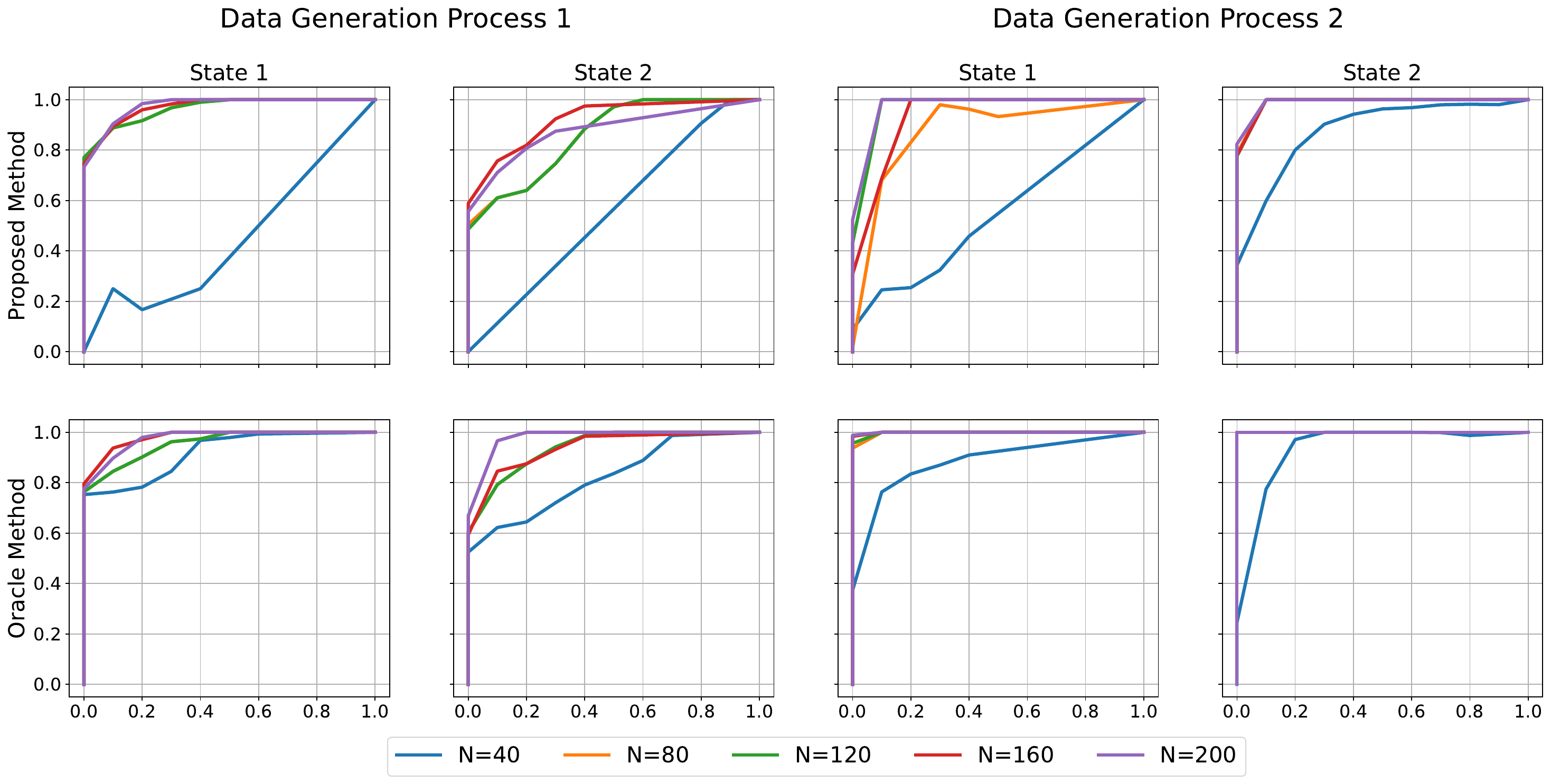}
    \caption{
    The average ROC of the simulated tasks over $10$ independent runs with ground truth $X(t)$ given. \textbf{Top row.} Results of the proposed method. 
    \textbf{Bottom row.} Results of the oracle method, where the latent process $Z(t)$ is assumed to be known. When $X(t)$ is given, both methods can get good result when $N\geq 80$.
    }
    \label{fig:roc_plot_gl}
\end{figure}

\begin{table}[]
    \centering
    \begin{adjustbox}{max width=\textwidth}
    \begin{tabular}{lllllllll}\toprule{} & 
    \multicolumn{4}{l}{Data Generation Process 1}
    &
    \multicolumn{4}{l}{Data Generation Process 2}\\
    &\multicolumn{2}{l}{Proposed Method} & \multicolumn{2}{l}{Oracle Method} & 
 \multicolumn{2}{l}{Proposed Method} & \multicolumn{2}{l}{Oracle Method} \\
    State &           1 &           2 &           1 &           2 &           1 &           2 &           1 &           2 \\
    N   &             &             &             &             &             &             &             &             \\\midrule 40  &  0.47(0.037) &  0.50(0.010) &  0.91(0.009) &  0.80(0.014) &  0.52(0.012) &  0.84(0.005) &  0.86(0.003) &  0.90(0.002) \\ 80  &  0.95(0.025) &  0.83(0.014) &  0.94(0.012) &  0.92(0.008) &  0.82(0.016) &  0.98(0.001) &  1.00(0.000) &  1.00(0.000) \\ 120 &  0.95(0.025) &  0.83(0.014) &  0.94(0.012) &  0.92(0.008) &  0.95(0.016) &  0.98(0.000) &  1.00(0.000) &  1.00(0.000) \\ 160 &  0.95(0.008) &  0.90(0.015) &  0.97(0.014) &  0.93(0.012) &  0.69(0.142) &  0.98(0.001) &  1.00(0.000) &  1.00(0.000) \\ 200 &  0.96(0.009) &  0.86(0.021) &  0.97(0.013) &  0.96(0.010) &  0.96(0.009) &  0.98(0.000) &  1.00(0.000) &  1.00(0.000) \\\bottomrule \end{tabular}
    \end{adjustbox}
    \caption{The AUC of the simulated task in Figure~\ref{fig:roc_plot_gl}. Each value in the cell is the average AUC over $10$ independent runs and the value inside the parenthesis is the standard deviation. In the case when true $X(t)$ is given, the proposed method achieves good performance when $N\geq 80$; the oracle method achieves good performance even when the sample size is only $N=40$.}
    \label{tab:roc_auxa_v2}
\end{table}

\section{Experiments}\label{sec:experiment}
We apply the proposed model to characterize Attention-Deficit/Hyperactivity Disorder (ADHD), one of the complex neurological disorders developed in early childhood. Specifically, we focus on finding the group differences of brain networks from Typically Developed Children (TDC) and ADHD-combined (ADHD-C) type, a common subtype of ADHD that presents both inattentiveness and hyperactivity/impulsivity. 
\citet{shappell2021children, park2021state} have found that resting-state brain networks of ADHD-C and TDC exhibit distinct group differences in connectivity states and transition rates. \citet{shappell2021children} modeled the resting-state fMRI as an HMM with independent Gaussian emissions, and both ADHD and TDC patients share the same graphs. 
The results indicate that ADHD patients spend more time in the hyperconnected state and less time in anticorrelated states compared to TDC. Motivated by these findings, we propose to model real-world data by estimating shared graphs $\{\theta_{ij}^\ell\}$ with group-specific transition rate matrices: $Q_{\text{ADHD}}$ and $Q_{\text{TDC}}$. To analyze the difference of $Q_{\text{ADHD}}$ and $Q_{\text{TDC}}$, we can compare the average dwell time differences at state $\ell$ for $\ell=1,\ldots, k$. 

We analyze the resting-state fMRI from NYU ADHD dataset~\citep{castellanos2008cingulate} released in the ADHD$200$ Initiative~\citep{bellec2017neuro}. We use the standard Athena preprocessing pipeline~\citep{bellec2017neuro} and select the subjects that pass the quality control test. To mitigate the age effects contributed to the development of ADHD, we select subjects within the age range from  $7$ to $10$ following the criterion discussed in~\citep{park2021state}. Then, we parcellate the time-series using the Automated Anatomical atLas (AAL) ~\citep{tzourio2002automated}, which has $116$ regions of interest $(p=116)$. Each session has $N=172$ recorded time points uniformly sampled within a $6$-minute time frame. Thus, for each test subject, we have $172$ sample points to estimate graphs of size $p^2\times m\times k$, resulting in unreliable estimates. Motivated by the prior method~\citep{shappell2021children}, we concatenate the time-series ens employ a joint estimator. Specifically, we concatenate $15$ TDC subjects and $15$ ADHD-C subjects, resulting in a time-series of length $5160$. In the estimation step, we modify the proposed algorithm to estimate two $Q_{\text{ADHD}}$ and $Q_{\text{TDC}}$ in the M-step and then use the estimated $Q_{\text{ADHD}}$ and $Q_{\text{TDC}}$ to compute the latent probability at E-step separately. For model selection, we search the optimal number of states from $\{2,3,4,5,6\}$ and find the optimal $\lambda$ from the exponential $10$ uniform samples within the interval of $[-6,-9]$. For parsimony and ease of interpretation, we use the linear basis function. 

After model selection, we obtain $3$ as the optimal number of states, and the optimal $\lambda$ is $3.355\times 10^{-4}$.
The results are shown in Figure~\ref{fig:adhdgraph}.
To compare the differences between the ADHD-C group and TDC group, we calculate the average dwell time of the subjects at each state. This is calculated as
\[
{\tau}_{\ell, \text{group}} = \frac{1}{{n_\text{group}}}\sum_{u=1}^{n_{\text{group}}} \tau_\ell\rbr{\{Y^{(u)}_n\}_{n=\{0,\ldots,N\}}; Q_{\text{group}}, \{\theta_{ij}^\ell\}},
\]
where the group is either ADHD-C or TDC. The formula of $\tau_\ell$ is defined in~\eqref{eq:computetaul}. 
The result in Table~\ref{tab:sample_tau} indicates that ADHD-C group spends significantly more time at state $1$
, while the TDC group spends significantly more time at state $2$. 
Both groups spend a comparable amount of time at state $3$, which qualitatively matches previous observations of dwell time differences across groups~\citet{shappell2021children, park2021state}.

\begin{figure}[h!]
    \centering
    \begin{minipage}{0.4\textwidth}
        \centering
        \subcaptionbox{Connectome of state $1$.\label{fig:sub1}}{
            \includegraphics[width=\linewidth]{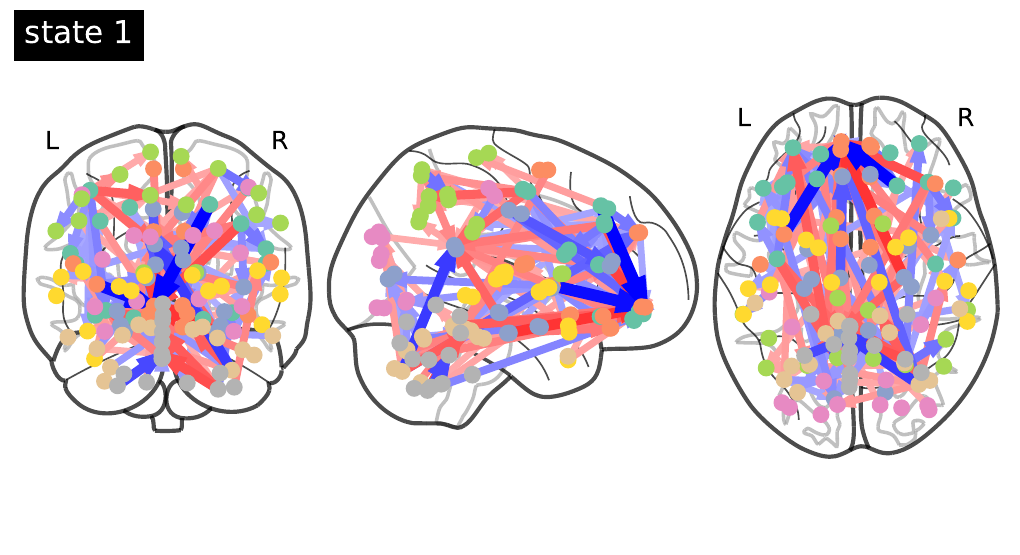}
        }\\
        \subcaptionbox{Connectome of state $2$.\label{fig:sub2}}{
            \includegraphics[width=\linewidth]{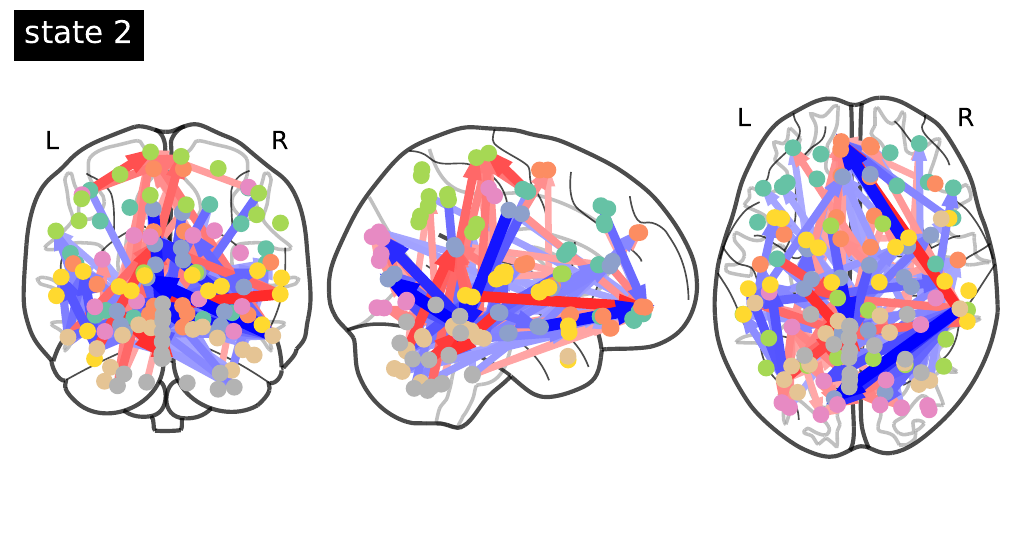}
        }\\
        \subcaptionbox{Connectome of state $3$.\label{fig:sub3}}{
            \includegraphics[width=\linewidth]{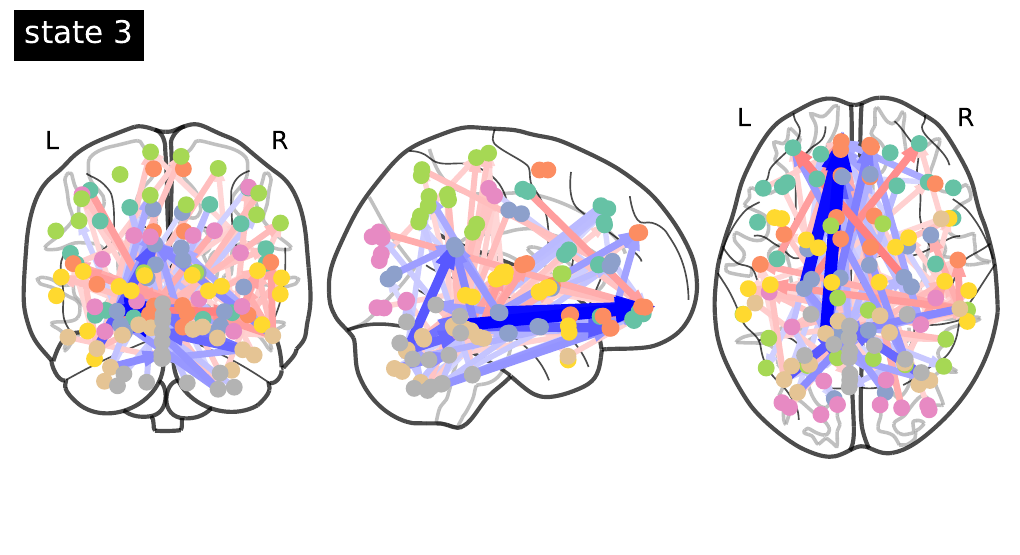}
        }
    \end{minipage}\hfill
    \begin{minipage}{0.58\textwidth}
        \centering
        \subcaptionbox{Transition rate matrix. \textbf{Left}: $Q_{\text{TDC}}$. \textbf{Right}: $Q_{\text{ADHD}}$.\label{fig:sub4}}{
            \includegraphics[width=\linewidth]{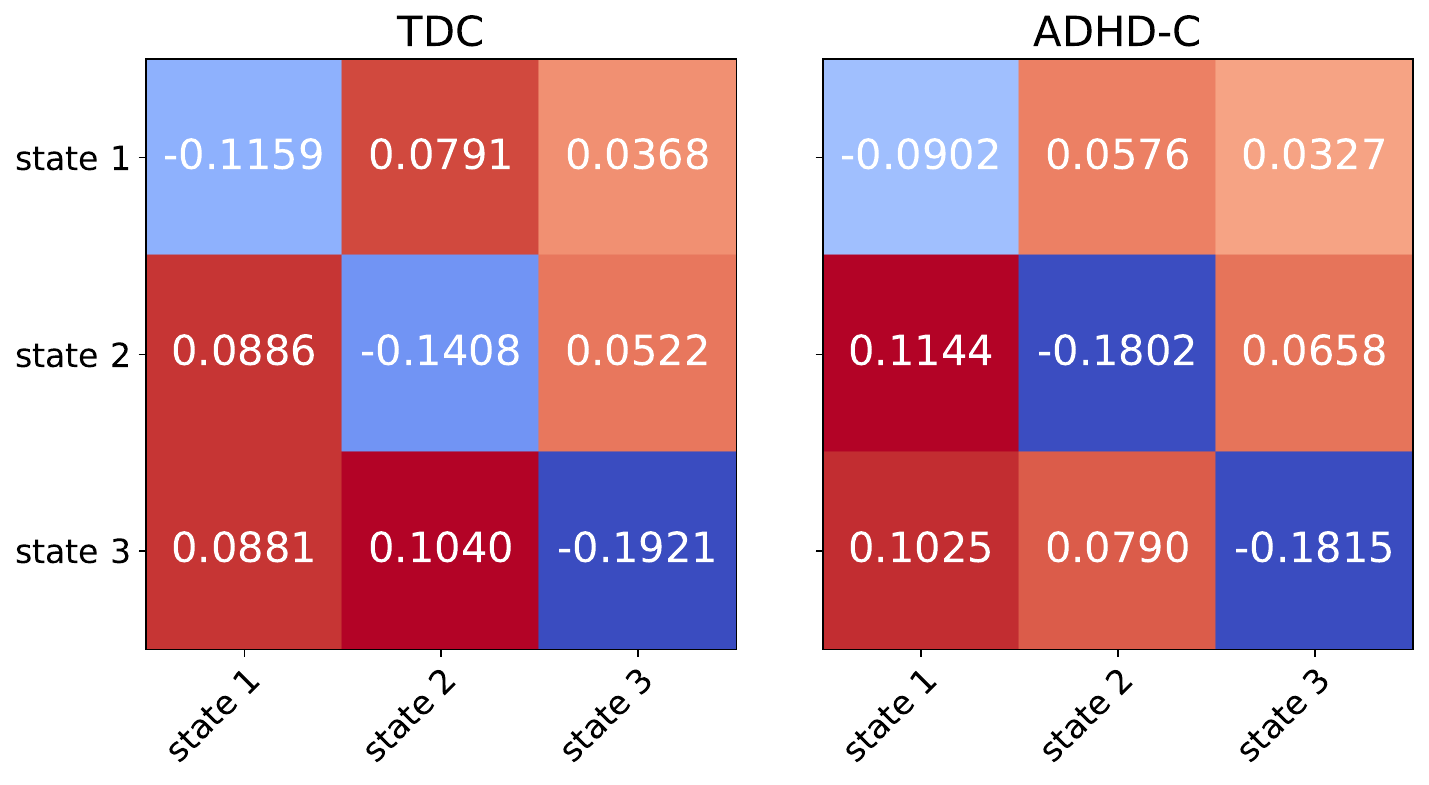}
        }\\
        \subcaptionbox{The estimated latent probability map of each subject. \label{fig:sub5}}{
            \includegraphics[width=\linewidth]{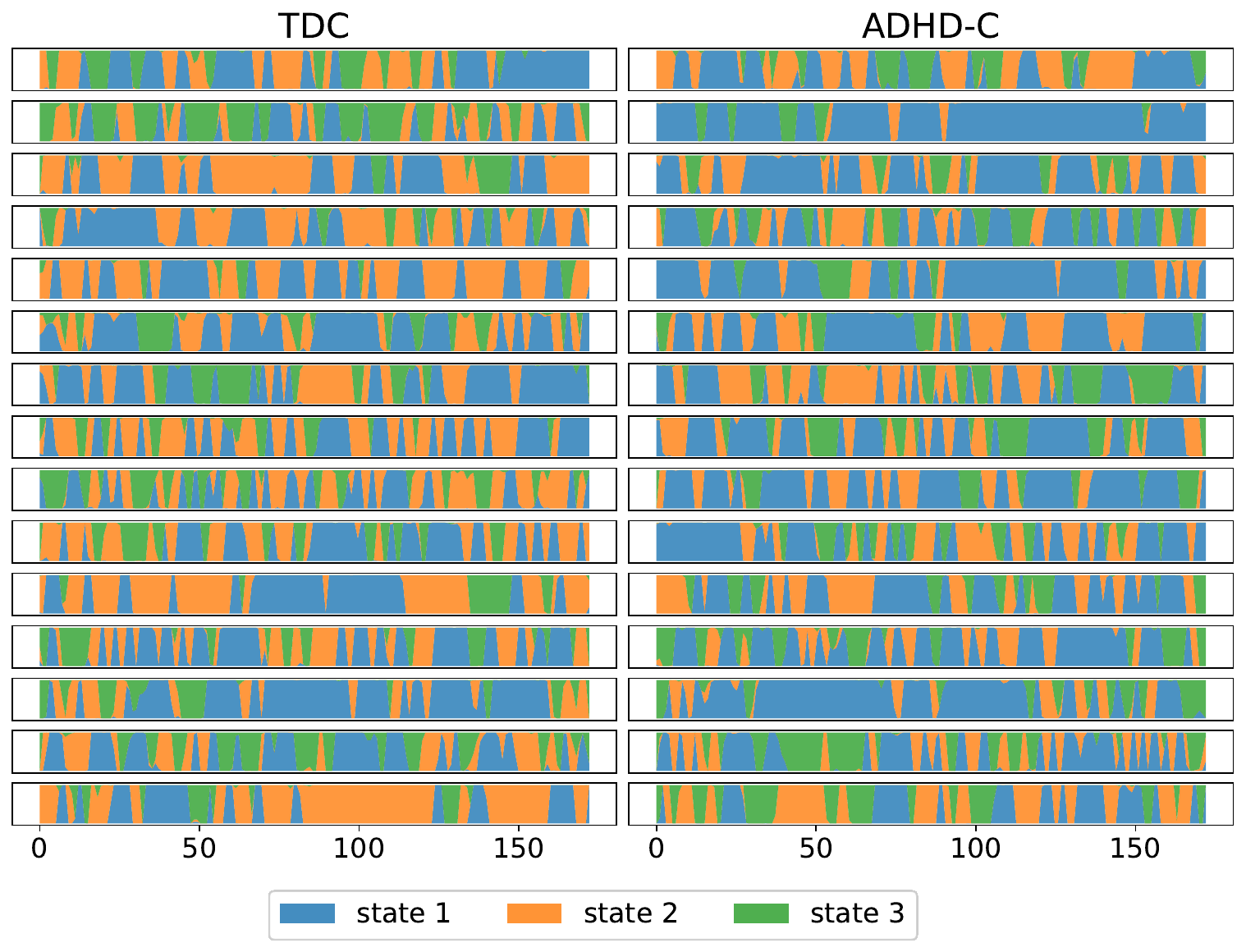}
        }
    \end{minipage}
    \caption{(a)--(c) are the connectomes of each state. The red arrow indicates that $\theta_{ij}^\ell$ is positive and the blue arrow indicates that $\theta_{ij}^\ell$ is negative. The darker the color is, the larger the absolute value of $\theta_{ij}^\ell$ is. (d) shows estimated transition rate matrix. (e) Each figure is the probability map $P(Z(t_n)\mid Y_0^N; Q_{\text{group}}, \{\theta_{ij}^\ell\})$ of each subject. The x-axis is the time point and the y-axis is accumulated probability that sums to $1$. Left column shows the probability maps of all subject from TDC group and right column shows the probability maps of all subject from ADHD-C group. }
    \label{fig:adhdgraph}
\end{figure}

\begin{table}[ht!]
\centering
\begin{tabular}{lcccc}
\toprule
&group & state $1$ & state $2$ & state $3$ \\
\midrule
\multirow{2}{*}{total time}&
TDC & $2332.72$& $2063.35$ &$1003.93$\\
&ADHD-C & $2949.61$ & $1409.28$ & $1041.12$\\
\midrule
\multirow{2}{*}{average time ($\tau_{\ell,\text{group}}$)}&TDC &
$155.51(31.645)$&
$137.56(38.819)$&
$66.93(33.188)$\\
&ADHD-C &
$196.64(50.811)$&
$93.95(33.913)$&
$69.41 (22.549)$\\
\bottomrule
\end{tabular}
\caption{The average dwell time of each group. }
\label{tab:sample_tau}
\end{table}
\section{Discussion}\label{sec:discussion}

Given the increasing interest in modeling real-world stochastic processes, which are more complicated than stationary and linear structures, we provide a more flexible framework to model the complex interactions with guarantees and provide the statistical guarantees of the algorithm. There are several interesting directions for further exploration. 
Our approach assumes the slow-switching nature of the latent process and proposes an approximation procedure. An interesting direction would be studying how the transition rate, the values of $Q$, versus the observed sampling rate, affects the estimation quality. Under the high switching rate, one potential remedy is to integrate a second data modality that features a much higher sampling frequency. Motivated by the technological advances in the biomedical domain, we are able to obtain concurrent measurements of fMRI and EEG data, where EEG data has a much higher sampling frequency than that of fMRI. FMRI, in contrast, features high spatial resolution compared to  EEG~\citep{debener2006single}, enabling us to obtain a fine-grained graph from the brain. Hence, one direction for future work is to use one modal (EEG) to uncover the hidden process and another modal (fMRI) to estimate the graph. The question is, then, how to conduct faithful joint estimations. Additionally, we model an ordinary differential process with additive observed noise. One direction to pursue is to understand under what conditions, graph recovery is feasible when the observations are generated from stochastic differential equations. This enables a broader understanding of causal structures from data generated from dynamical systems~\citep{hansen2014causal, mogensen2018causal}.

\section*{Acknowledgement}
Katherine Tsai is supported in part by NSF Graduate Research Fellowship. Mladen Kolar is supported in part by NSF ECCS-2216912. 
Additionally, this work is partially supported by the National Science Foundation under grants No. 2046795, 1934986, 2205329, NIH 1R01MH116226-01A, NIFA award 2020-67021-32799, the Alfred P. Sloan Foundation.



\newpage

\putbib[bu1]
\end{bibunit}

\appendix
\pagebreak
\begin{bibunit}[my-plainnat]


\section{Tail bound of the Wavelet Shrinkage}
This section discusses the proof of Proposition~\ref{prop:tailbound:wavelet}. We begin with stating the main proof, followed by auxiliary lemmas. 
\begin{proof}[Proof of Proposition~\ref{prop:tailbound:wavelet}]
    The proof is inspired by Theorem~3 in~\citet{brown1998wavelet}, and we extend the concentration result on the expectation to the tail bound. 
    
    Here, we drop the index $i=1,\ldots,p$ of $X_i$ for simplicity and study the univariate function. By orthogonality of the wavelet basis and  Parseval's identity, we can decompose the objective function as
    \begin{align*}
    \opnorm{\hat{X}-X}{}^2&=\sum_{\ell=1}^{2^{j_0}}\rbr{
    \hat{\xi}_{j_0\ell}-{\xi}_{j_0\ell}
    }^2+\sum_{j=j_0}^{J-1}\sum_{\ell=1}^{2^j}\rbr{\hat{\eta}_{j\ell}-{\eta}_{j\ell}}^2+
    \sum_{j=J}^{\infty}\sum_{\ell=1}^{2^j}\eta_{j\ell}^2\\
    &=T_1+T_2+T_3,
    \end{align*}
    where $J=\log_2 N$.
    Hence, the goal is to find the upper bounds for $T_1, T_2, T_3$. 
    To bound $T_1, T_2, T_3$, we will introduce additional terms. Recall that we first interpolate the discrete samples with the basis function $\Phi_{Jn}$ to construct the continuous function:
    \begin{align*}
        \tilde{X}(t)&=N^{-1/2}\sum_{n=1}^N Y_n\phi_{Jn}(t).
    \end{align*}
    Now, define that
    \begin{align*}
        \Delta(t)&=N^{-1/2}\sum_{n=1}^N X(t_n)\phi_{Jn}(t)-{X}(t);\\
        r(t)&=N^{-1/2}\sigma^\star \sum_{n=1}^N \varepsilon_n\phi_{Jn}(t),
    \end{align*}
where $\phi_{Jn}=2^{J/2}\phi(2^J t- n)$. It follows that
    \[
    \tilde{X}(t)=X(t)+\Delta(t)+r(t).
    \]
    Denote the subspace $V_J$ to be the closed linear subspace of $\{\phi_{Jn},n=1,\ldots, 2^J=N\}$. The projection to $V_J$ of $\tilde{X}$, denoted as $\tilde{X}_J=P_{V_J}\tilde{X}$, can be decomposed as
    \[
    \tilde{X}_J(t)= X_J(t)+\Delta_J(t)+r_J(t),
    \]
    where $X_J=P_{V_J}X$, $\Delta_J=P_{V_J}\Delta$, and $r_J=P_{V_J}r$. Hence, by construction, we can write,
    \begin{align*}
    \hat{\xi}_{j_0\ell}&=\dotp{X_J}{\phi_{j_0\ell}}+\dotp{\Delta_J}{\phi_{j_0\ell}}+\dotp{r_J}{\phi_{j_0\ell}}\\
    &=\xi_{j_0\ell}+d'_{j_0\ell}+r'_{j_0\ell},&\ell =1,\ldots, 2^{j_0}.
    \end{align*}
    Furthermore, for $j=j_0,\ldots, J-1$
    \begin{align}
        \tilde{\eta}_{j\ell}&=\dotp{X_J}{\psi_{j\ell}}+\dotp{\Delta_J}{\psi_{j\ell}}+\dotp{r_J}{\psi_{j\ell}}\notag\\
        &=\eta_{j\ell}+d_{j\ell}+r_{j\ell},&\ell =1,\ldots, 2^{j}.\label{eq:define:rd}
    \end{align}

    For each $j$, we define the set 
    \[
    \Xi_j = \{\ell: \text{supp}(\psi_{j\ell}) \text{ contains at least one jump point of $X(t)$}\}.
    \]
    Therefore, by Lemma~\ref{lemma:jump:coefficient}, we can bound $T_3$ as
    \begin{align*}
    T_3&=\sum_{j=J}^\infty\sum_{\ell\in \Xi_j}\eta_{j\ell}^2+\sum_{j=J}^\infty\sum_{\ell\not\in\Xi_j}\eta_{j\ell}^2\\
    &\leq \sum_{j=J}^\infty L(d+2)C^2 2^{-j} + \sum_{j=J}^\infty\sum_{\ell\not\in\Xi_j} C^2 2^{-j(1+2\alpha)}\\
    &=o\rbr{N^{-2\alpha/(1+2\alpha)}},
    \end{align*}
    where $L$ is the length of the support of the mother wavelet and $d$ is the number of switchings. 
    
    To find the upper bound of $T_1$, we can write
    \[
    T_1=\sum_{\ell=1}^{2^{j_0}}\rbr{
    \hat{\xi}_{j_0\ell}-{\xi}_{j_0\ell}
    }^2=\sum_{\ell=1}^{2^{j_0}}\rbr{
    r_{j_0\ell}'+d_{j_0\ell}'}^2\leq 2 \sum_{\ell=1}^{2^{j_0}}(r_{j_0\ell}')^2 + 2\sum_{\ell=1}^{2^{j_0}}(d_{j_0\ell}')^2.
    \]
    Since $r_{j_0\ell}'=N^{-1/2}\sigma^\star\sum_{n=1}^N\varepsilon_n\dotp{\phi_{Jn}}{\phi_{j_0\ell}}$, where $\varepsilon_n$ for $n=1\ldots, N$ are \emph{i.i.d.} standard Gaussian and the basis function can be decomposed as $\phi_{j_0\ell}=\sum_{n=1}^N\dotp{\phi_{Jn}}{\phi_{j_0\ell}}\phi_{Jn}$. It follows that $r_{j_0\ell}'\sim\Ncal(0,N^{-1}(\sigma^\star)^2)$. Using the Gaussian tail bound, for each $\ell=1,\ldots, 2^{j_0}$, we have
    \[
    P\rbr{\abr{r_{j_0\ell}'}\geq \sigma^\star\sqrt{\frac{2\log N/\delta}{N}}}\leq \frac{2\delta}{N}. 
    \]
    Taking the union bound across $\ell=1,\ldots, 2^{j_0}$, we have
    \[
    P\rbr{\max_{\ell=1,\ldots, 2^{j_0}}\abr{r_{j_0\ell}'}\geq \sigma^\star\sqrt{\frac{2\log N/\delta}{N}}}\leq \frac{2^{j_0+1}\delta}{N}. 
    \]
    We denote the event $\Ecal_1$ as the event that $\max_{\ell=1,\ldots, 2^{j_0}}\abr{r_{j_0\ell}'}\leq \sigma^\star\sqrt{{2\log (N/\delta)}/{N}}$. Since $2^{j_0+1}\leq N$, it follow directly that $\Ecal_1$ happens with probability at least $1-\delta$.  Conditioned on event $\Ecal_1$, we have $T_1$
    \[
     T_1\leq 2^{j_0+1}(\sigma^\star)^2\frac{\log N/\delta}{N}+2\sum_{\ell=1}^{2^{j_0}}(d'_{j_0k})^2.
    \]
    By Lemma~\ref{lemma:wavelet:aproxerr}, it follows that
    \begin{equation}\label{eq:bound:d}
    \opnorm{\Delta_J}{}^2=\sum_{\ell=1}^{2^{j_0}}(d'_{j_0\ell})^2 + 
    \sum_{j=j_0+1}^{J-1}\sum_{\ell=1}^{2^j}(d_{j\ell})^2=o(N^{-2\alpha/(1+2\alpha)}).
    \end{equation}
    Hence,
    \[
    T_1\leq 2^{j_0+1}\sigma^2\frac{\log N/\delta}{N}+2\sum_{\ell=1}^{2^{j_0}}(d'_{j_0k})^2=o\rbr{\{\log (N/\delta)/N\}^{2\alpha/(1+2\alpha)}}.
    \]
     To bound $T_2$, we decompose into two terms
     \begin{align}\label{eq:T2:v2}
     T_2 = \sum_{j=j_0}^{J-1}\sum_{\ell\in\Xi_j}(\eta_{j\ell}-\hat{\eta}_{j\ell})^2 + \sum_{j=j_0}^{J-1}\sum_{\ell\not\in\Xi_j}(\eta_{j\ell}-\hat{\eta}_{j\ell})^2=T_{21}+T_{22}.
     \end{align}
     Before proceeding, we consider the event $\Ecal_2$ be the event that
     \[
     \max_{j,\ell}\abr{r_{j\ell}}\leq C\sigma^\star\sqrt{\frac{2\log N/\delta}{N}}=\frac{1}{3}\lambda_{j,\ell}.
     \]
     We can write
     \begin{align}
         \max_{j,\ell}\abr{r_{j\ell}} &= \max_{j,\ell}\abr{
         N^{-1/2}\sigma^\star\sum_{n=1}^N\varepsilon_n\dotp{\phi_{Jn}}{\psi_{j\ell}}
         }.\notag
         \intertext{By Cauchy–Schwarz inequality, we have}
         &\leq N^{-1/2}\sigma^\star\max_n|\varepsilon_n|\max_{j,\ell}\sum_{n'=1}^N|\dotp{\phi_{Jn'}}{\psi_{j\ell}}|.\label{eq:tmp54}
     \end{align}
        Note that $\psi_{j\ell}(t)=2^{j/2}\psi(2^{j/2}t-\ell)$, and hence we can apply Lemma~\ref{lemma:jump:coefficient} and obtain that
        \[
        \abr{\dotp{\psi_{j\ell}}{\phi_{Jn}}}\leq C 2^{j/2-J(1/2+\alpha)}.
        \]
        Summing over $n'=1,\ldots,2^J=N$, 
        we can write
        \begin{align*}
            \sum_{n'=1}^N \abr{\dotp{\psi_{j\ell}}{\phi_{Jn'}}}\leq C2^{j/2+J/2-J\alpha}.
        \end{align*}
        Since $j\leq J$ and $\alpha\geq 1$, then it follows that
        \[
        \max_{j,\ell}\sum_{n'=1}^N\abr{\dotp{\phi_{Jn'}}{\psi_{j\ell}}}\leq C,
        \]
        is bounded by a constant $C$. 
Therefore, we can bound~\eqref{eq:tmp54} as
\[
\max_{j,\ell}\abr{r_{j\ell}}\leq CN^{-1/2}\sigma^\star \max_{n}\abr{\varepsilon_n}.
\]

     Therefore, event $\Ecal_2$ holds true if 
     \[
     \max_n N^{-1/2}\sigma^\star \abr{\varepsilon_n}\leq \sigma^\star\sqrt{\frac{2\log N/\delta}{N}}. 
     \]
     For each $N^{-1/2}\sigma^\star \abr{\varepsilon_n}$, we can apply the Gaussian tail bound and obtain
     \[
     P\rbr{N^{-1/2}\sigma^\star \abr{\varepsilon_n}\geq \sigma^\star\sqrt{\frac{2\log N/\delta}{N}}}\leq 2\frac{\delta}{N}.
     \]
     Taking union bound of $n=1,\ldots, N$, we have
     \[
     P\rbr{\max_n N^{-1/2}\sigma^\star \abr{\varepsilon_n}\geq \sigma^\star\sqrt{\frac{2\log N/\delta}{N}}}\leq 2\delta.
     \]
     Therefore, we can conclude that event $\Ecal_2$ happens with probability at least $1-2\delta$. Since $\alpha>1/2$, by~\eqref{eq:bound:d}, we have $|d_{j\ell}|\leq \sigma^\star(2\log(N/\delta)/N)^{1/2}=3^{-1}\lambda_{j\ell}$. On the event $\Ecal_2$, we can apply Lemma~\ref{lemma:eta:soft} to~\eqref{eq:T2:v2} and obtain 
     \begin{align*}
         T_{21}=\sum_{j=j_0}^{J-1}\sum_{\ell\in\Xi_j}(\eta_{j\ell}-\hat{\eta}_{j\ell})^2 &= \sum_{j=j_0}^{J-1}L(d+2)\rbr{\frac{10}{3}\lambda_{j\ell}^2+3d_{j\ell}^2}\\
         &\leq \frac{11}{3}\sum_{j=j_0}^{J-1}L(d+2)\lambda_{j\ell}^2=o\rbr{(\log (N/\delta)/N)^{2\alpha/(1+2\alpha)}},
     \end{align*}
     where $\lambda_{j\ell}=3\sigma^\star\sqrt{(2\log N/\delta)/N}$. 

     Finally, to find the upper bound of $T_{22}$, we consider further splitting it into two terms:
     \[
     T_{22} = \sum_{j=j_0}^{J_1-1}\sum_{\ell\not\in\Xi_j}(\eta_{j\ell}-\hat{\eta}_{j\ell})^2+\sum_{j=J_1}^{J-1}\sum_{\ell\not\in\Xi_j}(\eta_{j\ell}-\hat{\eta}_{j\ell})^2,
     \]
     where 
     \[
     J_1 = C'\left\lfloor \frac{1}{1+2\alpha}
     \log_2\frac{N}{2(\sigma^\star)^2\log(N/\delta)}
     \right\rfloor,
     \]
     where $C'$ is an absolute constant. 
     Note that, conditioned on event $\Ecal_2$, for the first term of $T_{22}$, we can apply Lemma~\ref{lemma:eta:soft} again and obtain
     \[
     \sum_{j=j_0}^{J_1-1}\sum_{\ell\not\in\Xi_j}(\eta_{j\ell}-\hat{\eta}_{j\ell})^2\leq \sum_{j=j_0}^{J_1-1} 2^j \rbr{\frac{10}{3}\lambda_{j\ell}^2+d_{j\ell}^2}=O\rbr{(\log (N/\delta)/N)^{2\alpha/(1+2\alpha)}}.
     \]
     Finally, for $j\geq J_1$ and $\ell\not\in \Xi_j$, apply Lemma~\ref{lemma:jump:coefficient}, we have
     \[
     \abr{\eta_{j\ell}}\leq \sigma^\star\sqrt{\frac{2\log N/\delta}{N}}.
     \]
     Conditioned on the event $\Ecal_2$ and $d_{j\ell}\leq \sigma^\star(\frac{2\log N/\delta}{N})^{1/2}$, we can conclude that
     \begin{align}\label{eq:tmp34}
     \abr{\tilde{\eta}_{j\ell}}\leq 3 \sigma^\star \sqrt{\frac{2\log N/\delta}{N}}=\lambda_{j\ell}.
     \end{align}
     Since $\hat{\eta}_{j\ell}=sgn(\tilde{\eta}_{j\ell})(|\tilde{\eta}_{j\ell}|-\lambda_{j\ell})_+$, we can conclude that
     \[
     \rbr{\hat{\eta}_{j\ell}-\eta_{j\ell}}^2=\eta_{j\ell}^2,
     \]
     for $j\geq J_1$ and $\ell\not\in \Xi_j$. Therefore, we have
     \begin{equation}\label{eq:tmp35}
        \sum_{j=J_1}^{J-1}\sum_{\ell\not\in\Xi_j}(\eta_{j\ell}-\hat{\eta}_{j\ell})^2= \sum_{j=J_1}^{J-1} \sum_{\ell\not\in\Xi_j}\eta_{j\ell}^2\leq
     C\sum_{j=J_1}^{J-1} 2^j  2^{-j(1+2\alpha)} = O(N^{-2\alpha}).
     \end{equation}
    Combining results of~\eqref{eq:tmp34}--\eqref{eq:tmp35}, we have $T_2 = O(\{\log(N/\delta)/N\}^{2\alpha/(1+2\alpha)})$. 
     Conditioned on events $\Ecal_1,\Ecal_2$, we can conclude that
     \[
     \opnorm{\hat{X}-X}{}^2=T_1+T_2=T_3=C\rbr{\frac{\log N-\log\delta}{N}}^{2\alpha/(1+2\alpha)}.
     \]
     Furthermore, $P(\Ecal_1\cap \Ecal_2)\geq 1-3\delta$. 
\end{proof}

\begin{lemma}[Lemma~1 in~\citet{brown1998wavelet}]\label{lemma:jump:coefficient} Let $f\in\Lambda^\alpha(M,B,d)$. Suppose that the wavelet function $\psi$ is $r$-regular with $r\geq \alpha$. Then:
    \begin{enumerate}
        \item If supp$(\psi_{j\ell})$ does not contain any jump points of $f$, then
        \[
        \eta_{j\ell}=\abr{\dotp{f}{\psi_{j\ell}}}\leq C 2^{-j(1/2+\alpha)}. 
        \]
        \item If supp$(\psi_{j\ell})$ contains jump points of $f$, then
        \[
        \eta_{j\ell}=\abr{\dotp{f}{\psi_{j\ell}}}\leq C 2^{-j/2}. 
        \]
    \end{enumerate}
\end{lemma}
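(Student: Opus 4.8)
The plan is to split into the two cases exactly as stated and to exploit two complementary features of the wavelet $\psi$: its vanishing moments (supplied by $r$-regularity with $r\geq\alpha$) for the no-jump case, and a crude size bound using $\abr{f}\leq B$ for the jump case. A preliminary observation I would record is the $L_1$ scaling of the wavelet at scale $j$: changing variables $u=2^jt-\ell$ gives $\int_0^1\abr{\psi_{j\ell}(t)}\,\mathrm{d}t = 2^{j/2}\cdot 2^{-j}\int\abr{\psi(u)}\,\mathrm{d}u = 2^{-j/2}\,C_\psi$, where $C_\psi=\int\abr{\psi(u)}\,\mathrm{d}u$, and that $\supp(\psi_{j\ell})$ is an interval of length $L2^{-j}$ with $L$ the support length of $\psi$.

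For the second (easier) case, a jump point lies in $\supp(\psi_{j\ell})$, so $f$ need not be smooth there and no cancellation can be extracted. I would simply bound $\abr{\eta_{j\ell}}=\abr{\dotp{f}{\psi_{j\ell}}}\leq \int_0^1\abr{f(t)}\,\abr{\psi_{j\ell}(t)}\,\mathrm{d}t\leq B\cdot 2^{-j/2}C_\psi$, using the uniform bound $\abr{f}\leq B$ from the first defining property of $\Lambda^\alpha(M,B,d)$. This already gives the claimed $C2^{-j/2}$ estimate.

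For the first case, $\supp(\psi_{j\ell})$ avoids every jump, so on this interval $f$ agrees with a genuinely H\"older-$\alpha$ function. The key step is the polynomial-subtraction trick: since $\psi$ is $r$-regular with $r\geq\alpha$, its moments vanish up to order $\lfloor\alpha\rfloor$ (because $\lfloor\alpha\rfloor\leq r$), hence $\dotp{P}{\psi_{j\ell}}=0$ for every polynomial $P$ of degree at most $\lfloor\alpha\rfloor$. Fixing $t_0\in\supp(\psi_{j\ell})$ and taking $P$ to be the degree-$\lfloor\alpha\rfloor$ Taylor polynomial of $f$ at $t_0$, I would write $\eta_{j\ell}=\dotp{f-P}{\psi_{j\ell}}$ and estimate the remainder on the support: for $\alpha\leq 1$ one takes $P\equiv f(t_0)$ and the H\"older condition gives $\abr{f(t)-P(t)}\leq M\abr{t-t_0}^\alpha$, while for $\alpha>1$ the same bound (up to a constant depending on $\lfloor\alpha\rfloor$) follows from the Taylor remainder together with the H\"older condition on $f^{\lfloor\alpha\rfloor}$. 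Since $\abr{t-t_0}\leq L2^{-j}$ on the support, combining with the $L_1$ scaling yields $\abr{\eta_{j\ell}}\leq CM\,2^{-j\alpha}\cdot 2^{-j/2}C_\psi= C2^{-j(1/2+\alpha)}$, as claimed.

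The main obstacle is the no-jump case: verifying the remainder estimate uniformly across the regimes $\alpha\leq 1$ and $\alpha>1$, and checking that $r$-regularity supplies precisely enough vanishing moments to annihilate the matching Taylor polynomial. Once the polynomial-subtraction identity and the scaling of $\int\abr{\psi_{j\ell}}$ are in place, the final constant $C$ simply absorbs $M$, $B$, $L$, and $C_\psi$, and the remaining computation is routine.
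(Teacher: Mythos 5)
Your proof is correct and follows the standard argument for this bound: the paper itself does not prove the lemma (it is imported verbatim from \citet{brown1998wavelet}), and the cited source's proof is exactly your two-case scheme --- vanishing moments up to order $\lfloor\alpha\rfloor$ (guaranteed by $r$-regularity with $r\geq\alpha$) plus Taylor-polynomial subtraction on jump-free supports, and the crude $\abr{f}\leq B$ bound with the $L_1$ scaling $\int\abr{\psi_{j\ell}}=C_\psi 2^{-j/2}$ when a jump is present. Your implicit assumption that $\psi$ is compactly supported is consistent with the paper, which uses the support length $L$ of the mother wavelet in the proof of Proposition~\ref{prop:tailbound:wavelet}.
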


\begin{lemma}[Adapted from Theorem~1 in~\citet{brown1998wavelet}]\label{lemma:wavelet:aproxerr}
    Suppose that an uniformly sampled function $\{f(t_n), n=1,\ldots,N\}$ is given with $t_n=n/N$ for $n=1,\ldots, N$. Let the wavelet function $\psi$ be $r$-regular with $r\geq\alpha$. Define $\hat{f}(t)=N^{-1/2}\sum_{n=1}^N f(t_n)\phi_{Jn}(t)$. Then, the approximation error satisfies
    \[
    \sup_{f\in\Lambda^\alpha(M,B,d)}\opnorm{\hat{f}-f}{}^2=o(N^{-2\alpha/(1+2\alpha)}). 
    \]
\end{lemma}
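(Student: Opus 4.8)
The plan is to split the error into a coarse-scale projection error and a sampling (quadrature) error, paralleling the treatment of $T_3$ and the coefficient analysis already carried out in the proof of Proposition~\ref{prop:tailbound:wavelet}. Writing $P_{V_J}$ for the orthogonal projection onto $V_J$ and noting that $\hat f\in V_J$, I would start from
\[
\opnorm{\hat f-f}{}^2\le 2\opnorm{\hat f-P_{V_J}f}{}^2+2\opnorm{P_{V_J}f-f}{}^2
\]
and bound the two pieces separately. Every constant produced below will depend only on $M,B,d,\alpha$ and the wavelet, so the resulting estimate is automatically uniform over $\Lambda^\alpha(M,B,d)$.

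For the projection error, Parseval's identity gives $\opnorm{P_{V_J}f-f}{}^2=\sum_{j\ge J}\sum_\ell \eta_{j\ell}^2$, which is exactly the quantity $T_3$ from the earlier proof. Splitting each level into indices $\ell\in\Xi_j$ (support contains a jump) and $\ell\notin\Xi_j$, Lemma~\ref{lemma:jump:coefficient} yields $\eta_{j\ell}^2\le C 2^{-j}$ in the former case and $\eta_{j\ell}^2\le C 2^{-j(1+2\alpha)}$ in the latter. Since $|\Xi_j|\le L(d+2)$ while there are at most $2^j$ indices in total, summing over $j\ge J$ produces $O(d/N)+O(N^{-2\alpha})$.

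For the sampling error, orthonormality of $\{\phi_{Jn}\}$ gives $\opnorm{\hat f-P_{V_J}f}{}^2=\sum_n (\tilde c_n-c_n)^2$ with $\tilde c_n=N^{-1/2}f(t_n)$ and $c_n=\dotp{f}{\phi_{Jn}}=N^{-1/2}\int f((u+n)/N)\phi(u)\,\mathrm du$, using $\int\phi=1$; hence $\tilde c_n-c_n=N^{-1/2}\int \cbr{f(n/N)-f((u+n)/N)}\phi(u)\,\mathrm du$. For the indices $n$ whose support contains no jump, the piecewise-Hölder bound (when $\alpha\le 1$) or a first-order Taylor expansion together with the compact support of $\phi$ (when $\alpha>1$) gives $|\tilde c_n-c_n|=O(N^{-1/2-\min(\alpha,1)})$, so these at most $N$ terms contribute $O(N^{-2\min(\alpha,1)})$. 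For the $O(Ld)$ indices whose support straddles a jump only the crude bound $|f(n/N)-f((u+n)/N)|\le 2B$ is available, giving $|\tilde c_n-c_n|=O(N^{-1/2})$ and a total contribution $O(d/N)$.

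Combining the two pieces, the dominant term is $O(d/N)$; the standing hypothesis $d\le C N^\gamma$ with $\gamma<1/(1+2\alpha)$ then forces $d/N=o(N^{-2\alpha/(1+2\alpha)})$, while $N^{-2\alpha}$ and $N^{-2\min(\alpha,1)}$ are trivially $o(N^{-2\alpha/(1+2\alpha)})$, which establishes the claim. The main obstacle is not any delicate analysis but the bookkeeping of the jump-affected coefficients in both pieces: away from the jumps the errors are standard and in fact far smaller than the target rate, so the entire content of the adaptation of Theorem~1 in~\citet{brown1998wavelet} is to show that the $d$ discontinuities contribute only $O(d/N)$ and that the growth restriction on $d$ keeps this strictly below $N^{-2\alpha/(1+2\alpha)}$ — confirming that this deterministic approximation error is negligible relative to the stochastic term governing the rate in Proposition~\ref{prop:tailbound:wavelet}.
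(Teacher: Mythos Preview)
Your argument is correct. The paper itself does not spell out a proof of this lemma: it simply notes that the statement is the special case of Theorem~1 in \citet{brown1998wavelet} corresponding to the uniform sampling design (the cumulative distribution $H$ in their setup being the identity). Your proposal effectively reconstructs that argument from scratch. The split $\hat f - f = (\hat f - P_{V_J}f) + (P_{V_J}f - f)$ is the natural one; since $\hat f\in V_J$, the two summands are in fact orthogonal, so Pythagoras gives the decomposition without the factor of~$2$, but this is cosmetic. Your handling of both pieces --- the projection error via Lemma~\ref{lemma:jump:coefficient} with the jump/non-jump split (identical to the $T_3$ computation in Proposition~\ref{prop:tailbound:wavelet}), and the quadrature error via the pointwise H\"older/Taylor bound on $f(n/N)-f((u+n)/N)$ --- is sound, and the bookkeeping of the $O(d)$ jump-affected indices together with the growth constraint $d\le CN^\gamma$, $\gamma<1/(1+2\alpha)$, correctly identifies the dominant $O(d/N)$ contribution and shows it is $o(N^{-2\alpha/(1+2\alpha)})$. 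In short, the paper cites the result; you supply the self-contained proof that the cited reference would give in this special case.
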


\begin{proof}[Proof of Lemma~\ref{lemma:wavelet:aproxerr}]
    The proof is a special case of Theorem~1 in~\citet{brown1998wavelet}, where the cumulative density function $H$ we use here is an identify function. 
\end{proof}

\begin{lemma}\label{lemma:eta:soft}
Recall the definitions of $r_{j\ell}$ and $d_{j\ell}$ in~\eqref{eq:define:rd}.
    Suppose that $|r_{j\ell}|\leq 3^{-1}\lambda_{j\ell}$ and $|d_{j\ell}|\leq 3^{-1}\lambda_{j\ell}$, then
    \[
    \rbr{\hat{\eta}_{j\ell}-\eta_{j\ell}}^2\leq \frac{10}{3}\lambda_{j\ell}^2+3d_{j\ell}^2.
    \]
\end{lemma}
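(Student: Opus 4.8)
The plan is to prove the bound by a two-case analysis on the size of the interpolated coefficient $\tilde{\eta}_{j\ell}$ relative to the threshold $\lambda_{j\ell}$, using the explicit form of the soft-thresholding map. Throughout I would drop the indices $j,\ell$ and write $\tilde{\eta}=\eta+d+r$ and $\hat{\eta}=\mathrm{sgn}(\tilde{\eta})\rbr{\abr{\tilde{\eta}}-\lambda}_+$, keeping in mind the standing hypotheses $\abr{r}\le\lambda/3$ and $\abr{d}\le\lambda/3$. The key structural fact is that soft-thresholding either annihilates a sub-threshold coefficient or shifts a super-threshold coefficient by exactly $\lambda$; the two regimes are handled separately.

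First I would treat the case $\abr{\tilde{\eta}}\le\lambda$. Here $\hat{\eta}=0$, so $(\hat{\eta}-\eta)^2=\eta^2$, and by the triangle inequality $\abr{\eta}\le\abr{\tilde{\eta}}+\abr{d}+\abr{r}\le\lambda+\tfrac{\lambda}{3}+\tfrac{\lambda}{3}=\tfrac{5}{3}\lambda$. Consequently $\eta^2\le\tfrac{25}{9}\lambda^2\le\tfrac{10}{3}\lambda^2$, which already lands inside the claimed budget (the $3d^2$ term is not even needed in this regime).

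Next I would treat the case $\abr{\tilde{\eta}}>\lambda$, where $\hat{\eta}=\tilde{\eta}-\mathrm{sgn}(\tilde{\eta})\lambda$, so that $\hat{\eta}-\eta=d+r-\mathrm{sgn}(\tilde{\eta})\lambda=:d+w$ with $w=r-\mathrm{sgn}(\tilde{\eta})\lambda$. Since $\abr{r}\le\lambda/3$, we have $\abr{w}\le\lambda+\tfrac{\lambda}{3}=\tfrac{4}{3}\lambda$. Applying the weighted inequality $(a+b)^2\le(1+\varepsilon)a^2+(1+\varepsilon^{-1})b^2$ with $\varepsilon=\tfrac12$, $a=w$, $b=d$ gives $(\hat{\eta}-\eta)^2\le\tfrac{3}{2}w^2+3d^2\le\tfrac{3}{2}\cdot\tfrac{16}{9}\lambda^2+3d^2=\tfrac{8}{3}\lambda^2+3d^2\le\tfrac{10}{3}\lambda^2+3d^2$. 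Combining the two cases yields the stated bound.

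I do not anticipate any genuine obstacle, since the argument is entirely elementary once the soft-threshold map is split into its two branches. The only step requiring a moment's thought is the choice of the split parameter $\varepsilon=\tfrac12$ in the super-threshold case: it is tuned to produce exactly the coefficient $3$ on $d^2$ while keeping the coefficient $\tfrac32$ on $w^2$ small enough that, after substituting $\abr{w}\le\tfrac{4}{3}\lambda$, the resulting $\lambda^2$-budget $\tfrac{8}{3}$ stays below the target $\tfrac{10}{3}$. The hypothesis $\abr{r},\abr{d}\le\lambda/3$ is precisely what forces every constant under the stated budget, so no sharper tracking of the error terms is required.
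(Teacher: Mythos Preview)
Your argument is correct. The paper's proof uses the same elementary ingredients but organizes the case split differently: it partitions on the size of the true coefficient $\eta$ into three ranges $\abr{\eta}<\lambda/3$, $\abr{\eta}>5\lambda/3$, and the intermediate band $\lambda/3\le\abr{\eta}\le5\lambda/3$, handling the last by observing it falls under one of the first two bounds. In the super-threshold case the paper uses the cruder inequality $(d+r-\mathrm{sgn}(\tilde\eta)\lambda)^2\le 3(d^2+r^2+\lambda^2)\le 3d^2+\tfrac{10}{3}\lambda^2$ rather than your tuned Young's inequality. Your split on $\abr{\tilde\eta}$ versus $\lambda$ is the more natural one, since it matches the two branches of the soft-threshold map directly and avoids the third ``intermediate'' case altogether; it also yields the slightly sharper constant $8/3$ in the super-threshold regime before relaxing to $10/3$. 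Both routes are equally elementary and reach the same conclusion.
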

\begin{proof}[Proof of Lemma~\ref{lemma:eta:soft}]
    Recall the soft thresholding estimator defined~\eqref{eq:soft:threshold}:
    \[
    \hat{\eta}_{i,j\ell}=sgn(\tilde{\eta}_{i,j\ell})(|\tilde{\eta}_{i,j\ell}|-\lambda_{j\ell})_+.
    \]
    We consider three cases. 
    
    \emph{Case 1}. Suppose that $\abr{\eta_{j\ell}}<3^{-1}\lambda_{j\ell}$, we have
    \[
\abr{\tilde{\eta}_{j\ell}}=\abr{\eta_{j\ell}+r_{j\ell}+d_{j\ell}}
\leq \abr{\eta_{j\ell}}+\abr{r_{j\ell}}+\abr{d_{j\ell}}\leq \lambda_{j\ell}. 
    \]
    Hence 
    \[
    \rbr{\hat{\eta}_{j\ell}-\eta_{j\ell}}^2=\eta_{j\ell}^2\leq \frac{1}{9}\lambda_{j\ell}^2.
    \]
    
    \emph{Case 2}. Suppose that $\abr{\eta_{j\ell}}>(5/3)\lambda_{j\ell}$. In this case, we have
    \[
    \abr{\tilde{\eta}_{j\ell}}\geq \abr{\eta_{j\ell}}-\abr{r_{j\ell}}-\abr{d_{j\ell}}\geq \lambda_{j\ell}.
    \]
    Hence
    \[
    \rbr{\hat{\eta}_{j\ell}-\eta_{j\ell}}^2 = (d_{j\ell}+r_{j\ell}-\text{sgn}(\tilde{\eta}_{j\ell})\lambda_{j\ell})^2\leq 3d_{j\ell}^2+\frac{10}{3}\lambda_{j\ell}^2.
    \]

    \emph{Case 3}. Suppose that $(1/3)\lambda_{j\ell}\leq \abr{\eta_{j\ell}}\leq (5/3)\lambda_{j\ell}$. In this case, 
    \[
    \rbr{\hat{\eta}_{j\ell}-\eta_{j\ell}}^2\leq (3d_{j\ell}^2+\frac{10}{3}\lambda_{j\ell}^2)\vee \frac{1}{9}\lambda_{j\ell}^2\leq 3d_{j\ell}^2+\frac{10}{3}\lambda_{j\ell}^2. 
    \]
    Hence, we can conclude the results. 
\end{proof}

\section{The $\beta$-mixing Markov processes}
This manuscript focuses on the analysis of of $\beta$-mixing Markov processes. Hence, it is important to understand the sufficient conditions for a Markov process to be strictly stationary and $\beta$-mixing. It is known that the positive recurrent process has a unique stationary distribution~\citep[Theorem~4.3]{yin2010hybrid}. The path to check whether a process is $\beta$-mixing, is often not straightforward and consists of several steps. 
Our method is built on the integration of several pieces of foundational studies~\citep {meyn1992stability, meyn1993stability, meyn1993stabilityII, meyn2012markov}. Instead of directly verifying the $\beta$-mixing property, we start with checking the ergodicity of a process. The ergodicity describes a process converging to a unique stationary distribution in the total variation distance. In the case of the Markov process, this property is closely connected to the $\beta$-mixing property, which characterizes the (in)dependency of two time points separated by an infinite number of time points. This is because once the process enters the stationary state, the initial condition does not matter to the state of the current process and hence the independence of two time points is granted. \citet{davydov1974mixing} first formalized the relationship between mixing and ergodicity coefficients. 
As a result, our proof steps start with verifying the Foster-Lyapunov inequality for the generator of the process, which leads to the verification of the geometric ergodicity property~\citep{meyn1993stability}. Finally, with geometric ergodicity and other conditions, one can verify the $\beta$-mixing property. 
To begin with, we review useful tools for the theories. Section~\ref{ssec:generator} introduces the generator of a Markov process; Section~\ref{sec:mixing_ergodic} discusses the details of mixing and ergodic process.
\subsection{Markov processes and their generators}\label{ssec:generator}
In this section, we will review the basic properties of Markov processes with a focus on the construction of a Markov process. 
First, we introduce the definition of the Markov Process. 
\begin{definition}[Markov Process]\label{definition:MarkovProcess} Let $\mathscr{B}$ be the $\sigma$-field Borel sets in $\RR^p$.
    A stochastic process $X(t)\in\RR^p$, defined for $t\geq 0$ on the probability space $(\Omega, \mathscr{B}, P)$ is a Markov process, if for all $A\in\mathscr{B}$ and $0\leq s<t$,
    \[
    P(X(t)\in A\mid \sigma(\{X(u), u\leq s\}))=P(X(t)\in A\mid X(s)),
    \]
    where $\sigma(\{X(u), u\leq s\})$ is the $\sigma$-field generated by $\{X(u), u\leq s\}$. 
\end{definition}
Hence, the ODE process:
\begin{equation}\label{eq:example:ode}
    \mathrm{d}x = f(x_t)\mathrm{d}t,
\end{equation}
 can be shown as a Markov process~\citep{khasminskii2011stochastic}. Furthermore, it is easy to see that the joint processes~\eqref{eq:definey}--\eqref{eq:definex} are Markov processes~\citep{yin2010hybrid}.

 Now let us stick with the simpler Markov process of the form~\eqref{eq:example:ode}. From Definition~\ref{definition:MarkovProcess}, we can define a transition probability function as
\[
P(X(t)\in A\mid X(s)=x) = p(s, x, t, A),
\]
which satisfies the Chapman-Kolmogorov equation:
\begin{equation}\label{eq:ChaKol}
p(s, x, t, A)=\int_{\RR^p}p(s, x, u, \mathrm{d}y)p(u, y, t, A),\quad s<u<t.
\end{equation}
With the transition probability function, we can construct the Markov process with any initial distribution. A time-homogeneous Markov process is a process with the transition function independent of s: $p(s, x, t, A)=p(s+u, x, t+u, A)$ for any $u>0$. Hence, we can write $p(s,x,t,A)=p(x,t-s,A)$. 

Now, suppose that $(X(t))_{t\geq 0}$ is a homogeneous Markov process with transition probability function $p(x,t,A)$. Then we define the operator $T_t$:
\begin{align}\label{eq:generatorfunc}
T_tV(x)=\int p(x,t,dy)V(y)=\EE^x[V(X(t))].
\end{align}
Furthermore, by~\eqref{eq:ChaKol}, we see that $T_{t+s}=T_tT_s$ and hence $T_t$ is a homogeneous semigroup. 
 Then the generator is defined as
\[
\mathscr{L} V(x)=\lim_{t\rightarrow +0}\frac{T_tV(x)-V(x)}{t}.
\]
With the generator, one can uniquely define the continuous transition probability  function~\citep[Chapter 3]{khasminskii2011stochastic}. 
\paragraph{Generator of the Markov-switching ODE process.} Now, let us work on the switching ODE process discussed in this manuscript:
\[
\mathrm{d}x = f(x_t, z_t)dt, 
\]
where $z_t\in\{1,\ldots,k\}$ for $t\geq 0$ takes on a discrete value. Essentially $(Z(t))_{t\geq 0}$ is a continuous-time Markov chain with generator matrix $Q$. The underlying transition function $P(X(t)\in A, Z(t)\in B\mid X(0)=x, Z(0)=z)=p((x,z),t, (A,B))$ is time-homogeneous. Then the generator is defined similarly as:
\[
\mathscr{L} V(x, z)=\lim_{t\rightarrow +0}\frac{T_tV(x, z)-V(x, z)}{t},
\]
with $T_tV(x,z)=\int p((x,z),t,  (\mathrm{dy}, \mathrm{dw}))V(y,w)$. 

If $V(\cdot,\ell)$ for each $\ell$ is a sufficiently smooth function, then the generator operator of such process is defined as the following~\citep[Chapter~2]{yin2010hybrid}:
\begin{equation}\label{eq:generatorfunc_ms}
\mathscr{L}V(x,\ell)=\mathscr{A} V(\cdot,\ell)(x)+\mathscr{Q} V(x,\cdot)(\ell),
\end{equation}
with
\[
\mathscr{A} V(\cdot,\ell)(x) = \nabla_x V(\cdot, \ell)^\top f(x, \ell),\quad
\mathscr{Q} V(x,\cdot)(\ell)=\sum_{\ell=1 }^k q_{\ell\ell'}V(x,\ell').
\]

With the generator, we can also determine the stability, ergodicity, and mixing properties of the Markov process. We selectively review the results that are most pertinent to our analysis. We refer~\citet{khasminskii2011stochastic, meyn2012markov} for more comprehensive discussions. 
\subsection{Mixing and ergodicity of Markov processes}\label{sec:mixing_ergodic}
This section introduces the connection of the mixing property and the ergodicity property of a Markov process. 
First, we define the $\beta$-mixing property. From a high-level perspective, the mixing property describes the dependency of a stochastic process: if we take any two random variables $X(s)$, $X(t)$ from a stochastic process, they will become asymptotically independent as the time difference $|t-s|$ goes to infinity. Definition~\ref{definition:discretebetamixing} defines the $\beta$-mixing for discrete stochastic processes, here we define similarly for continuous-time stochastic processes. 
\begin{definition}[$\beta$-mixing for continuous-time process]\label{definition:betamixing}
     Given $\ell\geq 0$, the $\beta$-mixing coefficient is defined as,
     \[
     \beta(\ell)=\sup_t\beta(\Fcal_{-\infty}^t, \Fcal^{\infty}_{t+\ell})
     =
     \sup_{t}\norm{P_{t,\ell}-P_{-\infty}^t\otimes P_{t+\ell}^\infty}_{\text{TV}},
     \]
     where $\Fcal_{-\infty}^t=\sigma(\{X(u):u\in(\infty,t]\})$, $\Fcal_{t+\ell}^{\infty}=\sigma(\{X(u):u\in[t+\ell,\infty)\})$.
     $P_{t,\ell}$ associates with the $\sigma$-field $(\Fcal_{-\infty}^t\vee \Fcal_{t+\ell}^{\infty})$, $P_{-\infty}^t$ associates with the $\sigma$-field $\Fcal_{-\infty}^t$, and $P_{t+\ell}^\infty$ associates with the $\sigma$-field $\Fcal_{t+\ell}^{\infty}$. A stochastic
process is said to be absolutely regular, or $\beta$-mixing,
if $\beta(\ell)\rightarrow 0$ as $\ell\rightarrow\infty$.
\end{definition}
Hence, we say that a stochastic process is geometric $\beta$-mixing if $\beta(\ell)\leq exp(-c\ell)$ for some positive constant $c$. 


Oftentimes, given a stochastic process, it is hard to verify the $\beta$-mixing property. As an alternative, we can first verify whether the process is ergodic or not, which can be checked using Foster-Lyapunov criterion. Let us define the ergodicity of a Markov process in the following.
\begin{definition}\label{define:ergodic}
    A Markov process is called ergodic if a stationary distribution $\pi$ exists and 
    \[
    \lim_{t\rightarrow\infty} \norm{p(x,t,\cdot)-\pi}_{\text{TV}}=0,\quad x\in\Xcal. 
    \]
\end{definition}
It is known that if a process is positive recurrent and if any discrete-sampled chain is irreducible, then the process ergodic~\citep[Theorem~6.1]{meyn1992stability}. 
As mentioned earlier, if a Markov process is ergodic, then it is also mixing under additional mild conditions. The following lemma formalizes the relation between ergodicity and $\beta$-mixing~\citep{davydov1974mixing, masuda2007ergodicity}. 

\begin{lemma}[Lemma~3.9 in~\citet{masuda2007ergodicity}]\label{lemma:ergodicity} Let $(X(t))_{t\geq 0}$ be a Markov process. Let $\eta$, $(p(\cdot,t,\cdot))_{t\geq 0}$ and $\beta_X(t)$ respectively denote initial distribution, transition function, and $\beta$-mixing coefficient of $(X(t))_{t\geq 0}$. Suppose that there exist probability measure $\pi$ on $(\Xcal,\mathscr{B}(\Xcal))$, measurable function $B$, and deterministic sequence $(\delta(t))_{t\geq 0}$ tending to $0$ as $t\rightarrow\infty$ for which
\begin{enumerate}
    \item $\|p(x,t,\cdot)-\pi\|_{\text{TV}}\leq B(x)\delta(t)$ for which $t\geq 0$ and $x\in \Xcal$; \label{cd1}
    \item $\kappa:=\sup_{s\geq 0}\int B(x)\eta p(\cdot,s,dx)<\infty$.\label{cd2}
\end{enumerate}
Then $\beta_X(t)\leq 2\kappa\delta(t)$ for any $t\in\RR_+$, that is, $X$ is $\beta$-mixing at rate $\delta(t)$.
\end{lemma}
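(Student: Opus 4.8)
The plan is to use the Markov property to collapse the comparison of the past and future $\sigma$-fields down to a comparison of one-step transition laws over the gap, and only then to feed in the two hypotheses. Fix $s\geq 0$ and $t>0$ and abbreviate the past $\Fcal_0^s=\sigma(X(u):0\leq u\leq s)$ and the future $\Fcal_{s+t}^\infty=\sigma(X(u):u\geq s+t)$; write $\mu_s=\eta\,p(\cdot,s,\cdot)$ for the time-$s$ marginal. First I would rewrite the $\beta$-coefficient in its conditional form, using the standard identity that the total variation distance between a joint law and the product of its marginals equals the averaged conditional-versus-unconditional total variation distance:
\[
\beta\rbr{\Fcal_0^s,\Fcal_{s+t}^\infty}=\EE\sbr{\norm{P\rbr{\cdot\mid\Fcal_0^s}\big|_{\Fcal_{s+t}^\infty}-P(\cdot)\big|_{\Fcal_{s+t}^\infty}}_{\text{TV}}},
\]
where $\big|_{\Fcal_{s+t}^\infty}$ denotes the law restricted to the future $\sigma$-field. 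Since $X$ is Markov, conditioning the future on the whole past $\Fcal_0^s$ is the same as conditioning on the present value $X(s)$, so the conditional law inside the expectation may be replaced by $P(\cdot\mid X(s))\big|_{\Fcal_{s+t}^\infty}$.

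The crucial reduction is the second step. The trajectory after time $s+t$ is produced from the seed $X(s+t)$ by applying one and the same forward Markov kernel $K$, irrespective of how $X(s+t)$ arose. Hence the conditional future law is $p(X(s),t,\cdot)K$ (its seed has law $p(X(s),t,\cdot)$ by time-homogeneity) and the marginal future law is $\mu_{s+t}K$ (seed law $\mu_{s+t}=\int\mu_s(dy)\,p(y,t,\cdot)$). The data-processing inequality for total variation then removes the infinite-dimensional future:
\[
\norm{P(\cdot\mid X(s))\big|_{\Fcal_{s+t}^\infty}-P(\cdot)\big|_{\Fcal_{s+t}^\infty}}_{\text{TV}}\leq\norm{p(X(s),t,\cdot)-\mu_{s+t}}_{\text{TV}}.
\]

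From here the argument is elementary bookkeeping against the hypotheses. Writing $\mu_{s+t}$ as the mixture $\int p(x',t,\cdot)\,\mu_s(dx')$ and using convexity of $\norm{\cdot}_{\text{TV}}$ gives
\[
\norm{p(X(s),t,\cdot)-\mu_{s+t}}_{\text{TV}}\leq\int\norm{p(X(s),t,\cdot)-p(x',t,\cdot)}_{\text{TV}}\,\mu_s(dx').
\]
The triangle inequality through the stationary law $\pi$ and hypothesis (1) bound the integrand by $\rbr{B(X(s))+B(x')}\delta(t)$. Taking $\EE$ over $X(s)\sim\mu_s$ and integrating $x'\sim\mu_s$ produces $2\delta(t)\int B\,d\mu_s$, and since $\mu_s=\eta\,p(\cdot,s,\cdot)$, hypothesis (2) gives $\int B\,d\mu_s\leq\kappa$. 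Thus $\beta(\Fcal_0^s,\Fcal_{s+t}^\infty)\leq 2\kappa\delta(t)$ uniformly in $s$, and taking the supremum over $s\geq 0$ yields $\beta_X(t)\leq 2\kappa\delta(t)$.

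The hard part will be making the reduction from the path-valued future to the one-time-point transition law fully rigorous. This rests on two pillars: the Markov property in the sharp form ``future is independent of the past given the present,'' which licenses replacing conditioning on $\Fcal_0^s$ by conditioning on $X(s)$, and the data-processing inequality, which licenses bounding the total variation between the two path laws by the total variation between their time-$(s+t)$ seed laws. The measure-theoretic overhead---existence of regular conditional probabilities on path space and measurability of $x\mapsto\norm{p(x,t,\cdot)-\pi}_{\text{TV}}$ so that the expectations are well defined---is where the care is needed; once the reduction is in place, the remaining convexity and triangle-inequality estimates together with the two hypotheses close the bound immediately.
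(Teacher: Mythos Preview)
The paper does not supply its own proof of this lemma; it is quoted verbatim from \citet{masuda2007ergodicity} and used as a black box. Your argument is correct and is essentially the standard Davydov-type proof one finds in that reference: pass to the conditional characterization of $\beta$, use the Markov property to replace conditioning on the past by conditioning on $X(s)$, contract the infinite future to the single seed $X(s+t)$ via the data-processing inequality, then insert $\pi$ by the triangle inequality and apply hypotheses (1)--(2). There is nothing to compare against in the present paper, and your write-up would serve as a faithful reconstruction of the cited proof.
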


Hence, to show that a process is geometric $\beta$-mixing, it suffices to show that a process is geometric ergodicity defined in the following. 
\begin{definition}[Geometric Ergodicity]\label{definition:geoergodicity}
Suppose that the diffusion process $(X(t))_{t\geq 0}$ is positive recurrent and it has an unique stationary distribution $\pi$. We say that $X(t)$ is geometrically ergodic if there exists a constant $\gamma>0$ and a real valued function $B$ such that for all $t>0$ and $x\in\RR^p$:
\[
\|p(x, t,\cdot)-\pi\|_{\text{TV}}\leq B(x)\exp(-\gamma t). 
\] 
\end{definition}

\subsubsection{Foster-Lyapunov criterion}
 Now, we have shown the connection between ergodicity and mixing. The next step is to understand what characterizes a stochastic process to be geometric ergodic. 
\citet{meyn1993stability} showed that one can apply the Foster-Lyapunov criterion to check the geometric ergodicity of a stochastic process. We first introduce the criterion. Recall in Section~\ref{ssec:generator} that $\mathscr{L}$ is a generator of a Markov process $(X(t))_{t\geq 0}$.


\begin{assumption}\label{assumption:expErgodic}
     There exists a function $V\in\text{Dom}(\mathscr{L})$ and $V(x)\rightarrow\infty$ as $x\rightarrow\infty$ , and for some $c>0$, $d<\infty$ such that 
    \[
    \mathscr{L} V(x)\leq -c V(x)+d.
    \]
\end{assumption}
Assumption~\ref{assumption:expErgodic} is a special case of (CD3) in~\citet{meyn1993stability}. (CD3) is defined on the extended generator $\mathscr{L}_m$ on the stopped process (see~\citet{meyn1993stability} for the definition) whereas Assumption~\ref{assumption:expErgodic} is defined on the generator of the process $\mathscr{L}$. However, under the condition that  $V\in\text{Dom}(\mathscr{L})$, we have $\mathscr{L} V=\mathscr{L}_m V$. Having Assumption~\ref{assumption:expErgodic} alone is not sufficient to show the geometric ergodicity. Instead, we need to ensure that there exists a discrete-sampled chain of the original continuous process that behaves nicely on every compact set of $\mathscr{B}(\mathcal{X})$, known as the petite set~\citep{meyn1992stability}. Before stating the theorem, we introduce three additional terms.
\begin{definition}[Skeleton]
    The $\tilde{h}$-skeleton chain of $(X(t))_{t\geq 0}$ is $X_n^{(\tilde{h})}=X(n\tilde{h})$ for a constant $\tilde{h}>0$ and $n\in\NN\cup\{0\}$. 
\end{definition}
Hence, by definition, $\{X_n^{(\tilde{h})}\}_{n\in\NN}$ is a discrete-time Markov chain. Now, for simplicity of notation, suppose that a discrete-time Markov chain $\{X_n\}_{n\in\NN}$ taking values in $\Xcal$ and $\mathscr{B}(\Xcal)$ be the Borel sets of $\Xcal$. We define the one-step transition probability function as $p(x,A):=p(x,h,A)$ for $x\in\Xcal$ and $A\in\mathscr{B}(\Xcal)$. We define a distribution $a=\{a(n)\}$ on ${n\in\NN}$ as the sampling distribution that samples the time points of $\{X_n\}_{n\in\NN}$. We call this sampled chain as $X^a_n$ associated with the transition kernel function:
\begin{equation}\label{eq:define:Ka}
    K_a(x,A) = \sum_{n=0}^\infty p^n(x, A)a(n).
\end{equation}
 
\begin{definition}[Petite Set, adapted from Section~ 5.5.2 in~\citet{meyn2012markov}]
    A set $C\in\mathscr{B}(\Xcal)$ is $\nu_a$ petite if the transition kernel function of the sampled chain satisfies:
    \[
    K_a(x, B)\geq \nu_a(B),
    \]
    for all $x\in C$ and $B\in\mathscr{B}(\Xcal)$, where $\nu_a$ is a non-trivial measure on $\mathscr{B}(\Xcal)$. 
\end{definition}

Finally, we define the $f$-norm in the following. 
\begin{definition}[$f$-norm]
    For any positive measurable function $f\geq 1$ and any signed measure:
    \[
    \norm{\mu}_f = \sup_{\abr{g}\leq f}\abr{\mu(g)}. 
    \]
\end{definition}
Under this definition, the total variation norm is equivalent as the $f$-norm with constant function $f=1$. With the definitions of skeleton, petite set, and $f$-norm, we introduce the theory.

\begin{lemma}
    [Theorem~6.1 of~\citet{meyn1993stability}]\label{lemma:expergodic} Suppose that $(X(t))_{t\geq 0}$ is a right process, and that all compact sets are petite for some skeleton chain. If Assumption~\ref{assumption:expErgodic} holds, then there exists $\gamma>0$ and $B<\infty$ such that
\begin{equation}\label{eq:gercondition}
\|p(x,t,\cdot)-\pi\|_f\leq B f(x)\exp(-\gamma t),\quad t\geq 0, x\in \Xcal,
\end{equation}
with $f=V+1$, where $V$ is defined in Assumption~\ref{assumption:expErgodic}. 
\end{lemma}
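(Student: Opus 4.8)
The plan is to reduce this continuous-time statement to a discrete-time $f$-geometric ergodicity result for a skeleton chain and then interpolate back to all $t \ge 0$. The engine of the whole argument is the integrated form of the drift inequality in Assumption~\ref{assumption:expErgodic}. Since $V \in \text{Dom}(\mathscr{L})$, Dynkin's formula shows that $V(X(t)) - V(X(0)) - \int_0^t \mathscr{L}V(X(u))\,\mathrm{d}u$ is a (local) martingale, and because $\mathscr{L}V = \mathscr{L}_m V$ on $\text{Dom}(\mathscr{L})$ we may work directly with the generator $\mathscr{L}$. Applying $T_t$ and using $\mathscr{L}V \le -cV + d$ yields the differential inequality $\tfrac{\mathrm{d}}{\mathrm{d}t} T_t V(x) \le -c\,T_t V(x) + d$, which integrates (via Gr\"onwall) to
\[
T_t V(x) \le e^{-ct} V(x) + \frac{d}{c}\big(1 - e^{-ct}\big), \qquad t \ge 0.
\]
Combined with $V(x)\to\infty$, this bound forces the sublevel sets of $V$ to be absorbing in mean, which gives positive (Harris) recurrence and hence the existence and uniqueness of the stationary measure $\pi$.

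Next I would specialize to a skeleton chain. Fixing a step $\tilde h > 0$, the bound above at $t = \tilde h$ reads $P^{\tilde h} V \le \lambda V + b$ with $\lambda = e^{-c\tilde h} < 1$ and $b = d/c$, which is precisely the discrete-time geometric drift condition of~\citet{meyn1993stability}. Coercivity of $V$ makes every sublevel set $C_R = \{x : V(x) \le R\}$ compact, and by hypothesis every compact set is petite for some skeleton chain; choosing $R$ large enough that $\lambda R + b < R$ produces a petite set outside of which the skeleton chain drifts strictly inward. These are exactly the hypotheses of the discrete-time $f$-geometric ergodicity theorem, which delivers $\|P^{n\tilde h}(x,\cdot) - \pi\|_f \le B_0\, f(x)\,\rho^n$ for some $\rho \in (0,1)$ and $B_0 < \infty$, with $f = V + 1$.

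Finally I would lift the skeleton estimate to arbitrary continuous times. Writing $t = n\tilde h + s$ with $0 \le s < \tilde h$ and using the semigroup property together with $\pi P^s = \pi$, one bounds
\[
\|p(x,t,\cdot) - \pi\|_f \le \int \|P^{n\tilde h}(y,\cdot) - \pi\|_f\, P^s(x,\mathrm{d}y) \le B_0 \rho^n\, T_s f(x),
\]
where $T_s f(x) = T_s V(x) + 1$ is controlled uniformly over $s \in [0,\tilde h)$ by the integrated drift bound, so that $T_s f(x) \le C\, f(x)$. Absorbing $\rho^n = \rho^{(t-s)/\tilde h}$ into a continuous rate $e^{-\gamma t}$ with $\gamma = -\tilde h^{-1}\log \rho$ and collecting constants yields the claim~\eqref{eq:gercondition} for a suitable $B < \infty$.

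The main obstacle will be the rigorous integration of the pointwise generator inequality into the semigroup bound of the first step: one must justify Dynkin's formula for a possibly unbounded $V$ through a localization/stopping argument on the stopped process, which is exactly where the right-process assumption and the identity $\mathscr{L}V = \mathscr{L}_m V$ are needed, and then verify that the local martingale is a genuine martingale so the differential inequality holds in integrated form. Once the integrated drift bound and petiteness are in hand, the discrete drift-to-ergodicity implication and the skeleton-to-continuous interpolation are comparatively routine.
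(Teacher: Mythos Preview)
The paper does not supply its own proof of this lemma; it is stated verbatim as Theorem~6.1 of \citet{meyn1993stability} and invoked as a black box, so there is nothing in the paper to compare your argument against. Your outline is nonetheless the standard route taken in that reference: integrate the generator drift inequality via Dynkin/Gr\"onwall to obtain the semigroup bound $T_tV \le e^{-ct}V + d/c$, specialize to a skeleton to get a discrete-time geometric drift toward a petite sublevel set of $V$, apply the discrete $f$-geometric ergodicity theorem, and interpolate back to continuous time using the uniform bound on $T_s f$ for $s\in[0,\tilde h)$. The caveat you flag about justifying Dynkin's formula for unbounded $V$ through the stopped process is exactly the technical point Meyn handles via the extended generator, so your identification of where the right-process assumption and the equality $\mathscr{L}V=\mathscr{L}_mV$ enter is accurate.
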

It is easy to see that $\norm{p(x,t,\cdot)-\pi}_{\text{TV}}\leq \norm{p(x,t,\cdot)-\pi}_{V+1}$ and hence if~\eqref{eq:gercondition} holds true, then the process is geometric ergodic. 
Hence, to check the geometric ergodicity of a stochastic process, one can verify whether Assumption~\ref{assumption:expErgodic} holds true. Additionally, to verify that the conditions all compact sets are petite for a skeleton chain, we can apply Theorem~3.4 in~\citep{meyn1992stability}. We review them in the next section.

\subsubsection{Petite set and skeleton chain}
This section introduces the conditions when all compact sets are petite, required by Lemma~\ref{lemma:expergodic}. As Lemma~\ref{lemma:fellerpetite} indicates, there is a close connection between the Feller property and the irreducibility of a Markov chain to the petite sets. Let $p(\cdot, \cdot)$ be the transition kernel of a discrete chain. For example, the transition kernel of a $h$-skeleton chain is defined as $p(x, A)=p(x,h,A)$ for $x\in\Xcal$ and $\mathscr{B}(\Xcal)$. Define the quantity
\[
G(x,A)=\sum_{n=1}^\infty p^n(x,A).
\]
If $G(x,A)>0$, it implies that starting from $x$, $A$ is reachable with positive probability. We introduce the following definition from~\citep{meyn1992stability}. 
\begin{definition}[$\varphi$-irreducibility]
    $\{X_n\}_{n\in\NN}$ is $\varphi$-irreducible if there exists a finite measure $\varphi$ such that $G(x,A)>0$ for all $x\in\Xcal$ whenever $\varphi(A)>0$. $\varphi$ is called a irreducible measure. 
\end{definition}

The Feller property characterizes the continuity of the transition kernel.  
\begin{definition}  If the transition kernel $p(\cdot, \cdot)$ maps bounded continuous functions to bounded continuous functions, then it is weak Feller. If the transition kernel $p(\cdot, \cdot)$  maps all bounded measurable functions to bounded continuous functions, then it is strong Feller. 
\end{definition}

\begin{lemma}[Theorem~3.4 in~\citep{meyn1992stability}]\label{lemma:fellerpetite}
    Suppose that a Markov chain $\{X_n\}_{n\in\NN}$ taking values in $\Xcal$ is $\varphi$-irreducible. Then either of the conditions implies that all compact subsets of $\Xcal$ are petite:
    \begin{enumerate}
        \item $\{X_n\}_{n\in\NN}$ is Feller and an open $\varphi$-positive petite set exists
        \item $\{X_n\}_{n\in\NN}$ is Feller and $\text{supp}(\varphi)$ has non-empty interior. 
    \end{enumerate}
\end{lemma}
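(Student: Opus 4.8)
Since this statement is quoted as Theorem~3.4 of~\citet{meyn1992stability}, my plan to reprove it rests on two facts that I would first isolate. \emph{Fact (a):} every $\varphi$-irreducible chain admits a petite set with positive $\varphi$-measure (the standard small-set existence result). \emph{Fact (b):} for a weak Feller chain the map $x\mapsto K_a(x,O)$ is lower semicontinuous whenever $O$ is open. To see Fact (b) I would argue that each iterate $p^n(\cdot,\cdot)$ is again weak Feller (a composition of weak Feller kernels maps $C_b$ into $C_b$ by induction), so $x_k\to x$ forces $p^n(x_k,\cdot)\to p^n(x,\cdot)$ weakly; the portmanteau theorem then gives $\liminf_k p^n(x_k,O)\geq p^n(x,O)$ for open $O$, i.e.\ $x\mapsto p^n(x,O)$ is l.s.c., and $K_a(x,O)=\sum_n a(n)p^n(x,O)$ inherits l.s.c.\ as an increasing countable sum of nonnegative l.s.c.\ functions.

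The engine of the proof is condition~1, which I would handle first. Assume $\bar C$ is an open $\varphi$-positive petite set, so there exist a sampling distribution $a$ and a nontrivial measure $\nu$ with $K_a(y,B)\geq\nu(B)$ for all $y\in\bar C$ and $B\in\mathscr{B}(\Xcal)$. Fix a compact $K$. Since $\bar C$ is $\varphi$-positive, $\varphi$-irreducibility yields $G(x,\bar C)>0$ for every $x$, so for a geometric sampling distribution $b$ (full support on $\NN$) one has $K_b(x,\bar C)>0$ for all $x$. By Fact (b) the function $x\mapsto K_b(x,\bar C)$ is l.s.c., and a strictly positive l.s.c.\ function attains a positive minimum on the compact set $K$; set $\delta:=\inf_{x\in K}K_b(x,\bar C)>0$. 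Using the sampled Chapman--Kolmogorov identity $K_{b\ast a}(x,B)=\int K_b(x,dy)\,K_a(y,B)$ and restricting the integral to $\bar C$,
\[
K_{b\ast a}(x,B)\;\geq\;\int_{\bar C}K_b(x,dy)\,K_a(y,B)\;\geq\;K_b(x,\bar C)\,\nu(B)\;\geq\;\delta\,\nu(B)
\]
for every $x\in K$ and every $B$. Hence $K$ is $(\delta\nu)$-petite with sampling distribution $b\ast a$, which proves the claim under condition~1.

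It then remains to reduce condition~2 to condition~1 by producing an open $\varphi$-positive petite set from the hypothesis that $\operatorname{supp}(\varphi)$ has nonempty interior. The plan is to invoke Fact (a) for a petite set $A$ with $\varphi(A)>0$, set $O:=\operatorname{int}(\operatorname{supp}\varphi)$ (open and $\varphi$-positive, since any open neighborhood of a support point has positive $\varphi$-mass), observe that $A$ is reachable from every point of $O$, and upgrade this reachability to a uniform minorization on an open set, thereby exhibiting the open petite set required by condition~1. I expect this last upgrade --- converting ``reaches a petite set with positive probability'' into ``is itself an \emph{open} petite set,'' with uniformity over a genuinely open (hence non-compact) set --- to be the main obstacle, because the l.s.c.-plus-compactness device that powers condition~1 does not by itself deliver uniformity over open sets; instead one must exploit the minorization structure of $A$ together with the support hypothesis. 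Once condition~2 has furnished such a set, it feeds into the condition~1 argument verbatim and the proof concludes.
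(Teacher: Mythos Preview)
The paper does not prove this lemma at all: it is quoted verbatim as Theorem~3.4 of \citet{meyn1992stability} and used as a black box in the proof of Lemma~\ref{lemma:irreducible}. There is therefore no in-paper proof to compare against, and your proposal should be assessed on its own merits.

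Your argument for condition~1 is correct and is essentially the standard one: lower semicontinuity of $x\mapsto K_b(x,\bar C)$ plus compactness of $K$ gives a uniform positive lower bound, and the convolution $K_{b\ast a}$ then delivers the minorization. The gap is in condition~2. You propose to reduce it to condition~1 by manufacturing an \emph{open} $\varphi$-positive petite set, and you correctly identify that this ``upgrade'' is the obstacle --- but you do not carry it out, so the proof is incomplete as written. More importantly, the standard route (in Meyn--Tweedie) does not proceed via this reduction. Instead one shows that a weak-Feller chain whose maximal irreducibility measure has support with nonempty interior is a $T$-chain: the Feller property makes $P(\cdot,O)$ lower semicontinuous for open $O$, and the support hypothesis lets one build a nontrivial continuous component $T$ with $T(x,\Xcal)>0$ everywhere. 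One then invokes the general fact that for $\varphi$-irreducible $T$-chains every compact set is petite. Your attempted reduction --- finding a genuinely open petite set directly --- is harder than what is actually needed, which is why it stalls.
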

Hence if either the conditions in Lemma~\ref{lemma:fellerpetite} holds, we can fulfill partial requirements of Lemma~\ref{lemma:expergodic}. 
The Feller property of a stochastic process is fairly straightforward to check. In contrast, to check the $\varphi$-irreducibility required by Lemma~\ref{lemma:fellerpetite}, one way is to verify whether the Markov chain is a $T$-chain and there exists a reachable point $x^*\in\Xcal$. To make our statement more concrete, we define the T-chain and reachable point using the definitions in Section~6 of~\citet{meyn2012markov}. 

\begin{definition}[T-chain]\label{definition:tchain}
   T is a \emph{continuous component} of $K_a$ defined in~\eqref{eq:define:Ka} if 
\[
K_a(x,A)\geq T(x,A),\quad x\in\Xcal,\; A\in\mathscr{B}(\Xcal), 
\]
where $T(\cdot , A)$ is a lower semicontinuous function for any $A\in\mathscr{B}(\Xcal)$. 
If $T(x,\Xcal)>0$ for all $x$, then $\{X_n\}_{n\in\NN}$ is a T-chain.
\end{definition}

Therefore, to verify a that the kernel $T(\cdot, A)$ is lower semicontinuous for every $A\in\mathscr{B}$, one can check the following two properties.  
\begin{lemma}[Lemma~3.1 in~\citet{cline1998verifying}]\label{lemma:verifyTchain}
    Assume $\Xcal$ is locally compact and $T:\Xcal\times\mathscr{B}(\Xcal)\rightarrow[0,1]$ is a kernel and $\mu$ is a bounded measure on compact sets of $\Xcal$. If 
    \begin{enumerate}
        \item for each $\varepsilon$ and compact set $K_1,K_2$, there exists a $\delta>0$ such that if $A\subset K_2$ and $\mu(A)<\delta$, then $\sup_{y\in K_1}T(y,A)<\varepsilon$. 
        \item $T(\cdot,O)$ is lower semicontinuous for all (relatively compact) open sets,
    \end{enumerate}
    then $T(\cdot, A)$ is lower semicontinuous for all $A\in\mathscr{B}(\Xcal)$. 
\end{lemma}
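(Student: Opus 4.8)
The plan is to fix an arbitrary compact set $K_1$ and show that $T(\cdot,A)$ is lower semicontinuous on $K_1$ for every $A\in\mathscr{B}(\Xcal)$; since $\Xcal$ is locally compact, every point has a compact neighborhood, so lower semicontinuity on all compacta yields lower semicontinuity on $\Xcal$. The argument rests on three elementary stability facts that I would record first: a pointwise supremum of lower semicontinuous functions is lower semicontinuous; if $A_n\uparrow A$ then $T(\cdot,A_n)\uparrow T(\cdot,A)$ by countable additivity of the kernel $T(x,\cdot)$, so increasing set-limits preserve lower semicontinuity; and a uniform limit of lower semicontinuous functions is again lower semicontinuous. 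I would then build the class of admissible $A$ in four stages — open sets, compact (relatively compact closed) sets, relatively compact Borel sets, and finally all Borel sets — because lower semicontinuity is \emph{not} stable under complementation, so a direct $\sigma$-field or monotone-class argument is unavailable.

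Open sets are admissible by hypothesis (2), and the decisive step is the passage to compact sets. Given compact $F\subset\Xcal$, I would choose relatively compact open sets $O_n\downarrow F$ (available since $\Xcal$ is locally compact), so that $O_n\setminus F$ is open, relatively compact, and decreases to $\emptyset$; because $\mu$ is finite on compacta, $\mu(O_n\setminus F)\downarrow 0$. Hypothesis (1) then forces $\varepsilon_n:=\sup_{y\in K_1}T(y,O_n\setminus F)\to 0$. Writing $T(x,F)=T(x,O_n)-T(x,O_n\setminus F)$ for $x\in K_1$ and using that $T(\cdot,O_n)$ is lower semicontinuous with $T(x_0,O_n)\ge T(x_0,F)$, a short $\liminf$ computation along any $x_k\to x_0$ in $K_1$ gives $\liminf_k T(x_k,F)\ge T(x_0,O_n)-\varepsilon_n\ge T(x_0,F)-\varepsilon_n$; letting $n\to\infty$ yields lower semicontinuity of $T(\cdot,F)$ on $K_1$. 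This is exactly where hypothesis (1) is indispensable, and it is the \textbf{main obstacle}: a decreasing open approximation a priori only delivers upper semicontinuity, and condition (1) is precisely the uniform absolute-continuity device that converts the approximation error into a quantity vanishing uniformly on $K_1$.

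For a relatively compact Borel set $A$ (say $A\subset K_2$ with $K_2$ compact), I would invoke regularity of $\mu$ — a Borel measure finite on compacta over a locally compact, second-countable space is inner regular by compacta and outer regular by open sets on sets of finite measure — to sandwich $C_m\subset A\subset O_m$ with $C_m$ compact, $O_m$ relatively compact open, and $\mu(O_m\setminus C_m)\to 0$. Then $0\le T(x,A)-T(x,C_m)\le T(x,O_m\setminus C_m)\to 0$ uniformly on $K_1$ by hypothesis (1), so $T(\cdot,A)$ is the uniform limit on $K_1$ of the lower semicontinuous functions $T(\cdot,C_m)$ (compact sets being admissible by the previous step), hence itself lower semicontinuous. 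Finally, for an arbitrary $A\in\mathscr{B}(\Xcal)$ I would exhaust $\Xcal$ by compacta $K_n\uparrow\Xcal$ (using $\sigma$-compactness of $\Xcal$), observe that $A\cap K_n$ is relatively compact Borel with $A\cap K_n\uparrow A$, and apply the increasing-limit stability fact to conclude that $T(\cdot,A)$ is lower semicontinuous, completing the proof.
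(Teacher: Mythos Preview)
The paper does not supply its own proof of this lemma; it is quoted as Lemma~3.1 of Cline (1998) and used as a black-box tool in the verification that the skeleton chain is a $T$-chain. There is therefore no proof in the paper to compare your proposal against.

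Assessing the proposal on its own merits: your four-stage route (open sets $\to$ compact sets $\to$ relatively compact Borel sets $\to$ arbitrary Borel sets) is sound and is the standard argument. You correctly isolate the crux, namely that lower semicontinuity is stable under suprema, increasing pointwise limits, and uniform limits but \emph{not} under complementation or decreasing limits, so the step from open sets to compact sets genuinely needs hypothesis~(1): approximating a compact $F$ from outside by relatively compact open $O_n$ with $\mu(O_n\setminus F)\to 0$ and invoking~(1) to force $\sup_{y\in K_1}T(y,O_n\setminus F)\to 0$ converts what would otherwise be an upper-semicontinuity conclusion into the desired lower-semicontinuity one. One small caveat: you implicitly use second countability (to get regularity of $\mu$ and a countable decreasing family $O_n$) and $\sigma$-compactness (for the final exhaustion $K_n\uparrow\Xcal$), neither of which appears in the lemma as stated; in the paper's intended setting $\Xcal\subset\RR^p$ these are automatic, and Cline's original formulation is for locally compact separable metric spaces where they likewise hold, so this is not a substantive gap.
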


\begin{definition}[Reachable point]
    A point $x^* \in \Xcal$ is reachable if for every open neighborhood of $x^*$, denoted as $O\in\mathscr{B}(\Xcal)$,
    \[
    \sum_n p^n(x,O)>0,\quad x\in\Xcal.
    \]
\end{definition}

\begin{lemma}[Proposition~6.2.1 in~\citep{meyn2012markov}]\label{lemma:tchain}
If  $\{X_n\}_{n\in\NN}$ is a T-chain, and $\Xcal$ contains one reachable point $x^*$, then $X_n$ is $\varphi$-irreducible with $\varphi=T(x^*,\cdot)$. 
\end{lemma}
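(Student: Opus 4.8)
The plan is to verify the definition of $\varphi$-irreducibility directly for $\varphi := T(x^*,\cdot)$: namely that $\varphi$ is a nontrivial measure and that $G(x,A) = \sum_{n\geq 1} p^n(x,A) > 0$ for every $x \in \Xcal$ whenever $\varphi(A) > 0$. Nontriviality is immediate, since the $T$-chain property gives $T(x^*,\Xcal) > 0$, so $\varphi(\Xcal) > 0$. The substance is the reachability claim, which I would prove by splitting the passage from an arbitrary $x$ to $A$ into two legs: one that enters an open neighborhood $O$ of $x^*$, and one that runs from $O$ into $A$.

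First I would upgrade the single positive value $T(x^*,A) > 0$ to positivity on an open set. Because $T$ is a continuous component of $K_a$, the map $y \mapsto T(y,A)$ is lower semicontinuous, so $O := \{y \in \Xcal : T(y,A) > 0\}$ is open and contains $x^*$. On $O$ we then have $K_a(y,A) \geq T(y,A) > 0$, i.e.\ $\sum_n a(n)\, p^n(y,A) > 0$ for every $y \in O$; this is the leg that reaches $A$ with positive probability from every point of $O$.

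For the leg entering $O$ I would invoke the reachable point. Fixing an arbitrary $x \in \Xcal$, since $O$ is an open neighborhood of the reachable point $x^*$, there is some $m \geq 1$ with $p^m(x,O) > 0$. Concatenating the two legs via Chapman--Kolmogorov,
\[
\sum_n a(n)\, p^{m+n}(x,A) \;=\; \int_{\Xcal} p^m(x,dy)\, K_a(y,A) \;\geq\; \int_{O} p^m(x,dy)\, K_a(y,A) \;>\; 0,
\]
because the integrand $K_a(\cdot,A)$ is strictly positive on $O$ while $p^m(x,O) > 0$. Hence $p^{m+n}(x,A) > 0$ for some $n$ with $a(n) > 0$; as $m \geq 1$ the exponent satisfies $m+n \geq 1$, so $G(x,A) \geq p^{m+n}(x,A) > 0$. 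Since $x$ was arbitrary, $A$ is reachable from all of $\Xcal$, which is exactly $\varphi$-irreducibility with $\varphi = T(x^*,\cdot)$.

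The main obstacle is the opening move --- turning the pointwise inequality $T(x^*,A) > 0$ into positivity of $K_a(\cdot,A)$ on a whole neighborhood. This is precisely where the defining feature of a $T$-chain, the lower semicontinuity of the continuous component, is indispensable: without it one could not guarantee that starting points near $x^*$ also reach $A$, and the Chapman--Kolmogorov concatenation would have nothing to chain through. The remaining steps are routine, the only care being that a strictly positive integrand on a set of positive measure yields a strictly positive integral.
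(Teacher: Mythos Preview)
The paper does not supply its own proof of this lemma; it is quoted verbatim as Proposition~6.2.1 of \citet{meyn2012markov} and used as a black box in the proof of Lemma~\ref{lemma:irreducible}. Your argument is the standard textbook proof: use lower semicontinuity of $T(\cdot,A)$ to thicken the single point $x^*$ into an open set $O$ on which $K_a(\cdot,A)>0$, reach $O$ from an arbitrary $x$ via the reachability hypothesis, and concatenate with Chapman--Kolmogorov. This is correct and matches what one finds in the cited reference. One small remark: your claim that the reachability step yields some $m\geq 1$ with $p^m(x,O)>0$ tacitly assumes the reachability sum starts at $n=1$ (the paper's statement of the definition leaves the range of $n$ ambiguous). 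If $n=0$ is allowed and $x\in O$ already, you should just invoke $K_a(x,A)>0$ directly; the conclusion $G(x,A)>0$ then still follows provided the sampling distribution $a$ is not concentrated at $0$, which is the standard convention. This is a boundary-case bookkeeping issue rather than a gap in the idea.
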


In conclusion, we summarize the steps to check the $\beta$-mixing property of a Markov process $(X(t))_{t\geq 0}$. By Lemma~\ref{lemma:ergodicity}, 
geometrically ergodicity implies geometric $\beta$-mixing. To show that the process is geometrically ergodic, we can apply Lemma~\ref{lemma:expergodic}, which subsequently leads to verifying 
Assumption~\ref{assumption:expErgodic} and Lemma~\ref{lemma:fellerpetite}. We can verify Lemma~\ref{lemma:fellerpetite} by subsequently verifying the conditions required by Lemma~\ref{lemma:tchain}.  In the next section, we use this workflow to construct a few examples of the switching ODEs that are geometric $\beta$-mixing. 

\subsubsection{Mixing property of the observed process}
In the previous two sections, we introduce the tools to show that the joint process $(Z(t),X(t))$ is mixing. If $(Z(t),X(t))$ is mixing, then it is more straightforward to show that the observation $Y_n$ is mixing. To see why, we introduce the following properties. 
\begin{lemma}[Lemma~3.6 in~\citet{vidyasagar2013learning}]
    Suppose a real-valued stochastic process $(X(t))_{t\geq 0}$ is $\alpha$-, $\beta$-, or $\phi$-mixing, and that $Y(t)=f(X(t))$ where $f:X\rightarrow\RR$. Then $(Y(t))_{t\geq 0}$ is also $\alpha$-, $\beta$-, or $\phi$-mixing, as appropriate. 
\end{lemma}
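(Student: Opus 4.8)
The plan is to exploit the fact that pushing the process through a fixed measurable map $f$ can only \emph{coarsen} the information it carries, so that every mixing coefficient of $(Y(t))_{t\geq 0}$ is dominated by the corresponding coefficient of $(X(t))_{t\geq 0}$. The starting observation is that for each $u$ the random variable $Y(u)=f(X(u))$ is $\sigma(X(u))$-measurable, whence $\sigma(Y(u))\subseteq\sigma(X(u))$. Since a $\sigma$-field generated by a family is the smallest one containing each member, this containment lifts to the half-line $\sigma$-fields: writing $\Fcal^{Y,t}_{-\infty}=\sigma(\{Y(u):u\leq t\})$ and $\Fcal^{Y,\infty}_{t+\ell}=\sigma(\{Y(u):u\geq t+\ell\})$, and analogously for $X$, one obtains
\[
\Fcal^{Y,t}_{-\infty}\subseteq\Fcal^{X,t}_{-\infty},\qquad \Fcal^{Y,\infty}_{t+\ell}\subseteq\Fcal^{X,\infty}_{t+\ell}.
\]

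The second step is to invoke the monotonicity of each mixing coefficient under passage to sub-$\sigma$-fields. For the $\beta$-coefficient I would use the partition characterization
\[
\beta(\mathcal{A},\mathcal{B})=\sup\frac{1}{2}\sum_{i,j}\abr{P(A_i\cap B_j)-P(A_i)P(B_j)},
\]
the supremum being over finite $\mathcal{A}$-measurable partitions $\{A_i\}$ and $\mathcal{B}$-measurable partitions $\{B_j\}$. Any partition measurable with respect to a smaller $\sigma$-field is a fortiori measurable with respect to the larger one, so restricting to subfields only shrinks the feasible set in the supremum; hence $\mathcal{A}'\subseteq\mathcal{A}$ and $\mathcal{B}'\subseteq\mathcal{B}$ imply $\beta(\mathcal{A}',\mathcal{B}')\leq\beta(\mathcal{A},\mathcal{B})$. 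The identical argument applies verbatim to the $\alpha$-coefficient $\alpha(\mathcal{A},\mathcal{B})=\sup_{A\in\mathcal{A},B\in\mathcal{B}}\abr{P(A\cap B)-P(A)P(B)}$ and to the $\phi$-coefficient, since in every case the defining supremum ranges over sets (or partitions) drawn from the two $\sigma$-fields and shrinking those fields only shrinks the feasible region.

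Combining the two steps and taking the supremum over $t$ yields $\beta_Y(\ell)\leq\beta_X(\ell)$, and likewise $\alpha_Y(\ell)\leq\alpha_X(\ell)$ and $\phi_Y(\ell)\leq\phi_X(\ell)$, for every $\ell$. Consequently, whenever the relevant coefficient of $X$ tends to $0$ as $\ell\to\infty$, so does that of $Y$, and $Y$ inherits the same mixing type; the inequality moreover preserves the \emph{rate}, so geometric mixing of $X$ transfers to geometric mixing of $Y$, which is exactly what lets us carry Assumption~\ref{assumption:mixingbeta} from the joint process $(Z(t),X(t))$ down to the observed functional $Y_n$. There is no substantive obstacle here: the only point demanding a little care is the $\sigma$-field embedding in the first step, where the generating index set is an uncountable half-line, but this is immediate because each generator $Y(u)$ is individually measurable with respect to the corresponding $X$-field, and a generated $\sigma$-field is the smallest one containing each of its generators.
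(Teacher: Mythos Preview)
Your argument is correct and is precisely the standard proof: the key observations that $\sigma(Y(u))\subseteq\sigma(X(u))$ lifts to the past/future $\sigma$-fields, and that each of the $\alpha$-, $\beta$-, $\phi$-coefficients is monotone in its two $\sigma$-field arguments, together give $\beta_Y(\ell)\leq\beta_X(\ell)$ (and similarly for $\alpha$, $\phi$), from which the mixing property and its rate transfer immediately.

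There is nothing to compare against in the paper itself: the lemma is quoted verbatim from \citet{vidyasagar2013learning} and is invoked without proof, so the manuscript relies on the external reference rather than supplying an argument. Your write-up is exactly the argument one finds in that source (and in Bradley's survey), so it is both correct and aligned with what the paper implicitly assumes.
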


\begin{lemma}[Lemma~3.7 in~\citet{vidyasagar2013learning}]\label{lemma:mixingx2y}
    Suppose $X_n$ is $\beta$-mixing, and that $\{U_n\}$ is i.i.d. and also independent of $\{X_n\}$. Suppose $Y_n=f(X_n,U_n)$, where $f$ is a fixed measurable function. Then $\{Y_n\}$ is also $\beta$-mixing.
\end{lemma}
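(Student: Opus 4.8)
The plan is to reduce the claim to the coordinate-map lemma stated immediately above (Lemma~3.6 of~\citet{vidyasagar2013learning}) by first passing to the joint process $W_n=(X_n,U_n)$. Since $Y_n=f(X_n,U_n)$ is a fixed measurable function of $W_n$, once I establish that $\{W_n\}$ is $\beta$-mixing the conclusion follows at once from that lemma, whose proof is a data-processing inequality for total variation and applies verbatim when the argument is vector-valued rather than scalar. Thus the entire content is to show that augmenting $\{X_n\}$ with an independent i.i.d.\ sequence cannot destroy $\beta$-mixing; in fact I will prove the sharp statement $\beta_W(\ell)=\beta_X(\ell)$ for every $\ell$, so that geometric mixing is preserved verbatim.

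First I would write out the mixing coefficient of $\{W_n\}$ directly from Definition~\ref{definition:discretebetamixing}. Fix $n$ and $\ell\geq 1$. The relevant past and future $\sigma$-fields are $\sigma(W_k:k\leq n)=\sigma(X_k,U_k:k\leq n)$ and $\sigma(W_k:k\geq n+\ell)=\sigma(X_k,U_k:k\geq n+\ell)$, whose index sets $\{k\leq n\}$ and $\{k\geq n+\ell\}$ are disjoint. Because $\{U_k\}$ is i.i.d., the two blocks $U_{\leq n}:=(U_k)_{k\leq n}$ and $U_{\geq n+\ell}:=(U_k)_{k\geq n+\ell}$ are mutually independent, and by hypothesis both are independent of the entire $\{X_k\}$ process. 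Hence the joint law of the pair $(\text{past},\text{future})$ factorizes as $\mathrm{Law}(X_{\leq n},X_{\geq n+\ell})\otimes\mathrm{Law}(U_{\leq n})\otimes\mathrm{Law}(U_{\geq n+\ell})$, while the product of the two marginals factorizes as $[\mathrm{Law}(X_{\leq n})\otimes\mathrm{Law}(U_{\leq n})]\otimes[\mathrm{Law}(X_{\geq n+\ell})\otimes\mathrm{Law}(U_{\geq n+\ell})]$.

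Next I would subtract these and take the total variation norm. The common factor $\nu_U:=\mathrm{Law}(U_{\leq n})\otimes\mathrm{Law}(U_{\geq n+\ell})$ is tensored into both measures, and total variation is invariant under tensoring with a fixed probability measure, i.e.\ $\norm{\mu_1\otimes\nu-\mu_2\otimes\nu}_{\text{TV}}=\norm{\mu_1-\mu_2}_{\text{TV}}$ (immediate from the $\tfrac12 L^1$ representation, since $\nu$ integrates to one). Peeling off $\nu_U$ leaves exactly $\norm{\mathrm{Law}(X_{\leq n},X_{\geq n+\ell})-\mathrm{Law}(X_{\leq n})\otimes\mathrm{Law}(X_{\geq n+\ell})}_{\text{TV}}$, which is precisely the $\beta$-mixing term of $\{X_n\}$ at the same $n,\ell$. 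Taking the supremum over $n$ gives $\beta_W(\ell)=\beta_X(\ell)$, so $\{W_n\}$ inherits (geometric) $\beta$-mixing, and applying Lemma~3.6 of~\citet{vidyasagar2013learning} to $Y_n=f(W_n)$ completes the argument.

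The main obstacle is the clean factorization at the boundary separating past from future: it rests on the joint independence of $U_{\leq n}$, $U_{\geq n+\ell}$, and the whole $X$-process, which uses both that $\{U_n\}$ is i.i.d.\ (so the two disjoint $U$-blocks are independent) and that $\{U_n\}$ is independent of $\{X_n\}$. If the $U_n$ were merely identically distributed but temporally dependent, the two $U$-blocks would contribute a nonzero cross term and the equality $\beta_W=\beta_X$ could fail. The remaining care is routine: justifying the tensor-invariance of total variation and checking that $f$ measurable makes $Y_n$ a legitimate coordinate map of the transformed process so that Lemma~3.6 applies.
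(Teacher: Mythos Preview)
The paper does not supply its own proof of this lemma; it is quoted verbatim as Lemma~3.7 of \citet{vidyasagar2013learning} and used as a black box. So there is nothing in the paper to compare against beyond the citation.

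Your argument is correct and is essentially the standard one. Passing to the joint process $W_n=(X_n,U_n)$, using the mutual independence of $U_{\leq n}$, $U_{\geq n+\ell}$, and the entire $X$-process to factor out the common $U$-law, and then invoking $\|\mu_1\otimes\nu-\mu_2\otimes\nu\|_{\text{TV}}=\|\mu_1-\mu_2\|_{\text{TV}}$ yields exactly $\beta_W(\ell)=\beta_X(\ell)$; the final reduction to Lemma~3.6 is clean since that lemma's proof (the data-processing inequality for total variation via $\sigma$-field inclusion) is indifferent to whether the state space is $\mathbb{R}$ or a product space. Your remark about where the argument would break if $\{U_n\}$ were dependent is also on point.
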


\subsection{Switching-diffusion processes}
In previous sections, we have discussed the tools to check the mixing properties of a stochastic process. In this section, we apply these tools to check the Foster-Lyapunov condition.
We provide two examples such that there exist some functions $V$ satisfying $\Lcal V\leq -cV+d$.

\subsubsection{Linear model}
We first consider a linear model. Let $A_\ell\in\RR^{p\times p}$ for $\ell=1,\ldots, k$. We define
\[
\dot{X}(t)=A_{Z(t)}X(t)d t.
\]

We make the following assumption.
\begin{assumption}\label{assumption:A3}
    $Z(t)$ has a unique stationary distribution $\pi=(\pi_1,\ldots,\pi_k)$. Let $G$ be positive definite matrix and define $\mu_\ell=2^{-1}\lambda_{max}(GA_{\ell} G^{-1}+G^{-1}A_{\ell}^\top G)$.
    There exists a positive definite matrix $G$ such that 
    \[
    \sum_{\ell=1}^k \pi_\ell \mu_\ell<0
    \]
\end{assumption}
As we will see soon, Assumption~\ref{assumption:A3} is a sufficient condition for $\mathscr{L} V(x)\leq -b V(x)+c$ for some $V$. Assumption~\ref{assumption:A3} required for the switching system is weaker than the single dynamical system. It says, the weighted average of the maximum eigenvalue, where the weight is the stationary distribution, should be negative. This implies that some systems associated with the state $\ell$ can be unstable, namely the maximum eigenvalue is positive.

\begin{lemma}\label{lemma:betamixing_x}
    Let $A_\ell\in\RR^{p\times p}$ for $\ell=1,\ldots, k$. We define
\[
\dot{X}(t)=A_{Z(t)}X(t)d t.
\]
    Suppose that Assumption~\ref{assumption:A3} holds and $x=0$ is the equilibrium point and $\Xcal$ is compact. Let $\Ccal^1$ be the set of continuously differentiable function. Then there exists a $V(x,\ell)$ such that $V(\cdot,\ell)\in\Ccal^1$ for each $\ell$, constants $c_1,d> 0$ and
    \[
    \mathscr{L} V(x,\ell)\leq -c_1V(x,\ell)+d,\quad x\in\Xcal. 
    \]

\end{lemma}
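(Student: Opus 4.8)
The plan is to exhibit an explicit Lyapunov function of the form $V(x,\ell) = c_\ell\,(1+x^\top G^2 x)^\theta$, where $\theta\in(0,1)$ is a small exponent and $c_1,\dots,c_k>0$ are state-dependent weights to be chosen, and then verify the drift inequality directly through the generator formula~\eqref{eq:generatorfunc_ms}. The ``$+1$'' inside the power guarantees that $V(\cdot,\ell)\in\Ccal^1$ (indeed $\Ccal^\infty$) on all of $\RR^p$ for every $\theta>0$, which disposes of any differentiability issue at the equilibrium $x=0$ without restricting $\theta$.

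First I would record the action of the drift operator $\mathscr{A}$ on the quadratic form $W(x)=x^\top G^2 x=\norm{Gx}_2^2$. A direct computation gives $\mathscr{A}W(\cdot,\ell)(x)=2x^\top G^2 A_\ell x$; after the substitution $u=Gx$ this equals $u^\top(GA_\ell G^{-1}+G^{-1}A_\ell^\top G)u$, and since the symmetric matrix $GA_\ell G^{-1}+G^{-1}A_\ell^\top G$ has top eigenvalue $2\mu_\ell$ by definition, I obtain $\mathscr{A}W(\cdot,\ell)(x)\le 2\mu_\ell W(x)$. Writing $\phi=(1+W)^\theta$, the chain rule gives $\mathscr{A}\phi(\cdot,\ell)(x)=\theta(1+W)^{\theta-1}\mathscr{A}W(\cdot,\ell)(x)\le 2\theta\mu_\ell(1+W)^\theta-2\theta\mu_\ell(1+W)^{\theta-1}$, where on the compact set $\Xcal$ the last term is bounded by an absolute constant $C_0$. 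Combining this with $\mathscr{Q}V(x,\cdot)(\ell)=(1+W(x))^\theta\sum_{\ell'}q_{\ell\ell'}c_{\ell'}$ reduces the drift to
\[
\mathscr{L}V(x,\ell)\le (1+W(x))^\theta\big[2\theta\mu_\ell c_\ell + (Qc)_\ell\big] + C_0 c_\ell,
\]
so everything comes down to choosing the weight vector $c=(c_\ell)_\ell$ so that $2\theta\mu_\ell c_\ell + (Qc)_\ell \le -c_1 c_\ell$ componentwise for some $c_1>0$.

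The heart of the argument is thus to find $c>0$ and $c_1>0$ solving $(Q + 2\theta\,\text{diag}(\mu_1,\dots,\mu_k))\,c \le -c_1 c$, i.e.\ to show that the matrix $M_\theta := Q + 2\theta\,\text{diag}(\mu_\ell)$ has negative spectral abscissa with a strictly positive Perron eigenvector. Here I would invoke Perron--Frobenius theory for matrices with nonnegative off-diagonal entries: because $Q$ is an irreducible generator and $2\theta\,\text{diag}(\mu_\ell)$ only shifts the diagonal, $M_\theta$ is irreducible Metzler, so its spectral abscissa $\lambda(\theta)$ is a simple real eigenvalue carrying a strictly positive eigenvector $c$, for which $2\theta\mu_\ell c_\ell + (Qc)_\ell=\lambda(\theta) c_\ell$. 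It then remains to prove $\lambda(\theta)<0$. Since $Q\mathbf 1=0$ and $\pi Q=0$ with $\pi\mathbf 1=1$, the eigenvalue $0$ of $Q=M_0$ is simple with right and left eigenvectors $\mathbf 1$ and $\pi$; first-order eigenvalue perturbation gives $\lambda(\theta)=2\theta\sum_\ell\pi_\ell\mu_\ell+O(\theta^2)$, which is strictly negative for all sufficiently small $\theta>0$ precisely because Assumption~\ref{assumption:A3} supplies $\sum_\ell\pi_\ell\mu_\ell<0$. Fixing such a $\theta$ and setting $c_1=-\lambda(\theta)>0$ and $d=C_0\max_\ell c_\ell$ yields $\mathscr{L}V(x,\ell)\le -c_1 V(x,\ell)+d$ on $\Xcal$.

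I expect the main obstacle to be this spectral-abscissa step: establishing $\lambda(\theta)<0$ requires both the Perron--Frobenius structure (to guarantee a strictly positive eigenvector, hence the correct sign pattern in the drift) and the perturbation expansion around the degenerate eigenvalue $0$ of $Q$, where irreducibility of $Q$ is exactly what makes that eigenvalue simple and the expansion valid. By comparison the remaining pieces — the quadratic-form bound $\mathscr{A}W\le 2\mu_\ell W$ and the absorption of the lower-order terms into the constant $d$ via compactness of $\Xcal$ — are routine.
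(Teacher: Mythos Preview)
Your argument is correct, and it takes a genuinely different route from the paper's. The paper follows Theorem~8.8 of Yin--Zhu and chooses $V(x,\ell)=(1-\gamma c_\ell)(x^\top G^2 x)^{\gamma/2}$, where the weights $c=(c_1,\dots,c_k)$ are obtained by solving the Poisson equation $Qc=\mu+\upsilon\mathbf 1$ with $\upsilon=-\sum_\ell\pi_\ell\mu_\ell>0$; the drift then reduces to $\mathscr{L}V\le\gamma V(\mu_\ell-(Qc)_\ell+O(\gamma))=-\upsilon\gamma V+O(\gamma)V$, and compactness of $\Xcal$ absorbs the $O(\gamma)$ correction into the additive constant $d$. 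In contrast, you take $V(x,\ell)=c_\ell(1+x^\top G^2 x)^\theta$ and obtain the weights as the Perron eigenvector of the irreducible Metzler matrix $Q+2\theta\,\mathrm{diag}(\mu)$, using first-order spectral perturbation at the simple eigenvalue $0$ of $Q$ to force the spectral abscissa negative for small $\theta$. Both constructions hinge on exactly the same hypothesis $\sum_\ell\pi_\ell\mu_\ell<0$ and yield the same $-c_1V+d$ form. Your ``$+1$'' device makes the $\Ccal^1$ claim at the equilibrium $x=0$ immediate, whereas the paper's $(x^\top G^2 x)^{\gamma/2}$ requires care there; the Perron--Frobenius route is also somewhat more structural and would generalize more readily, while the paper's Poisson-equation choice is more explicit and ties directly to the Yin--Zhu stability literature.
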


\begin{proof}[Proof of Lemma~\ref{lemma:betamixing_x}]
    Our proof closely follows the proof of Theorem~8.8 in~\citet{yin2010hybrid}, where it studies the stability of the switching ODE process. Define $\mu=(\mu_1,\ldots,\mu_k)$, $\upsilon=-\sum_{\ell=1}^k\pi_\ell\mu_\ell$. 
Let $c=(c_1,\ldots, c_k)$ be the solution to $Qc=\mu+\upsilon\one$.
    First, we consider the Lyapunov function
    \[
    V(x,\ell)=(1-\gamma c_\ell)(x^\top G^2 x)^{\gamma/2},\quad \ell=1,\ldots,k,
    \]
    where $\gamma\in(0,1)$ and $1-\gamma c_\ell>0$ for $\ell=1,\ldots,k$. Then it follows that
    \[
    \nabla_x V(x, \ell)=(1-\gamma c_\ell)(x^\top G^2 x)^{\gamma/2-1}\gamma G^2x.
    \]
    Hence, following~\eqref{eq:generatorfunc_ms} and $q_{\ell\ell}=-\sum_{\ell'\neq\ell}q_{\ell\ell'}$, for $x\neq 0$, we have
    \begin{align*}
    \mathscr{L} V(x, \ell)&= (1-\gamma c_\ell)\gamma (x^\top G^2 x)^{\gamma/2-1}x^\top G^2(A_\ell x )-\sum_{\ell\neq \ell'}q_{\ell\ell'}(x^\top G^2 x)^{\gamma/2}\gamma (c_{\ell'}-c_\ell)\\
    &= (1-\gamma c_\ell)\gamma(x^\top G^2 x)^{\gamma/2}\cbr{\frac{x^\top G A_\ell x}{x^\top G^2 x}-\sum_{{\ell'}\neq \ell}q_{\ell\ell'}\frac{c_{\ell'}-c_{\ell}}{1-\gamma c_\ell}}. 
    \end{align*}
    Note that we can write
    \begin{align*}
    \sum_{\ell'\neq \ell}q_{\ell\ell'}\frac{c_{\ell'}-c_{\ell}}{1-\gamma c_\ell}&=\sum_{\ell'=1}^k q_{\ell\ell'}c_{\ell'} - \sum_{\ell'\neq \ell}q_{\ell\ell'}\frac{c_{\ell}(1-c_{\ell'})}{1-\gamma c_{\ell}}\gamma\\
    &=\sum_{\ell'=1}^k q_{\ell\ell'}c_{\ell'} + O(\gamma), 
    \end{align*}
    where $O(\gamma)\rightarrow 0$ as $\gamma\rightarrow 0$.

    Let $y=Gx$ and write 
    \begin{align*}
        \frac{x^\top G^2 A_{\ell} x}{x^\top G^2 x}&=\frac{x^\top (G^2 A_{\ell}+ A_{\ell}^\top G^2)x}{2x^\top G^2 x}\\
        &=\frac{y^\top G^{-1}(G^2 A_{\ell} + A_{\ell}^\top G^2)G^{-1}y}{2y^\top y}\\
        &\leq \frac{1}{2}\lambda_{\max}(G^{-1}(G^2 A_{\ell} + A_{\ell}^\top G^2)G^{-1})=\mu_\ell.
    \end{align*}
    This yields that 
    \[
    \mathscr{L} V(x,\ell) \leq \gamma V(x,\ell)\cbr{\mu_\ell - \sum_{\ell'=1}^k q_{\ell\ell'}c_{\ell'} +O(\gamma)}.
    \]
    Since $Qc=\mu+\upsilon\one$, it follows that $\mu_\ell-\sum_{\ell'=1}^k q_{\ell\ell'}c_{\ell'}=-\upsilon<0$. Hence
    \[
    \mathscr{L} V(x,\ell) \leq -(\upsilon\gamma) V(x,\ell) + \gamma V(x,\ell)O(\gamma)
    \]
    For a fixed $\gamma$, we then define $c_1=\upsilon\gamma>0$. 
    Since $\Xcal$ is bounded and one can choose a proper $\gamma$ such that there exists a constant $d>0$ such that $
    \sup_{x\in\Xcal}\gamma V(x,\ell)O(\gamma)<d$. 
\end{proof}
\subsubsection{General setting}
We now consider a more general setting. The following two assumptions are common in studying the dynamical systems~\citep{skorokhod2009asymptotic, yin2010hybrid}.
\begin{assumption}\label{assumption:A1}
If $x=0$, then $g(x)=0$. Additionally, $g$ is locally Lipschitz.  
\end{assumption}

\begin{assumption}\label{assumption:A2}
    There exists a constant, $K_0>0$ such that for each state $\ell=1,\ldots,k$, 
\begin{align*}
    \left\|\sum_{i=1}^p \theta_i^\ell g(x_i) \right\|_2\leq K_0(1+\|x\|_2),
\end{align*}
where $\theta_i^\ell=[\theta_{ji}^\ell]\in\RR^{p\times m}$
\end{assumption}

Assumption~\ref{assumption:A1}--\ref{assumption:A2} guarantee the uniqueness of the solution~\citep{skorokhod2009asymptotic, yin2010hybrid}.

\begin{lemma}\label{lemma:alt2}
    Suppose that Assumption~\ref{assumption:A1}--\ref{assumption:A2} hold. Given $\ell=1,\ldots,k$, assume that for all $x\in\RR^p$ such that
    \[
    x^\top\cbr{\sum_{i=1}^p\theta_i^{\ell\star} g(x_i)}\leq \beta_\ell\norm{x}_2^2+\alpha,
    \]
for some constants $\beta_\ell,\alpha$. If
\[
A = -2\text{diag}(\beta_1,\ldots,\beta_k)-Q^\star,
\]
is an nonsingular M-matrix. Then, there exists a $V(x,\ell)$ such that $V(\cdot,\ell)\in\Ccal^1$ for each $\ell$, constants $c_1,d> 0$ and
    \[
    \mathscr{L} V(x,\ell)\leq -c_1V(x,\ell)+d,\quad x\in\Xcal. 
    \]

\end{lemma}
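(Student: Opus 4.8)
The plan is to build a state-weighted quadratic Lyapunov function $V(x,\ell)=c_\ell\rbr{1+\norm{x}_2^2}$, whose weights $(c_1,\dots,c_k)$ are extracted from the nonsingular-M-matrix hypothesis on $A$, and then to verify the drift inequality by a direct computation with the generator $\mathscr{L}=\mathscr{A}+\mathscr{Q}$ of \eqref{eq:generatorfunc_ms}. Assumptions~\ref{assumption:A1}--\ref{assumption:A2} enter only to guarantee that the switching ODE has a unique solution, so that the joint process is a well-defined Markov process and the generator formula applies; since $V(\cdot,\ell)\in\Ccal^1$ and the drift satisfies the linear-growth bound of Assumption~\ref{assumption:A2}, one checks $V(\cdot,\ell)\in\mathrm{Dom}(\mathscr{L})$ and that $\mathscr{A}V(\cdot,\ell)(x)=\nabla_x V(x,\ell)^\top f(x,\ell)$ is finite for each $x$.

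First I would invoke the standard characterization of nonsingular M-matrices. Because $Q^\star$ is a transition rate matrix, $A=-2\,\text{diag}(\beta_1,\dots,\beta_k)-Q^\star$ has nonpositive off-diagonal entries $-q_{\ell\ell'}\le 0$, i.e. it is a Z-matrix; being additionally a nonsingular M-matrix, there exists a strictly positive vector $c=(c_1,\dots,c_k)^\top$, $c_\ell>0$, with $Ac\gg 0$ componentwise. Spelled out, for each $\ell$,
\[
(Ac)_\ell=-2\beta_\ell c_\ell-(Q^\star c)_\ell>0,
\qquad\text{equivalently}\qquad
2\beta_\ell c_\ell+(Q^\star c)_\ell<0.
\]
This last inequality is exactly what forces the drift coefficient to be negative; it is the single place where the M-matrix hypothesis is used, and it is the conceptual heart of the argument.

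Next I would compute $\mathscr{L}V$ for $V(x,\ell)=c_\ell\rbr{1+\norm{x}_2^2}$. The diffusion part gives $\mathscr{A}V(\cdot,\ell)(x)=\nabla_x V(x,\ell)^\top f(x,\ell)=2c_\ell\,x^\top\rbr{\sum_{i=1}^p\theta_i^{\ell\star}g(x_i)}$, and the dissipativity hypothesis $x^\top\rbr{\sum_{i=1}^p\theta_i^{\ell\star}g(x_i)}\le\beta_\ell\norm{x}_2^2+\alpha$ (multiplied by $2c_\ell>0$) yields $\mathscr{A}V(\cdot,\ell)(x)\le 2c_\ell\beta_\ell\norm{x}_2^2+2c_\ell\alpha$. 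The switching part gives $\mathscr{Q}V(x,\cdot)(\ell)=\sum_{\ell'}q_{\ell\ell'}c_{\ell'}\rbr{1+\norm{x}_2^2}=(Q^\star c)_\ell\rbr{1+\norm{x}_2^2}$. Writing $\norm{x}_2^2=\rbr{1+\norm{x}_2^2}-1$ in the diffusion term and summing, the coefficient of $\rbr{1+\norm{x}_2^2}$ collapses to $2\beta_\ell c_\ell+(Q^\star c)_\ell=-(Ac)_\ell<0$, while the leftover terms reduce to the bounded constant $2c_\ell(\alpha-\beta_\ell)$. Hence
\[
\mathscr{L}V(x,\ell)\le-\frac{(Ac)_\ell}{c_\ell}\,V(x,\ell)+2c_\ell(\alpha-\beta_\ell),
\]
and choosing $c_1=\min_\ell (Ac)_\ell/c_\ell>0$ and $d=\max\cbr{1,\max_\ell 2c_\ell(\alpha-\beta_\ell)}$ gives $\mathscr{L}V(x,\ell)\le -c_1 V(x,\ell)+d$ for every $x$, in particular on $\Xcal$. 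Since $c_\ell\ge\min_\ell c_\ell>0$, we also have $V(x,\ell)\to\infty$ as $\norm{x}_2\to\infty$, which is the coercivity that the downstream application to Lemma~\ref{lemma:expergodic} requires.

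The only genuinely nontrivial step is the M-matrix reduction: recognizing that the factor $2$ in $A=-2\,\text{diag}(\beta)-Q^\star$ is exactly what differentiating $\norm{x}_2^2$ produces, and that "nonsingular M-matrix'' is precisely the solvability of $Ac\gg 0$ for some $c\gg 0$. Everything after the weights $c_\ell$ are fixed is routine generator algebra, and the remaining technical point—that $V(\cdot,\ell)$ lies in $\mathrm{Dom}(\mathscr{L})$—follows directly from the linear-growth bound of Assumption~\ref{assumption:A2}.
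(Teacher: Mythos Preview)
Your proof is correct and follows essentially the same route as the paper's: both choose a state-weighted quadratic Lyapunov function whose weights come from the M-matrix characterization (a positive vector $c$ with $Ac\gg 0$), then verify the Foster--Lyapunov drift by direct generator computation. The only cosmetic difference is that you use $V(x,\ell)=c_\ell(1+\norm{x}_2^2)$ whereas the paper uses $V(x,\ell)=\upsilon_\ell\norm{x}_2^2$; your version has the minor advantage of making the coercivity $V\to\infty$ and the positivity of $d$ explicit, but the argument is otherwise identical.
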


\begin{proof}
    This analysis is inspired by Theorem~5.1 in~\citet{yuan2003asymptotic}, where it discusses the stability of the Markov-switching SDEs. Here, we use their idea to construct the Lyapunov functions. By the property of nonsingular M-matrix, there exists a positive vector, where all the entries are positive values, $\upsilon=(\upsilon_1,\ldots,\upsilon_k)^\top$ such that
    \[
    \tilde{c}= A\upsilon,
    \]
    and entries of $\tilde{c}$ are all positive. Define the function
    \[
    V(x,\ell)=\upsilon_\ell\norm{x}_2^2. 
    \]
    Then, we can write
    \begin{align*}
        \mathscr{L}V(x,\ell) &= 2\upsilon_\ell x^\top \cbr{\sum_{i=1}^p\theta_i^{\ell\star} g(x_i)}+\sum_{\ell'=1}^k q_{\ell\ell'}^\star \upsilon_\ell \norm{x}_2^2\\
        &\leq \rbr{2\upsilon_\ell\beta_\ell+\sum_{\ell'=1}^k q_{\ell\ell'}^\star \upsilon_\ell}\norm{x}_2^2 + 2\alpha\upsilon_\ell\\
        & = -\tilde{c}_\ell\norm{x}_2^2 + 2\alpha\upsilon_\ell\\
        & = -\frac{\tilde{c}_\ell}{\upsilon_\ell}\upsilon_\ell\norm{x}_2^2 + 2\alpha\upsilon_\ell.
        \intertext{Define $c_1 = \min \tilde{c}_\ell \upsilon_\ell^{-1}$ and $d=\max_{\ell}2\alpha\upsilon_\ell$, then the above term can be bounded as}
        &\leq -c_1 V(x,\ell) + d. 
    \end{align*}
    Then, we complete the proof. 
\end{proof}

\subsection{Irreducible chain}
In this section, we discuss the properties that the joint stochastic process $W(t)=(Z(t), X(t))$ for $t\geq$ is irreducible. We make the following assumptions. 
\begin{assumption}\label{assumption:oppensetirreducible}
    For every $(x,\ell)$ with $x\in\Xcal$ and $\ell=1,\ldots, k$, we have $p((x',\ell'), h, (O,\ell))>0$ for every open neighbor $O\in \mathscr{B}({\Xcal})$ of $x$ for every $(x,\ell')$ with $x'\in\Xcal$ and $\ell'=1,\ldots, k$. 
\end{assumption}
\begin{assumption}\label{assumption:semicontinuous}
There exists a $\tilde{h}$ such that for every compact set $K\in\Xcal$, the density function $p((x,\ell), h, (y,\ell'))$ is bounded for $x\in K$, $y\in\Xcal$, $\ell,\ell'=1,\ldots,k$. 
\end{assumption}

\begin{lemma}\label{lemma:irreducible} Assume that Assumption~\ref{assumption:A1}--\ref{assumption:semicontinuous} hold. Then, all compact sets are petite. 
\end{lemma}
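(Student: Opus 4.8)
The plan is to exhibit a discretely sampled skeleton chain of the joint process $W(t)=(Z(t),X(t))$ that is simultaneously (weakly) Feller and $\varphi$-irreducible, and then to invoke Lemma~\ref{lemma:fellerpetite} to conclude that all compact sets are petite. Throughout I would work with the $h$-skeleton chain $W_n=(Z(t_n),X(t_n))$ on the state space $\Xcal\times\{1,\dots,k\}$, whose one-step transition kernel admits the density $p((x,\ell),h,(y,\ell'))$ furnished by Assumption~\ref{assumption:semicontinuous}. The irreducibility input will come through the T-chain route of Lemma~\ref{lemma:tchain}, and the continuity input through the Feller property, so that condition~(2) of Lemma~\ref{lemma:fellerpetite} applies.

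First I would establish the Feller property. Between two consecutive switchings the coordinate $X$ evolves according to the ODE whose field is locally Lipschitz and linearly bounded by Assumptions~\ref{assumption:A1}--\ref{assumption:A2}, so the flow map $x\mapsto X(t_n)$ depends continuously on the initial condition; the finite-state chain $Z$ contributes only finitely many continuous weights and its dynamics do not depend on $X$. Conditioning on the relevant switching configurations and averaging then shows, by dominated convergence, that the transition operator $(x,\ell)\mapsto\EE[f(X(t_n),Z(t_n))\mid W_0=(x,\ell)]$ maps bounded continuous functions to bounded continuous functions, i.e. $W_n$ is weak Feller.

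The technical heart is showing that $W_n$ is a T-chain in the sense of Definition~\ref{definition:tchain}. I would construct a continuous component $T$ from a lower semicontinuous positive minorant of the transition density and verify its admissibility through Lemma~\ref{lemma:verifyTchain}: the boundedness of $p((x,\ell),h,(\cdot,\ell'))$ on every compact set (Assumption~\ref{assumption:semicontinuous}) supplies the uniform equi-smallness of condition~(1), since $T(y,A)\le\|p\|_\infty\,\mu(A)$ lets one take $\delta=\varepsilon/\|p\|_\infty$, while Assumption~\ref{assumption:oppensetirreducible}, which forces the one-step transition into every open set to be strictly positive from every starting state, yields lower semicontinuity on open sets (condition~(2)) and guarantees $T(x,\Xcal)>0$ for all $x$. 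With $T$ in hand, Assumption~\ref{assumption:oppensetirreducible} also makes any point $x^\star\in\Xcal$ reachable, since each of its open neighbourhoods is hit with positive probability in one step. Lemma~\ref{lemma:tchain} then gives $\varphi$-irreducibility with $\varphi=T(x^\star,\cdot)$; because this $\varphi$ inherits positivity on every open set from Assumption~\ref{assumption:oppensetirreducible}, $\supp(\varphi)$ has non-empty interior, and condition~(2) of Lemma~\ref{lemma:fellerpetite} (Feller together with $\supp(\varphi)$ of non-empty interior) concludes that all compact subsets are petite.

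I expect the main obstacle to be the T-chain step. Because $X$ evolves deterministically once the switching path is fixed, the one-step kernel could a priori be concentrated on lower-dimensional sets, so the existence and boundedness of a genuine joint density (Assumption~\ref{assumption:semicontinuous}) is exactly what rescues the construction of a lower semicontinuous minorant $T$. The delicate part is to match that density against both hypotheses of Lemma~\ref{lemma:verifyTchain} at once while carefully tracking the discrete switching-index coordinate $\ell'$, and to ensure that the constructed $T$ is genuinely a lower bound for the sampled kernel $K_a$, lower semicontinuous, and strictly positive on open sets.
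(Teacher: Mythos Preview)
Your plan matches the paper's proof: show the skeleton of $W=(Z,X)$ is weak Feller (the paper invokes Theorem~2.18 of Yin~(2010) under Assumptions~\ref{assumption:A1}--\ref{assumption:A2} rather than arguing from the flow directly), take $T$ to be the one-step transition kernel itself and verify the two hypotheses of Lemma~\ref{lemma:verifyTchain}, deduce $\varphi$-irreducibility from Lemma~\ref{lemma:tchain} via a reachable point supplied by Assumption~\ref{assumption:oppensetirreducible}, and conclude through Lemma~\ref{lemma:fellerpetite}.

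One correction is needed in your attribution of condition~(2) of Lemma~\ref{lemma:verifyTchain}. Lower semicontinuity of $(x,\ell)\mapsto p((x,\ell),h,(O,\ell'))$ for open $O$ does not follow from Assumption~\ref{assumption:oppensetirreducible}; strict positivity on opens says nothing about continuity in the starting point. It follows instead from the weak Feller property you have already established (this is a standard equivalence; the paper cites Chapter~6 of Meyn--Tweedie). Assumption~\ref{assumption:oppensetirreducible} is used only for reachability and for $T(\cdot,\Xcal)>0$, exactly as the paper does.
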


\begin{proof}[Proof of Lemma~\ref{lemma:irreducible}]
    Define the joint stochastic processes $W(\cdot)=(Z(\cdot), X(\cdot))$ and the underlying $\tilde{h}$-skeleton processes 
$W^{(\tilde{h})}=(Z^{(\tilde{h})}, X^{(\tilde{h})})$ for some $\tilde{h}>0$. 
From Theorem~2.18 in~\citet{yin2010hybrid}, under Assumption~\ref{assumption:A1}--\ref{assumption:A2}, it follows that $W(\cdot)$ is weak-Feller. Hence, the $\tilde{h}$-skeleton $W^{(\tilde{h})}$ is also weak-Feller.

Next, we want to verify that the $\varphi$-irreducibility of $W^{(\tilde{h})}$ for some finite measure $\varphi$ whose support has non-empty interior. To check this property, under Assumption~\ref{assumption:oppensetirreducible}, we can apply Lemma~\ref{lemma:tchain}, which leads to verifying that $W^{(\tilde{h})}(\cdot)$ is a $T$-chain. 

We want to verify that $W^{(\tilde{h})}(\cdot)$ is a $T$-chain with $T(\cdot, (A,\ell) )=p(\cdot, \tilde{h}, (A,\ell))$,where $A\in\mathscr{B}(\Xcal)$ and $\ell=1,\ldots, k$. To this end, it suffices to check the two conditions in Lemma~\ref{lemma:verifyTchain}. Note that, by the weak-Feller condition, $p(\cdot, \tilde{h}, (O,\ell))$ is lower semicontinuous (see~\citet[Chapter~6]{meyn2012markov}) for every open set $O\in\mathscr{B}(\Xcal)$. Hence the second condition of Lemma~\ref{lemma:verifyTchain} is verified.

Given a compact set $K_1\in\Xcal$, let $C$ be a constant depending on $K_1$. Let $\mu$ be a measure on $\Xcal$. To verify the first condition, under Assumption~\ref{assumption:semicontinuous}, we have 
\[
\sup_{x\in\Kcal_1,x'\in\Xcal,\ell,\ell'=1,\ldots,k} p((x,\ell),\tilde{h},(x',\ell'))\leq C. 
\]
Let $\delta=\varepsilon/C$. For any compact set $K_2$ such that $A\subset K_2$ and $\mu(A)<\delta$, then
\[
\sup_{x\in\Kcal_1,\ell,\ell'=1,\ldots,k} p((x,\ell),\tilde{h},(A,\ell'))= \sup_{x\in\Kcal_1,\ell,\ell'=1,\ldots,k}\int_A p((x,\ell),\tilde{h},(\mathrm{d}y,\ell')) \leq C\mu(A)<\varepsilon. 
\]
Then, we have verified the first condition of Lemma~\ref{lemma:verifyTchain}. 

To summarize, under Assumption~\ref{assumption:oppensetirreducible}--\ref{assumption:semicontinuous}, 
$W^{(\tilde{h})}$ is a T-chain, defined in Definition~\ref{definition:tchain},  corresponding to a continuous component $p(\cdot,\tilde{h},\cdot)$. Then, applying Lemma~\ref{lemma:tchain}, we can conclude that $W^{(\tilde{h})}$ is $p((x,\ell)),\tilde{h},\cdot)$-irreducible for every  $(x,\ell)\in \Xcal\times[k]$. Finally, applying Lemma~\ref{lemma:fellerpetite}, all compact sets are petite. 
\end{proof}

\subsection{Proof of Proposition~\ref{prop:general_example}}

Let  $p(\cdot,s,\cdot)$ be the transition function, and $\eta$ be the initial distribution of $(X(t), Z(t))$. 
\begin{assumption}\label{assumption:A4_1}
 There exist positive constants $\upsilon_\ell$ for $\ell=1,\ldots,k$. 
     such that 
    \[
    \sup_{s\in\RR_+}\sum_{\ell=1}^k\int (\upsilon_\ell\norm{x}_2^2+1)\eta p(\cdot, s,(\mathrm{d}x,\ell))<\infty,
    \]
\end{assumption}

\begin{proof}[Proof of Proposition~\ref{prop:general_example}]
The analysis is similar to the proof of Proposition~\ref{prop:cute_example}. In the first step, under Assumption~\ref{assumption:A1}--\ref{assumption:A2}, \ref{assumption:oppensetirreducible}--\ref{assumption:semicontinuous}, we can apply Lemma~\ref{lemma:irreducible}. The second step is to apply Lemma~\ref{lemma:expergodic} using the result from Lemma~\ref{lemma:alt2} and Assumption~\ref{assumption:A4_1}. The third step is the same as \emph{Step 3} in the proof of Proposition~\ref{prop:cute_example}. Then, we complete the proof. 
    
\end{proof}

\subsection{Proof of Proposition~\ref{prop:cute_example}}

Recall that $p(\cdot,s,\cdot)$ is the transition function, and $\eta$ is the initial distribution of $(X(t), Z(t))$. 
\begin{assumption}\label{assumption:A4}
    There exists a constant $\gamma\in(0,1)$ such that $1-\gamma c_\ell>0$ for $\ell=1,\ldots,k$. 
    There exists  a positive definite matrix $G$ such that 
    \[
    \sup_{s\in\RR_+}\sum_{\ell=1}^k\int \cbr{(1-\gamma c_\ell)(x^\top G^2 x)^{\gamma/2}+1}\eta p(\cdot, s,(\mathrm{d}x,\ell))<\infty.
    \]
\end{assumption}

    Assumption~\ref{assumption:A4} is the technical assumption required to verify the Condition~\ref{cd2} in Lemma~\ref{lemma:ergodicity}.

\begin{proof}[Proof of Proposition~\ref{prop:cute_example}]
Our proof consists of three steps. The first two steps are to show the requirements for Lemma~\ref{lemma:expergodic} are fulfilled under the conditions stated in Proposition~\ref{prop:cute_example}. 

\emph{Step 1}.
Since the diffusion process is linear, $\sum_i \theta_{i}^{\ell\star} g(X_i(t))=A_\ell$, Assumption~\ref{assumption:A1}--\ref{assumption:A2} are satisfied. 
Together with Assumption~\ref{assumption:oppensetirreducible}--\ref{assumption:semicontinuous}, we can apply Lemma~\ref{lemma:irreducible}. This completes showing the first requirement of Lemma~\ref{lemma:expergodic}: all compact sets are petite.

\emph{Step 2}.
    Using the results from Lemma~\ref{lemma:betamixing_x} and \emph{Step 1}, it follows from Lemma~\ref{lemma:expergodic} that
    \begin{equation}\label{eq:pxl}
    \|p((x,\ell),t,\cdot)-\pi\|_{TV}\leq \|p((x,\ell),t,\cdot)-\pi\|_{V+1}\leq C (V(x,\ell)+1)\exp(-c t),\quad t\geq 0.
    \end{equation}
    for $C>0$ and $c>0$. Combining~\eqref{eq:pxl} with Assumption~\ref{assumption:A4}, we can apply Lemma~\ref{lemma:ergodicity}. Hence, we see that
    $(Z(t), X(t))$ is exponentially $\beta$-mixing.

\emph{Step 3}. Since $(Z(t), X(t))$ is exponentially $\beta$-mixing, it follows that the discrete sampled process is $\beta$-mixing as well. Then, we can apply Lemma~\ref{lemma:mixingx2y}, and show that the joint process $(Z(t_n),X(t_n), Y_n)$ is $\beta$-mixing  and hence we complete the proof.
\end{proof}



\section{Proof of Proposition~\ref{prop:mvt}}
    By the mean value theorem, we have
    \begin{align*}
    \abr{M_{\sigma^2}(\Theta)-\sigma^{\star 2}}&= \abr{M_{\sigma^2}(\Theta)-M_{\sigma^2}(\Theta^\star)}\leq \kappa \abr{\sigma^2-\sigma^{2\star}};\\
    \abr{M_{q_{\ell\ell'}}(\Theta)-q_{\ell\ell'}^\star}&=
    \abr{M_{q_{\ell\ell'}}(\Theta)-M_{q_{\ell\ell'}}(\Theta^\star)}
    \leq \kappa\abr{q_{\ell\ell'} - q_{\ell\ell'}^\star}\quad \ell\neq\ell'=1,\ldots,k;\\
    \left\|M_{\theta_{i\cdot}^\ell}(\Theta)-\theta_{i\cdot}^{\ell\star}\right\|_2&=
    \left\|M_{\theta_{i\cdot}^\ell}(\Theta)-M_{\theta_{i\cdot}^\ell}(\Theta^\star)\right\|_2
    \leq \kappa\norm{\theta_{i\cdot}^\ell-\theta_{i\cdot}^{\ell\star}}_2\quad i=1,\ldots,p,\; \ell=1,\ldots,k.
    \end{align*}
    Therefore, we have
    \[
    \dist(M(\Theta), \Theta^\star)\leq \kappa \dist(\Theta, \Theta^\star). 
    \]

\section{Truncated Continuous-time Markov Chain}

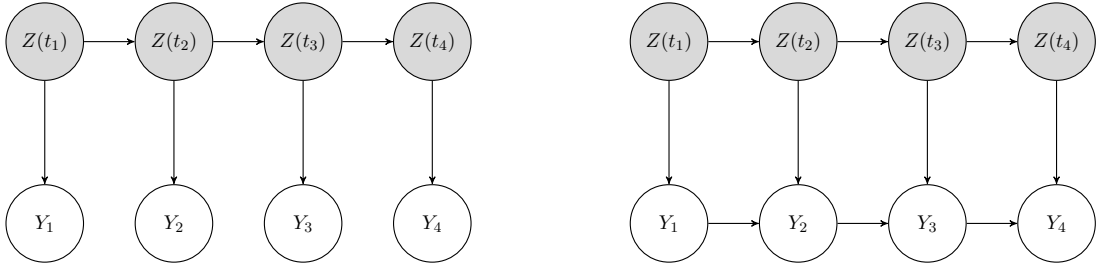
\begin{figure*}
    \centering
    \begin{subfigure}[t]{0.5\textwidth}
        \centering
        \resizebox{!}{9em}{%
        \begin{tikzpicture}[->, >=stealth', auto, semithick, node distance=2.5cm]

  \tikzstyle{state}=[circle, draw, fill=gray!30, minimum size=1.5cm]
  \tikzstyle{emission}=[circle, draw, minimum size=1.5cm]

  \node[state] (z1)                         {$Z(t_1)$};
  \node[state] (z2) [right of=z1]           {$Z(t_2)$};
  \node[state] (z3) [right of=z2]           {$Z(t_3)$};
  \node[state] (z4) [right of=z3]           {$Z(t_4)$};

  \node [emission] (y1) [below=2cm of z1] {$Y_1$};
  \node [emission](y2) [below=2cm of z2] {$Y_2$};
  \node [emission](y3) [below=2cm of z3] {$Y_3$};
  \node [emission](y4) [below=2cm of z4] {$Y_4$};

  \path (z1) edge [->] (z2)
        (z2) edge [->] (z3)
        (z3) edge [->] (z4);

  \path (z1) edge [->]  (y1)
        (z2) edge [->]  (y2)
        (z3) edge [->]  (y3)
        (z4) edge [->]  (y4);

\end{tikzpicture}}
        \caption{Hidden Markov Model discussed in~\citep{yang2017statistical}}
    \end{subfigure}%
    \hfill
    \begin{subfigure}[t]{0.5\textwidth}
        \centering
       \resizebox{!}{9em}{ \begin{tikzpicture}[->, >=stealth', auto, semithick, node distance=2.5cm]

  \tikzstyle{state}=[circle, draw, fill=gray!30, minimum size=1.5cm]
  \tikzstyle{emission}=[circle, draw, minimum size=1.5cm]

  \node[state] (z1)                         {$Z(t_1)$};
  \node[state] (z2) [right of=z1]           {$Z(t_2)$};
  \node[state] (z3) [right of=z2]           {$Z(t_3)$};
  \node[state] (z4) [right of=z3]           {$Z(t_4)$};

  \node [emission] (y1) [below=2cm of z1] {$Y_1$};
  \node [emission](y2) [below=2cm of z2] {$Y_2$};
  \node [emission](y3) [below=2cm of z3] {$Y_3$};
  \node [emission](y4) [below=2cm of z4] {$Y_4$};

  \path (z1) edge [->] (z2)
        (z2) edge [->] (z3)
        (z3) edge [->] (z4);

  \path (y1) edge [->] (y2)
        (y2) edge [->] (y3)
        (y3) edge [->] (y4);

  \path (z1) edge [->]  (y1)
        (z2) edge [->]  (y2)
        (z3) edge [->]  (y3)
        (z4) edge [->]  (y4);

\end{tikzpicture}}
        \caption{Proposed Model}
    \end{subfigure}
    \caption{Two Hidden Markov Models}
    \label{fig:HMMs}
\end{figure*}

In this section, we will show that under the mixing condition, the filtered/smoothing probabilities are close to the truncated filtered/smoothing probabilities in total variation distance. Then, in the later section, we will establish the statistical guarantees on the truncated filtered/smoothing probabilities. We can view the sequences of filtered/smoothing probabilities as discrete-time stochastic processes taking values in $[0,1]$. The reason of performing an additional truncation step is because  the long-range dependence of such processes poses challenge in establishing statistical properties like restricted eigenvalue condition and the deviation bound. In contrast, by construction, the truncated filtered/smoothing processes are mixing, whose concentration bounds for sample mean is known~\citep{yu1994rates, merlevede2011bernstein}.

Our proof techniques are built upon~\citet{van2008hidden} and ~\citet{yang2017statistical} where we extend and prior results to joint conditional processes $P(Z(t_n), Z(t_{n+1})\mid Y_0,\ldots, Y_N)$ for $n=1,\ldots, N$. Since the structures of the Hidden Markov model discussed in~\citet{yang2017statistical} is different than ours, as shown in Figure~\ref{fig:HMMs}, it requires new analysis to show the mixing property. In Appendix~\ref{ssec:trauncatedforward}--\ref{ssec:truncatedbackward}, we respectively define the forward operator and the backward operator. Then using these operators, we can analyze the properties of truncated smoothing probabilities in Appendix~\ref{ssec:truncatedsmooth}.


\subsection{Truncated Forward Probability}\label{ssec:trauncatedforward}
For shorthand of notation we define $Z_n = Z(t_n)$ for $n=1,\ldots, N$ and $Y_{n_1}^{n_2}=\{Y_{n_1},\ldots, Y_{n_2}\}$ for $n_1<n_2$. Recall that the transition matrix is $P=e^{Qh}$ with $i,j$-th entry being $P_{i,j}$ for $i,j=1,\ldots,k$. Following similarly to the technique developed in Chapter~5 of~\citet{van2008hidden}, we define $F_n$ as the operator of the $n$-th iteration for $n\geq 1$:
\[
(F_n\nu)(j)=\frac{\sum_iP(Y_n\mid Z_n=j, Y_{n-1})P_{ij}\nu(i)}{\sum_{i,j}P(Y_n\mid Z_n=j, Y_{n-1})P_{ij}\nu(i)}
\]
 Then, we can express the filtered probability as
 \begin{align*}
      P(Z_n=j\mid Y_0^n)&=\frac{\sum_i P(Y_n\mid Z_n=j, Y_{n-1})P_{ij}P(Z_{n-1}=i\mid Y_0^{n-1})}{\sum_{i,j}P(Y_n\mid Z_n=j, Y_{n-1})P_{ij}P(Z_{n-1}=i\mid Y_0^{n-1})}\\
      &=F_{n}P(Z_{n-1}\mid Y_0^{n-1})(j).
 \end{align*}
Iterate for $n-1$ times, we obtain $P(Z_n=j\mid Y_0^n) = F_n\cdots F_{2}P(Z_{1}\mid Y_0^1)$.

Additionally, for $n>\ell\geq 1$, we define the transition kernel operator as
\begin{align*}
 K_{\ell\mid n}(i,j)
 &=P(Z_{\ell}=j\mid Z_{\ell-1}=i, Y_{\ell-1}^{n}).
 \end{align*}
 Then, for $n'<\ell\leq n$ we have
 \[
 P(Z_{\ell}=j\mid Y_{n'}^{n})=\sum_{i}K_{\ell\mid n}(i,j)P(Z_{\ell-1}=i\mid Y_{n'}^{n}).
 \]
 For $\ell'\leq n$, we define
 \[
 \nu_{\ell'\mid n}=\frac{ P(Y_{\ell'+1}^n\mid Z_{\ell'}=\cdot, Y_{\ell'})\nu(\cdot)}{\sum_j P(Y_{\ell'+1}^n\mid Z_{\ell'}=j, Y_{\ell'})\nu(j)}.
 \]
 Define 
 $$
 \nu_{\ell'\mid n}^\top K_{\ell'+1\mid n}=\sum_i \nu_{\ell'\mid n}(i)K_{\ell'+1\mid n}(i,\cdot).
 $$
 With simple computation, it leads to 
 \[
    F_n\cdots F_{\ell'+1}\nu=\nu_{\ell'\mid n}^\top K_{\ell'+1\mid n}\cdots K_{n\mid n}.
 \]

 The following lemma shows that the dependence of the filtered probability on the initial distribution decays geometrically. 
 \begin{lemma}[Adapted from Lemma~13 in~\citet{yang2017statistical}]\label{lemma:truncatedforward}
     If Assumption~\ref{assumption:mixing} is satisfied, then for probability distribution $\nu,\nu'$ and $n\geq \ell$, we have
     \[
     \norm{F_n\cdots F_{\ell+1}(\nu-\nu')}_\infty\leq \zeta^{-2}(1-\zeta\pi_{\min})^{n-\ell}\norm{\nu-\nu'}_1.
     \]
     Additionally, we have
     \[
     \max_{i=1,\ldots, k}\abr{p(Z_n=i\mid Y_{1}^n)-p(Z_n=i\mid Y_{n-\ell}^n)}\leq 2\zeta^{-2}(1-\zeta\pi_{\min})^{\ell}.
     \]
 \end{lemma}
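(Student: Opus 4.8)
The plan is to linearize the nonlinear filtering recursion and then exploit a Doeblin-type minorization, inherited from Assumption~\ref{assumption:mixing}, to obtain a geometric contraction through the Dobrushin coefficient of a product of genuine Markov kernels. The starting point is the identity $F_n\cdots F_{\ell+1}\nu=\nu_{\ell\mid n}^\top K_{\ell+1\mid n}\cdots K_{n\mid n}$ recorded above. Because each corrected kernel $K_{m\mid n}$ depends only on the observations and the model parameters, and not on the initializing distribution, the difference of two filters factors cleanly as
\[
F_n\cdots F_{\ell+1}\nu-F_n\cdots F_{\ell+1}\nu'=(\nu_{\ell\mid n}-\nu'_{\ell\mid n})^\top K_{\ell+1\mid n}\cdots K_{n\mid n},
\]
where $\nu_{\ell\mid n}-\nu'_{\ell\mid n}$ is a signed measure of total mass zero. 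This reduces the problem to controlling how a product of stochastic matrices acts on a balanced signed measure.

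The key step is a one-step minorization for each $K_{m\mid n}$. Writing $K_{m\mid n}(i,j)=P_{ij}\,b_m(j)/\sum_{j'}P_{ij'}\,b_m(j')$ with $b_m(j)=P(Y_m^n\mid Z_m=j,Y_{m-1})$ the backward likelihood, and bounding the transition entries via Assumption~\ref{assumption:mixing} as $\zeta\pi_j\le P_{ij}\le \zeta^{-1}\pi_j$, the dependence on the starting state $i$ is squeezed out: there is a probability vector $\rho_m$ with $K_{m\mid n}(i,\cdot)\ge \zeta^2\rho_m(\cdot)$ uniformly in $i$, whose constant is bounded below using the two-sided $\zeta$-bounds on $P$ together with the positivity $\pi_{\min}>0$ of the stationary law. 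By the Dobrushin/Doeblin contraction this bounds the contraction coefficient of each $K_{m\mid n}$ by $1-\zeta\pi_{\min}$, and submultiplicativity of the Dobrushin coefficient over the $n-\ell$ factors yields $(1-\zeta\pi_{\min})^{n-\ell}$ in $\ell_1$. It then remains to absorb the likelihood-tilting $\nu\mapsto\nu_{\ell\mid n}$ together with the final renormalization into the prefactor $\zeta^{-2}$, again using the $\zeta$-bounds on $P$, and to pass from $\ell_1$ to $\ell_\infty$ on the left-hand side, which costs only a constant. Combining these gives the first inequality.

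For the second inequality I would specialize the first. Both $p(Z_n=\cdot\mid Y_1^n)$ and $p(Z_n=\cdot\mid Y_{n-\ell}^n)$ are obtained by applying the same forward operators $F_n\cdots F_{n-\ell+1}$ to two probability distributions at time $n-\ell$, namely $p(Z_{n-\ell}\mid Y_1^{n-\ell})$ and $p(Z_{n-\ell}\mid Y_{n-\ell})$. Invoking the first bound with starting index $n-\ell$ produces exponent $\ell$, and since any two probability vectors satisfy $\norm{\nu-\nu'}_1\le 2$, the claimed bound $2\zeta^{-2}(1-\zeta\pi_{\min})^{\ell}$ follows at once.

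The main obstacle is the minorization/tilting step. Unlike the standard hidden Markov model of \citet{yang2017statistical}, here $Y_n$ depends on $Y_{n-1}$ as well as on $Z_n$ (Figure~\ref{fig:HMMs}), so the conditional-independence relations that make $b_m(j)$ factor as a clean backward likelihood and that decouple $K_{m\mid n}$ from $Z_{m-1}$ must be re-derived for this emission structure before the $\zeta$-bounds can be applied. Tracking that the observation-dependent factors cancel to leave only powers of $\zeta$, and verifying that the tilting $\nu\mapsto\nu_{\ell\mid n}$ does not inflate the $\ell_1$ distance beyond the $\zeta^{-2}$ factor, is the delicate part of the argument.
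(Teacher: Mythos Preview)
Your proposal follows essentially the same route as the paper: linearize through the factorization $F_n\cdots F_{\ell+1}\nu=\nu_{\ell\mid n}^\top K_{\ell+1\mid n}\cdots K_{n\mid n}$, establish a one-step Doeblin minorization of each $K_{m\mid n}$ by a rank-one kernel to get geometric contraction of the product on zero-mass signed measures, absorb the likelihood tilting $\nu\mapsto\nu_{\ell\mid n}$ into the $\zeta^{-2}$ prefactor, and then specialize with $\|\nu-\nu'\|_1\le 2$ for the second claim. The paper likewise notes that the only new wrinkle relative to \citet{yang2017statistical} is the dependence of the emission on $Y_{n-1}$, which you correctly flag.

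One small inconsistency to fix: you write the minorization as $K_{m\mid n}(i,\cdot)\ge\zeta^{2}\rho_m(\cdot)$ but then claim Dobrushin coefficient $1-\zeta\pi_{\min}$; these do not match. The paper (following \citet{yang2017statistical}) obtains the minorization constant $\zeta\pi_{\min}$, not $\zeta^{2}$, by taking $\rho_m(j)\propto \pi_j\,b_m(j)$ and using $P_{ij}\ge\zeta\pi_j$ in the numerator together with $\pi_{j'}\ge\pi_{\min}\ge\pi_{\min}P_{ij'}$ in the denominator, which is exactly what produces the stated rate $(1-\zeta\pi_{\min})^{n-\ell}$.
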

 \begin{proof}[Proof of Lemma~\ref{lemma:truncatedforward}]
     The proof is similar to the proof of Lemma~13 in~\citet{yang2017statistical}, except that the transition kernel operator and the forward operator defined in this manuscript are different from the ones introduced in~\citet{yang2017statistical}. We study the case that the observation distribution $P(Y_n\mid Z_{n}, Y_{n-1})$ depends on the previous observation $Y_{n-1}$, whereas the observation distribution defined in~\citet{yang2017statistical} is independent to the previous observation $Y_{n-1}$ given $Z_{n}$. 
     Despite the differences, under Assumption~\ref{assumption:mixing}, \citet{yang2017statistical} showed that $K_{\ell\mid n}\geq  (\zeta\pi_{\min})P_{\ell\mid  n}$ with 
     \[
     P_{\ell\mid  n} = \frac{\pi(j)P(Y_{n+1}^{n'}\mid Y_n, Z_n=j)P(Y_n\mid Z_n=j, Y_{n-1})}{\sum_j \pi(j)P(Y_{n+1}^{n'}\mid Y_n, Z_n=j )P(Y_n\mid Z_n=j, Y_{n-1})}.
     \] 
     Hence, following the Doeblin minorization condition, we can decompose the transition kernel operator as
     \[
     K_{\ell\mid n}=\zeta\pi_{\min}P_{\ell\mid n} + (1-\zeta\pi_{\min})Q_{\ell\mid n},
     \]
     where $Q_{\ell\mid n}$ is some transition kernel operator. 
     Then, the rest follows similarly to the proof of Lemma~13 in~\citet{yang2017statistical}. 
 \end{proof}

The mixing property, Assumption~\ref{assumption:mixing} also guarantees that the conditional probability $P(Y_{\ell+1}^n\mid Z_\ell=\cdot, Y_\ell)$ is well-behaved. 
 \begin{lemma}\label{lemma:mixing:forward:bd}
     Under Assumption~\ref{assumption:mixing} and $n\geq\ell$, the following inequality holds:
     \[
     \frac{\max_i P(Y_{\ell+1}^n\mid Z_\ell=i, Y_\ell)}{\min_i P(Y_{\ell+1}^n\mid Z_\ell=i, Y_\ell)}\leq \zeta^{-2}. 
     \]
 \end{lemma}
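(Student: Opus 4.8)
The plan is to decompose the conditional likelihood $P(Y_{\ell+1}^n \mid Z_\ell = i, Y_\ell)$ by conditioning on the next hidden state $Z_{\ell+1}$, and then to exploit the one-step minorization contained in Assumption~\ref{assumption:mixing}. First I would condition on $Z_{\ell+1}$ and write
\[
P(Y_{\ell+1}^n \mid Z_\ell = i, Y_\ell) = \sum_{j=1}^k P(Y_{\ell+1}^n \mid Z_{\ell+1} = j, Z_\ell = i, Y_\ell)\, P(Z_{\ell+1} = j \mid Z_\ell = i, Y_\ell).
\]
The crucial observation is that, under the graphical structure of the proposed model depicted in Figure~\ref{fig:HMMs}(b), the dependence on $i$ collapses: on one hand $Z_{\ell+1}$ is independent of $Y_\ell$ given $Z_\ell$, so $P(Z_{\ell+1}=j \mid Z_\ell = i, Y_\ell) = P_{ij}$; on the other hand, conditioning on $Z_{\ell+1}$ and $Y_\ell$ renders the future block $Y_{\ell+1}^n$ independent of $Z_\ell$, so $P(Y_{\ell+1}^n \mid Z_{\ell+1} = j, Z_\ell = i, Y_\ell) = P(Y_{\ell+1}^n \mid Z_{\ell+1} = j, Y_\ell) =: a_j$, a nonnegative quantity that does not depend on $i$.

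With this decomposition in hand, the lemma reduces to $P(Y_{\ell+1}^n \mid Z_\ell = i, Y_\ell) = \sum_j a_j P_{ij}$, so the entire dependence on $i$ sits in the transition row $P_{i\cdot}$. I would then invoke the sandwich $\zeta \pi_j \leq P_{ij} \leq \zeta^{-1}\pi_j$ from Assumption~\ref{assumption:mixing}. Because every $a_j \geq 0$, this yields, uniformly in $i$,
\[
\zeta \sum_j a_j \pi_j \;\leq\; \sum_j a_j P_{ij} \;\leq\; \zeta^{-1}\sum_j a_j \pi_j.
\]
Taking the maximum over $i$ in the numerator and the minimum over $i$ in the denominator, the common factor $\sum_j a_j \pi_j$ cancels and the ratio is bounded by $\zeta^{-1}/\zeta = \zeta^{-2}$, which is exactly the claim.

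I expect the main obstacle to be the careful justification of the two conditional independence statements, since the emission structure here differs from the standard HMM of~\citet{yang2017statistical} in that $Y_n$ also depends on $Y_{n-1}$ (cf. Figure~\ref{fig:HMMs}). This extra edge means one must check, via d-separation, that the path $Z_\ell \to Y_\ell \to Y_{\ell+1} \to \cdots$ is already blocked once $Y_\ell$ enters the conditioning set, and that the collider at $Y_{\ell+1}$ opens no spurious dependence on $Z_\ell$; one must likewise confirm that conditioning on $Z_\ell$ blocks the fork $Z_{\ell+1} \leftarrow Z_\ell \to Y_\ell$. Once these separations are verified, the remaining argument is an elementary two-sided bound and presents no difficulty.
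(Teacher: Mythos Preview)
Your proposal is correct and follows essentially the same route as the paper: both arguments isolate the dependence on $i$ in the single transition $P(Z_{\ell+1}=j\mid Z_\ell=i)=P_{ij}$ and then sandwich it via $\zeta\pi_j\le P_{ij}\le\zeta^{-1}\pi_j$ from Assumption~\ref{assumption:mixing}. The paper expands the full sum over $z_{\ell+1},\ldots,z_n$ rather than packaging the $i$-free part into your $a_j$, but the substance and the conditional-independence facts used are identical.
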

 \begin{proof}[Proof of Lemma~\ref{lemma:mixing:forward:bd}]
 Write
     \begin{align}
         P(Y_{\ell+1}^n\mid Z_\ell=i, Y_\ell) &= \sum_{z_n,\ldots, z_{\ell+1}} P(Y_n\mid z_{n}, Y_{n-1})P(z_n\mid z_{n-1})\cdots P(Y_{\ell+1}\mid z_{\ell+1}, Y_\ell)P(z_{\ell+1}\mid Z_\ell=i)\notag\\
         &\leq \zeta^{-1}\sum_{z_n,\ldots, z_{\ell+1}} P(Y_n\mid z_{n}, Y_{n-1})P(z_n\mid z_{n-1})\cdots P(Y_{\ell+1}\mid z_{\ell+1}, Y_\ell)\pi(z_{\ell+1}),\label{eq:tmp14}
     \end{align}
     where the last line follows from Assumption~\ref{assumption:mixing}. Similarly, we have
     \begin{align}
     P(Y_{\ell+1}^n\mid Z_\ell=i, Y_\ell) \geq \zeta\sum_{z_n,\ldots, z_{\ell+1}} P(Y_n\mid z_{n}, Y_{n-1})P(z_n\mid z_{n-1})\cdots P(Y_{\ell+1}\mid z_{\ell+1}, Y_\ell)\pi(z_{\ell+1})\label{eq:tmp15}
     \end{align}
     Taking the maximum with respect to $i$ on the left side of~\eqref{eq:tmp14} and minimum with respect to $i$ on the left side of~\eqref{eq:tmp15}, we can conclude that
     \[
     \frac{\max_i P(Y_{\ell+1}^n\mid Z_\ell=i, Y_\ell)}{\min_i P(Y_{\ell+1}^n\mid Z_\ell=i, Y_\ell)}\leq \zeta^{-2}. 
     \]
     
 \end{proof}
\subsection{Truncated Backward Probability}\label{ssec:truncatedbackward}
To define the backward recursion, we first look at
\begin{align*}
    P(Z_n\mid Z_{n+1}, Y_n) &= \frac{P(Z_n\mid Y_n)P(Z_{n+1}\mid Z_n, Y_n)}{P(Z_{n+1}\mid Y_n)} = \frac{P(Z_n\mid Y_n)P(Z_{n+1}\mid Z_n)}{P(Z_{n+1}\mid Y_n)},
\end{align*}
where the equality follows by the fact that $Z_{n+1}\indep Y_n\mid Z_{n}$. 
With this equality, let $N\geq n'>n$, we have
\begin{align*}
    P(Y_{n+1}^{n'}, Y_n, Z_{n+1}, Z_n) &= P(Z_n\mid Z_{n+1}, Y_n)P(Y_n\mid Y_{n+1}, Z_{n+1})P(Y_{n+1}^{n'}, Z_{n+1})\\
    &= \frac{P(Z_n\mid Y_n)P(Z_{n+1}\mid Z_n)}{P(Z_{n+1}\mid Y_n)} P(Y_n\mid Z_{n+1}, Y_{n+1})P(Y_{n+1}^{n'}, Z_{n+1}). 
\end{align*}
Therefore, we can write the backward recursion formula as
\begin{align*}
    P(Z_n=i\mid Y_n^{n'})=\frac{\sum_{j} \frac{P(Z_n=i\mid Y_n)}{P(Z_{n+1}=j\mid Y_{n})}P(Y_n\mid Z_{n+1}=j, Y_{n+1})P_{ij}P(Z_{n+1}=j\mid Y_{n+1}^{n'})}{\sum_{i, j} \frac{P(Z_n=i\mid Y_n)}{P(Z_{n+1}=j\mid Y_{n})}P(Y_n\mid Z_{n+1}=j, Y_{n+1})P_{ij}P(Z_{n+1}=j\mid Y_{n+1}^{n'})}.
\end{align*}
Hence, we define $\tilde{F}_n$ similarly as $F_n$:
\[
(\tilde{F}_n\tilde{\nu})(i) = \frac{\sum_j \frac{P(Z_n=i\mid Y_n)}{P(Z_{n+1}=j\mid Y_{n})} P_{ij}P(Y_n\mid Z_{n+1}=j, Y_{n+1})\tilde{\nu}(j)}{\sum_{i,j} \frac{P(Z_n=i\mid Y_n)}{P(Z_{n+1}=j\mid Y_{n})} P_{ij}P(Y_n\mid Z_{n+1}=j, Y_{n+1})\tilde{\nu}(j)}.
\]
We immediately get $P(Z_n\mid Y_n^{N})=\tilde{F}_n\cdots \tilde{F}_{N-1}P(Z_N\mid Y_N)$. 
Additionally, for $n\leq \ell\leq n'$, we define the backward transition kernel operator as
\[
\tilde{K}_{n\mid \ell}(j,i) = \frac{P_{ij}P(Y_{\ell+1}\mid Z_{\ell+1}=j, Y_\ell)P(Z_\ell=i, Y_n^{\ell})}{\sum_i P_{ij}P(Y_{\ell+1}\mid Z_{\ell+1}=j, Y_\ell)P(Z_\ell=i, Y_n^{\ell})} = P(Z_{\ell}=i\mid Z_{\ell+1}=j,Y_n^{\ell+1}),
\]
which leads to 
\[
P(Z_\ell\mid Y_n^{n'}) = \sum_{j} \tilde{K}_{n\mid \ell}(j,\cdot)P(Z_{\ell+1}=j\mid Y_n^{n'}).
\]
More generally, define
\[
\tilde{\nu}_{n\mid\ell}=\frac{P(Y_n^{\ell-1}\mid Z_\ell=\cdot, Y_\ell)\tilde{\nu}(\cdot)}{\sum_{j}P(Y_n^{\ell-1}\mid Z_\ell=j, Y_\ell)\tilde{\nu}(j)}.
\]
Then, we define
\[
\tilde{\nu}_{n\mid \ell}^\top\tilde{K}_{n\mid\ell-1} = \sum_j\tilde{\nu}_{n\mid \ell}(j)\tilde{K}_{n\mid\ell-1}(j,\cdot).
\]
Finally, we have
\begin{equation}\label{eq:fn_recurse}
\tilde{F}_n\cdots\tilde{F}_{\ell-1}\tilde{\nu} = \tilde{\nu}_{n\mid\ell}^\top\tilde{K}_{n\mid \ell-1}\cdots\tilde{K}_{n\mid n}. 
\end{equation}

\begin{lemma}\label{lemma:truncatedbackward}
    Under Assumption~\ref{assumption:reversible}, ~\ref{assumption:mixing}, and for any probability distributions $\tilde{\nu},\tilde{\mu}$ and $n\leq \ell$, we have
    \[
\norm{\tilde{F}_n\cdots\tilde{F}_{\ell-1}(\tilde{\nu}-\tilde{\mu})}_\infty\leq 2(\pi_{\min}\zeta)^{-2}\cbr{1-(\pi_{\min}\zeta)^2}^{\ell-n}\norm{\tilde{\nu}-\tilde{\mu}}_1.
    \]
    Furthermore,
    \[
    \max_{i=1,\ldots,k}\abr{p(Z_n=i\mid Y_n^{N})-p(Z_n=i\mid Y_n^{\ell})}\leq 4(\pi_{\min}\zeta)^{-2}\cbr{1-(\pi_{\min}\zeta)^2}^{\ell-n}. 
    \]
\end{lemma}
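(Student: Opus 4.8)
The plan is to establish the contraction of the backward filter by exploiting the Doeblin-type minorization of the backward transition kernel $\tilde{K}_{n\mid\ell}$, exactly mirroring the argument already carried out for the forward filter in Lemma~\ref{lemma:truncatedforward}. First I would show that under Assumption~\ref{assumption:reversible} and~\ref{assumption:mixing} the backward kernel satisfies a uniform lower bound of the form $\tilde{K}_{n\mid\ell}(j,i)\geq (\pi_{\min}\zeta)^2\, \tilde{P}_{n\mid\ell}(i)$ for some probability vector $\tilde{P}_{n\mid\ell}$ that does not depend on $j$. The reversibility is the key ingredient here: the backward kernel $\tilde{K}_{n\mid\ell}(j,i)=P(Z_\ell=i\mid Z_{\ell+1}=j, Y_n^{\ell+1})$ involves the \emph{time-reversed} transition, and reversibility lets me rewrite $P_{ij}$ in terms of $P_{ji}$ weighted by the stationary distribution, so that the bound $\zeta\pi_j\le P_{ji}\le \zeta^{-1}\pi_j$ from Assumption~\ref{assumption:mixing} can be applied. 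Combining the lower bound $P_{ji}\ge \zeta\pi_i$ with the control of the conditional observation likelihood (an analogue of Lemma~\ref{lemma:mixing:forward:bd}, which bounds the ratio $\max_i P(Y_n^{\ell-1}\mid Z_\ell=i,Y_\ell)/\min_i(\cdots)$ by $\zeta^{-2}$) yields the factor $(\pi_{\min}\zeta)^2$.

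Once the minorization is in place, I would write the Doeblin decomposition
\[
\tilde{K}_{n\mid\ell}=(\pi_{\min}\zeta)^2\,\tilde{P}_{n\mid\ell}+\bigl(1-(\pi_{\min}\zeta)^2\bigr)\tilde{Q}_{n\mid\ell},
\]
where $\tilde{Q}_{n\mid\ell}$ is again a stochastic kernel. Applying this kernel to a difference of probability distributions annihilates the rank-one minorizing part, so each application of a backward operator contracts the total-variation (equivalently $\ell_1$) discrepancy by the factor $1-(\pi_{\min}\zeta)^2$. Using the recursion~\eqref{eq:fn_recurse}, namely $\tilde{F}_n\cdots\tilde{F}_{\ell-1}\tilde{\nu}=\tilde{\nu}_{n\mid\ell}^\top\tilde{K}_{n\mid\ell-1}\cdots\tilde{K}_{n\mid n}$, I would iterate the contraction over the $\ell-n$ intervening steps. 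The leading prefactor $2(\pi_{\min}\zeta)^{-2}$ comes from converting between the normalized filtered measures $\tilde{\nu}_{n\mid\ell}$ and the raw inputs $\tilde{\nu}$ (the normalization ratio is controlled by the backward analogue of Lemma~\ref{lemma:mixing:forward:bd}) and from passing from the $\ell_1$ bound on inputs to the $\ell_\infty$ bound on outputs.

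For the second, concrete inequality I would instantiate the first with $\tilde{\nu}=P(Z_N\mid Y_N^{?})$ and $\tilde{\mu}$ the filter started from truncation at level $\ell$: the two smoothing probabilities $p(Z_n\mid Y_n^N)$ and $p(Z_n\mid Y_n^\ell)$ are obtained by running the same backward operators $\tilde{F}_n\cdots\tilde{F}_{\ell-1}$ on two different initializations at index $\ell$, so their difference is bounded by the contraction estimate applied to $\|\tilde{\nu}-\tilde{\mu}\|_1\le 2$, producing the factor $4(\pi_{\min}\zeta)^{-2}$.

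\textbf{Main obstacle.}
The hard part will be establishing the backward minorization with the \emph{correct} constant $(\pi_{\min}\zeta)^2$ rather than a weaker $\zeta\pi_{\min}$ as in the forward case. The backward kernel carries two sources of asymmetry — the time-reversal of the transition matrix and the non-Markovian dependence of $P(Z_n\mid Y_n)$ and the normalizing term $P(Z_{n+1}\mid Y_n)$ on the observation history — so I must verify carefully, using reversibility to symmetrize $\pi_i P_{ij}=\pi_j P_{ji}$, that both the transition part and the likelihood-ratio part each contribute a factor $\zeta\pi_{\min}$, giving the squared constant. Getting the bookkeeping of these two factors right, and confirming that the residual kernel $\tilde{Q}_{n\mid\ell}$ remains genuinely stochastic after subtracting the minorizing mass, is where the analysis departs from the verbatim forward argument and requires the most care.
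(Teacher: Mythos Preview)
Your proposal is correct and follows essentially the same Doeblin-minorization strategy as the paper: lower-bound $\tilde{K}_{n\mid\ell}$ by $(\zeta\pi_{\min})^2$ times a kernel independent of $j$, decompose, iterate through~\eqref{eq:fn_recurse}, and then instantiate with two initial measures whose $\ell_1$ distance is at most $2$. One small correction on the bookkeeping: in the paper the squared constant $(\zeta\pi_{\min})^2$ in the minorization comes \emph{entirely} from the two-sided transition bound $\zeta\pi_{\min}\,\pi(i)\le P_{ij}\le(\zeta\pi_{\min})^{-1}\pi(i)$ (this is where reversibility is used), not from a likelihood factor; the backward likelihood-ratio bound $(\zeta\pi_{\min})^{-2}$ enters only afterwards, to pass from $\|\tilde{\nu}_{n\mid\ell}-\tilde{\mu}_{n\mid\ell}\|$ back to $\|\tilde{\nu}-\tilde{\mu}\|_1$ and produce the prefactor $2(\pi_{\min}\zeta)^{-2}$.
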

\begin{proof}[Proof of Lemma~\ref{lemma:truncatedbackward}]
    The proof follows similarly to the proof of Lemma~\ref{lemma:truncatedforward}. To show the first statement, it suffices to show the contraction of the transition kernel operator $\tilde{K}_{n\mid \ell}$ following~\eqref{eq:fn_recurse}. Under Assumption~\ref{assumption:mixing}--~\ref{assumption:reversible}, we have
    \begin{equation}
        \zeta\pi_{\min}\pi(i)\leq P_{ij}\leq (\zeta\pi_{\min})^{-1}\pi(i).
    \end{equation}
    With this fact, $\tilde{K}_{n\mid \ell}$ is lower bounded by
    \begin{align*}
        \tilde{K}_{n\mid \ell}(j,i) &= \frac{P_{ij}P(Y_{\ell+1}\mid Z_\ell=i, Y_\ell)P(Z_\ell=i, Y_n^{\ell})}{\sum_i P_{ij}P(Y_{\ell+1}\mid Z_\ell=i, Y_\ell)P(Z_\ell=i, Y_n^{\ell})}\\
        &\geq (\zeta\pi_{\min})^2\frac{\pi(i)P(Y_{\ell+1}\mid Z_\ell=i, Y_\ell)P(Z_\ell=i, Y_n^{\ell})}{\sum_i \pi(i)P(Y_{\ell+1}\mid Z_\ell=i, Y_\ell)P(Z_\ell=i, Y_n^{\ell})}.
    \end{align*}
    Define 
    $$
    \tilde{P}_{n\mid \ell}(j, i)=\frac{\pi(i)P(Y_{\ell+1}\mid Z_\ell=i, Y_\ell)P(Z_\ell=i, Y_n^{\ell})}{\sum_i \pi(i)P(Y_{\ell+1}\mid Z_\ell=i, Y_\ell)P(Z_\ell=i, Y_n^{\ell})},
    $$ 
    for all $i,j=1,\ldots, k$.  Then, we can write
    \[
    \tilde{K}_{n\mid\ell} = (\zeta\pi_{\min})^2\tilde{P}_{n\mid \ell}+\cbr{1-(\zeta\pi_{\min})^2}\tilde{Q}_{n\mid\ell},
    \]
    where $\tilde{Q}_{n\mid\ell}$ is a transition kernel operator. 
    
    Note that for any $\tilde{\nu}, \tilde{\mu}$, we have $\tilde{\nu}_{n\mid \ell+1}^\top \tilde{P}_{n\mid \ell} = \tilde{\mu}_{n\mid \ell+1}^\top \tilde{P}_{n\mid \ell}$.
    Iterate over $n-\ell$ iterations, we have  
    \[
    \tilde{F}_n\cdots\tilde{F}_{\ell-1}(\tilde{\nu}-\tilde{\mu}) = \cbr{1-(\zeta\pi_{\min})^2}^{\ell-n} (\tilde{\nu}_{n\mid\ell}-\tilde{\mu}_{n\mid\ell})^\top  \tilde{Q}_{n\mid\ell-1}\cdots \tilde{Q}_{n\mid n}.
    \]
    It follows that
    \begin{align}
        \norm{\tilde{F}_n\cdots\tilde{F}_{\ell-1}(\tilde{\nu}-\tilde{\mu})}_\infty &= \cbr{1-(\zeta\pi_{\min})^2}^{\ell-n} \norm{(\tilde{\nu}_{n\mid\ell}-\tilde{\mu}_{n\mid\ell})^\top  \tilde{Q}_{n\mid\ell-1}\cdots \tilde{Q}_{n\mid n}}_\infty\notag\\
        &\leq \cbr{1-(\zeta\pi_{\min})^2}^{\ell-n} \norm{\tilde{\nu}_{n\mid\ell}-\tilde{\mu}_{n\mid\ell}}_2\prod_{i=n}^\ell \norm{\tilde{Q}_{n\mid i}}_2.\notag
        \intertext{Since $\tilde{Q}_{n\mid\ell}$ is a transition kernel operator, it follows that $\norm{\tilde{Q}_{n\mid\ell}}_2\leq 1$ for all $i=n,\ldots,\ell$. Then, we can further bound the above display as}
        &\leq \cbr{1-(\zeta\pi_{\min})^2}^{\ell-n} \norm{\tilde{\nu}_{n\mid\ell}-\tilde{\mu}_{n\mid\ell}}_2\notag\\
        &\leq \cbr{1-(\zeta\pi_{\min})^2}^{\ell-n} 
        \bigg\{\bignorm{\frac{P(Y_n^{\ell-1}\mid Z_\ell=\cdot, Y_\ell)}{\sum_{j}P(Y_n^{\ell-1}\mid Z_\ell=j, Y_\ell)\tilde{\nu}(j)}(\tilde{\nu}(\cdot)-\tilde{\mu}(\cdot))}_2\notag\\
&\quad
+\rbr{\frac{1}{\sum_{j}P(Y_n^{\ell-1}\mid Z_\ell=j, Y_\ell)\tilde{\nu}(j)}
-
\frac{1}{\sum_{j}P(Y_n^{\ell-1}\mid Z_\ell=j, Y_\ell)\tilde{\mu}(j)}}\notag\\
&\quad\quad\times\norm{{P(Y_n^{\ell-1}\mid Z_\ell=\cdot, Y_\ell)\tilde{\mu}(\cdot)}
}_2
\bigg\}.
\label{eq:tmp10}
    \end{align}
Note that 
\begin{align}
    \bignorm{\frac{P(Y_n^{\ell-1}\mid Z_\ell=\cdot, Y_\ell)}{\sum_{j}P(Y_n^{\ell-1}\mid Z_\ell=j, Y_\ell)\tilde{\nu}(j)}(\tilde{\nu}(\cdot)-\tilde{\mu}(\cdot))}_2&\leq \bignorm{\frac{P(Y_n^{\ell-1}\mid Z_\ell=\cdot, Y_\ell)}{\sum_{j}P(Y_n^{\ell-1}\mid Z_\ell=j, Y_\ell)\tilde{\nu}(j)}(\tilde{\nu}(\cdot)-\tilde{\mu}(\cdot))}_1\notag\\
    &\leq \frac{\max_{i} P(Y_n^{\ell-1}\mid Z_\ell=i, Y_\ell)}{\min_i P(Y_n^{\ell-1}\mid Z_\ell=i,Y_\ell)}\norm{\tilde{\nu}-\tilde{\mu}}_1.\label{eq:tmp11}
\end{align}
Furthermore,
\begin{align}
    &\rbr{\frac{1}{\sum_{j}P(Y_n^{\ell-1}\mid Z_\ell=j, Y_\ell)\tilde{\nu}(j)}
-
\frac{1}{\sum_{j}P(Y_n^{\ell-1}\mid Z_\ell=j, Y_\ell)\tilde{\mu}(j)}}
\norm{P(Y_n^{\ell-1}\mid Z_\ell=\cdot, Y_\ell)\tilde{\mu}(\cdot)
}_2.\notag\\
\intertext{Applying the fact that ${\norm{P(Y_n^{\ell-1}\mid Z_\ell=\cdot, Y_\ell)\tilde{\mu}(\cdot)
}_2}\leq {\norm{P(Y_n^{\ell-1}\mid Z_\ell=\cdot, Y_\ell)\tilde{\mu}(\cdot)
}_1}$, we can further bound the above display as}
&\quad\quad\quad\leq \frac{1}{\sum_{j}P(Y_n^{\ell-1}\mid Z_\ell=j, Y_\ell)\tilde{\nu}(j)}\rbr{\sum_{j}P(Y_n^{\ell-1}\mid Z_\ell=j, Y_\ell)\tilde{\mu}(j)
-
\sum_{j}P(Y_n^{\ell-1}\mid Z_\ell=j, Y_\ell)\tilde{\nu}(j)}\notag\\
&\quad\quad\quad\leq
\frac{\max_i P(Y_n^{\ell-1}\mid Z_\ell=i, Y_\ell)}{\min_i P(Y_n^{\ell-1}\mid Z_\ell=i, Y_\ell)}\norm{\tilde{\mu}-\tilde{\nu}}_1.\label{eq:tmp12}
\end{align}
Plugging~\eqref{eq:tmp11}--~\eqref{eq:tmp12} into~\eqref{eq:tmp10}, we arrive at
\begin{equation}
    \norm{\tilde{F}_n\cdots\tilde{F}_{\ell-1}(\tilde{\nu}-\tilde{\mu})}\leq
2\cbr{1-(\zeta\pi_{\min})^2}^{\ell-n} \bigg(\frac{\max_{i} P(Y_n^{\ell-1}\mid Z_\ell=i, Y_\ell)}{\min_i P(Y_n^{\ell-1}\mid Z_\ell=i,Y_\ell)}\norm{\tilde{\nu}-\tilde{\mu}}_1\bigg).\label{eq:tmp13}
\end{equation}

Note that we have $(\zeta\pi_{\min})C_0\leq P(Y_n^{\ell-1}\mid Z_\ell=i, Y_\ell)\leq (\zeta\pi_{\min})^{-1}C_0$,
where
\begin{multline*}
    C_0 = \sum_{z_n,\ldots, z_{\ell-1}}P(z_n\mid z_{n+1}, Y_n)P(Y_n\mid Y_{n+1}, z_{n+1})\cdots P(z_{\ell-2}\mid z_{\ell-1}, Y_{\ell-2})P(Y_{\ell-2}\mid z_{\ell-1}, Y_{\ell-1}\\\times\frac{P(z_{\ell-1}\mid Y_{\ell-1})}{P(z_{\ell}\mid Y_{\ell-1})}P(Y_{\ell-1}\mid Z_\ell=i, Y_\ell)\pi(i). 
\end{multline*}
Hence,
\[
\frac{\max_{i} P(Y_n^{\ell-1}\mid Z_\ell=i, Y_\ell)}{\min_i P(Y_n^{\ell-1}\mid Z_\ell=i,Y_\ell)}\leq (\zeta\pi_{\min})^{-2}. 
\]
Applying the above inequality to~\eqref{eq:tmp13}, we conclude the first statement. The second statement is shown by choosing $\tilde{\nu}=P(Z_\ell\mid Y_\ell)$,  $\tilde{\mu}=P(Z_\ell\mid Y_\ell^N)$ and using the fact that $\norm{\tilde{\nu}-\tilde{\mu}}_1\leq 2$. 
\end{proof}

\subsection{Truncated Smoothing Probability}\label{ssec:truncatedsmooth}

\begin{proof}
    [Proof of Lemma~\ref{lemma:truncated_smoooth}]
    First, we can express
    \begin{align*}
        P\bigg(Z_n, Z_{n+1}&, Y_{(n-r)\vee 0}^{(n+r)\wedge N}\bigg)\\
        &= P\rbr{Z_n, Y_{(n-r)\vee 0}^n}P(Z_{n+1}\mid Z_n) P(Y_{n+1}\mid Z_{n+1}, Y_n)P\rbr{Y_{n+2}^{(n+r)\wedge N}\mid Z_{n+1}, Y_{n+1
        }}\\
        &=P\rbr{Z_n\mid Y_{(n-r)\vee 0}^n}P\rbr{Z_{n+1}\mid Y_{n+1}^{(n+r)\wedge N}}P(Z_{n+1}\mid Z_n)P(Y_{n+1}\mid Z_{n+1}, Y_n)
        \\
        &\quad\times \frac{P\rbr{Y_{n+1}^{(n+r)\wedge N}}P\rbr{Y_{(n-r)\vee 0}^n}}{P(Z_{n+1}, Y_{n+1})}.
    \end{align*}
    This implies that
    \begin{align*}
        P\bigg(&Z_n=i, Z_{n+1}=j\mid Y_{(n-r)\vee 0}^{(n+r)\wedge N}\bigg)\\
        &=P\rbr{Z_n\mid Y_{(n-r)\vee 0}^n}P\rbr{Z_{n+1}\mid Y_{n+1}^{(n+r)\wedge N}}
        \frac{P\rbr{Y_{n+1}^{(n+r)\wedge N}\mid Y_n}}{P\rbr{ Y_{n+1}^{(n+r)\wedge N}\mid Y_{(n-r)\vee 0}^n}}\frac{P_{ij}P(Y_{n+1}\mid Z_{n+1}, Y_n)P(Y_n)}{P(Z_{n+1}, Y_{n+1})}.
    \end{align*}
    Similarly, we can write
    \begin{align*}
        P\bigg(&Z_n=i, Z_{n+1}=j\mid Y_{1}^{N}\bigg) =P\rbr{Z_n\mid Y_{1}^n}P\rbr{Z_{n+1}\mid Y_{n+1}^{N}}
        \frac{P\rbr{Y_{n+1}^{N}\mid Y_n}}{P\rbr{ Y_{n+1}^{N}\mid Y_0^n}}
        \frac{P_{ij}P(Y_{n+1}\mid Z_{n+1}, Y_n)P(Y_n)}{P(Z_{n+1}, Y_{n+1})}.
    \end{align*}
    Define $a_{ij}={P_{ij}P(Y_{n+1}\mid Z_{n+1}, Y_n)}P(Y_n)/{P(Z_{n+1}, Y_{n+1})}$ for $i,j=1,\ldots, k$ and 
    \begin{equation}
    B=\abr{\frac{P(Y_{n+1}^N)P(Y_0^n)}{P(Y_0^N)}\frac{P\rbr{Y_{(n-r)\vee 0}^{(n+r)\wedge N}}}{P\rbr{Y_{n+1}^{(n+r)\wedge N}}P\rbr{Y_{(n-r)\vee 0}^n}}-1}\label{eq:tmp24}
    \end{equation}

    Hence, we have
    \begin{align}
        &\abr{P\bigg(Z_n=i, Z_{n+1}=j\mid Y_{(n-r)\vee 0}^{(n+r)\wedge N}\bigg)
        -
        P\bigg(Z_n=i, Z_{n+1}=j\mid Y_{1}^{N}\bigg)
        }\notag\\
        &\quad\leq 
a_{ij}\Bigg\{
\frac{P(Y_{n+1}^N\mid Y_n)}{P(Y_{n+1}^N\mid Y_0^n)}\abr{
P(Z_n\mid Y_0^n)P(Z_{n+1}\mid Y_{n+1}^N)-
P\rbr{Z_n\mid Y_{(n-r)\vee 0}^n}P\rbr{Z_{n+1}\mid Y_{n+1}^{(n+r)\wedge N}}
}\notag\\
&\quad\quad\quad\quad+B
\frac{P\rbr{Y_{n+1}^{(n+r)\wedge N}\mid Y_n}}{P\rbr{Y_{n+1}^{(n+r)\wedge N}\mid Y_{(n-r)\vee 0}^n}}P\rbr{Z_n\mid Y_{(n-r)\vee 0}^n}P\rbr{Z_{n+1}\mid Y_{n+1}^{(n+r)\wedge N}}
\Bigg\}\notag\\
&\quad\leq 
\delta_{\min}^{-1}\Bigg\{
\frac{P(Y_{n+1}^N\mid Y_n)}{P(Y_{n+1}^N\mid Y_0^n)}\abr{
P(Z_n\mid Y_0^n)P(Z_{n+1}\mid Y_{n+1}^N)-
P\rbr{Z_n\mid Y_{(n-r)\vee 0}^n}P\rbr{Z_{n+1}\mid Y_{n+1}^{(n+r)\wedge N}}
}\notag\\
&\quad\quad\quad\quad+B
\frac{P\rbr{Y_{n+1}^{(n+r)\wedge N}\mid Y_n}}{P\rbr{Y_{n+1}^{(n+r)\wedge N}\mid Y_{(n-r)\vee 0}^n}}P\rbr{Z_n\mid Y_{(n-r)\vee 0}^n}P\rbr{Z_{n+1}\mid Y_{n+1}^{(n+r)\wedge N}}
\Bigg\}.\label{eq:bound:truncatesmooth}
    \end{align}
First we note that
\begin{align}
    \frac{P(Y_{n+1}^N)}{P(Y_{n+1}^N\mid Y_0^n)}=
    \frac{\sum_{i}P(Y_{n+1}^N\mid Z_n=i, Y_n)P(Z_n=i\mid Y_n)}{\sum_iP(Y_{n+1}^N\mid Z_n=i, Y_n)P(Z_n=i\mid Y_0^n)}
    \leq 
    \frac{\max_{i}P(Y_{n+1}^N\mid Z_n=i, Y_n)}{\min_iP(Y_{n+1}^N\mid Z_n=i, Y_n)}\leq \zeta^{-2},\label{eq:tmp21}
\end{align}
where the last inequality follows Lemma~\ref{lemma:mixing:forward:bd}. Similarly, it can be shown that 
\begin{equation}
\frac{P\rbr{Y_{n+1}^{(n+r)\wedge N}\mid Y_n}}{P\rbr{Y_{n+1}^{(n+r)\wedge N}\mid Y_{(n-r)\vee 0}^n}}\leq\zeta^{-2}.\label{eq:tmp22}
\end{equation}
Secondly, by Lemma~\ref{lemma:truncatedforward} and Lemma~\ref{lemma:truncatedbackward}, we have
\begin{align}
|
P(Z_n\mid &Y_0^n)P(Z_{n+1}\mid Y_{n+1}^N)-
P\rbr{Z_n\mid Y_{(n-r)\vee 0}^n}P\rbr{Z_{n+1}\mid Y_{n+1}^{(n+r)\wedge N}}
|\notag\\
&\leq \abr{P(Z_n\mid Y_0^n)-P\rbr{Z_n\mid Y_{(n-r)\vee 0}^n}}+\abr{P(Z_{n+1}\mid Y_{n+1}^N)-P\rbr{Z_{n+1}\mid Y_{n+1}^{(n+r)\wedge N}}}\notag\\
&\leq 2\zeta^{-2}(1-\zeta\pi_{\min})^r + 4(\zeta\pi_{\min})^{-2}\cbr{1-(\zeta\pi_{\min})^2}^{r-1}\leq  6(\zeta\pi_{\min})^{-2}\cbr{1-(\zeta\pi_{\min})^2}^{r-1}.\label{eq:tmp23} 
\end{align}
Collecting the results from~\eqref{eq:tmp21}--~\eqref{eq:tmp23} and Lemma~\ref{lemma:A_1}, we can bound~\eqref{eq:bound:truncatesmooth} as
\begin{multline*}
    \abr{P\bigg(Z_n=i, Z_{n+1}=j\mid Y_{(n-r)\vee 0}^{(n+r)\wedge N}\bigg)
        -
        P\bigg(Z_n=i, Z_{n+1}=j\mid Y_{1}^{N}\bigg)
        }
        \leq C\delta_{\min}^{-1}\zeta^{-8}\pi_{\min}^{-2}(1-(\zeta\pi_{\min})^2)^{r-1} ,
\end{multline*}
where $C$ is some constant.

 To show the second statement, we can write
    \begin{align*}
        P(Z_n\mid Y_0^N) =  P(Z_n\mid Y_n^N)P(Z_n\mid Y_0^n)\frac{P(Y_{n+1}^N\mid Y_n)}{P(Y_{n+1}^N\mid Y_0^n)}\frac{1}{P(Z_n\mid Y_n)}        
    \end{align*}
    Similarly, we have
    \begin{align*}
        P\rbr{Z_n\mid Y_{(n-r)\vee 0}^{(n+r)\wedge N}} =  P(Z_n\mid Y_n^{(n+r)\wedge N})P(Z_n\mid Y_{(n-r)\vee 0}^n)\frac{P(Y_{n+1}^{(n+r)\wedge N}\mid Y_n)}{P(Y_{n+1}^N\mid Y_{(n-r)\vee 0}^n)}\frac{1}{P(Z_n\mid Y_n)}.   
    \end{align*}
    Hence, we can obtain the following upper bound
    \begin{align*}
        \bigg|P(Z_n\mid& Y_0^N)-P\rbr{Z_n\mid Y_{(n-r)\vee 0}^{(n+r)\wedge N}}\bigg|\\
        &\leq 
        \frac{1}{P(Z_n\mid Y_n)}\frac{P(Y_{n+1}^N\mid Y_n)}{P(Y_{n+1}^N\mid Y_0^n)}\abr{
        P(Z_n\mid Y_n^N)P(Z_n\mid Y_0^n)
        -
        P(Z_n\mid Y_n^{(n+r)\wedge N})P(Z_n\mid Y_{(n-r)\vee 0}^n)
        }\\
        &\quad+\frac{1}{P(Z_n\mid Y_n)}\frac{P(Y_{n+1}^{(n+r)\wedge N}\mid Y_n)}{P(Y_{n+1}^N\mid Y_{(n-r)\vee 0}^n)}B
        P(Z_n\mid Y_n^{(n+r)\wedge N})P(Z_n\mid Y_{(n-r)\vee 0}^n),
    \end{align*}
    where $B$ is defined in~\eqref{eq:tmp23}.
    Note that we can write
    \begin{align}
        \big|
        P(Z_n\mid Y_n^N)P(Z_n\mid Y_0^n)
        &-
        P(Z_n\mid Y_n^{(n+r)\wedge N})P(Z_n\mid Y_{(n-r)\vee 0}^n)
        \big|\notag
        \\
        &\leq \abr{P(Z_n\mid Y_n^N)-P(Z_n\mid Y_n^{(n+r)\wedge N})}+\abr{P(Z_n\mid Y_0^n)-P(Z_n\mid Y_{(n-r)\vee 0}^n)}\notag\\
        &\leq 
         4(\zeta\pi_{\min})^{-2}\cbr{1-(\zeta\pi_{\min})^2}^{r}+
        2\zeta^{-2}(1-\zeta\pi_{\min})^r \notag\\
        &\leq 6(\zeta\pi_{\min})^{-2}\cbr{1-(\zeta\pi_{\min})^2}^{r}.\label{eq:tmp25}
    \end{align}
    Collecting the results from~\eqref{eq:tmp21}--~\eqref{eq:tmp22}, ~\eqref{eq:tmp25} and Lemma~\ref{lemma:A_1}, we can conclude that
    \begin{align*}
    \bigg|P(Z_n\mid Y_0^N)-P\rbr{Z_n\mid Y_{(n-r)\vee 0}^{(n+r)\wedge N}}\bigg|&\leq \frac{10}{P(Z_n=i\mid Y_n)}\zeta^{-8}\pi_{\min}^{-2}\cbr{(1-(\zeta\pi_{\min})^2)^r\vee 
    (1-\zeta\pi_{\min})^{r-1}
    }\\
    &\leq \frac{10}{\delta_{\min}}\zeta^{-8}\pi_{\min}^{-2}\cbr{1-(\zeta\pi_{\min})^2}^{r-1}
    \end{align*}

\end{proof}

\begin{lemma}\label{lemma:A_1}
Under Assumption~\ref{assumption:mixing} and given $r\in \NN$, we have
\begin{equation}\label{eq:tmp16}
    \abr{\frac{P\rbr{Y_{(n-r)\vee 0}^{(n+r)\wedge N}}}{P\rbr{Y_{n+1}^{(n+r)\wedge N}}P\rbr{Y_{(n-r)\vee 0}^n}}
    \frac{P(Y_{n+1}^N)P(Y_0^n)}{P(Y_0^N)}
    -1
    }\leq 4\zeta^{-4}k(1-\zeta\pi_{\min})^{r-1}\cbr{\zeta^{-2}(1-\zeta\pi_{\min})\vee 1}.
\end{equation}
\end{lemma}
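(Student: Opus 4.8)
The plan is to rewrite the bracketed ratio, call it $R$, as a product of two \emph{forward} conditional-likelihood ratios and then control each factor using the geometric decay of the filtered probabilities (Lemma~\ref{lemma:truncatedforward}) together with the forward-likelihood bound (Lemma~\ref{lemma:mixing:forward:bd}). Writing $a=(n-r)\vee 0$ and $b=(n+r)\wedge N$, I would first apply the chain rule to the two marginals in $R$, namely $P(Y_a^b)=P(Y_a^n)\,P(Y_{n+1}^b\mid Y_a^n)$ and $P(Y_0^N)=P(Y_0^n)\,P(Y_{n+1}^N\mid Y_0^n)$, which collapses $R$ to
\[
R=\frac{P(Y_{n+1}^b\mid Y_a^n)}{P(Y_{n+1}^b)}\cdot\frac{P(Y_{n+1}^N)}{P(Y_{n+1}^N\mid Y_0^n)} .
\]
Splitting the long-horizon terms at time $b$ via $P(Y_{n+1}^N\mid Y_0^n)=P(Y_{n+1}^b\mid Y_0^n)\,P(Y_{b+1}^N\mid Y_0^b)$ and $P(Y_{n+1}^N)=P(Y_{n+1}^b)\,P(Y_{b+1}^N\mid Y_{n+1}^b)$ cancels the awkward marginal $P(Y_{n+1}^b)$ and yields the telescoping identity
\[
R=\underbrace{\frac{P(Y_{n+1}^b\mid Y_a^n)}{P(Y_{n+1}^b\mid Y_0^n)}}_{=:F_1}\cdot\underbrace{\frac{P(Y_{b+1}^N\mid Y_{n+1}^b)}{P(Y_{b+1}^N\mid Y_0^b)}}_{=:F_2}.
\]
Each factor now compares a future block conditioned on a truncated past against the same block conditioned on the full past.

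Next I would expand each factor through the filtered representation. The defining Markov structure of the proposed model (Figure~\ref{fig:HMMs}(b))---namely that given $(Z_m,Y_m)$ the future observations are independent of $Y_0^{m-1}$---gives $P(Y_{m+1}^{\cdot}\mid Y_c^m)=\sum_i P(Y_{m+1}^{\cdot}\mid Z_m=i,Y_m)\,P(Z_m=i\mid Y_c^m)$. Hence both numerator and denominator of $F_1$ are the \emph{same} nonnegative forward weights $g_i:=P(Y_{n+1}^b\mid Z_n=i,Y_n)$ averaged against two filtered distributions $w^{\mathrm t}=P(Z_n=\cdot\mid Y_a^n)$ and $w^{\mathrm f}=P(Z_n=\cdot\mid Y_0^n)$, and likewise for $F_2$ at time $b$. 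Using $\sum_i(w^{\mathrm t}_i-w^{\mathrm f}_i)=0$,
\[
\abr{F_1-1}=\abr{\frac{\sum_i g_i(w^{\mathrm t}_i-w^{\mathrm f}_i)}{\sum_i g_i w^{\mathrm f}_i}}\leq \frac{\max_i g_i}{\min_i g_i}\,\norm{w^{\mathrm t}-w^{\mathrm f}}_1 .
\]
Lemma~\ref{lemma:mixing:forward:bd} bounds the likelihood ratio by $\zeta^{-2}$, while Lemma~\ref{lemma:truncatedforward} gives $\norm{w^{\mathrm t}-w^{\mathrm f}}_1\leq k\norm{w^{\mathrm t}-w^{\mathrm f}}_\infty\leq 2k\zeta^{-2}(1-\zeta\pi_{\min})^{r}$, since the truncated conditioning $Y_a^n$ starts $r$ steps before time $n$. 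The analogous computation for $F_2$ at time $b$ uses a truncation distance of $r-1$ (the conditioning $Y_{n+1}^b$ starts $r-1$ steps before $b$), producing the exponent $(1-\zeta\pi_{\min})^{r-1}$ that appears in the claim.

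Finally I would combine the two factors. Writing $R-1=F_1(F_2-1)+(F_1-1)$ and using $F_1\leq \max_i g_i/\min_i g_i\leq\zeta^{-2}$ turns the two per-factor estimates into the stated bound, with the factor $\{\zeta^{-2}(1-\zeta\pi_{\min})\vee 1\}$ absorbing the asymmetry between the exponents $r$ and $r-1$ and the $\zeta^{-2}$ incurred in bounding $F_1$. The boundary cases are automatic: when $n-r\leq 0$ we have $a=0$, so $F_1=1$ exactly, and when $n+r\geq N$ we have $b=N$, so the block $Y_{b+1}^N$ is empty and $F_2=1$. The main obstacle is the first step: the ratio $R$ is not a single mixing coefficient, and neither factor of the naive grouping is close to $1$, so the telescoping that cancels the marginals $P(Y_{n+1}^b)$ and re-expresses everything as truncated-versus-full \emph{conditional} likelihoods is what makes the forward-mixing lemmas applicable. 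Getting this algebra right---and verifying the filtered representation under the model's extra $Y_{m-1}\to Y_m$ edges, which is precisely where the analysis departs from~\citet{yang2017statistical}---is the crux of the argument.
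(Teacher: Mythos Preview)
Your proposal is correct and follows essentially the same route as the paper: both reduce the ratio to a product $F_1F_2$ of truncated-versus-full forward conditional likelihoods, expand each through the filtered representation $\sum_i g_i\,P(Z_m=i\mid Y_{\cdot}^m)$, and invoke Lemma~\ref{lemma:mixing:forward:bd} for $\max_i g_i/\min_i g_i\le\zeta^{-2}$ together with Lemma~\ref{lemma:truncatedforward} for the $\ell_1$ gap between filtered laws. The only cosmetic differences are that the paper enumerates the four boundary cases explicitly (your unified $a,b$ treatment is cleaner), and the paper splits as $F_2(F_1-1)+(F_2-1)$ rather than your $F_1(F_2-1)+(F_1-1)$---their order is what produces the exact factor $\{\zeta^{-2}(1-\zeta\pi_{\min})\vee 1\}$, whereas yours yields a bound that is a harmless $\zeta^{-2}$ larger; swap the roles of $F_1$ and $F_2$ in your last step to match the stated constant.
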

\begin{proof}[Proof of Lemma~\ref{lemma:A_1}]
    We consider four cases.
    
\emph{Case 1}: $n+r>N$, $n-r<0$. It follows that
\[
\frac{P\rbr{Y_{(n-r)\vee 0}^{(n+r)\wedge N}}}{P\rbr{Y_{n+1}^{(n+r)\wedge N}}P\rbr{Y_{(n-r)\vee 0}^n}}
    \frac{P(Y_{n+1}^N)P(Y_0^n)}{P(Y_0^N)}=1.
\]

\emph{Case 2}: $n+r<N$, $n-r>0$. We can write the left hand side of~\eqref{eq:tmp16} as
\begin{align}
    \abr{
    \frac{P(Y_{n+1}^{n+r}\mid Y_{n-r}^n)P(Y_{n+r+1}^N\mid Y_{n+1}^{n+r})}{P(Y_{n+1}^N\mid Y_0^n)} - 1
    } 
    &=
    \abr{
\frac{P(Y_{n+1}^{n+r}\mid Y_{n-r}^n)P(Y_{n+r+1}^N\mid Y_{n+1}^{n+r})}{P(Y_{n+r+1}^{N}\mid Y_0^{n+r})P(Y_{n+1}^{n+r}\mid Y_0^n)
    }-1}\notag\\
    &\leq 
    \frac{P(Y_{n+r+1}^N\mid Y_{n+1}^{n+r})}{P(Y_{n+r+1}^N\mid Y_0^{n+r})}
    \abr{\frac{P(Y_{n+1}^{n+r}\mid Y_{n-r}^n)-P(Y_{n+1}^{n+r}\mid Y_0^{n})}{P(Y_{n+1}^{n+r}\mid Y_0^{n})}}\notag\\
    &\quad+\abr{\frac{P(Y_{n+r+1}^N\mid Y_{n+1}^{n+r})-P(Y_{n+r+1}^N\mid Y_0^{n+r})}{P(Y_{n+r+1}^N\mid Y_0^{n+r})}}.\label{eq:tmp19}
\end{align}
We can write
\begin{align}
&\bigg|\frac{P(Y_{n+1}^{n+r}\mid Y_{n-r}^n)-P(Y_{n+1}^{n+r}\mid Y_0^{n})}{P(Y_{n+1}^{n+r}\mid Y_0^{n})}\bigg|\notag\\
&\quad\quad\quad\quad\quad=\frac{\abr{\sum_i P(Y_{n+1}^{n+r}\mid Z_n=i, Y_n)\cbr{P(Z_n=i\mid Y_{n-r}^n)-P(Z_n=i\mid Y_{1}^n)}}}{\sum_i P(Y_{n+1}^{n+r}\mid Z_n=i, Y_n)P(Z_n=i\mid Y_0^n)}\notag\\
&\quad\quad\quad\quad\quad\leq \frac{\max_i P(Y_{n+1}^{n+r}\mid Z_n=i, Y_n)}{\min_i P(Y_{n+1}^{n+r}\mid Z_n=i, Y_n)}\abr{\sum_i P(Z_n=i\mid Y_{n-r}^n)-P(Z_n=i\mid Y_{1}^n)}.\notag
\intertext{Apply Lemma~\ref{lemma:truncatedforward}--~\ref{lemma:mixing:forward:bd}, we can upper bound the above display as}
&\quad\quad\quad\quad\quad\leq 2\zeta^{-4} k(1-\zeta\pi_{\min})^{r}\label{eq:tmp17}
\end{align}
Similarly, we can use the same technique to show that
\begin{align}
    \abr{\frac{P(Y_{n+r+1}^N\mid Y_{n+1}^{n+r})-P(Y_{n+r+1}^N\mid Y_0^{n+r})}{P(Y_{n+r+1}^N\mid Y_0^{n+r})}}\leq 2\zeta^{-4} k(1-\zeta\pi_{\min})^{r-1}.\label{eq:tmp18}
\end{align}
Apply Lemma~\ref{lemma:mixing:forward:bd}, we have
\begin{align}
    \frac{P(Y_{n+r+1}^N\mid Y_{n+1}^{n+r})}{P(Y_{n+r+1}^N\mid Y_0^{n+r})}&=
    \frac{\sum_i P(Y_{n+r+1}^N\mid Z_n=i , Y_{n+r})P(Z_n=i\mid Y_{n+1}^{n+r})}{\sum_i P(Y_{n+r+1}^N\mid Z_n=i , Y_{n+r})P(Z_n=i\mid Y_{1}^{n+r})}
    \notag\\
    &\leq 
\frac{\max_iP(Y_{n+r+1}^N\mid Z_{n+r}=i, Y_{n+r})}{\min_iP(Y_{n+r+1}^N\mid Z_{n+r}=i, Y_{n+r})}\leq \zeta^{-2}.\label{eq:tmp20}
\end{align}
Plug the results~\eqref{eq:tmp17}--~\eqref{eq:tmp20} back into~\eqref{eq:tmp19} and  we can conclude that
\begin{align*}
    \abr{
    \frac{P(Y_{n+1}^{n+r}\mid Y_{n-r}^n)P(Y_{n+r+1}^N\mid Y_{n+1}^{n+r})}{P(Y_{n+1}^N\mid Y_0^n)} - 1
    } \leq 4\zeta^{-4}k(1-\zeta\pi_{\min})^{r-1}\cbr{\zeta^{-2}(1-\zeta\pi_{\min})\vee 1}.
\end{align*}

\emph{Case 3}: $n+r>N$, $n-r>0$. In this case, we can express the left hand side of~\eqref{eq:tmp16} as 
\begin{align*}
    \abr{\frac{P(Y_{n+1}^N\mid Y_{n-r}^n)}{P(Y_{n+1}^N\mid Y_0^n)}-1}&=\frac{\abr{P(Y_{n+1}^N\mid Y_{n-r}^n)-P(Y_{n+1}^N\mid Y_0^n)}}{P(Y_{n+1}^N\mid Y_0^n)}
    \\
    &\leq \frac{\max_i P(Y_{n+1}^N\mid Z_n=i, Y_n)}{\min_i P(Y_{n+1}^N\mid Z_n=i, Y_n)}\sum_i\abr{
    P(Z_n=i\mid Y_{n-r}^n)
    - P(Z_n=i\mid Y_0^n)
    }\\
    &\leq 2\zeta^{-4}k(1-\zeta\pi_{\min})^r.
\end{align*}

\emph{Case 4}: $n+r<N$, $n-r<0$. In this case, we can express the left hand side of ~\eqref{eq:tmp16} as 
\begin{align*}
    \abr{\frac{P(Y_0^{n+r})P(Y_{n+1}^N)}{P(Y_{n+1}^{n+r})P(Y_0^N)}-1}&=\abr{
    \frac{P(Y_{n+r+1}^N\mid Y_{n+1}^{n+r})}{P(Y_{n+r+1}^N\mid Y_0^{n+r})}
    -1
    }\\
    &=\frac{\abr{P(Y_{n+r+1}^N\mid Y_{n+1}^{n+r})-P(Y_{n+r+1}^N\mid Y_0^{n+r})}}{P(Y_{n+r+1}^N\mid Y_0^{n+r})}
    \\&\leq
    \frac{\max_i P(Y_{n+r+1}^N\mid Z_{n+r}=i, Y_{n+r})}{
    \min_i P(Y_{n+r+1}^N\mid Z_{n+r}=i, Y_{n+r})
    }\\
    &\quad\times \sum_i\abr{P(Z_{n+r}=i\mid Y_{n+1}^{n+r})
    -
    P(Z_{n+r}=i\mid Y_{0}^{n+r})
    }\\
    &\leq 2\zeta^{-4}k(1-\zeta\pi_{\min})^{r-1}.
\end{align*}
Combining the results from \emph{Case 1--4}, we can conclude~\eqref{eq:tmp16}.
\end{proof}

\section{Proof Sketch of one-step update}
In this section, we show the contraction of the distance between the estimated parameter to the true parameter via one update of EM algorithm. In Section~\ref{ssec:onestep_theta}, we discuss the one-step update of $\theta_i^\ell$; in Section~\ref{ssec:onestep_sigma}, we discuss the one-step update of $\sigma^2$; in Section~\ref{ssec:onestep_Q}, we discuss the one-step update of $Q$. 
\subsection{Proof of Lemma~\ref{lemma:onestep:theta}}\label{ssec:onestep_theta}

Recall the notation introduced in Section~\ref{ssec:approximateEM}.
The quantity of interest is $\Delta=\hat{\theta}-\theta^{\star}$ and ${Y}_n^{\Delta}=Y_{n,i}-{Y}_{n-1,i}$. 
In the following, we show the contraction of $\Delta$ at each iterate of the EM algorithm. 
By construction, it follows that
\begin{align*}
\frac{1}{N}\sum_{n=1}^N \hat{w}_{\Theta,\ell}(t_n)\cbr{{Y}_n^{\Delta}-\theta\hat{\Psi}(t_n)}^2+\lambda\norm{\theta}_{1,\hat{K}_{\Psi}}&\leq 
\frac{1}{N}\sum_{n=1}^N \hat{w}_{\Theta,\ell}(t_n)\cbr{{Y}_n^{\Delta}-\theta^{\star}\hat{\Psi}(t_n)}^2+\lambda\norm{\theta^{\star}}_{1,\hat{K}_{\Psi}}.
\end{align*}
By rearranging the above equation, we arrive at
\begin{multline}\label{eq:term1}
   \frac{1}{N}\sum_{n=1}^N \hat{w}_{\Theta,\ell}(t_n)\rbr{\Delta\hat{\Psi}(t_n)}^2\\\leq 
   \lambda\rbr{\norm{\theta^{\star}}_{1,\hat{K}_\Psi}-\norm{\theta}_{1,\hat{K}_\Psi}}+{\frac{2}{N}\sum_{n=1}^N\hat{w}_{\Theta,\ell}(t_n)\cbr{{Y}_n^{\Delta}-\theta^{\star}\hat{\Psi}(t_n)}\hat{\Psi}(t_n)^\top\Delta^\top}. 
\end{multline}
Recall that  $\check{\Theta}=M(\Theta)=\argmax_{\Theta'} \Lcal(\Theta'\mid \Theta)$. Then using the fact that $\nabla_{\check{\Theta}} \Lcal(\check{\Theta}\mid \Theta)=0$, we have
\begin{equation}\label{eq:fact:tildetheta}
    \sum_{n=1}^N\EE\sbr{w_{\Theta,\ell}(t_n)\cbr{{Y}_n^{\Delta}-\check{\theta}{\Psi}(t_n)}{\Psi}(t_n)^\top}=0.
\end{equation}
Therefore, we can rearrange 
\begin{align*}
    \frac{1}{N}&\sum_{n=1}^N\hat{w}_{\Theta,\ell}(t_n)\cbr{{Y}_n^{\Delta}-\theta^{\star}\hat{\Psi}(t_n)}\hat{\Psi}(t_n)^\top\\
    &=
     {\frac{1}{N}\sum_{n=1}^N\hat{w}_{\Theta,\ell}(t_n)\cbr{{Y}_n^{\Delta}-\theta^{\star}\hat{\Psi}(t_n)}\hat{\Psi}(t_n)^\top}
     -
      {\frac{1}{N}\sum_{n=1}^N\hat{w}_{\Theta,\ell}(t_n)\cbr{{Y}_n^{\Delta}-\theta^{\star}{\Psi}(t_n)}{\Psi}(t_n)^\top}\\
      &\quad+{\frac{1}{N}\sum_{n=1}^N\hat{w}_{\Theta,\ell}(t_n)\cbr{{Y}_n^{\Delta}-\theta^{\star}{\Psi}(t_n)}{\Psi}(t_n)^\top}
      -{\frac{1}{N}\sum_{n=1}^N\EE\sbr{\hat{w}_{\Theta,\ell}(t_n)\cbr{{Y}_n^{\Delta}-\theta^{\star}{\Psi}(t_n)}{\Psi}(t_n)^\top}}\\
      &\quad +\frac{1}{N}\sum_{n=1}^N\EE\sbr{\hat{w}_{\Theta,\ell}(t_n)\cbr{{Y}_n^{\Delta}-\theta^{\star}{\Psi}(t_n)}{\Psi}(t_n)^\top}
      -
      \frac{1}{N}\sum_{n=1}^N\EE\sbr{{w}_{\Theta,\ell}(t_n)\cbr{{Y}_n^{\Delta}-\theta^{\star}{\Psi}(t_n)}{\Psi}(t_n)^\top}\\
      &\quad +
      \frac{1}{N}\sum_{n=1}^N\EE\sbr{w_{\Theta,\ell}(t_n)\cbr{{Y}_n^{\Delta}-\theta^{\star}{\Psi}(t_n)}{\Psi}(t_n)^\top}
      -
      \frac{1}{N}\sum_{n=1}^N\EE\sbr{w_{\Theta,\ell}(t_n)\cbr{{Y}_n^{\Delta}-\check{\theta}{\Psi}(t_n)}{\Psi}(t_n)^\top},
\end{align*}
where the last term is zero following~\eqref{eq:fact:tildetheta}.  Define the following quantity
\begin{align}
K^w_\Psi&=K^w_\Psi(\Theta)=\frac{1}{N}\sum_{n=1}^N\EE\sbr{w_{\Theta,\ell}(t_n)\Psi(t_n)\Psi(t_n)^\top}. 
\end{align}
Then, we can write
\begin{align}\label{eq:temp}
    \frac{1}{N}\sum_{n=1}^N\hat{w}_{\Theta,\ell}(t_n)\cbr{{Y}_n^{\Delta}-\theta^{\star}\hat{\Psi}(t_n)}\hat{\Psi}(t_n)^\top=
    \theta^{\star}\Delta_{\Psi}
    +\Delta_w
    +\Delta_\varepsilon+(\check{\theta}-{\theta}^{\star})K^w_\Psi,
\end{align}
where $\Delta_\Psi$ is defined in~\eqref{eq:define:deltapsi}, $\Delta_w$ is defined in~\eqref{eq:define:deltaw}, and $\Delta_\varepsilon$ is defined in~\eqref{eq:define:deltaepsilon}.
Plug the result~\eqref{eq:temp} back to~\eqref{eq:term1}, we have
\begin{align}
    \frac{1}{N}\sum_{n=1}^N w_{\Theta,\ell}(t_n)\rbr{\Delta\hat{\Psi}(t_n)}^2&\leq 
   \lambda\rbr{\norm{\theta^{\star}}_{1,\hat{K}_\Psi}-\norm{\theta}_{1,\hat{K}_\Psi}}+
   2\sigma_{\max}(K^w_\Psi)\|\Delta\|_{2}\norm{\check{\theta}-\theta^\star }_2\notag\\
   &\quad+2\|\Delta\|_{1,\hat{K}_\Psi}\rbr{\norm{\Delta_\varepsilon}_{\infty,\hat{K}_\Psi^*}+\norm{\Delta_w}_{\infty,\hat{K}_\Psi^*}+\norm{\theta^\star \Delta_\Psi}_{\infty,\hat{K}_\Psi^*}}
   ,\label{eq:term2}
\end{align}
where 
$\norm{\Delta_\varepsilon}_{\infty,\hat{K}_\Psi^*}=\max_{j=1,\ldots,p}\norm{\Delta_{\varepsilon,j}}_{\hat{K}_{\Psi_j}^*}$ and $\norm{\cdot}_{\hat{K}_{\Psi_j}^*}$ is the dual norm of $\norm{\cdot}_{\hat{K}_{\Psi_j}}$.

Define $S$ to be the support set of $\theta^\star$ and $S^c$ be the complement of $S$. 
Using the fact that 
$$
\lambda\geq 4\rbr{\norm{\Delta_\varepsilon}_{\infty,\hat{K}_\Psi^*}
+
\norm{\Delta_w}_{\infty,\hat{K}_\Psi^*}
+\norm{\theta^\star \Delta_\Psi}_{\infty,\hat{K}_\Psi^*}},
$$ we have
\begin{align}
   \lambda\rbr{\norm{\theta^{\star}}_{1,\hat{K}_\Psi}-\norm{\theta}_{1,\hat{K}_\Psi}}+2\|\Delta\|_{1,\hat{K}_\Psi}&\rbr{\norm{\Delta_\varepsilon}_{\infty,\hat{K}_\Psi^*}
   +
   \norm{\Delta_w}_{\infty,\hat{K}_{\Psi}^*}
   +\norm{\theta^\star \Delta_\Psi}_{\infty,\hat{K}_\Psi^*}}\notag\\
   &\leq 
\lambda\rbr{\norm{\theta^{\star}}_{1,\hat{K}_\Psi}-\norm{\theta^\star +\Delta}_{1,\hat{K}_\Psi}}
+
\frac{\lambda}{2}\norm{\Delta}_{1,\hat{K}_\Psi}\notag\\
&=\lambda\rbr{\norm{\theta^{\star}_S}_{1,\hat{K}_\Psi}-\norm{(\theta^\star +\Delta)_S}_{1,\hat{K}_\Psi}-\norm{\Delta_{S^c}}_{1,\hat{K}_\Psi}}\notag\\
&\quad +\frac{\lambda}{2}\rbr{\norm{\Delta_{S}}_{1,\hat{K}_\Psi}+\norm{\Delta_{S^c}}_{1,\hat{K}_\Psi}}\notag\\
&\leq \frac{3\lambda}{2}\norm{\Delta_S}_{1,\hat{K}_{\Psi}}
-
\frac{\lambda}{2}\norm{\Delta_{S^c}}_{1,\hat{K}_{\Psi}}.\label{eq:term3}
\end{align}
From the result of~\eqref{eq:term3} and the facts that (i) the left hand side of~\eqref{eq:term3} is lower bounded by $0$ and (ii) 
\[
\lambda\geq\frac{4}{\sqrt{s}}
\frac{\sigma_{\max}{(K_{\Psi})}}{\max_j\sigma_{\max}(\hat{K}_{\Psi_j})}
{\norm{\check{\theta}-\theta^\star }_2},
\]
we can conclude that
\begin{align}
\frac{\lambda}{2}\norm{\Delta_{S^c}}_{1,\hat{K}_{\Psi}}&\leq 
\frac{3\lambda}{2}\norm{\Delta_{S}}_{1,\hat{K}_\Psi}+
2\sigma_{\max}(K^w_\Psi)\|\Delta\|_{2}{\norm{\check{\theta}-\theta^\star }_2}.\notag\\
\intertext{Define $K_\Psi=N^{-1}\sum_{n=1}^N\EE[\Psi(t_n)\Psi(t_n)^\top]$ and it follows that $\sigma_{\max}(K^w_\Psi)\leq\sigma_{\max}(K_\Psi)$. Therefore, we can upper bound the above display as}
&\leq 
\frac{3\lambda}{2}\norm{\Delta_{S}}_{1,\hat{K}_\Psi}+
2\sigma_{\max}(K_\Psi)\|\Delta\|_{2}{\norm{\check{\theta}-\theta^\star }_2}\notag\\
&\leq \frac{3\lambda}{2}\max_{j}\sigma_{\max}(\hat{K}_{\Psi_j})\sqrt{s}\norm{\Delta}_2+\frac{\lambda}{2}\max_{j}\sigma_{\max}(\hat{K}_{\Psi_j})\sqrt{s}\norm{\Delta}_2 \notag\\
&=2\lambda\max_{j}\sigma_{\max}(\hat{K}_{\Psi_j})\sqrt{s}\norm{\Delta}_2.\notag
\end{align}
This implies that
\begin{align}
\norm{\Delta_S}_{1,\hat{K}_{\Psi}}+\norm{\Delta_{S^c}}_{1,\hat{K}_{\Psi}}=\norm{\Delta}_{1,\hat{K}_{\Psi}}\leq  5 \max_j\sigma_{\max}(\hat{K}_{\Psi_j})\sqrt{s}\norm{\Delta}_2.\label{eq:delta:1K}
\end{align}

Hence, combining~\eqref{eq:term3}, we can write~\eqref{eq:term2} as
\begin{align}
 \frac{1}{N}\sum_{n=1}^N \hat{w}_{\Theta,\ell}(t_n)\rbr{\Delta\hat{\Psi}(t_n)}^2&\leq 2\lambda\norm{\Delta}_{1,\hat{K}_\Psi}+{2\sigma_{\max}(K_\Psi^w)\norm{\check{\theta}-\theta^\star }_2}\|\Delta\|_{2}\notag\\
 &\leq 2\|\Delta\|_{2}\big\{5\lambda\max_j\sigma_{\max}(\hat{K}_{\psi_j})\sqrt{s}+\sigma_{\max}(K_\Psi)\norm{\check{\theta}-\theta^\star }_2\big\}\label{eq:term4}
\end{align}
Under Assumption~\ref{assumption:RE}, we have
\begin{align*}
        \frac{1}{N}\sum_{n=1}^N\hat{w}_{\Theta, \ell}(t_n)\cbr{\Delta\hat{\Psi}(t_n)}^2&\geq \alpha\norm{\Delta}_2^2-\tau\norm{\Delta}_{1,\hat{K}_\Psi}^2\\
        &\geq\alpha\norm{\Delta}_2^2-25 \max_j\sigma_{\max}^2 (\hat{K}_{\Psi_j})s\tau\norm{\Delta}_2^2\\
        &\geq\frac{\alpha}{2}\norm{\Delta}_2^2. 
\end{align*}
Combining the above result with~\eqref{eq:term4}, we can conclude that
\[
\norm{\Delta}_2\leq\frac{4}{\alpha}\rbr{
5\lambda\max_j\sigma_{\max}(\hat{K}_{\psi_j})\sqrt{s}+\sigma_{\max}(K_\Psi)\norm{\check{\theta}-\theta^\star }_2
}.
\]

\subsection{One-step update of $\sigma^2$}\label{ssec:onestep_sigma}
Define
\begin{align}\label{eq:define:Rn}
\hat{R}_{n,\ell} &= \sum_{i=1}^p\left(Y_{n,i}-Y_{n-1,i}-\sum_{j=1}^p\theta^\ell_{ij}\hat{\Psi}_j(t_n)\right)^2,\quad R_{n,\ell} = \sum_{i=1}^p\left(Y_{n,i}-Y_{n-1,i}-\sum_{j=1}^p\theta^\ell_{ij}{\Psi}_j(t_n)\right)^2.  
\end{align}

\begin{lemma}\label{lemma:onestep:sigma} Suppose that Assumption~\ref{assumption:reversible},~\ref{assumption:mixing} hold. Assume that there exists a constant $c_0>0$ such that for $n=1,\ldots,N$, $\hat{R}_{n,\ell}, R_{n,\ell}\leq c_0^2p$. Define $\delta_1 = \max_i\opnorm{X_i-\hat{X}_i}{2}$, $$\delta_2= \abr{\frac{1}{Np}\sum_{n=1}^N\sum_{\ell=1}^k \cbr{\hat{w}_{\ell,\Theta}(t_n)R_{n,\ell}- 
    \EE\sbr{\hat{w}_{\ell,\Theta}(t_n)R_{n,\ell}}}}.$$ Suppose that for each $\ell$, the difference $\sum_{i,j}\norm{\theta_{ij}^\ell-\theta^{\ell\star}_{ij}}_2^2\leq r_0^2$. Suppose that $\sup_{t\in[0,1]}\max_{ij}|\dot{g}_j(X_i(t))|\leq D$, $\pi_{\min}=\min_{Q\in\Omega}\min_{\ell}\pi_\ell$, and
    $\delta_{\min}=\min_{\Theta\in\Omega}\min_{n=1,\ldots,N}P(Z_n, Y_n;\Theta)>0$. Given $\Theta$, define $\hat{\Theta}=\argmax_{\tilde{\Theta}\in\Omega}\Lcal_N(\tilde{\Theta}\mid \Theta)$ and $\check{\Theta}=\argmax_{\tilde{\Theta}\in\Omega}\Lcal(\tilde{\Theta}\mid \Theta)$. 
    Then, 
    \[
    \abr{\hat{\sigma}^2-\sigma^{\star 2}}\leq \frac{C \sqrt{ms}  k\delta_1}{N}\max_j \frac{\sigma_{\max}(\hat{K}_{\Psi_j})}{\sigma_{\min}(\hat{K}_{\Psi_j})} + \delta_2 + C'k \zeta^{-8}\pi_{\min}^{-2}\cbr{1-(\zeta\pi_{\min})^2}^{r-1}+\abr{\check{\sigma}^2-\sigma^{\star2}},
    \]
    where $C$ is a constant depending on $(\Theta^\star, r_0, c_0, D)$ and $C'$ is a constant depending on $(c_0, \delta_{\min})$. 
\end{lemma}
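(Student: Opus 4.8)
The plan is to route the comparison through the population one-step update $\check\sigma^2=M_{\sigma^2}(\Theta)$, so that the residual closeness $\abr{\check\sigma^2-\sigma^{\star2}}$ (which contracts by Proposition~\ref{prop:mvt}) can be kept verbatim on the right-hand side. Writing both updates in closed form, $\hat\sigma^2=\frac{1}{2pN}\sum_{n=1}^N\sum_{\ell=1}^k\hat w_{\ell,\Theta}(t_n)\hat R_{n,\ell}$ and $\check\sigma^2=\frac{1}{2pN}\sum_{n=1}^N\sum_{\ell=1}^k\EE\sbr{w_{\ell,\Theta}(t_n)R_{n,\ell}}$, where $w_{\ell,\Theta}(t_n)=P(Z_n=\ell\mid Y_0^N;\Theta)$ denotes the \emph{untruncated} smoothing weight, I first split $\abr{\hat\sigma^2-\sigma^{\star2}}\le\abr{\hat\sigma^2-\check\sigma^2}+\abr{\check\sigma^2-\sigma^{\star2}}$ and devote the rest of the argument to the statistical error $\abr{\hat\sigma^2-\check\sigma^2}$.

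To control $\abr{\hat\sigma^2-\check\sigma^2}$ I would telescope through intermediate quantities, each isolating a single source of discrepancy. The first, \emph{the trajectory error}, replaces $\hat R_{n,\ell}$ (formed from $\hat\Psi_j(t_n)=\int_{t_{n-1}}^{t_n}g(\hat X_j(u))\mathrm{d}u$) by $R_{n,\ell}$ (formed from the true $\Psi_j$); the second, \emph{the sampling error}, replaces the empirical average $\frac{1}{2pN}\sum_{n,\ell}\hat w_{\ell,\Theta}(t_n)R_{n,\ell}$ by its expectation, which is $\tfrac12\delta_2\le\delta_2$ and is retained as a term to be bounded later by a Bernstein-type inequality for $\beta$-mixing sums; the third, \emph{the truncation error}, replaces the truncated weight $\hat w_{\ell,\Theta}(t_n)$ by $w_{\ell,\Theta}(t_n)$. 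For the truncation error I would invoke Lemma~\ref{lemma:truncated_smoooth} to obtain $\abr{\hat w_{\ell,\Theta}(t_n)-w_{\ell,\Theta}(t_n)}\le 10\,\delta_{\min}^{-1}\zeta^{-8}\pi_{\min}^{-2}\cbr{1-(\zeta\pi_{\min})^2}^{r-1}$; multiplying by the uniform residual bound $R_{n,\ell}\le c_0^2 p$ and summing over the $k$ states yields the advertised term $C'k\,\zeta^{-8}\pi_{\min}^{-2}\cbr{1-(\zeta\pi_{\min})^2}^{r-1}$, with $\delta_{\min}^{-1}$ and $c_0^2$ absorbed into $C'$.

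The trajectory error is the crux and the origin of the factor $\frac{C\sqrt{ms}\,k\,\delta_1}{N}\max_j\frac{\sigma_{\max}(\hat{K}_{\Psi_j})}{\sigma_{\min}(\hat{K}_{\Psi_j})}$. Applying $\|a\|_2^2-\|b\|_2^2=(\|a\|_2-\|b\|_2)(\|a\|_2+\|b\|_2)$ to the two residual vectors, each bounded by $c_0\sqrt p$ by hypothesis, reduces the pointwise gap to $\sum_j\theta_{ij}^\ell\cbr{\Psi_j(t_n)-\hat\Psi_j(t_n)}$. The smoothness assumption $\sup_t\max_{i,j}\abr{\dot g_j(X_i(t))}\le D$ then lets me Lipschitz-bound $\opnorm{\Psi_j(t_n)-\hat\Psi_j(t_n)}{2}$ by the integral over $[t_{n-1},t_n]$ of the local trajectory error; because this integration window has length $h=1/N$, a Cauchy--Schwarz step produces the $1/N$ factor and, after aggregating the $N$ local errors, the global quantity $\delta_1=\max_i\opnorm{X_i-\hat X_i}{2}$. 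The delicate point is bounding the magnitude of the update coefficients $\theta_{ij}^\ell$: since they arise from the $\theta$-block M-step, their size is controlled through the Gram matrices $\hat{K}_{\Psi_j}$, which is exactly where the eigenvalue ratio $\sigma_{\max}(\hat{K}_{\Psi_j})/\sigma_{\min}(\hat{K}_{\Psi_j})$ enters, while the $s$-sparsity of $\theta^\star$ together with the $m$-dimensional basis blocks supply the $\sqrt{ms}$ factor. Making all of these constants collapse into the stated form --- without accumulating spurious powers of $p$ or $N$ --- is the step that demands the most care; the remaining manipulations are routine triangle inequalities combined with the lemmas already established.
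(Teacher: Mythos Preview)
Your proposal is correct and follows essentially the same route as the paper: the identical four-term telescope ($\hat w\hat R\to \hat w R\to \EE[\hat w R]\to \EE[w R]=\check\sigma^2$), with Lemma~\ref{lemma:truncated_smoooth} handling the truncation piece, the Lipschitz bound $\sum_n|\Psi_{ij}(t_n)-\hat\Psi_{ij}(t_n)|\le D\delta_1$ handling the trajectory piece, and the group-lasso cone relation \eqref{eq:delta:1K}/\eqref{eq:l1tol2} supplying the $\sqrt{ms}\,\sigma_{\max}/\sigma_{\min}$ control on $\|\theta_{i\cdot}^\ell\|_1$. The only cosmetic difference is that the paper routes the $1/N$ factor through the prefactor in the average together with the global bound $\sum_n|\Psi_{ij}(t_n)-\hat\Psi_{ij}(t_n)|\le D\delta_1$, rather than through a per-window Cauchy--Schwarz, but the arithmetic is the same.
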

\begin{proof}[Proof of Lemma~\ref{lemma:onestep:sigma}]

Fixing parameters $Q$, $\theta^\ell$ for $\ell=1,\ldots, k$, the optimal parameter $\sigma^2$ can be represented as
\[
\hat{\sigma}^2 = \argmax_{{\sigma'}^2} -\frac{pN}{2}\log 2{\sigma'}^2 - \frac{1}{4{\sigma'}^2}\sum_{n=1}^N\sum_{\ell=1}^k
\hat{w}_{\ell,\Theta}(t_n)
\sum_{i=1}^p\left(Y_{n,i}-Y_{n-1,i}-\sum_{j=1}^p\theta^\ell_{ij}\hat{\Psi}_j(t_n)\right)^2.\]

This implies that
\[
\hat{\sigma}^2 = \frac{1}{2Np}\sum_{n=1}^N\sum_{\ell=1}^k \hat{w}_{\ell,\Theta}(t_n)\sum_{i=1}^p\left(Y_{n,i}-Y_{n-1,i}-\sum_{j=1}^p\theta^\ell_{ij}\hat{\Psi}_j(t_n)\right)^2.
\]

Hence we have
\begin{align*}
    \frac{1}{2}\abr{\hat{\sigma}^2-{\sigma}^{\star2}} &\leq  \abr{\frac{1}{Np}\sum_{n=1}^N\sum_{\ell=1}^k \hat{w}_{\ell,\Theta}(t_n)\rbr{\hat{R}_{n,\ell}-R_{n,\ell}}}\\
    &\quad 
    +\abr{\frac{1}{Np}\sum_{n=1}^N\sum_{\ell=1}^k \hat{w}_{\ell,\Theta}(t_n)R_{n,\ell}- 
    \EE\sbr{\hat{w}_{\ell,\Theta}(t_n)R_{n,\ell}}}\\
    &\quad+ 
    \abr{\frac{1}{Np}\sum_{n=1}^N\sum_{\ell=1}^k \EE\sbr{\hat{w}_{\ell,\Theta}(t_n)R_{n,\ell}}-
     \EE\sbr{{w}_{\ell,\Theta}(t_n)R_{n,\ell}}}\\
     &\quad  +\abr{\frac{1}{Np}\sum_{n=1}^N\sum_{\ell=1}^k 
     \rbr{\EE\sbr{{w}_{\ell,\Theta}(t_n)R_{n,\ell}}}
     -
     \sigma^{\star 2}
     }\\
     &= T_1 + T_2+T_3 + T_4. 
\end{align*}
Subsequently, we bound each term separately. First, we write
\begin{align}
    T_1 &= \frac{1}{Np}\sum_{n=1}^N\sum_{\ell=1}^k\hat{w}_{\Theta, \ell}(t_n)\rbr{\hat{R}_{n,\ell}-R_{n,\ell}}\notag\\
    &\leq \frac{1}{Np}\sum_{n=1}^N\sum_{\ell=1}^k\hat{w}_{\Theta,\ell}(t_n)\rbr{\hat{R}_{n,\ell}^{1/2}+{R}_{n,\ell}^{1/2}}
    \bignorm{\sum_i\theta_i^\ell\rbr{\hat{\Psi}_i(t_n)-\Psi_i(t_n)}}_2,\notag\\
    \intertext{where $\theta_i^\ell=(\theta_{1i}^{\ell\top},\ldots, \theta_{pi}^{\ell\top})$. Using the fact that $\hat{w}_{\Theta,\ell}\leq 1$ and $R^{1/2}_{n,\ell}, \hat{R}_{n,\ell}^{1/2}\leq c_0 p$, the above display can be bounded as}
    &\leq\frac{2c_0}{N}\sum_{i=1}^p\sum_{\ell=1}^k\sum_{n=1}^N\bignorm{\sum_i\theta_i^\ell\rbr{\hat{\Psi}_i(t_n)-\Psi_i(t_n)}}_2.\label{eq:T1sigma}
\end{align}

We can apply H\"older's inequality and obtain the following upper bound
\begin{align}
\bignorm{\sum_i\theta_i^\ell\rbr{\hat{\Psi}_i(t_n)-\Psi_i(t_n)}}_2
&=
\sqrt{
\sum_{i'}\cbr{\sum_{i}\theta_{i'i}^\ell\rbr{
\hat{\Psi}_{i}(t_n)
-
\Psi_{i}(t_n)
}}^2
}\notag\\
&\leq \sqrt{\sum_{i'}\norm{\theta_{i'\cdot}^\ell}_1^2}\max_{i,j}\abr{\Psi_{ij}(t_n)-\hat{\Psi}_{ij}(t_n)},\label{eq:tmp100}
\end{align}
where $\theta_{i'\cdot}^\ell=(\theta_{i'1}^{\ell\top},\ldots,\theta_{i'p}^{\ell\top})$. 
Note that 
\begin{align}
    \norm{\theta_{i'\cdot}^{\ell\star}-\theta_{i'\cdot}^{\ell}}_1
    \leq \max_{j}\frac{\sqrt{m}}{\sigma_{\min}(\hat{K}_{\Psi_j})}
\norm{\theta_{i'\cdot}^{\ell\star}-\theta_{i'\cdot}^{\ell}}_{1,\hat{K}_{\Psi}}
\leq
5\max_j\frac{\sigma_{\max}(\hat{K}_{\Psi_j})}{\sigma_{\min}(\hat{K}_{\Psi_j})}\sqrt{ms}\norm{\theta_{i'\cdot}^{\ell\star}-\theta_{i'\cdot}^{\ell}}_2, \label{eq:l1tol2}
\end{align}
where the last inequality follows from~\eqref{eq:delta:1K}. Therefore, we have

\begin{align*}
\norm{\theta_{i'\cdot}^\ell}_1&\leq \norm{\theta_{i'\cdot}^{\ell\star}}_1
+5\max_j\frac{\sigma_{\max}(\hat{K}_{\Psi_j})}{\sigma_{\min}(\hat{K}_{\Psi_j})}\sqrt{ms}\norm{\theta_{i'\cdot}^{\ell\star}-\theta_{i'\cdot}^{\ell}}_2,
\end{align*}

Hence, we have
\begin{align*}
\sum_{i'}\norm{\theta_{i'\cdot}^\ell}_1^2&\leq 2\sum_{i'}\cbr{\norm{\theta_{i'\cdot}^{\ell\star}}_1^2+
25ms\max_j\rbr{\frac{\sigma_{\max}(\hat{K}_{\Psi_j})}{\sigma_{\min}(\hat{K}_{\Psi_j})}}^2\norm{\theta_{i'\cdot}^{\ell\star}-\theta_{i'\cdot}^{\ell}}_2^2
}\\
&\leq 2\sum_{i'}\norm{\theta_{i'\cdot}^{\ell\star}}_1^2+50ms\max_j\rbr{\frac{\sigma_{\max}(\hat{K}_{\Psi_j})}{\sigma_{\min}(\hat{K}_{\Psi_j})}}^2r_0^2.
\end{align*}

Apply~\eqref{eq:tmp3} to~\eqref{eq:tmp100} and plug the results back to~\eqref{eq:T1sigma}, we have
\[
T_1\leq \frac{C \sqrt{ms}  k\delta_1}{N}\max_j \frac{\sigma_{\max}(\hat{K}_{\Psi_j})}{\sigma_{\min}(\hat{K}_{\Psi_j})},
\]
where $C=C(r_0,c_0,\Theta^\star, D)$, $m$ is the number of basis functions, $k$ is the number of states and $\delta_1=\max_i\opnorm{X_i-\hat{X}_i}{2}$.

Note that by assumption, we have  $T_2\leq \delta_2$. Apply Lemma~\ref{lemma:truncated_smoooth}, we can bound $T_3$ as
\begin{align*}
    T_3 &= \abr{\frac{1}{Np}\sum_{n=1}^N\sum_{\ell=1}^k\EE\sbr{\cbr{\hat{w}_{\Theta,\ell}(t_n)-w_{\Theta,\ell}(t_n)} R_n}}\\
    &\leq C'k\zeta^{-8}\pi_{\min}^{-2}\cbr{1-(\zeta\pi_{\min})^2}^{r-1},
\end{align*}
where $C'=C'(c_0,\delta_{\min})$. 
Recall that $\check{\Theta} = \arg\max_{\Theta'}\Lcal(\Theta'\mid \Theta)$ and we have $\check{\sigma}^2$
\[
\check{\sigma}^2=\frac{1}{Np}\sum_{n=1}^N\sum_{\ell=1}^k\EE[w_{\ell,\Theta}(t_n)R_{n,\ell}].
\]
Therefore, $T_4 = \abr{\check{\sigma}^2-\sigma^2}$.
Combining the results from $T_1$ to $T_4$, we have
\[
\abr{\hat{\sigma}^2-\sigma^{\star 2}}\leq \frac{C \sqrt{ms}  k\delta_1}{N}\max_j \frac{\sigma_{\max}(\hat{K}_{\Psi_j})}{\sigma_{\min}(\hat{K}_{\Psi_j})} + \delta_2 + C'k \zeta^{-8}\pi_{\min}^{-2}\cbr{1-(\zeta\pi_{\min})^2}^{r-1}+\abr{\check{\sigma}^2-\sigma^{\star2}}.
\]
\end{proof}
\subsection{One-step update of $Q$}\label{ssec:onestep_Q}
This section shows the one-step update of the transition rate matrix $Q$. We begin with stating the main result.
\begin{lemma}\label{lemma:onestep:q}
Suppose that $Z(t)$ is strictly stationary and irreducible and $Z(0)$ is sampled from the stationary distribution. Given $\Theta$, let $\check{\Theta}=\argmax_{\Theta'}\Lcal(\Theta\mid\Theta^\star)$ and ,  $\hat{\Theta}=\argmax_{\tilde{\Theta}\in\Omega}\Lcal_N(\tilde{\Theta}\mid \Theta)$, and
\[
\delta_3 = \max_{i,j}\frac{1}{N}\abr{
\sum_{n=1}^{N}\hat{w}_{\Theta,ij}(t_n)-\EE[\hat{w}_{\Theta,ij}(t_n)]}.
\]
Then, there exists a constant $C=C(Q, Y_0^N)<\infty$ such that for $\ell\neq\ell'$ we have 
    \[
    \abr{
\hat{q}_{\ell\ell'}-q_{\ell\ell'}^\star
}\leq C\delta_3 + \abr{\check{q}_{\ell\ell'}-q_{\ell\ell'}^\star}.
    \]
\end{lemma}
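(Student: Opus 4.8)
The plan is to exploit the closed form of the M-step update for the off-diagonal rates and reduce the claim to a ratio-perturbation bound. Writing the (truncated) update as a ratio $\hat q_{\ell\ell'} = \hat m_{\ell\ell'}/\hat\tau_\ell$, where following~\eqref{eq:computemll}--\eqref{eq:computetaul}
\[
\hat m_{\ell\ell'} = \sum_{n=1}^N\sum_{i,j=1}^k \hat w_{\Theta,ij}(t_n)\, a_{ij}, \qquad \hat\tau_\ell = \sum_{n=1}^N\sum_{i,j=1}^k \hat w_{\Theta,ij}(t_n)\, b_{ij},
\]
with the deterministic endpoint-conditioned coefficients $a_{ij} = \EE[m_{\ell\ell'}(h)\mid Z(0)=i, Z(h)=j;\Theta]$ and $b_{ij} = \EE[\tau_\ell(h)\mid Z(0)=i, Z(h)=j;\Theta]$. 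Let $\check m_{\ell\ell'}, \check\tau_\ell$ be obtained by replacing $\hat w_{\Theta,ij}(t_n)$ by $\EE[\hat w_{\Theta,ij}(t_n)]$, so that $\check q_{\ell\ell'} = \check m_{\ell\ell'}/\check\tau_\ell$ is the population update. First I would split via the triangle inequality $|\hat q_{\ell\ell'} - q_{\ell\ell'}^\star| \le |\hat q_{\ell\ell'} - \check q_{\ell\ell'}| + |\check q_{\ell\ell'} - q_{\ell\ell'}^\star|$, the second term being exactly the stated one, and devote the rest of the argument to the first term.

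For that term the key algebraic identity is
\[
\hat q_{\ell\ell'} - \check q_{\ell\ell'} = \frac{\hat m_{\ell\ell'} - \check m_{\ell\ell'}}{\hat\tau_\ell} - \check q_{\ell\ell'}\,\frac{\hat\tau_\ell - \check\tau_\ell}{\hat\tau_\ell},
\]
whence $|\hat q_{\ell\ell'} - \check q_{\ell\ell'}| \le \hat\tau_\ell^{-1}\big(|\hat m_{\ell\ell'} - \check m_{\ell\ell'}| + \check q_{\ell\ell'}\,|\hat\tau_\ell - \check\tau_\ell|\big)$. Each numerator difference is linear in the centered weights, so $|\hat m_{\ell\ell'} - \check m_{\ell\ell'}| \le \sum_{i,j}|a_{ij}|\,\big|\sum_n (\hat w_{\Theta,ij}(t_n) - \EE\hat w_{\Theta,ij}(t_n))\big| \le k^2 (\max_{ij}|a_{ij}|)\,N\delta_3$, and likewise $|\hat\tau_\ell - \check\tau_\ell| \le k^2(\max_{ij}|b_{ij}|)\,N\delta_3$, directly surfacing $\delta_3$.

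The quantitative heart is then to bound the coefficients and the denominator so that the apparent factor $N$ disappears into a finite constant. Using the Hobolth decomposition~\eqref{eq:decomposem} together with the strict positivity of $P_{ij}(h)=[e^{Qh}]_{ij}$ (guaranteed by irreducibility; see \citep{lalley2012continuous}), I would show $\max_{ij}|a_{ij}|\lesssim h$ and $\max_{ij}|b_{ij}|\lesssim h$; the factor $h$ is essential, since $Nh=1$ cancels the $N$ above. For the denominator, strict stationarity with $Z(0)\sim\pi$ and $\pi_{\min}>0$ force $\hat\tau_\ell$ (the expected occupation time of state $\ell$ over $[0,1]$) to be bounded below by a positive constant, while $\check q_{\ell\ell'}$ is bounded because $Q\in\Omega_Q$. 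Collecting these gives $|\hat q_{\ell\ell'} - \check q_{\ell\ell'}|\le C\delta_3$ with $C=C(Q,Y_0^N)<\infty$ absorbing $N\max_{ij}|a_{ij}|$, $N\max_{ij}|b_{ij}|$, $\hat\tau_\ell^{-1}$, and the bound on $\check q_{\ell\ell'}$.

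The main obstacle is exactly this last step: obtaining a \emph{uniform} lower bound on $\hat\tau_\ell$ and the correct $O(h)$ scaling of the endpoint-conditioned transition-count and occupation-time coefficients, because any crude bound leaves a spurious factor of $N$; this is where irreducibility and positivity of the stationary distribution do the real work. A secondary bookkeeping point is that the decomposition above identifies $\check q_{\ell\ell'}$ with the expectation of the truncated update. If instead $\check q_{\ell\ell'}=M_{q_{\ell\ell'}}(\Theta)$ is taken with the untruncated smoother, I would insert an additional truncation gap between $\check\tau_\ell,\check m_{\ell\ell'}$ and their untruncated counterparts, bound it with Lemma~\ref{lemma:truncated_smoooth}, and absorb it; for this lemma, treating the statement as a deterministic one-step inequality (with the concentration $\delta_3\lesssim N^{-1/2}$ deferred to Theorem~\ref{theorem:graphrecovery}) keeps the accounting clean.
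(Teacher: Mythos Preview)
Your proposal is correct and follows essentially the same approach as the paper: ratio decomposition of the M-step update, linearity in the centered weights to surface $\delta_3$, the $O(h)$ scaling of the endpoint-conditioned coefficients $a_{ij},b_{ij}$ to cancel the factor $N$, and a stationarity/holding-time lower bound on the occupation-time denominator. Two small points where the paper is sharper: first, your ``secondary bookkeeping'' truncation gap is exactly zero---the paper observes that $\EE[\hat w_{\Theta,ij}(t_n)] = P(Z(t_{n-1})=i, Z(t_n)=j;Q) = \EE[w_{\Theta,ij}(t_n)]$, so the expected truncated update already coincides with the population M-step $\check q_{\ell\ell'}$ and no appeal to Lemma~\ref{lemma:truncated_smoooth} is needed; second, the paper's ratio identity places the deterministic $\EE[\hat\tau_\ell;\Theta]$ (rather than the random $\hat\tau_\ell$) in the denominator of the $\delta_3$ term, so the stationarity lower bound $\EE[\hat\tau_\ell;\Theta] > \big((\sum_{\ell'\neq\ell}q_{\ell\ell'})^{-1}\wedge 1\big)\pi_{\min}$ applies directly, while the data-dependent piece is absorbed into $\hat q_{\ell\ell'}$ and then into $C(Q,Y_0^N)$.
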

\begin{proof}[Proof of Lemma~\ref{lemma:onestep:q}]
From~\citet{bladt2005statistical, liu2015efficient}, the expected number of transition from $\ell$ state to $\ell'$ state given the observations can be written as:
\[
\EE[{m}_{\ell\ell'}(1)\mid Y_0^N;\Theta] = 
\sum_{n=1}^{N}\sum_{i,j}{w}_{\Theta,ij}(t_n)\EE\sbr{m_{\ell\ell'}(t_{n}-t_{n-1})\mid Z(t_{n-1})=i, Z(t_{n})=j;Q},
\]
where $m_{\ell\ell'}(t)$ is the number of transition from state $\ell$ to $\ell'$ in time $[0,t]$.
We define the similar version for the truncated version:
\[
\EE[\hat{m}_{\ell\ell'}(1)\mid Y_0^N;\Theta] = 
\sum_{n=1}^{N}\sum_{i,j}\hat{w}_{\Theta,ij}(t_n)\EE\sbr{m_{\ell\ell'}(t_{n}-t_{n-1})\mid Z(t_{n-1})=i, Z(t_{n})=j;Q}.
\]
Similarly, we have
\begin{align*}
    \EE[\tau_\ell(1)\mid Y_0^N;\Theta] &= \sum_{n=1}^N \sum_{i,j}{w}_{\Theta,ij}(t_n)\EE\sbr{\tau_\ell(t_{n}-t_{n-1})\mid Z(t_{n-1})=i, Z(t_{n})=j;Q};\\
    \EE[\hat{\tau}_\ell(1)\mid Y_0^N;\Theta]
    & = \sum_{n=1}^N\sum_{i,j}
    \hat{w}_{\Theta,ij}(t_n)\EE\sbr{\tau_\ell(t_{n}-t_{n-1})\mid Z(t_{n-1})=i, Z(t_{n})=j;Q}.
\end{align*}

Fixing parameters of $\theta_{ij}^\ell$  and $\sigma^2$, the optimal $\hat{Q}$ that optimize $\Lcal_N(\Theta'\mid \Theta)$ is \[
\hat{q}_{\ell\ell'}=\frac{\EE[\hat{m}_{\ell\ell'}(1)\mid Y_{0}^N ;\Theta]}{\EE[\hat{\tau}_\ell(1)\mid Y_0^N;\Theta]},
\] 
for $\ell\neq \ell'$ and $\hat{q}_{\ell\ell}=-\sum_{\ell\neq \ell'}\hat{q}_{\ell\ell'}$. 
For notation simplicity, starting from below, we express $\tau_\ell(1)$ as ${\tau}_\ell$, $\hat{\tau}_\ell(1)$ as $\hat{\tau}_\ell$, $\hat{m}_{\ell\ell'}(1)$ as $\hat{m}_{\ell\ell'}$, $m_{\ell\ell'}(1)$ as $m_{\ell\ell'}$. 
Hence, for each $\ell\neq \ell'$,  we have
\begin{align*}
    \abr{\hat{q}_{\ell\ell'}-q_{\ell\ell'}^\star}
    &\leq \abr{\frac{\EE[\hat{m}_{\ell\ell'}\mid Y_{0}^N ;\Theta]}{\EE[\hat{\tau}_\ell\mid Y_0^N;\Theta]}
    -\frac{\EE[\EE[\hat{m}_{\ell\ell'}\mid Y_0^N;\Theta]]}{\EE[\EE[\hat{\tau}_\ell\mid Y_0^n;\Theta]]}}\\
    &\quad +
    \abr{\frac{\EE[\EE[\hat{m}_{\ell\ell'}\mid Y_0^N;\Theta]]}{\EE[\EE[\hat{\tau}_\ell\mid Y_0^n;\Theta]]} - 
    \frac{\EE[\EE[{m}_{\ell\ell'}\mid Y_0^N;\Theta]]}{\EE[\EE[{\tau}_\ell\mid Y_0^n;\Theta]]}}\\
    &\quad +
\abr{\frac{\EE[\EE[{m}_{\ell\ell'}\mid Y_0^N;\Theta]]}{\EE[\EE[{\tau}_\ell\mid Y_0^n;\Theta]]} -
q_{\ell\ell'}^\star}\\
    &= T_5 + T_6 + T_7.
\end{align*}

First, we can decompose $T_5$ into two terms:
\begin{align*}
    \abr{\frac{\EE[\hat{m}_{\ell\ell'}\mid Y_{0}^N ;\Theta]}{\EE[\hat{\tau}_\ell\mid Y_0^N;\Theta]}
    -\frac{\EE[\EE[\hat{m}_{\ell\ell'}\mid Y_0^N;\Theta]]}{\EE[\EE[\hat{\tau}_\ell\mid Y_0^n;\Theta]]} } &\leq 
    \frac{1}{\EE[\hat{\tau}_\ell;\Theta]}\abr{
    \EE[\hat{m}_{\ell\ell'}\mid Y_{0}^N ;\Theta]-
    \EE[\EE[\hat{m}_{\ell\ell'}\mid Y_0^N;\Theta]]
    }\\
    &\quad + \frac{\EE[\hat{m}_{\ell\ell'}\mid Y_0^N;\Theta]}{\EE[\hat{\tau}_\ell;\Theta]\EE[\hat{\tau}_\ell\mid Y_0^N;\Theta]}\abr{\EE[\hat{\tau}_{\ell}\mid Y_0^N;\Theta] - \EE[\hat{\tau}_{\ell};\Theta]}.
\end{align*}
Due to the time-homogeneity of Markov chain and $t_n-t_{n-1}=h$, we can write
\begin{align*}
\EE[\hat{m}_{\ell\ell'}\mid Y_0^N;\Theta] = 
\sum_{i,j}\EE\sbr{m_{\ell\ell'}(h)\mid Z(h)=i, Z(0)=j;Q}\sum_{n=1}^{N}\hat{w}_{\Theta,ij}(t_n).
\end{align*}
By Fubini's theorem, we have
\begin{align*}
\EE\sbr{\EE[\hat{m}_{\ell\ell'}\mid Y_0^N;\Theta]} = 
\sum_{i,j}\EE\sbr{m_{\ell\ell'}(h)\mid Z(h)=i, Z(0)=j;Q}\sum_{n=1}^{N}\EE[\hat{w}_{\Theta,ij}(t_n)].
\end{align*}
Combining above two terms, we have
\begin{align}
\abr{\EE[\hat{m}_{\ell\ell'}\mid Y_0^N;\Theta] - \EE\sbr{\EE[\hat{m}_{\ell\ell'}\mid Y_0^N;\Theta]}}
&\leq \sum_{i,j}\EE\sbr{m_{\ell\ell'}(h)\mid Z(h)=j, Z(0)=i;Q}\notag\\
&\quad\times\abr{
\sum_{n=1}^{N}\hat{w}_{\Theta,ij}(t_n)-\EE[\hat{w}_{\Theta,ij}(t_n)]
}.\label{eq:tmp26}
\end{align}
Similarly, apply the time-homogeneity property of Markov-chain again, we have
\begin{align}
    \abr{\EE[\tau_\ell\mid Y_0^N;\Theta]
    -
    \EE\sbr{\EE[\tau_\ell\mid Y_0^N;\Theta]
    }
    }&\leq \sum_{i,j}\EE\sbr{\tau_\ell(h)\mid Z(h)=j, Z(0)=i;Q}\notag\\
    &\quad\times \abr{
\sum_{n=1}^{N}\hat{w}_{\Theta,ij}(t_n)-\EE[\hat{w}_{\Theta,ij}(t_n)]
}.\label{eq:tmp27}
\end{align}
Combining the results of~\eqref{eq:tmp26}--~\eqref{eq:tmp27}, we can upper bound $T_5$ as
\begin{align}
T_5&\leq\sum_{i,j}\frac{\EE[m_{\ell\ell'}(h)\mid Z(h)=j, Z(0)=i;Q]
+
\hat{q}_{\ell\ell'}
\EE[\tau_{\ell}(h)\mid Z(h)=j, Z(0)=i;Q]}{\EE[\hat{\tau}_\ell;\Theta]} 
\notag\\
&\quad\times
\abr{
\sum_{n=1}^{N}\hat{w}_{\Theta,ij}(t_n)-\EE[\hat{w}_{\Theta,ij}(t_n)]},\label{eq:ub:T5:1}
\end{align}
where $\hat{q}_{\ell\ell'}=\EE[\hat{m}_{\ell\ell'}\mid Y_0^N;\Theta]/\EE[\hat{\tau}_\ell\mid Y_0^N;\Theta]$. Furthermore, we can write
\begin{align*}
\EE[\hat{w}_{\Theta,ij}(t_n)]&=\int P\rbr{Z(t_n)=j, Z(t_{n-1})=i\mid Y_{(n-r)\vee 0}^{(n+r)\wedge N};\Theta }P\rbr{Y_{(n-r)\vee 0}^{(n+r)\wedge N};\Theta}\mathrm{d}Y_{(n-r)\vee 0}\cdots \mathrm{d}Y_{(n+r)\wedge N}\\
&= P(Z(t_n)=j, Z(t_{n-1})=i;Q)\\
&=P(Z(h)=j, Z(0)=i;Q),
\end{align*}
for $n=1,\ldots,N$. The last equality follows because $Z(t)$ for $t\geq 0$ is stationary.
Therefore, we have
\begin{align}
\EE[\hat{\tau}_\ell;\Theta]&=N\sum_{i,j}\EE[\tau_\ell(h)\mid Z(h)=j, Z(0)=i;Q]P(Z(h)=j, Z(0)=i;Q)\notag\\
&=\EE[{\tau}_\ell;\Theta]\notag\\
&=\sum_{i=1}^k \EE[\tau_\ell\mid Z(0)=i;\Theta]\pi_i\notag\\
&> \EE[\tau_\ell\mid Z(0)=\ell;\Theta]\pi_\ell\notag\\
\intertext{Note that $\EE[\tau_\ell\mid Z(0)=\ell;\Theta]$ is greater than minimum of the expected holding time of $Z(t)$ at state $\ell$ and $1$. Since the holding time of $Z(t)$ at state follows a exponential distribution with rate $-q_{\ell\ell}=\sum_{\ell\neq\ell'}q_{\ell\ell'}$, we know that the expected holding time is $-1/q_{\ell\ell}$. Hence the above term is further lower bounded as}
&> \rbr{\frac{1}{\sum_{\ell\neq\ell'}q_{\ell\ell'}}\wedge 1}\pi_{\min}>0.
\end{align}
Therefore, we can ensure that 
\[
\sum_{i,j}\frac{\EE[m_{\ell\ell'}(h)\mid Z(h)=j, Z(0)=i;Q]
+
\hat{q}_{\ell\ell'}
\EE[\tau_{\ell}(h)\mid Z(h)=j, Z(0)=i;Q]}{\EE[\hat{\tau}_\ell;\Theta]} ,
\]
is finite and well behaved. Compute the exact upper bound with respect to $Q, Y_0^N$ would throw us into technical weeds, we hence assume there is a constant  $C= C(Q, Y_0^N)$ such that
\[
\sum_{i,j}\frac{\EE[m_{\ell\ell'}(h)\mid Z(h)=j, Z(0)=i;Q]
+
\hat{q}_{\ell\ell'}
\EE[\tau_{\ell}(h)\mid Z(h)=j, Z(0)=i;Q]}{\EE[\hat{\tau}_\ell;\Theta]}\leq C. 
\]

As a result, we can write~\eqref{eq:ub:T5:1} as
\begin{align}
T_5&\leq C\cbr{\max_{i,j}\frac{1}{N}\abr{
\sum_{n=1}^{N}\hat{w}_{\Theta,ij}(t_n)-\EE[\hat{w}_{\Theta,ij}(t_n)]}}\leq C\delta_3.
\end{align}

Next, we want to show that $T_6=0$. This is because
\begin{align*}
\EE[\hat{w}_{\Theta,ij}(t_n)]
&=\int P\rbr{Z(t_n)=j, Z(t_{n=1})=i\mid Y_{(n-r)\vee 0}^{(n+r)\wedge N};\Theta}P(Y_{(n-r)\vee 0}^{(n+r)\wedge N})\mathrm{d}Y_{(n-r)\vee 0}\cdots\mathrm{d}Y_{(n+r)\wedge N}\\
&= P\rbr{Z(t_n)=j, Z(t_{n-1})=i;\Theta} = \EE[{w}_{\Theta,ij}(t_n)],
\end{align*}
for $n=1,\ldots,N$. Therefore, we have 
\begin{align*}
    \EE\sbr{\EE[\hat{m}_{\ell\ell'}\mid Y_0^N;\Theta]} &= 
\sum_{i,j}\EE\sbr{m_{\ell\ell'}(h)\mid Z(h)=i, Z(0)=j;Q}\sum_{n=1}^{N}\EE[\hat{w}_{\Theta,ij}(t_n)]\\
&=\sum_{i,j}\EE\sbr{m_{\ell\ell'}(h)\mid Z(h)=i, Z(0)=j;Q}\sum_{n=1}^{N}\EE[{w}_{\Theta,ij}(t_n)]\\
&= \EE\sbr{\EE[{m}_{\ell\ell'}\mid Y_0^N;\Theta]}.
\end{align*}
Similarly, we can show that
\[
\EE\sbr{\EE[\hat{\tau}_{\ell}\mid Y_0^N;\Theta]}=\EE\sbr{\EE[{\tau}_{\ell}\mid Y_0^N;\Theta]}.
\]
Hence, we have $T_6 = 0$. 

Finally, recall that $\check{\Theta}=\arg\max_{\Theta'}\Lcal(\Theta'\mid\Theta)$ and it follows that
\[
\check{q}_{\ell\ell'} = \frac{\EE[m_{\ell\ell'}\mid \Theta]}{\EE[\tau_\ell\mid \Theta]}.
\]
Hence $T_7=\abr{\check{q}_{\ell\ell'}-{q}_{\ell\ell'}^\star}$. Combining $T_5$ to $T_7$ together, we arrive at
\[
\abr{
\hat{q}_{\ell\ell'}-q_{\ell\ell'}^\star
}\leq C\delta_3 + \abr{\check{q}_{\ell\ell'}-q_{\ell\ell'}^\star}.
\]

\end{proof}

\subsection{Auxiliary Lemmas for the analysis of one-step update}
\begin{lemma}\label{lemma:deltaW}
Given a fixed $i=1,\ldots, p$, define $Y_n^{\Delta}=Y_{n,i}-Y_{n-1,i}$ for $n=2,\ldots, N$.  Let $\max_{\ell=1,\ldots, k}\max_{n=1,\ldots,N}\abr{Y_n^\Delta-\theta_i^\ell\Psi(t_n)}=c_0$ and $|g_{j}(X_i(t))|\leq B$ for $i=1,\ldots,p$ and $j=1,\ldots,m$. Recall that $\Theta=\{\theta^\ell_{ij};i,j=1,\ldots,p, \ell=1,\ldots,k\}\cup\{Q\}$ and suppose that $Q$ satisfies Assumption~\ref{assumption:mixing}. Define the coefficient
\[
C = \rbr{\max_j\frac{1}{\sigma_{\min}(\hat{K}_{\Psi_j})}}Bc_0\frac{10}{\delta_{\min}}\zeta^{-8}\pi_{\min}^{-2},
\]
where $\zeta$ is the mixing coefficient defined in~\eqref{eq:mixing}, $\delta_{\min}$ is defined in Lemma~\ref{lemma:truncated_smoooth} and $\pi_{\min}$ is defined in~\eqref{eq:minimum_statdis}. 
    Then, we have
    \[
    \bignorm{\frac{1}{N}\sum_{n=1}^N
    \cbr{\hat{w}_{\Theta,\ell}(t_n)-w_{\Theta,\ell}(t_n)}\cbr{{Y}_n^{\Delta}-\theta^{\ell\star }_i\Psi(t_n)}\Psi(t_n)^\top
    }_{\infty, \hat{K}_\Psi^*}\leq C\sqrt{m}
        \cbr{1-(\zeta\pi_{\min}^2)}^{r-1}. 
    \]
\end{lemma}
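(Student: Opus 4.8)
The plan is to bound the displayed quantity directly and deterministically, exploiting the blockwise structure of the norm $\norm{\cdot}_{\infty,\hat K_\Psi^*}$ together with the pointwise smoothing estimate of Lemma~\ref{lemma:truncated_smoooth}. Writing $\Delta_w$ for the displayed vector, its $j$-th block is the $\RR^m$ vector $\Delta_{w,j}=\frac1N\sum_{n=1}^N\cbr{\hat w_{\Theta,\ell}(t_n)-w_{\Theta,\ell}(t_n)}\cbr{Y_n^\Delta-\theta_i^{\ell\star}\Psi(t_n)}\Psi_j(t_n)$, and by definition $\norm{\Delta_w}_{\infty,\hat K_\Psi^*}=\max_{j=1,\dots,p}\norm{\Delta_{w,j}}_{\hat K_{\Psi_j}^*}$, where $\norm{v}_{\hat K_{\Psi_j}^*}=(v^\top \hat K_{\Psi_j}^{-1}v)^{1/2}$ is the dual of $\norm{\cdot}_{\hat K_{\Psi_j}}$. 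It therefore suffices to bound a single block uniformly in $j$ and then take the maximum.

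First I would fix $j$ and pull the sum out of the Euclidean norm by the triangle inequality, so that $\norm{\Delta_{w,j}}_2\le \frac1N\sum_{n=1}^N \abr{\hat w_{\Theta,\ell}(t_n)-w_{\Theta,\ell}(t_n)}\,\abr{Y_n^\Delta-\theta_i^{\ell\star}\Psi(t_n)}\,\norm{\Psi_j(t_n)}_2$. Each summand now factors into three controlled scalars. The weight difference is handled by the second bound of Lemma~\ref{lemma:truncated_smoooth}, which gives, uniformly in $n$, $\abr{\hat w_{\Theta,\ell}(t_n)-w_{\Theta,\ell}(t_n)}\le \frac{10}{\delta_{\min}}\zeta^{-8}\pi_{\min}^{-2}\cbr{1-(\zeta\pi_{\min})^2}^{r-1}$; this is precisely where the geometric truncation factor and the constants $\delta_{\min}^{-1},\zeta^{-8},\pi_{\min}^{-2}$ enter. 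The residual satisfies $\abr{Y_n^\Delta-\theta_i^{\ell\star}\Psi(t_n)}\le c_0$ by hypothesis, and since each of the $m$ components of $\Psi_j(t_n)=\int_{t_{n-1}}^{t_n}g(X_j(u))\,\mathrm{d}u$ is controlled using $\abr{g_a(X_j(u))}\le B$, we obtain $\norm{\Psi_j(t_n)}_2\le \sqrt m\,B h$. Averaging over the $N$ terms collapses $\frac1N\sum_n 1$, leaving $\norm{\Delta_{w,j}}_2$ with a single factor $\sqrt m$ and the geometric decay.

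Finally I would pass from the Euclidean to the dual norm via $\norm{\Delta_{w,j}}_{\hat K_{\Psi_j}^*}\le \sigma_{\min}(\hat K_{\Psi_j})^{-1/2}\norm{\Delta_{w,j}}_2$, and absorb all the $n$- and realization-independent factors—the relevant power of $\sigma_{\min}(\hat K_{\Psi_j})^{-1}$, $B$, $c_0$, and $\tfrac{10}{\delta_{\min}}\zeta^{-8}\pi_{\min}^{-2}$—into the constant $C$, leaving the geometric factor $\cbr{1-(\zeta\pi_{\min})^2}^{r-1}$ and the isolated $\sqrt m$ outside. Taking the maximum over $j$ then yields the claim. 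The same argument bounds the expectation version of $\Delta_w$ in~\eqref{eq:define:deltaw}, since the estimate of Lemma~\ref{lemma:truncated_smoooth} holds for every realization of $Y_0^N$, so monotonicity of the expectation preserves the inequality. The only genuine subtlety is the careful blockwise treatment of the dual norm and the uniform-in-$n$ application of the smoothing bound; the real analytic content has already been established in Lemma~\ref{lemma:truncated_smoooth}, so the main obstacle here is bookkeeping rather than a new idea.
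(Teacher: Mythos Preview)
Your approach is essentially identical to the paper's: pull the sum out by the triangle inequality, bound the three factors $\abr{\hat w-w}$, $\abr{Y_n^\Delta-\theta_i^{\ell\star}\Psi(t_n)}$, and $\norm{\Psi_j(t_n)}_2$ separately using Lemma~\ref{lemma:truncated_smoooth}, the hypothesis $c_0$, and the uniform bound $B$, then convert to the dual norm. Your bookkeeping is in fact slightly sharper than the paper's in two places---you retain the factor $h$ from the integral defining $\Psi_j(t_n)$ and you use $\sigma_{\min}(\hat K_{\Psi_j})^{-1/2}$ for the dual-norm conversion rather than the paper's $\sigma_{\min}(\hat K_{\Psi_j})^{-1}$---but since the lemma states an explicit constant $C$ with the larger power, you should note at the end that your implicit constant is dominated by the stated $C$ rather than leaving it as ``absorb the relevant power.''
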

\begin{proof}[Proof of Lemma~\ref{lemma:deltaW}]
    Write
    \begin{align}
\norm{\Delta_w}_{\infty,\hat{K}_\Psi^*}&=\bignorm{\frac{1}{N}\sum_{n=1}^N
    \cbr{\hat{w}_{\Theta,\ell}(t_n)-w_{\Theta,\ell}(t_n)}\cbr{{Y}_n^{\Delta}-\theta^{\ell\star }_i\Psi(t_n)}\Psi(t_n)^\top
    }_{\infty, \hat{K}_\Psi^*}\notag\\
    &\leq 
    \frac{1}{N}\sum_{n=1}^N
    \abr{\hat{w}_{\Theta,\ell}(t_n)-w_{\Theta,\ell}(t_n)}\abr{{Y}_n^{\Delta}-\theta^{\ell\star }_i\Psi(t_n)}\norm{\Psi(t_n)^\top
    }_{\infty, \hat{K}_\Psi^*}\label{eq:tmp28}
    \end{align}
    Note that
    \begin{align}
        \norm{\Psi(t_n)^\top
    }_{\infty, \hat{K}_\Psi^*}\leq \max_j\frac{1}{\sigma_{\min}(\hat{K}_{\Psi_{j}})}\norm{\Psi_j(t_n)}_2\leq 
    \max_j\frac{1}{\sigma_{\min}(\hat{K}_{\Psi_{j}})}B\sqrt{m}. \label{eq:psi_maxbound}
    \end{align}
    Apply~\eqref{eq:psi_maxbound} to~\eqref{eq:tmp28}, we can obtain
    \[
    \norm{\Delta_w}_{\infty,\hat{K}_\Psi^*}\leq \rbr{\max_j\frac{1}{\sigma_{\min}(\hat{K}_{\Psi_j})}}\frac{Bc_0\sqrt{m}}{N}\sum_{n=1}^N\abr{\hat{w}_{\Theta,\ell}(t_n)-w_{\Theta,\ell}(t_n)}
    \]
    Apply Lemma~\ref{lemma:truncated_smoooth}, we have
    \begin{align*}
        \norm{\Delta_w}_{\infty,\hat{K}_\Psi^*}\leq 
        \rbr{\max_j\frac{1}{\sigma_{\min}(\hat{K}_{\Psi_j})}}Bc_0\sqrt{m}
        \frac{10}{\delta_{\min}}\zeta^{-8}\pi_{\min}^{-2}\cbr{1-(\zeta\pi_{\min}^2)}^{r-1}. 
    \end{align*}
\end{proof}
\begin{lemma}\label{lemma:deltaPsi}
    Observe two stochastic processes $X_t$ and $\hat{X}_t$ on $[0,1]$. Suppose that $\max_{j=1,\ldots,p}\opnorm{X_i-\hat{X}_i}{2}=\delta$, and $\sup_{t\in[0,1]}|g_{j}(X_i(t))|\leq B$, $\sup_{t\in[0,1]}|g'_{j}(X_i(t))|\leq D$ for $i=1,\ldots,p$ and $k=1,\ldots,m$. Assume that $\theta^\star \in\RR^{1\times pm}$ is $ms$-sparse with $s\leq p$. 
     Then,
    \[
    \bignorm{\theta^\star \sum_{n=1}^N\hat{w}_{\Theta,\ell}(t_n)\{{\Psi}(t_n){\Psi}(t_n)^\top-\hat{\Psi}(t_n)\hat{\Psi}(t_n)^\top\}}_{\infty,\hat{K}_{\Psi}^*}\leq\max_j\frac{2}{\sigma_{\min}(\hat{K}_{\Psi_j})}\norm{\theta^\star }_2BD\delta m\sqrt{s}. 
    \]
\end{lemma}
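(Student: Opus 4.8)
The plan is to express the outer-product difference entirely through the error vector $e(t_n):=\Psi(t_n)-\hat{\Psi}(t_n)$ and then estimate the two resulting pieces that are linear in $e$. I would open with two reductions. Since $\hat{w}_{\Theta,\ell}(t_n)\in[0,1]$, it can be replaced by $1$ throughout. For the dual norm I would use the blockwise estimate employed in~\eqref{eq:psi_maxbound}: writing $V=\theta^\star\sum_{n}\hat{w}_{\Theta,\ell}(t_n)\cbr{\Psi(t_n)\Psi(t_n)^\top-\hat{\Psi}(t_n)\hat{\Psi}(t_n)^\top}$ and letting $V_j\in\RR^m$ be its $j$-th block, one has $\norm{V}_{\infty,\hat{K}_\Psi^*}=\max_j\norm{V_j}_{\hat{K}_{\Psi_j}^*}\le \max_j\sigma_{\min}(\hat{K}_{\Psi_j})^{-1}\norm{V_j}_2$, so it suffices to control $\norm{V_j}_2$.

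Next I would factor $\Psi(t_n)\Psi(t_n)^\top-\hat{\Psi}(t_n)\hat{\Psi}(t_n)^\top=\Psi(t_n)e(t_n)^\top+e(t_n)\Psi(t_n)^\top-e(t_n)e(t_n)^\top$. Acting with $\theta^\star$ on the left turns the two leading terms into the scalars $\theta^\star\Psi(t_n)$ and $\theta^\star e(t_n)$ multiplying the row vectors $e(t_n)^\top$ and $\Psi(t_n)^\top$; the quadratic term $\rbr{\theta^\star e(t_n)}e_j(t_n)$ is of order $\delta^2$ and is absorbed. Thus, up to that lower-order term, the $j$-block of $V$ is the sum of $\sum_n\rbr{\theta^\star\Psi(t_n)}e_j(t_n)$ and $\sum_n\rbr{\theta^\star e(t_n)}\Psi_j(t_n)$, and the remaining work is to bound each of these by $\norm{\theta^\star}_2 BD\delta m\sqrt s$.

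The building blocks I would assemble are: (i) a pointwise smoothness bound, each component of $e_i(t_n)$ being $\int_{t_{n-1}}^{t_n}\cbr{g_j(X_i(u))-g_j(\hat{X}_i(u))}\mathrm{d}u$, which by the mean value theorem and $\abr{\dot{g}_j}\le D$ is at most $D\int_{t_{n-1}}^{t_n}\abr{X_i-\hat{X}_i}\mathrm{d}u$, giving $\norm{e_i(t_n)}_2\le\sqrt m D\int_{t_{n-1}}^{t_n}\abr{X_i-\hat{X}_i}\mathrm{d}u$; (ii) telescoping over $n$, $\sum_n\int_{t_{n-1}}^{t_n}\abr{X_i-\hat{X}_i}\mathrm{d}u=\int_0^1\abr{X_i-\hat{X}_i}\mathrm{d}u\le\opnorm{X_i-\hat{X}_i}{}\le\delta$ by Cauchy--Schwarz on $[0,1]$, so $\sum_n\norm{e_i(t_n)}_2\le\sqrt m D\delta$; and (iii) a sparsity bound, where with $S=\supp(\theta^\star)$, $\abr{S}\le s$, Cauchy--Schwarz gives $\abr{\theta^\star\Psi(t_n)}\le\norm{\theta^\star}_2\rbr{\sum_{i\in S}\norm{\Psi_i(t_n)}_2^2}^{1/2}\le\norm{\theta^\star}_2 B\sqrt{ms}$ using $\norm{\Psi_i(t_n)}_2\le B\sqrt m$. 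With these in hand the first sum is immediate: bounding $\abr{\theta^\star\Psi(t_n)}$ uniformly and summing $\norm{e_j(t_n)}_2$ yields $\le\norm{\theta^\star}_2 B\sqrt{ms}\cdot\sqrt m D\delta=\norm{\theta^\star}_2 BD\delta m\sqrt s$.

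The hard part is the second sum, $\sum_n\abr{\theta^\star e(t_n)}\norm{\Psi_j(t_n)}_2$, because the naive estimate $\abr{\theta^\star e(t_n)}\le\norm{\theta^\star}_2\sum_{i\in S}\norm{e_i(t_n)}_2$ followed by termwise summation loses a whole factor $s$. Instead I would apply Cauchy--Schwarz across the index $n$, $\sum_n\abr{\theta^\star e(t_n)}\le\norm{\theta^\star}_2\sqrt N\rbr{\sum_n\sum_{i\in S}\norm{e_i(t_n)}_2^2}^{1/2}$, combined with the sharper pointwise bound $\norm{e_i(t_n)}_2^2\le m D^2 h\int_{t_{n-1}}^{t_n}\abr{X_i-\hat{X}_i}^2\mathrm{d}u$ obtained from Cauchy--Schwarz in time, which produces the crucial factor $h=1/N$. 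Summing gives $\sum_n\sum_{i\in S}\norm{e_i(t_n)}_2^2\le smD^2 h\delta^2$, so the $\sqrt N$ cancels against $\sqrt h=N^{-1/2}$ and $\sum_n\abr{\theta^\star e(t_n)}\le\norm{\theta^\star}_2\sqrt{sm}D\delta$; multiplying by $\norm{\Psi_j(t_n)}_2\le B\sqrt m$ reproduces $\norm{\theta^\star}_2 BD\delta m\sqrt s$. Adding the two blocks accounts for the constant $2$, and dividing by $\min_j\sigma_{\min}(\hat{K}_{\Psi_j})$ completes the estimate. The only step demanding care beyond routine bookkeeping is precisely this $\sqrt N$--$\sqrt h$ cancellation, which is what keeps the sparsity dependence at $\sqrt s$ rather than the naive $s$.
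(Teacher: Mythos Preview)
Your argument is correct and reaches the stated bound, but it differs from the paper's in two ways that are worth pointing out.

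First, the paper uses the telescoping split
\[
\Psi(t_n)\Psi(t_n)^\top-\hat{\Psi}(t_n)\hat{\Psi}(t_n)^\top=\{\Psi(t_n)-\hat{\Psi}(t_n)\}\Psi_j(t_n)^\top+\hat{\Psi}(t_n)\{\Psi_j(t_n)-\hat{\Psi}_j(t_n)\}^\top
\]
rather than your symmetric one, so there is no quadratic remainder to ``absorb''; the constant $2$ then comes out cleanly without any smallness assumption on $\delta$.

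Second, the paper never needs the $\sqrt{N}$--$\sqrt{h}$ cancellation you single out as the crux. For the term you call hard, the paper keeps the sum over $n$ inside and works componentwise: with $\mathcal{S}$ the support of $\theta^\star$ (size $\leq ms$), Cauchy--Schwarz gives
\[
\Bigl|\theta^\star_{\mathcal{S}}\sum_n \hat{w}_{\Theta,\ell}(t_n)\bigl[\Psi(t_n)-\hat{\Psi}(t_n)\bigr]_{\mathcal{S}}\Bigr|
\le \|\theta^\star\|_2\,\sqrt{ms}\,\Bigl\|\sum_n\bigl|\Psi(t_n)-\hat{\Psi}(t_n)\bigr|\Bigr\|_\infty,
\]
and then your own telescoping identity $\sum_n|\Psi_{ij}(t_n)-\hat{\Psi}_{ij}(t_n)|=\int_0^1|g_j(X_i)-g_j(\hat X_i)|\,\mathrm{d}u\le D\delta$ finishes the job. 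In other words, the $\sqrt{s}$ dependence comes directly from $\|\theta^\star\|_1\le\sqrt{ms}\,\|\theta^\star\|_2$ (equivalently $\|v_{\mathcal S}\|_2\le\sqrt{ms}\,\|v\|_\infty$), not from a cancellation between $\sqrt{N}$ and $\sqrt{h}$. Your Cauchy--Schwarz-in-$n$ device is a valid alternative, but it is more work for the same outcome; the paper's route is shorter and makes the source of the $\sqrt{s}$ factor more transparent.
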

\begin{proof}[Proof of Lemma~\ref{lemma:deltaPsi}]
    Write
    \begin{align}
\bignorm{\theta^\star \sum_{n=1}^N\hat{w}_{\Theta,\ell}(t_n)&\{{\Psi}(t_n){\Psi}(t_n)^\top-\hat{\Psi}(t_n)\hat{\Psi}(t_n)^\top\}}_{\infty,\hat{K}_{\Psi}^*}\notag\\
&=\max_j\bignorm{
        \theta^\star \sum_{n=1}^N\hat{w}_{\Theta,\ell}(t_n)\{{\Psi}(t_n){\Psi}_j(t_n)^\top-\hat{\Psi}(t_n)\hat{\Psi}_j(t_n)^\top\}
        }_{\hat{K}_{\Psi_j}^*}\notag\\
        &\leq\max_j\frac{1}{\sigma_{\min}(\hat{K}_{\Psi_j})}\bignorm{
        \theta^\star \sum_{n=1}^N\hat{w}_{\Theta,\ell}(t_n)\{{\Psi}(t_n){\Psi}_j(t_n)^\top-\hat{\Psi}(t_n)\hat{\Psi}_j(t_n)^\top\}
        }_{2}\notag\\
        &\leq \max_j\frac{1}{\sigma_{\min}(\hat{K}_{\Psi_j})}\underbrace{\bignorm{
        \theta^\star \sum_{n=1}^N\hat{w}_{\Theta,\ell}(t_n)\{{\Psi}(t_n)-\hat{\Psi}(t_n)\}{\Psi}_j(t_n)^\top
        }_{2}}_{T_1}\notag\\
        &\quad 
        +\max_j\frac{1}{\sigma_{\min}(\hat{K}_{\Psi_j})}\underbrace{\bignorm{
        \theta^\star \sum_{n=1}^N\hat{w}_{\Theta,\ell}(t_n)\hat{\Psi}(t_n)\{{\Psi}_j(t_n)^\top-\hat{\Psi}_j(t_n)^\top\}
        }_{2}}_{T_2}.\label{eq:tmp0}
    \end{align}
    Note that by triangle inequality, we can write
    \begin{align}
     T_1=\bignorm{
        \theta^\star \sum_{n=1}^N\hat{w}_{\Theta,\ell}(t_n)\{{\Psi}(t_n)-\hat{\Psi}(t_n)\}{\Psi}_j(t_n)^\top
        }_{2}&\leq
        \abr{\theta^\star \sum_{n=1}^N\hat{w}_{\Theta,\ell}(t_n)\{{\Psi}(t_n)-\hat{\Psi}(t_n)\}}\norm{
        {\Psi}_j(t_n)^\top
        }_{2}\notag\\ 
        &\leq \abr{\theta^\star \sum_{n=1}^N\hat{w}_{\Theta,\ell}(t_n)\{{\Psi}(t_n)-\hat{\Psi}(t_n)\}} B\sqrt{m}.\label{eq:tmp1}
    \end{align}
    Define the support of $\theta^\star $ as $\Scal$. Then, ~\eqref{eq:tmp1} is equivalent as
   \begin{align}
       \abr{\theta^\star \sum_{n=1}^N\hat{w}_{\Theta,\ell}(t_n)\{{\Psi}(t_n)-\hat{\Psi}(t_n)\}} B\sqrt{m}&=
       \abr{\theta_\Scal^\star\sum_{n=1}^N\hat{w}_{\Theta,\ell}(t_n)[{\Psi}(t_n)-\hat{\Psi}(t_n)]_{\Scal}} B\sqrt{m}\notag\\
       \intertext{Apply Cauchy-Schwarz inequality, we can further bound the above term as}
       &\leq \norm{\theta_\Scal^\star}_2\bignorm{\sum_{n=1}^N\hat{w}_{\Theta,\ell}(t_n)[{\Psi}(t_n)-\hat{\Psi}(t_n)]_{\Scal}}_2 B\sqrt{m}\notag\\
       &\leq Bm\sqrt{s}\norm{\theta_\Scal^\star}_2\bignorm{\sum_{n=1}^N\hat{w}_{\Theta,\ell}(t_n)[{\Psi}(t_n)-\hat{\Psi}(t_n)]_{\Scal}}_\infty \notag\\
       &\leq 
       Bm\sqrt{s}\norm{\theta^\star }_2\bignorm{\sum_{n=1}^N\abr{{\Psi}(t_n)-\hat{\Psi}(t_n)}}_\infty.\label{eq:tmp2}
   \end{align}
Note that for each $i=1,\ldots,p$ and $j=1,\ldots, m$, we can write
\begin{align}
\sum_{n=1}^N\abr{{\Psi}_{ij}(t_n)-\hat{\Psi}_{ij}(t_n)}&=\sum_{n=1}^N\int_{t_{n-1}}^{t_n}\abr{g_{ij}(X_{i,u})-g_{ij}(\hat{X}_{i,u})}\mathrm{d}u\notag\\
&=\int_{0}^1\abr{g_{ij}(X_{j,u})-g_{ij}(\hat{X}_{i,u})}\mathrm{d}u\notag\\
&\leq\int_{0}^1\abr{D(X_{i,u}-\hat{X}_{i,u})}\mathrm{d}u\notag\\
&\leq \rbr{\int_0^1D^2\mathrm{d}u}^{1/2}\cbr{\int_0^T(X_{i,u}-\hat{X}_{i,u})^2\mathrm{d}u}^{1/2}\notag\\
&=D\opnorm{X_i-\hat{X}_i}{2}\leq D\delta\label{eq:tmp3}
\end{align}
Plug the result of~\eqref{eq:tmp3} into~\eqref{eq:tmp2} and then into~\eqref{eq:tmp1}, we obtain
\begin{equation}\label{eq:T1}
T_1=\bignorm{
        \theta^\star \sum_{n=1}^N\hat{w}_{\Theta,\ell}(t_n)\{{\Psi}(t_n)-\hat{\Psi}(t_n)\}{\Psi}_j(t_n)^\top
        }_{2}\leq BD\delta m\sqrt{s }\norm{\theta^\star }_2.
\end{equation}
Similarly, we can write
\begin{align}
    T_2&=\bignorm{
        \theta^\star \sum_{n=1}^N\hat{w}_{\Theta,\ell}(t_n)\hat{\Psi}(t_n)\{{\Psi}_j(t_n)^\top-\hat{\Psi}_j(t_n)^\top\}
        }_{2}\notag\\
       &\leq\max_{n}\abr{\theta^\star \hat{w}_{\Theta,\ell}(t_n)\hat{\Psi}(t_n)}\bignorm{\sum_{n=1}^N\abr{{\Psi}_j(t_n)^\top-\hat{\Psi}_j(t_n)^\top}}_2 \notag\\
       &\leq \max_{n}\abr{\theta^\star _\Scal(t_n)[\hat{\Psi}(t_n)]_{\Scal}}\bignorm{\sum_{n=1}^N\abr{{\Psi}_j(t_n)^\top-\hat{\Psi}_j(t_n)^\top}}_2 \notag\\
       &\leq \norm{\theta^\star }_2B\sqrt{s}m\bignorm{\sum_{n=1}^N\abr{{\Psi}_j(t_n)^\top-\hat{\Psi}_j(t_n)^\top}}_\infty.\notag
       \intertext{Plug result of~\eqref{eq:tmp3} into the above term, we arrive at}
       &\leq \norm{\theta^\star }_2B\sqrt{s}mD\delta\label{eq:T2}
\end{align}
Once we obtain the upper bound of $T_1$ in~\eqref{eq:T1}, and $T_2$ in~\eqref{eq:T2}, we can bound the right hand side of~\eqref{eq:tmp0} and arrive at
\[
\bignorm{\theta^\star \sum_{n=1}^N\hat{w}_{\Theta,\ell}(t_n)\{{\Psi}(t_n){\Psi}(t_n)^\top-\hat{\Psi}(t_n)\hat{\Psi}(t_n)^\top\}}_{\infty,\hat{K}_{\Psi}^*}\leq \max_j\frac{2}{\sigma_{\min}(\hat{K}_{\Psi_j})}\norm{\theta^\star }_2BD\delta m\sqrt{s}.
\]
\end{proof}

\begin{lemma}\label{lemma:sc_w} Suppose that $N>4$, then for some universal constants $C_1, C_2$, we have
    \[
    P\rbr{\max_{i,j}\frac{1}{N}\abr{\sum_{n=1}^N\hat{w}_{\Theta,ij}(t_n)-{w}_{\Theta,ij}(t_n)
    }>C_1\sqrt{\frac{1}{N}}}\leq k^2\exp(-C_2).
    \]
\end{lemma}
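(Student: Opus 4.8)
The plan is to read the displayed quantity as a concentration inequality for the \emph{truncated} smoothing probabilities around their common mean. By strict stationarity (Assumption~\ref{assumption:stationary}) and the tower property, for every $n$ one has $\EE[\hat{w}_{\Theta,ij}(t_n)] = P(Z_{n-1}=i,Z_n=j;\Theta) = P(Z(h)=j,Z(0)=i;Q) =: \mu_{ij}$, a quantity independent of $n$; this is exactly the identification already exploited in the proof of Lemma~\ref{lemma:onestep:q}. Hence I read ${w}_{\Theta,ij}(t_n)$ in the statement as this common mean $\mu_{ij}$. (Note that the literal reading, with ${w}_{\Theta,ij}(t_n)$ the non-truncated smoothing probability $P(Z_{n-1}=i,Z_n=j\mid Y_0^N;\Theta)$, would by Lemma~\ref{lemma:truncated_smoooth} give only the \emph{deterministic} bound $C\,\zeta^{-8}\pi_{\min}^{-2}(1-(\zeta\pi_{\min})^2)^{r-1}$ on the averaged difference, which decays in $r$ rather than at the rate $N^{-1/2}$, so it cannot be the intended content.) The goal is therefore to show that $N^{-1}\sum_{n=1}^N \hat{w}_{\Theta,ij}(t_n)$ concentrates on $\mu_{ij}$ at the parametric rate, uniformly over the $k^2$ pairs $(i,j)$.

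First I would verify that, for each fixed $(i,j)$, the centered sequence $W_n := \hat{w}_{\Theta,ij}(t_n)-\mu_{ij}$ is bounded and geometrically $\beta$-mixing. Boundedness ($\abr{W_n}\le 1$) is immediate, since $\hat{w}_{\Theta,ij}(t_n)$ is a probability. For the mixing property, the crucial point is that $\hat{w}_{\Theta,ij}(t_n)=P\rbr{Z_{n-1}=i,Z_n=j\mid Y_{(n-r)\vee 0}^{(n+r)\wedge N};\Theta}$ is a measurable function of the length-$(2r+1)$ sliding window $Y_{(n-r)\vee 0}^{(n+r)\wedge N}$. Consequently $\sigma(W_a,\ldots,W_b)\subseteq\sigma\rbr{Y_{(a-r)\vee 0},\ldots,Y_{(b+r)\wedge N}}$, so the $\beta$-mixing coefficients satisfy $\beta_W(\ell)\le\beta_Y(\ell-2r)$ for $\ell>2r$. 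Since $r$ is a fixed constant, Assumption~\ref{assumption:mixingbeta} gives $\beta_W(\ell)\le 2e^{2cr}\exp(-c\ell)$, so $\{W_n\}$ inherits the geometric rate; this is the sliding-window analogue of the functions-of-mixing-processes statements (Lemma~3.6--3.7 of \citet{vidyasagar2013learning}).

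Next I would apply the Bernstein-type inequality for bounded geometrically $\beta$-mixing sequences of \citet{merlevede2011bernstein} to $S_N=\sum_{n=1}^N W_n$. With bound $M=1$ and variance proxy $V\le 1$, that inequality is of the form $P\rbr{\abr{S_N}\ge\lambda}\le\exp\rbr{-C\lambda^2/\rbr{NV+1+\lambda(\log N)^2}}$ for $N>4$. Taking $\lambda=C_1\sqrt{N}$, the contribution $\lambda(\log N)^2=C_1\sqrt N(\log N)^2$ is $o(N)$ and is dominated by $NV$, so the exponent $C C_1^2 N/\rbr{NV+1+C_1\sqrt N(\log N)^2}$ is a continuous, strictly positive function of $N$ that tends to $CC_1^2/V$; hence its infimum over $N>4$ is a positive constant $C_2=C_2(C_1)$. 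This gives $P\rbr{N^{-1}\abr{S_N}>C_1/\sqrt N}\le\exp(-C_2)$ for a single pair $(i,j)$, and a union bound over the $k^2$ pairs yields the claimed $k^2\exp(-C_2)$.

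The main obstacle is the combination of Steps~2 and~3: rigorously transferring geometric $\beta$-mixing from the data to the derived sequence of windowed conditional probabilities, and then ensuring that the exponent delivered by the Bernstein bound can be taken as a genuine constant $C_2$ uniform in $N$ (not an $N$-dependent quantity) so that the final probability is a constant. A secondary technical point is that the $O(r)=O(1)$ boundary indices, where the window is clipped by $(\cdot)\vee 0$ and $(\cdot)\wedge N$, break exact stationarity; since there are only finitely many such terms, each bounded by $1$, they perturb the average by $O(1/N)$ and are absorbed into the $C_1\sqrt N$ deviation, or alternatively one invokes the non-stationary version of the inequality in \citet{merlevede2011bernstein}.
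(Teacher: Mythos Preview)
Your proposal is correct and follows essentially the same route as the paper. Both arguments (i) read ${w}_{\Theta,ij}(t_n)$ as the common mean $\EE[\hat{w}_{\Theta,ij}(t_n)]$ (this is exactly how the lemma is invoked in Lemma~\ref{lemma:onestep:q}, where $\delta_3$ is defined with $\EE[\hat{w}_{\Theta,ij}(t_n)]$), (ii) exploit that the truncated smoothing probabilities form a bounded, stationary, geometrically $\beta$-mixing sequence via the sliding-window argument, (iii) apply a Bernstein-type concentration inequality for mixing processes, and (iv) union-bound over the $k^2$ pairs.

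The only difference is cosmetic: the paper invokes its in-house Lemma~\ref{lemma:tailbound:mixing} in the sub-Weibull framework (bounded variables are trivially sub-Weibull$(2)$ with norm $\le 1$, giving $\gamma=2/3$), whereas you appeal directly to the bounded-sequence Bernstein inequality of \citet{merlevede2011bernstein}. Your version is slightly more elementary and avoids the two-term tail bound, but both are instantiations of the same Merlev\`ede--Peligrad--Rio machinery. Your handling of the boundary indices and of the uniformity in $N$ of the constant $C_2$ is also more explicit than the paper's; the paper simply cites the mixing/stationarity verification done in the proof of Lemma~\ref{lemma:deviation} and absorbs both tail terms into a single $\exp(-C_2)$.
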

\begin{proof}[Proof of Lemma~\ref{lemma:sc_w}]
    The proof of Lemma~\ref{lemma:deviation} explains that $\hat{w}_{\ell,\Theta}(t_n)$ for $n=1,\ldots,N$ are stationary mixing process with sub-Weibull($2$) norm bounded by $1$ and the decay coefficient $\gamma_1=1$. Therefore, $1/\gamma= 1/\gamma_1+1/\gamma_2 = 2/3$. Apply Lemma~\ref{lemma:tailbound:mixing}, we have 
    \[
    P\rbr{\frac{1}{N}\abr{\sum_{n=1}^N\hat{w}_{\Theta,ij}(t_n)-{w}_{\Theta,ij}(t_n)
    }>C_1\sqrt{\frac{1}{N}}}\leq \exp(-C_2).
    \]
    Taking the maximum over $i,j=1,\ldots,k$ and apply union bound on the right side of the above equation, we have
    \[
    P\rbr{\max_{i,j}\frac{1}{N}\abr{\sum_{n=1}^N\hat{w}_{\Theta,ij}(t_n)-{w}_{\Theta,ij}(t_n)
    }>C_1\sqrt{\frac{1}{N}}}\leq k^2\exp(-C_2).
    \]
\end{proof}
\begin{lemma}\label{lemma:sc_T2} Let $\kappa = \max_j\cbr{\sigma_{\max}(\hat{K}_{\Psi_j})/\sigma_{\min}(\hat{K}_{\Psi_j})}$ and assume $\sum_{i,j}\norm{\theta_{ij}^\ell-\theta_{ij}^{\ell\star}}_2^2\leq r_0^2$, $N>4$. Then, we have
\[
P\rbr{\frac{1}{Np}\abr{\sum_{n=1}^N
    \hat{w}_{\Theta,\ell}(t_n)R_{n,\ell}
    -
    \EE\sbr{\hat{w}_{\Theta,\ell}(t_n)R_{n,\ell}
    }}>\frac{C_1(2\sigma^2+\rho^2+ms\kappa r_0^2 B^2)}{\sqrt{N}}
    }\leq \exp(-C_2),
\]
for some absolute constants $C_1, C_2$. 
\end{lemma}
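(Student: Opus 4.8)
The plan is to mirror the template already used for Lemma~\ref{lemma:sc_w}: realize the summand as a fixed measurable function of a finite window of the underlying $\beta$-mixing process, control its sub-Weibull norm, and then invoke the Bernstein-type inequality for stationary mixing sub-Weibull sequences in Lemma~\ref{lemma:tailbound:mixing}, evaluating it at the deviation level $C_1 V/\sqrt{N}$ with $V=2\sigma^2+\rho^2+ms\kappa r_0^2B^2$.

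First I would argue that $\{\hat{w}_{\Theta,\ell}(t_n)R_{n,\ell}\}_{n}$ is a stationary, geometrically $\beta$-mixing sequence. By definition $\hat{w}_{\Theta,\ell}(t_n)=p(Z(t_n)=\ell\mid Y_{(n-r)\vee 0}^{(n+r)\wedge N};\Theta)$ depends only on the window $Y_{(n-r)\vee 0}^{(n+r)\wedge N}$, while $R_{n,\ell}$ in~\eqref{eq:define:Rn} depends on $(Y_{n-1},Y_n,\Psi(t_n))$; hence the product is a fixed measurable function of a width-$(2r+1)$ window of the joint process $(Z(t_n),X(t_n),Y_n)$. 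This process is stationary by Assumption~\ref{assumption:stationary} and geometrically $\beta$-mixing by Assumption~\ref{assumption:mixingbeta}, so by Lemma~\ref{lemma:mixingx2y} the blocked sequence inherits geometric $\beta$-mixing, with the decay merely shifted by the fixed constant $r$; since $r$ is treated as a constant this leaves the mixing index $\gamma_1=1$ unchanged up to constants.

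Second — and this is the crux — I would bound the sub-Weibull$(1)$ (sub-exponential) norm of the centered, $p$-normalized summand. Using~\eqref{eq:approxY} the $i$-th residual splits as
\[
Y_{n,i}-Y_{n-1,i}-\sum_j\theta_{ij}^\ell\Psi_j(t_n)
=\sum_j(\theta_{ij}^{Z(t_n)\star}-\theta_{ij}^\ell)\Psi_j(t_n)
+\sigma^\star(\varepsilon_{n,i}-\varepsilon_{n-1,i})
+\rho_{n,i}+r_{n,i},
\]
so $(a+b+c)^2\le 3(a^2+b^2+c^2)$ gives $R_{n,\ell}\lesssim R^{\mathrm{noise}}_n+R^{\mathrm{bias}}_n+R^{\mathrm{apx}}_n$. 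The noise part $p^{-1}\sum_i(\sigma^\star)^2(\varepsilon_{n,i}-\varepsilon_{n-1,i})^2$ is a normalized quadratic form in Gaussians with mean $2\sigma^{\star2}$, hence sub-Weibull$(1)$ with norm $\lesssim\sigma^2$; multiplying by $\hat{w}_{\Theta,\ell}(t_n)\in[0,1]$ preserves this norm. For the bias part I would bound $|\Psi_{ij}(t_n)|\le B$ and route the parameter mismatch through the structured-norm conversion~\eqref{eq:l1tol2} together with the radius constraint $\sum_{i,j}\norm{\theta_{ij}^\ell-\theta_{ij}^{\ell\star}}_2^2\le r_0^2$, yielding $p^{-1}R^{\mathrm{bias}}_n\lesssim ms\kappa r_0^2B^2$. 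The approximation part is absorbed into the uniform bound $\rho^2$ on $p^{-1}\sum_i(\rho_{n,i}+r_{n,i})^2$. Collecting these, the summand has sub-Weibull$(1)$ norm $\lesssim V$, precisely the scale in the statement. With $\gamma_1=1$ and sub-Weibull index $\gamma_2=1$ (so $1/\gamma=1/\gamma_1+1/\gamma_2=2$), applying Lemma~\ref{lemma:tailbound:mixing} to the $V$-normalized centered sequence and reading off the bound at threshold $C_1V/\sqrt{N}$ (where the Gaussian-type term dominates for $N>4$) delivers $\exp(-C_2)$.

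The main obstacle is the second step: cleanly separating $R_{n,\ell}$ into the three pieces and certifying that the normalized Gaussian quadratic is genuinely sub-Weibull$(1)$ with the correct $\sigma^2$ scale, while simultaneously forcing the mismatch term through~\eqref{eq:l1tol2} so that the factor $ms\kappa r_0^2B^2$ emerges rather than a looser $\ell_1$-type bound. The remaining work — checking that multiplication by the bounded weight $\hat{w}_{\Theta,\ell}(t_n)$ and the window-blocking of Step~1 inflate neither the sub-Weibull norm nor the mixing rate — is routine bookkeeping.
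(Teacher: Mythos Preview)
Your proposal is correct and follows the same route as the paper: decompose $R_{n,\ell}$ into noise, parameter-mismatch (controlled via~\eqref{eq:l1tol2}), and approximation pieces, bound the sub-Weibull norm of $p^{-1}\hat w_{\Theta,\ell}(t_n)R_{n,\ell}$, and invoke Lemma~\ref{lemma:tailbound:mixing}. Two cosmetic discrepancies: the paper uses the cruder sub-Weibull$(2/3)$ index (applying Lemma~\ref{lemma:subweibull:product} to $\hat w\cdot\varepsilon^2$ as a triple sub-Gaussian product, yielding $1/\gamma=5/2$) rather than your sharper sub-Weibull$(1)$ obtained from $\hat w\in[0,1]$; and the mixing inheritance is justified via the filtration computation in Step~2 of the proof of Lemma~\ref{lemma:deviation}, not Lemma~\ref{lemma:mixingx2y}, which treats i.i.d.\ augmentation rather than finite-window blocking.
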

\begin{proof}[Proof of Lemma~\ref{lemma:sc_T2}]
    Write $\hat{w}_{\Theta,\ell}(t_n)R_{n,\ell}$
    \begin{align*}
        \hat{w}_{\Theta,\ell}(t_n)R_{n,\ell} &=
        \hat{w}_{\Theta,\ell}(t_n)\sum_{i=1}^p\rbr{Y_{n,i}-Y_{n-1,i}-\sum_{j=1}^p\theta_{ij}^\ell\Psi_j(t_n)}^2\\
        &=\hat{w}_{\Theta,\ell}(t_n)\sum_{i=1}^p\rbr{X_i(t_n)+\varepsilon_{n,i}-X_i(t_{n-1})-\varepsilon_{n-1,i}-\sum_{j=1}^p\theta_{ij}^\ell\Psi_j(t_n)}^2.\\
        \intertext{Recall the definition of $\rho_{n,i}$ in~\eqref{eq:define:rho_n}, the above term is equal to}
        &=\hat{w}_{\Theta,\ell}(t_n)\sum_{i=1}^p\cbr{\varepsilon_{n,i}+\varepsilon_{n-1,i}+\rho_{n,i}
        +\sum_{j}(\theta_{ij}^\ell-\theta_{ij}^{\ell\star})\Psi_j(t_n)
        }^2\\
        &\leq 4\hat{w}_{\Theta,\ell}(t_n)\sum_{i=1}^p\rbr{
        \varepsilon_{n,i}^2
        +\varepsilon_{n-1,i}^2
        +\rho_{n,i}^2
        +\norm{\theta_{i\cdot}^\ell-\theta_{i\cdot}^{\ell\star}}_1^2\norm{\Psi(t_n)}_{\infty}^2
        },
    \end{align*}
    where $\norm{\Psi(t_n)}_{\infty}\leq B$. 
    From~\eqref{eq:l1tol2}, we can further bound
    \[
    \norm{\theta_{i\cdot}^\ell-\theta_{i\cdot}^{\ell\star}}_1\leq 
    5\max_j\frac{\sigma_{\max}(\hat{K}_{\Psi_j})}{\sigma_{\min}(\hat{K}_{\Psi_j})}\sqrt{ms}
    \norm{
    \theta_{i\cdot}^\ell-\theta_{i\cdot}^{\ell\star}
    }_2\leq 5\kappa\sqrt{ms}r_0.
    \]
    Hence, apply Lemma~\ref{lemma:subweibull:product} twice, we have
    \begin{align*}
        \norm{\hat{w}_{\Theta,\ell}(t_n)R_{n,\ell}}_{\psi_{2/3}}&\leq C\norm{\hat{w}_{\Theta,\ell}}_{\psi_2}\sum_i\rbr{\norm{\varepsilon_{n,i}}_{\psi_2}^2
        +
        \norm{\varepsilon_{n-1,i}}_{\psi_2}^2
        +
        \norm{\rho_{i,n}}_{\psi_2}^2
        +
        \norm{\theta_{i\cdot}^\ell-\theta_{i\cdot}^{\ell\star}}_1^2\norm{\|\Psi(t_n)\|_\infty}_{\psi_2}^2
        }\\
        &\leq Cp\rbr{2\sigma^2+\rho^2+ms\kappa^2 r_0^2B^2}.
    \end{align*}
    Apply~Lemma~\ref{lemma:mean:subweibull} and triangular inequality, we have
    \[
    \frac{1}{p}\norm{\hat{w}_{\Theta,\ell}(t_n)R_{n,\ell}
    -
    \EE \hat{w}_{\Theta,\ell}(t_n)R_{n,\ell}}_{\psi_{2/3}}\leq 2C\rbr{2\sigma^2+\rho^2+ms\kappa^2 r_0^2B^2}.
    \]
    Note that the sequence $\hat{w}_{\Theta,\ell}(t_n)R_{n,\ell}$ for $n\in\NN$ is $\beta$-mixing and stationary following the argument in Lemma~\ref{lemma:deviation}. Therefore, we can apply Lemma~\ref{lemma:tailbound:mixing} with $1/\gamma=1/\gamma_1+1/\gamma_2 = 1+3/2=5/2$, $N>4$ and obtain
    \[
    P\rbr{\frac{1}{Np}\abr{\sum_{n=1}^N
    \hat{w}_{\Theta,\ell}(t_n)R_{n,\ell}
    -
    \EE\sbr{\hat{w}_{\Theta,\ell}(t_n)R_{n,\ell}
    }}>\frac{C_1(2\sigma^2+\rho^2+ms\kappa r_0^2 B^2)}{\sqrt{N}}
    }\leq \exp(-C_2). 
    \]
\end{proof}
\section{Proof of deviation bound}
The goal is to show that the Assumption~\ref{assumption:DB} hold with the desired $\lambda, \alpha, \tau$ by applying Lemma~\ref{lemma:tailbound:mixing}. We begin with stating the main result followed by the analysis.

\begin{lemma}\label{lemma:deviation}
    Suppose that $(X(t))_{t\geq 0}$ is a stationary and $\beta$-mixing process with rate $\beta(\ell)\leq c_1\exp(-c\ell)$ for some absolute constants $c,c_1>0$. 
    Let $N\gtrsim m^4(\log p)^4$. Given a fixed $i$, assume that $Z(t_{n})=\ell$ and 
    define 
    $\rho_n=\int_{t_{n-1}}^{t_n}\theta_i^{Z(u)\star}g(X_u)\mathrm{d}u-\theta_i^{\ell\star}\Psi(t_n)$
    for $n=1,\ldots,N$.
    Suppose that $\sup_{t\in[0,1]}|g_j(X_i(t))|\leq B$ for $i=1,\ldots,p$ and $j=1,\ldots,m$  $\max_n|\rho_n|\leq \rho$. Let $r$ be a positive integer, $c_2, c_3$ be absolute constants. and ${Y}_n^{\Delta}=Y_{n,i}-{Y}_{n-1,i}$. 
    Define
\[
\hat{w}_{\Theta,\ell}(t_n) = P\rbr{Z(t_n)=\ell\mid Y_{(n+r)\wedge N}, Y_{\cbr{(n+r)\wedge N}-1},\ldots, Y_n, \ldots, Y_{\cbr{(n-r)\vee 0}+1}, Y_{(n-r)\vee 0}}.
\]
Then, we can define
    \[
    \Delta_{\varepsilon}=\frac{1}{N}\sum_{n=1}^N\hat{w}_{\Theta,\ell}(t_n)\cbr{{Y}_n^{\Delta}-\theta^{\ell\star }_i\Psi(t_n)}\Psi(t_n)^\top - \frac{1}{N}\sum_{n=1}^N\EE\sbr{\hat{w}_{\Theta,\ell}(t_n)\cbr{{Y}_n^{\Delta}-\theta^{\ell\star }_i\Psi(t_n)}\Psi(t_n)^\top}.
    \]
    Then,
    \[
    \norm{\Delta_{\varepsilon}}_{\infty,\hat{K}_{\Psi}^*}\geq \max_{i=1,\ldots,p}\frac{c_2\sqrt{m}B(2\sigma^\star+\rho)}{\sigma_{\min}(\hat{K}_{\Psi_i})}\sqrt{\frac{m\log p}{N}},
    \]
    with probability smaller than $6\exp(-c_3m\log p)$. 
\end{lemma}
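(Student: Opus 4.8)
The plan is to control the structured $\ell_\infty$ deviation $\norm{\Delta_\varepsilon}_{\infty,\hat K_\Psi^*}$ by first reducing it to a maximum over coordinates of an ordinary sum of mean-zero terms, and then applying the concentration inequality for $\beta$-mixing sums (the Merlev\`ede--Peligrad--Rio type bound, invoked here as Lemma~\ref{lemma:tailbound:mixing}) to each coordinate, followed by a union bound over the $p$ blocks. Concretely, using the definition of the dual norm $\norm{\cdot}_{\infty,\hat K_\Psi^*}$ and the bound $\norm{v}_{\hat K_{\Psi_j}^*}\le \sigma_{\min}(\hat K_{\Psi_j})^{-1}\norm{v}_2$, I would first write
\[
\norm{\Delta_\varepsilon}_{\infty,\hat K_\Psi^*}
\le \max_{j=1,\dots,p}\frac{1}{\sigma_{\min}(\hat K_{\Psi_j})}\bignorm{\Delta_{\varepsilon,j}}_2,
\]
where $\Delta_{\varepsilon,j}$ is the $m$-dimensional block of $\Delta_\varepsilon$ associated with basis index $j$. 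Since each block is $m$-dimensional, $\norm{\Delta_{\varepsilon,j}}_2\le \sqrt m\,\max_{a\le m}|\Delta_{\varepsilon,j,a}|$, so it suffices to obtain a tail bound on each scalar summand $\Delta_{\varepsilon,j,a}=\frac1N\sum_n (U_{n,j,a}-\EE U_{n,j,a})$, where $U_{n,j,a}=\hat w_{\Theta,\ell}(t_n)(Y_n^\Delta-\theta_i^{\ell\star}\Psi(t_n))\Psi_{j,a}(t_n)$.

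The crux is to verify that the summands $U_{n,j,a}$ form a stationary $\beta$-mixing sequence with a controlled sub-Weibull (Orlicz $\psi_\gamma$) norm, so that Lemma~\ref{lemma:tailbound:mixing} applies. Stationarity and the geometric mixing rate are inherited: $\hat w_{\Theta,\ell}(t_n)$ depends on the finite window $Y_{(n-r)\vee 0}^{(n+r)\wedge N}$, and $(Y_n)$ is stationary geometrically $\beta$-mixing by Assumption~\ref{assumption:mixingbeta}, so any fixed measurable function of a finite-length window of the process is again stationary and geometrically $\beta$-mixing (by the data-processing property of $\beta$-mixing, Lemma~3.6--3.7 in~\citet{vidyasagar2013learning}, with a shift of the mixing coefficient by the window length $2r+1$). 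For the Orlicz norm, I would decompose the residual as $Y_n^\Delta-\theta_i^{\ell\star}\Psi(t_n)=\sigma^\star(\varepsilon_{n,i}-\varepsilon_{n-1,i})+\rho_n$ (using $Z(t_n)=\ell$), bound $|\hat w_{\Theta,\ell}|\le 1$ and $|\Psi_{j,a}(t_n)|\le B$ deterministically, and note $|\rho_n|\le\rho$; then the only stochastic part with heavy tail is the Gaussian difference, which is sub-Gaussian (sub-Weibull($2$), i.e. $\gamma_2=2$). Invoking the product rules for sub-Weibull norms (Lemma~\ref{lemma:subweibull:product}) and the centering lemma (Lemma~\ref{lemma:mean:subweibull}) gives $\norm{U_{n,j,a}-\EE U_{n,j,a}}_{\psi_2}\lesssim B(2\sigma^\star+\rho)$, with the combined mixing/tail exponent $1/\gamma=1/\gamma_1+1/\gamma_2$ determined by $\gamma_1=1$ (geometric mixing) and $\gamma_2=2$.

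Having established these, I would apply Lemma~\ref{lemma:tailbound:mixing} to each scalar sum to get, for a suitable radius $t\asymp \sqrt{(\log p)/N}$,
\[
P\!\left(\frac1N\Bigl|\sum_{n=1}^N (U_{n,j,a}-\EE U_{n,j,a})\Bigr|
> c\,B(2\sigma^\star+\rho)\sqrt{\tfrac{\log p}{N}}\right)
\le \exp(-c'\log p),
\]
then union-bound over the $pm$ scalar coordinates (absorbing the $m$ and the $\sqrt m$ from the block norm into the stated rate $\sqrt{m\log p/N}$ and the failure probability $6\exp(-c_3 m\log p)$), which yields the claimed bound on $\norm{\Delta_\varepsilon}_{\infty,\hat K_\Psi^*}$. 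The main obstacle I anticipate is the careful bookkeeping in applying Lemma~\ref{lemma:tailbound:mixing}: the Bernstein-type bound for $\beta$-mixing sub-Weibull sequences has a variance-dominated regime and a tail-dominated regime, and to land in the favorable $\sqrt{\log p/N}$ (sub-Gaussian) regime rather than the slower regime one needs the sample-size condition $N\gtrsim m^4(\log p)^4$ to dominate the deviation term; tracking how $m$, the window length $r$, and the mixing constants enter the effective sample size and ensuring the $\sqrt m$ and $\sigma_{\min}^{-1}$ factors are collected correctly is where the proof must be done with care rather than by routine estimate.
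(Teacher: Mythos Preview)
Your proposal is correct and follows the same overall blueprint as the paper: decompose the residual $Y_n^\Delta-\theta_i^{\ell\star}\Psi(t_n)$ into Gaussian-noise and approximation parts, verify that the summands inherit stationarity and geometric $\beta$-mixing (with the coefficient shifted by the window length $2r$), bound their sub-Weibull norm, apply Lemma~\ref{lemma:tailbound:mixing}, and union-bound over blocks.

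There are two minor technical differences worth noting. First, the paper splits $\Delta_\varepsilon=\Delta_{\varepsilon_1}+\Delta_{\varepsilon_2}+\Delta_\rho$ and, for each piece, treats $\hat w_{\Theta,\ell}(t_n)$, $\varepsilon_{n,i}$, and $\Psi_j(t_n)$ each as sub-Gaussian and invokes the product rule (Lemma~\ref{lemma:subweibull:product}) to obtain a $\psi_{2/3}$ bound on the summand; this gives $1/\gamma=5/2$ and is precisely what drives the hypothesis $N\gtrsim (m\log p)^4$. Your route---bounding $|\hat w|\le 1$ and $|\Psi_{j,a}|\le B$ deterministically---is more elementary and yields $\psi_2$, hence $1/\gamma=3/2$, which would in principle permit the weaker requirement $N\gtrsim (m\log p)^2$; under the stated hypothesis your argument goes through a fortiori. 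Second, for the $m$-dimensional block norm the paper uses an $\epsilon$-net over $\mathbb S^{m-1}$ (packaged as Lemma~\ref{lemma:DB:bmixing}) rather than your coordinate-wise bound $\|v\|_2\le\sqrt m\,\|v\|_\infty$; both yield the same $\sqrt m$ factor and the same final failure probability after the union bound. For the mixing step, the paper spells out the $\sigma$-field inclusion $\Fcal_{-\infty,n}\subseteq\Fcal^X_{-\infty,t_{n+r}}\vee\Fcal^\varepsilon_{-\infty,n+r}$ and separates the $X$-part from the i.i.d.\ $\varepsilon$-part via Bradley's inequality (Lemma~\ref{lemma:betamix:upperbound}), which is exactly the data-processing argument you sketch.
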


\subsection{Proof of Lemma~\ref{lemma:deviation}}
Recall that
\begin{align}\label{eq:define:rho_n}
Y_{n,i}-Y_{n-1,i}-\theta^{\ell\star }_i\Psi(t_n)=\varepsilon_{n,i} + \varepsilon_{n-1,i} + \underbrace{\sum_{j}\int_{t_{n-1}}^{t_n}\theta_{ij}^{Z(u)\star}g(X_j(u))\mathrm{d}u-\theta^{\ell\star}\Psi(t_n)}_{:=\rho_{n,i}}.
\end{align}
In the following, since we only consider single index $i$, we drop the index $i$ in $\rho_{n,i}$ as $\rho_n$ for simplicity.  

First, we decompose
\[
\Delta_\varepsilon = \Delta_{\varepsilon_1}+\Delta_{\varepsilon_2}+\Delta_{\rho},
\]
where
\begin{align*}
    \Delta_{\varepsilon_1}&=\frac{1}{N}\sum_{n=1}^N\hat{w}_{\Theta,\ell}(t_n)\varepsilon_{n,i}\Psi(t_n)^\top- \frac{1}{N}\sum_{n=1}^N\EE\sbr{\hat{w}_{\Theta,\ell}(t_n)\varepsilon_{n,i}\Psi(t_n)^\top};\\
    \Delta_{\varepsilon_2}&=\frac{1}{N}\sum_{n=1}^N\hat{w}_{\Theta,\ell}(t_n)\varepsilon_{n-1,i}\Psi(t_n)^\top- \frac{1}{N}\sum_{n=1}^N\EE\sbr{\hat{w}_{\Theta,\ell}(t_n)\varepsilon_{n-1,i}\Psi(t_n)^\top};\\
    \Delta_{\rho}&=\frac{1}{N}\sum_{n=1}^N\hat{w}_{\Theta,\ell}(t_n)\rho_n\Psi(t_n)^\top- \frac{1}{N}\sum_{n=1}^N\EE\sbr{\hat{w}_{\Theta,\ell}(t_n)\rho_n\Psi(t_n)^\top}. 
\end{align*}
The last term is the error due to approximation. 
Since \begin{equation}\label{eq:tmp4}
\norm{\Delta_\varepsilon}_{\infty,\hat{K}_\Psi^*}\leq \norm{\Delta_{\varepsilon_1}}_{\infty,\hat{K}_\Psi^*}
+
\norm{\Delta_{\varepsilon_2}}_{\infty,\hat{K}_\Psi^*}
+\norm{\Delta_{\rho}}_{\infty,\hat{K}_\Psi^*},
\end{equation}
we can bound $\norm{\Delta_{\varepsilon_1}}_{\infty,\hat{K}_\Psi^*}$, $\norm{\Delta_{\varepsilon_2}}_{\infty,\hat{K}_\Psi^*}$, and $\norm{\Delta_{\rho}}_{\infty,\hat{K}_\Psi^*}$ separately by applying Lemma~\ref{lemma:DB:bmixing}. To apply the lemma, we verify the following conditions.

\emph{Step 1: Control the tail behavior of $\Delta_{\varepsilon_1}$.}  Let $\gamma=2/3$, for each $n=1,\ldots,N$ and $j=1,\ldots, p$, we can write
\begin{align}
\left\|\hat{w}_{\Theta,\ell}(t_n)\varepsilon_{n,i}\Psi_j(t_n)^\top- \EE\sbr{\hat{w}_{\Theta,\ell}(t_n)\varepsilon_{n,i}\Psi_j(t_n)^\top}\right\|_{\psi_\gamma}&\leq \norm{\hat{w}_{\Theta,\ell}(t_n)\varepsilon_{n,i}\Psi_j(t_n)^\top}_{\psi_\gamma}\notag\\
&\quad+
\norm{\EE\sbr{\hat{w}_{\Theta,\ell}(t_n)\varepsilon_{n,i}\Psi_j(t_n)^\top}}_{\psi_\gamma}\notag\\
&\leq 2\norm{\hat{w}_{\Theta,\ell}(t_n)\varepsilon_{n,i}\Psi(t_n)^\top}_{\psi_\gamma},\label{eq:tmp29}
\end{align}
where the last inequality follows by Lemma~\ref{lemma:mean:subweibull}.  Apply Lemma~\ref{lemma:subweibull:product}, we have
\begin{equation}\label{eq:tmp30}
\norm{\hat{w}_{\Theta,\ell}(t_n)\varepsilon_{n,i}\Psi(t_n)^\top}_{\psi_\gamma}\leq 2^{3/2}\norm{\hat{w}_{\Theta,\ell}(t_n)}_{\psi_2}\norm{\varepsilon_{n,i}}_{\psi_2}\norm{\Psi_j(t_n)^\top}_{\psi_2}.
\end{equation}
Since $\hat{w}_{\Theta,\ell}(t_n)\in[0,1]$, we have $\norm{\hat{w}_{\Theta,\ell}(t_n)}_{\psi_2}\leq 1$. $\varepsilon_{n,i}$ is a centered Gaussian random variable with variance $(\sigma^{\star})^2$ and hence it follows that $\norm{\varepsilon_{n,i}}_{\psi_2}\leq C\sigma^\star$ for some constant $C>0$. Finally,
\begin{align}
\norm{\Psi_j(t_n)}_{\psi_2}&=\sup_{\nu\in\mathbb{S}^{m-1}}\norm{\nu^\top\Psi_j(t_n)}_{\psi_2}\notag\\
    &=\sup_{\nu\in\mathbb{S}^{m-1}}\sup_{p\geq 1}p^{-1/2}\cbr{\EE|\nu^\top\Psi_j(t_n)|^p}^{1/p}\notag\\
    &\leq \sup_{\nu\in\mathbb{S}^{m-1}}\sup_{p\geq 1}p^{-1/2}\cbr{\EE\norm{\nu}_1^p\norm{\Psi_j(t_n)}_{\infty}^p}^{1/p}\leq B\sqrt{m}.\label{eq:tmp31}
\end{align}
Collecting the above results~\eqref{eq:tmp29}--\eqref{eq:tmp31}, we arrive at
\[
\left\|\hat{w}_{\Theta,\ell}(t_n)\varepsilon_{n,i}\Psi_j(t_n)^\top- \EE\sbr{\hat{w}_{\Theta,\ell}(t_n)\varepsilon_{n,i}\Psi_j(t_n)^\top}\right\|_{\psi_\gamma}\leq C_1 \sigma^\star\sqrt{m} B,
\]
for some absolute constant $C_1>0$. 
Hence we can conclude that for each $n=1,\ldots, N$ and $j=1,\ldots,p$, $\hat{w}_{\Theta,\ell}(t_n)\varepsilon_{n,i}\Psi_j(t_n)^\top- \EE\sbr{\hat{w}_{\Theta,\ell}(t_n)\varepsilon_{n,i}\Psi_j(t_n)^\top}$ is a sub-Weibull($2/3$) random variable with sub-Weibull norm bounded by $C_1 \sigma^\star\sqrt{m} B$.

\emph{Step 2: Statistical properties of  $\Delta_{\varepsilon_1}$.} In the following step, we verify the mixing and stationary conditions.

We first show that the product process $\{\hat{w}_{\Theta,\ell}(t_n)\varepsilon_{n,i}\Psi(t_n)^\top- \EE\sbr{\hat{w}_{\Theta,\ell}(t_n)\varepsilon_{n,i}\Psi(t_n)^\top}\}_{n=1,\ldots,N}$ is strictly stationary. Recall that for each $i=1,\ldots, p$, we have $\Psi_i(t_n)=\int_{t_{n-1}}^{t_n}g(X_i(u))\mathrm{d}u$.  Apply Lemma~\ref{lemma:strictstationary}, we have $\Psi(t_n)$ is a strictly stationary process. Since two stochastic processes $\{\varepsilon_{n}\}_{n=1,\ldots,N}$ and $\{X(t_n)\}_{n=1,\ldots,N}$ are independent, and hence $\{\varepsilon_{n}\}_{n=1,\ldots,N}$ and $\{\Psi(t_n)\}_{n=1,\ldots,N}$ are independent. Therefore, the joint process \[
\cbr{\rbr{X({t_{n'}}),\varepsilon_{n'},\Psi(t_n); n'=(n-r)\vee 1,\ldots, (n+r)\wedge N}}_{n=1,\ldots,N}
,
\] is strictly stationary process. 

Recall that $Y_n=X(t_n)+\varepsilon_n$ and 
note that the process $\hat{w}_{\Theta, \ell}(t_n)$ is a measurable mapping of $(Y_{(n-r)\vee 1},\ldots, Y_{(n+r)\wedge N})$, and therefore, the process $\{\hat{w}_{\Theta,\ell}(t_n)\varepsilon_{n,i}\Psi(t_n)\}_{n=1,\ldots,N}$ is strictly stationary. By definition of the stationary process, the expectation $\EE \{\hat{w}_{\Theta,\ell}(t_n)\varepsilon_{n,i}\Psi(t_n)\}$ is a constant across $n=1,\ldots, N$.  Hence, we can conclude that 
$\{\hat{w}_{\Theta,\ell}(t_n)\varepsilon_{n,i}\Psi(t_n)^\top- \EE\sbr{\hat{w}_{\Theta,\ell}(t_n)\varepsilon_{n,i}\Psi(t_n)^\top}\}_{n=1,\ldots,N}$ is strictly stationary.

In the next step, we verify the mixing property. Define the filtration $\Fcal^{\varepsilon}_{-\infty, n}=\sigma(\{\varepsilon_{j}:j\leq n\})$, $\Fcal^{\varepsilon}_{n, \infty}=\sigma(\{\varepsilon_{j}:j\geq n\})$, $\Fcal^X_{-\infty, t}=\sigma(\{X(u):u\leq t\})$, and $\Fcal^X_{t,\infty}=\sigma(\{X(u):u\geq t\})$. Let $\Fcal_{-\infty,n}=\sigma(\{\hat{w}_{\Theta,\ell}(t_j)\varepsilon_{j,i}\Psi(t_j)^\top:j\leq n\})$ and $\Fcal_{n,\infty}=\sigma(\{\hat{w}_{\Theta,\ell}(t_j)\varepsilon_{j,i}\Psi(t_j)^\top:j\geq n\})$. It follows that for any $j\in\NN$: 
\begin{align*}
\Fcal_{-\infty,n}&\subseteq \Fcal_{-\infty,t_{n+r}}^X\vee \Fcal_{-\infty,n+r}^\varepsilon;\\
\Fcal_{n+j,\infty}&\subseteq \Fcal_{t_{n+j-r},\infty}^X\vee \Fcal_{n+j-r,\infty}^\varepsilon.
\end{align*}

Therefore, from Definition~\ref{definition:betamixing}, we have 
\begin{align*}
\beta(j):=\sup_{n} \beta (\Fcal_{-\infty,n}, \Fcal_{n+j,\infty})&\leq \sup_n\beta\rbr{\Fcal_{-\infty,t_{n+r}}^X\vee \Fcal_{-\infty,n+r}^\varepsilon, \Fcal_{t_{n+j-r},\infty}^X\vee \Fcal_{n+j-r,\infty}^\varepsilon}.\\
\intertext{Apply Lemma~\ref{lemma:betamix:upperbound}, above display can be further bounded as}
&\leq \sup_n\beta\rbr{\Fcal_{-\infty,t_{n+r}}^X, \Fcal_{t_{n+j-r},\infty}^X}
+
\sup_n\beta\rbr{\Fcal_{-\infty,n}^\varepsilon, \Fcal_{n+j-r,\infty}^\varepsilon}\\
&= \sup_n\beta\rbr{\Fcal_{-\infty,t_{n+r}}^X, \Fcal_{t_{n+j-r},\infty}^X}+0\\
&\leq C_3\exp(-C_2(j-2r))\\
&=C_3\exp(C_2r)\exp(-C_2j). 
\end{align*}
If $r$ is a constant, then $\exp(C_2r)$ is well-behaved. Therefore, we can conclude that the sequence $\hat{w}_{\Theta,\ell}(t_n)\varepsilon_{n,i}\Psi(t_n)-\EE\cbr{\hat{w}_{\Theta,\ell}(t_n)\varepsilon_{n,i}\Psi(t_n)}$ for $n\in\NN$ is geometrically $\beta$-mixing.

\emph{Step 3: Uniform concentration of $\Delta_{\varepsilon_1}$.} Using the results from \emph{Step 1--2}, we can apply Lemma~\ref{lemma:DB:bmixing}: Compute $1/\gamma=1+3/2=5/2$ and note that $N=C_0(m\log p)^4$, we have
\begin{align}\label{eq:tmp5}
P\rbr{\norm{\Delta_{\varepsilon_1}}_{\infty,\hat{K}_\Psi^*}\geq \max_{i=1,\ldots,p}\frac{C_5\sqrt{m}B\sigma^\star}{\sigma_{\min}(\hat{K}_{{\Psi}_i})}\sqrt{\frac{m\log p}{N}}}\leq 2\exp(-C_6m\log p).
\end{align}
\emph{Step 4: Uniform concentration of $\Delta_{\varepsilon_2}$} We can find the bound for $\norm{\Delta_{\varepsilon_2}}_{\infty,\hat{K}_\Psi^*}$ similarly and obtain
\begin{align}\label{eq:tmp5_5}
P\rbr{\norm{\Delta_{\varepsilon_2}}_{\infty,\hat{K}_\Psi^*}\geq \max_{i=1,\ldots,p}\frac{C_5\sqrt{m}B\sigma^\star}{\sigma_{\min}(\hat{K}_{{\Psi}_i})}\sqrt{\frac{m\log p}{N}}}\leq 2\exp(-C_6m\log p).
\end{align}

\emph{Step 5: Uniform concentration of $\Delta_{\rho}$.} The steps to show uniform concentration property of $\Delta_{\rho}$ follows exactly the same as \emph{Step 1--3}, where we show the uniform concentration of $\Delta_{\varepsilon_1}$. It is easy to verify that for each $j=1,\ldots, p$ and $n=1,\ldots, N$, $\hat{w}_{\Theta,\ell}(t_n)\rho_n\Psi(t_n)^\top- \EE\sbr{\hat{w}_{\Theta,\ell}(t_n)\rho_n\Psi(t_n)^\top}$ is sub-Weibull($2/3$) with sub-Weibull norm bounded by $C_4\sqrt{m}B\rho$. Therefore, we can apply Lemma~\ref{lemma:DB:bmixing} and obtain 
\begin{align}\label{eq:tmp6}
P\rbr{\norm{\Delta_{\rho}}_{\infty,\hat{K}_\Psi^*}\geq \max_{i=1,\ldots,p}\frac{C_7\sqrt{m}B\rho}{\sigma_{\min}(\hat{K}_{{\Psi}_i})}\sqrt{\frac{m\log p}{N}}}\leq 2\exp(-C_8 m\log p).
\end{align}

\emph{Step 6: Take the union bound.}
Combining the results from~\eqref{eq:tmp4},~\eqref{eq:tmp5}--~\eqref{eq:tmp6}, we can conclude that
\[
P\rbr{\norm{\Delta_{\varepsilon}}_{\infty,\hat{K}_{\Psi}^*}\geq \max_{i=1,\ldots,p}\frac{c_2\sqrt{m}B(2\sigma^\star+\rho)}{\sigma_{\min}(\hat{K}_{\Psi_i})}\sqrt{\frac{m\log p}{N}}}\leq 6\exp(-c_3m\log p).
\]
\begin{lemma}\label{lemma:DB:bmixing} 
Let $A=\{A_1,\ldots, A_p\}$ be a set of $m\times m$ positive definite matrices, with $\max_{i}\sigma_{\max}(A_i)<c_0$. Define $\norm{W}_{\infty, A}=\max_{i=1,\ldots,p}(W_i^\top A_iW_i)^{1/2}$, with $W=(W_1^\top,\ldots, W_p^\top)^\top\in\mathbb{R}^{pm}$ being a stacked vector  and $W_i\in\RR^m$ for $i=1,\ldots,p$. 
Let 
    Suppose that $(W_n\in\RR^{pm})_{n=1,\ldots, N}$ is strictly stationary, geometrically $\beta$-mixing with rate $\gamma_1$ and is zero-mean. For each $i=1,\ldots, p$, $W_i$ is sub-Weibull($\gamma_2$) with sub-Weibull norm bounded by $K_1$. Define $1/\gamma=1/\gamma_1+1/\gamma_2$. There exist constants $C_1,C_2>0$ depending on $\gamma_1,\gamma_2$ such that if $N=C_1(m\log p)^{2/\gamma-1}$ then
    \[
    P\rbr{\bignorm{\frac{1}{N}{\sum_{n=1}^N W_n}}_{\infty,A}\geq {C_2K_1c_0}\sqrt{\frac{m\log p}{N}}}\leq 2\exp(-C_3m\log p). 
    \]
    
\end{lemma}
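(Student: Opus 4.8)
The plan is to reduce the mixed norm $\norm{\cdot}_{\infty,A}$ to a maximum of $p$ ordinary Euclidean norms of block averages, then convert each Euclidean norm into a supremum over directions handled by a covering net, and finally apply the scalar Bernstein-type tail bound for stationary geometrically $\beta$-mixing sub-Weibull sequences (Lemma~\ref{lemma:tailbound:mixing}) along each fixed direction, absorbing the discretization and the $p$ blocks through a union bound. First I would write $\bar{W}=\frac{1}{N}\sum_{n=1}^N W_n$ with blocks $\bar{W}_i=\frac{1}{N}\sum_{n=1}^N W_{n,i}\in\RR^m$, so that since $A_i\preceq\sigma_{\max}(A_i)I\preceq c_0 I$ we have
\[
\norm{\bar{W}}_{\infty,A}=\max_{i=1,\ldots,p}\rbr{\bar{W}_i^\top A_i\bar{W}_i}^{1/2}\leq \sqrt{c_0}\,\max_{i=1,\ldots,p}\norm{\bar{W}_i}_2,
\]
with the extra $\sqrt{c_0}$ versus $c_0$ in the statement absorbed into $C_2$.

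Next I would fix a $\tfrac12$-net $\mathcal{N}$ of the unit sphere $\mathbb{S}^{m-1}$ of cardinality $\abr{\mathcal{N}}\leq 5^m$, giving the standard bound $\norm{\bar{W}_i}_2\leq 2\max_{\nu\in\mathcal{N}}\nu^\top\bar{W}_i$. For each fixed $i$ and $\nu\in\mathcal{N}$ the scalar process $n\mapsto\nu^\top W_{n,i}$ is zero-mean, and as a measurable (linear) functional of $W_n$ it inherits strict stationarity and the geometric $\beta$-mixing rate $\gamma_1$ of $(W_n)$; moreover its sub-Weibull norm satisfies $\norm{\nu^\top W_{n,i}}_{\psi_{\gamma_2}}\leq\norm{\nu}_2 K_1=K_1$ because the sub-Weibull norm of the vector $W_{n,i}$ is the supremum over unit directions. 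Thus each $(i,\nu)$ pair satisfies the hypotheses of Lemma~\ref{lemma:tailbound:mixing} with effective exponent $1/\gamma=1/\gamma_1+1/\gamma_2$, which yields a two-regime tail of the schematic form
\[
P\rbr{\,\abr{\tfrac{1}{N}\textstyle\sum_{n=1}^N \nu^\top W_{n,i}}\geq t}\;\leq\;\exp\rbr{-\tfrac{c\,N t^2}{K_1^2}}+\exp\rbr{-c\,(Nt)^{\gamma}}+(\text{lower-order terms}).
\]
Taking a union bound over the $p\cdot 5^m$ pairs inflates the right-hand side by $\exp(\log p+m\log 5)\leq\exp(C m\log p)$, and I would then select the target level $t=C_2 K_1\sqrt{(m\log p)/N}$ so that $Nt^2/K_1^2\asymp m\log p$ makes the sub-Gaussian term dominate the union factor and collapse to $\exp(-C_3 m\log p)$.

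The main obstacle is controlling the heavier Weibull-type term $\exp(-c(Nt)^{\gamma})$ at this same scale so that it too is $\lesssim\exp(-C_3 m\log p)$. With $t\asymp\sqrt{(m\log p)/N}$ one has $Nt\asymp\sqrt{N\,m\log p}$, hence $(Nt)^{\gamma}\asymp (N\,m\log p)^{\gamma/2}$, and demanding $(N\,m\log p)^{\gamma/2}\gtrsim m\log p$ is exactly equivalent to $N\gtrsim(m\log p)^{2/\gamma-1}$, i.e.\ the hypothesis $N=C_1(m\log p)^{2/\gamma-1}$. In other words, the sample-size condition is precisely what places the target deviation in the sub-Gaussian regime of the mixing Bernstein inequality; the delicate bookkeeping is verifying that under this scaling every secondary term produced by Lemma~\ref{lemma:tailbound:mixing} (notably the $(\log N)$-weighted term characteristic of $\beta$-mixing Bernstein bounds) is likewise dominated, so that the single exponent $-C_3 m\log p$ survives after the union over the net and the $p$ blocks.

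\medskip

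Finally, combining the reduction $\norm{\bar{W}}_{\infty,A}\leq \sqrt{c_0}\max_i\norm{\bar{W}_i}_2\leq 2\sqrt{c_0}\max_{i,\nu}\nu^\top\bar{W}_i$ with the per-pair bound and the union argument gives
\[
P\rbr{\norm{\tfrac{1}{N}\textstyle\sum_{n=1}^N W_n}_{\infty,A}\geq C_2 K_1 c_0\sqrt{\tfrac{m\log p}{N}}}\leq 2\exp\rbr{-C_3\,m\log p},
\]
which is the claimed statement after renaming constants.
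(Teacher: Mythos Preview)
Your proposal is correct and follows essentially the same route as the paper's proof: reduce $\norm{\cdot}_{\infty,A}$ to $c_0$ (or $\sqrt{c_0}$) times the block-wise Euclidean maximum, discretize each block via a $\tfrac12$-net of size $5^m$, apply the scalar mixing Bernstein bound (Lemma~\ref{lemma:tailbound:mixing}) along each direction, and union-bound over the $p\cdot 5^m$ pairs, with the sample-size hypothesis $N\asymp(m\log p)^{2/\gamma-1}$ ensuring the sub-Gaussian term dominates the heavier Weibull tail and the prefactor $N$. Your observation that the reduction naturally produces $\sqrt{c_0}$ rather than $c_0$ is in fact sharper than the paper's display, and as you note the discrepancy is absorbed into $C_2$.
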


\begin{proof}[Proof of Lemma~\ref{lemma:DB:bmixing}]
    Define $S_W = N^{-1}\sum_{n=1}^N W_n$ and $S_{W_i}=N^{-1}\sum_{n=1}^N W_{n,i}$, we can write
    \[
    \norm{S_W}_{\infty, A}\leq c_0\norm{S_W}_{\infty, 2} = c_0\max_i\norm{S_{W_i}}_{2}
    \]
    Let $\Ncal(1/2)$ be the $1/2$-net( See chapter~5 of~\citet{wainwright2019high}) of a unit ball on $\RR^m$. It follows that 
    for each $i$, we can write
    \begin{equation}\label{eq:sw_infty}
    \norm{S_{W_i}}_{2}= \sup_{\|u\|_2\leq 1} u^\top S_{W_i} \leq \sup_{u\in \Ncal(1/2)} \abr{u^\top S_{W_i}} + \frac{1}{2}\norm{S_{W_i}}_2. 
    \end{equation}
    Moving the last term in the right hand side to the left hand side, we yield $$\norm{S_{W_i}}_2\leq 2\sup_{u\in \Ncal(1/2)} \abr{u^\top S_{W_i}}.$$
    Hence, for any $t>0$
    \begin{equation}\label{eq:psw}
    P\rbr{\norm{S_{W_i}}_2>2t}\leq P\rbr{\sup_{u\in \Ncal(1/2)} \abr{u^\top S_{W_i}}>t}.
    \end{equation}
    Since $\norm{u^\top S_{W_i}}_{\psi_{\gamma_2}}\leq \norm{S_{W_i}}_{\psi_{\gamma_2}}\leq K_1$ for any $u\in\Ncal(1/2)$, we can apply Lemma~\ref{lemma:tailbound:mixing} and obtain that
    \[
    P\rbr{\abr{u^\top S_{W_i}}>t}\leq N\exp\rbr{-\frac{(tN)^\gamma}{K_1^\gamma C_4}}+\exp\rbr{-\frac{t^2 N}{K_1^2 C_2}}.
    \]
    Since $\abr{\Ncal(1/2)}\leq 5^m$, taking the union bound over all $u\in\Ncal(1/2)$, we have
    \[
    P\rbr{\sup_{u\in\Ncal(1/2)}\abr{u^\top S_{W_i}}>t}\leq N5^m\exp\rbr{-\frac{(tN)^\gamma}{K_1^\gamma C_4}}+5^m\exp\rbr{-\frac{t^2 N}{K_1^2 C_2}}.
    \]
    Then, combining the results of~\eqref{eq:sw_infty} and~\eqref{eq:psw}, we have
    \begin{multline*}
            P\rbr{\norm{S_W}_{\infty, A}>2c_0 t}\leq P\rbr{\max_{i}\sup_{u\in\Ncal(1/2)}\abr{u^\top S_{W_i}}>t}\\\leq N5^mp\exp\rbr{-\frac{(tN)^\gamma}{K_1^\gamma C_4}}+5^mp\exp\rbr{-\frac{t^2 N}{K_1^2 C_2}}.
    \end{multline*}
    Select 
    \begin{equation}\label{eq:condition}
    t = K_1\max\cbr{C_2\sqrt{\frac{m\log(5p)}{N}}, \frac{C_4}{N}\rbr{m\log 5NP}^{1/\gamma}}, 
    \end{equation}
    then we have
    \[
    P\rbr{\norm{S_W}_{\infty, A}> 2c_0 t}\leq 2\exp(-c' m\log p),
    \]
    for some constant $c'>0$. Note that if $N= C_1 (m\log p)^{2/\gamma - 1}$, then the first term in~\eqref{eq:condition} dominates because
    \[
    C_2\sqrt{\frac{m\log(5p)}{N}}\geq \frac{C_4}{N}(m\log5p)^{1/\gamma},\quad C_2\sqrt{\frac{m\log(5p)}{N}}\geq \frac{C_4}{N}(\log N)^{1/\gamma}. 
    \]
    Hence, if $C_1$ is large enough, we have
    \[
    C_2\sqrt{\frac{m\log(5p)}{N}}\geq \frac{C_4}{N}(m\log5Np)^{1/\gamma}.
    \]
    Then, we complete the proof. 
\end{proof}

\section{Proof of restricted eigenvalue condition}
In this section, we want to verify that Assumption~\ref{assumption:RE} holds under mild conditions. We begin with stating the main result. 
\begin{lemma}\label{lemma:RE:condition}
    Under the same conditions in Lemma~\ref{lemma:deviation}, and let 
    \[
    \mu = \sigma_{\min}\rbr{\EE\sbr{N^{-1}\sum_{n=1}^N\hat{w}_{\Theta,\ell}(t_n)\Psi(t_n)\Psi(t_n)^\top}}.
    \]
    Define $C_0,C_1, C_2,C_3>0$ be universal constants and suppose that 
    \[
    N\geq \max\cbr{\frac{C_0B^2m^4(\log p)^3}{\mu}, C_1 B^5\rbr{\frac{sm}{\mu}}^{5/2}}.
    \]
    Then for any $\nu\in\RR^{pm}$, with probability at least $1-N\exp(-C_2ms\log p)$ that
    \[
    \frac{1}{N}\sum_{n=1}^N\hat{w}_{\Theta,\ell}(t_n)\cbr{\nu\hat{\Psi}(t_n)}^2\geq \alpha\norm{\nu}_2^2-\tau\norm{\nu}_{1,\hat{K}_\Psi}^2,
    \]
    where 
    \[
    \alpha=\frac{\mu}{2},\quad \tau=C_3m\alpha \cbr{\frac{(\log p)^3 B^2}{\mu N}}^{1/4}\max_{i=1,\ldots,p}\sigma_{\min}^{-2}(\hat{K}_{\Psi_i}). 
    \]
\end{lemma}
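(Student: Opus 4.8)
The plan is to follow the now-standard deviation-to-RE route used for dependent, heavy-tailed time series (in the spirit of the arguments behind Lemma~\ref{lemma:DB:bmixing}), adapted to the structured block norm $\norm{\cdot}_{1,\hat{K}_\Psi}$ and to the fact that the regression features $\hat\Psi(t_n)$ are themselves estimated. Writing the quadratic form as $\nu^\top\hat{\Gamma}_{\hat\Psi}\nu$ with $\hat{\Gamma}_{\hat\Psi}=N^{-1}\sum_n\hat w_{\Theta,\ell}(t_n)\hat\Psi(t_n)\hat\Psi(t_n)^\top$, I would introduce the empirical matrix built from the true integrals, $\hat{\Gamma}_\Psi=N^{-1}\sum_n\hat w_{\Theta,\ell}(t_n)\Psi(t_n)\Psi(t_n)^\top$, and its mean $\Gamma_\Psi^{pop}=\EE[\hat{\Gamma}_\Psi]$, and split
\[
\nu^\top\hat{\Gamma}_{\hat\Psi}\nu = \nu^\top\Gamma_\Psi^{pop}\nu + \nu^\top(\hat{\Gamma}_\Psi-\Gamma_\Psi^{pop})\nu + \nu^\top(\hat{\Gamma}_{\hat\Psi}-\hat{\Gamma}_\Psi)\nu.
\]
The first term is bounded below by $\mu\norm{\nu}_2^2$ directly from the definition of $\mu$, and the whole argument is about absorbing the stochastic fluctuation and the wavelet-approximation error into $\tfrac{\mu}{2}\norm{\nu}_2^2 + \tau\norm{\nu}_{1,\hat{K}_\Psi}^2$.

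For the stochastic term I would establish a block-wise max-norm deviation bound for $\hat{\Gamma}_\Psi-\Gamma_\Psi^{pop}$. Each summand $\hat w_{\Theta,\ell}(t_n)\Psi_a(t_n)\Psi_b(t_n)$ is a product of a bounded factor with two sub-Weibull$(2)$ factors (the bound $\norm{\Psi_j(t_n)}_{\psi_2}\le B\sqrt m$ is exactly the one derived in the proof of Lemma~\ref{lemma:deviation}), so by Lemma~\ref{lemma:subweibull:product} it is sub-Weibull$(1)$ with norm $\lesssim B^2 m$; moreover the centered, stationary sequence is geometrically $\beta$-mixing by the same measurability and independence reasoning used in Steps~1--2 of Lemma~\ref{lemma:deviation}. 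Applying the mixing Bernstein bound of Lemma~\ref{lemma:tailbound:mixing} (equivalently Lemma~\ref{lemma:DB:bmixing} with combined parameter $1/\gamma=1/\gamma_1+1/\gamma_2$) then gives a per-block deviation of the stated order, and the requirement $N\gtrsim B^2m^4(\log p)^3/\mu$ is precisely what makes this deviation small relative to $\mu$. To pass from max-norm control to a bound uniform over all $\nu$, I would run a discretization/peeling argument over $ms$-sparse directions and then extend to general $\nu$ at the price of the $\norm{\nu}_{1,\hat{K}_\Psi}^2$ slack; it is this sparse-net step that forces the second sample-size condition $N\gtrsim B^5(sm/\mu)^{5/2}$ and lets me keep $\alpha=\mu/2$ while routing the remaining $\ell_2$-scale fluctuation into $\tau$.

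The approximation term is handled with the deterministic estimates already developed for Lemma~\ref{lemma:deltaPsi}: $\hat\Psi-\Psi$ is controlled entrywise through the derivative bound $D$ and the wavelet error $\delta_1=\max_i\opnorm{X_i-\hat{X}_i}{}$, the latter being of the nonparametric order guaranteed by Proposition~\ref{prop:tailbound:wavelet} with high probability. Converting the cross terms $\hat\Psi\hat\Psi^\top-\Psi\Psi^\top$ into the block norm and balancing $\delta_1$ against the fluctuation scale is what produces the fourth-root rate $\cbr{(\log p)^3B^2/(\mu N)}^{1/4}$ and the factor $\max_i\sigma_{\min}^{-2}(\hat{K}_{\Psi_i})$ appearing in $\tau$. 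Collecting the three pieces and choosing constants so that the $\ell_2$-scale fluctuations consume at most $\mu/2$ yields the claim on the stated high-probability event.

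I expect the main obstacle to be the uniform-in-$\nu$ extension: preserving the $\tfrac{\mu}{2}\norm{\nu}_2^2-\tau\norm{\nu}_{1,\hat{K}_\Psi}^2$ form for every $\nu$ (not merely on a fixed support) when the summands are simultaneously dependent and only sub-Weibull, and when the features are estimated rather than observed. The delicate bookkeeping is the coupling between the $\beta$-mixing concentration and the wavelet-approximation error — in particular ensuring the approximation error does not disturb stationarity or the mixing rate as it enters the quadratic form — and arranging the two sample-size thresholds so that they dominate the concentration error and the net/approximation error at once.
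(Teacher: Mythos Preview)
Your broad architecture (decompose into population, stochastic, and approximation parts; concentrate over sparse directions; extend to all $\nu$) is compatible with the paper, but two points diverge from the paper's actual argument.

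First, the paper never introduces your third term at all: its proof works throughout with $W(t_n,\nu)=\sqrt{\hat w_{\Theta,\ell}(t_n)}\,\nu^\top\Psi(t_n)$, i.e.\ with the \emph{true} $\Psi$, and establishes the RE inequality for $N^{-1}\sum_n\hat w_{\Theta,\ell}(t_n)(\nu^\top\Psi(t_n))^2$. Neither the wavelet error $\delta_1$ nor Proposition~\ref{prop:tailbound:wavelet} appears anywhere in this lemma's proof. Your approximation term $\nu^\top(\hat\Gamma_{\hat\Psi}-\hat\Gamma_\Psi)\nu$ is an extra step the paper simply does not take (which, if anything, means you would be closing a gap between the statement and the written proof).

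Second, and this is the substantive correction, your attribution of the constants in $\tau$ is wrong. The exponent $1/4$ and the factor $\max_i\sigma_{\min}^{-2}(\hat K_{\Psi_i})$ do \emph{not} arise from ``balancing $\delta_1$ against the fluctuation scale.'' In the paper's argument they come entirely from the stochastic term: the squared form $W^2(t_n,\nu)$ is sub-Weibull$(1/2)$ (you must square, so your sub-Weibull$(1)$ is for the linear factor, not the quadratic), giving combined exponent $1/\gamma=1/\gamma_1+1/\gamma_2=1+2=3$ in Lemma~\ref{lemma:tailbound:mixing}. After a union bound over $\binom{p}{2sm}$ supports and invoking Lemma~\ref{lemma:sparse2all} to pass from $\mathcal K(2sm)$ to all $\nu$, one chooses $t=\mu/54$ and back-solves the probability constraint $2C_3 sm\log p=(tN)^{1/3}/((32B^2sm)^{1/3}C_1)$ for the free sparsity level $s$, yielding $1/s\asymp m\{(\log p)^3B^2/(\mu N)\}^{1/4}$; this is where the $1/4$ comes from, and it is a direct consequence of $\gamma=1/3$. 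The $\sigma_{\min}^{-2}$ factor is the purely deterministic norm conversion $\|\nu\|_1\le\sqrt{m}\max_i\sigma_{\min}^{-1}(\hat K_{\Psi_i})\|\nu\|_{1,\hat K_\Psi}$, squared. Your block-wise max-norm route would also work, but it would \emph{replace} the sparse-net step rather than precede it, and it would not on its own reproduce the specific $1/4$ exponent.
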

\subsection{Proof of Lemma~\ref{lemma:RE:condition}}
\emph{Step 1: Uniform concentration over sparse vectors.} 
Define the set $\mathcal{K}(s)=\{\nu:\norm{\nu}_0\leq s, \norm{\nu}_2\leq 1\}$. Let  $W(t_n,\nu)=\sqrt{\hat{w}_{\Theta,\ell}(t_n)}\nu^\top\Psi(t_n)$ for $\nu\in\mathcal{K}(2sm)$. By Lemma~\ref{lemma:subweibull:product}, we have $$\norm{W^2(t_n,\nu)}_{\psi_{1/2}}\leq 2^{2}\norm{W(t_n,\nu)}_{\psi_1}^2.$$
Apply Lemma~\ref{lemma:subweibull:product} again, we have
\begin{align*}
    \norm{W(t_n,\nu)}_{\psi_1}&\leq 2\norm{\hat{w}^{1/2}_{\Theta,\ell}(t_n)}_{\psi_2}\norm{\nu^\top\Psi(t_n)}_{\psi_2}\\
    &\leq 2\sup_{\nu\in\Kcal(2sm)}\sup_{p\geq 1}p^{-1/2}\rbr{\EE|\nu^\top\Psi(t_n) |^p}^{1/p}\\
    &\leq 2\sup_{\nu\in\Kcal(2sm)}\sup_{p\geq 1}p^{-1/2}\rbr{\EE\norm{\nu}_1^p\norm{\Psi(t_n)}_\infty^p}^{1/p}\\
    &\leq 2B\sqrt{2sm}.
\end{align*}
Therefore, we can conclude that $W^2(t_n,\nu)-\EE[W^2(t_n,\nu)]$ is a Sub-Weibull($\gamma_2$) variable with $\gamma_2=1/2$ and 
\[
\left\|W^2(t_n,\nu)-\EE[W^2(t_n,\nu)]\right\|_{\psi_{1/2}}\leq 2 \norm{W^2(t_n,\nu)}_{\psi_{1/2}}\leq 64B^2sm.
\]

Furthermore, following similar argument as $\emph{Step 2}$ in the proof of Lemma~\ref{lemma:deviation}, $W(t_n,\nu)$ is $\beta$-mixing with rate $\gamma_1=1$. Hence, we can compute $1/\gamma=1/\gamma_1+1/\gamma_2=1+2=3$. Apply Lemma~\ref{lemma:tailbound:mixing}, we have
\[
P\rbr{\abr{N^{-1}\sum_{n=1}^NW^2(t_n,\nu)-\EE[W^2(t_n,\nu)]}
\geq 2t
}\leq N\exp\cbr{\frac{-(2tN)^\gamma}{(64B^2sm)^\gamma C_1}}
+\exp\cbr{\frac{-4t^2N}{(64B^2sm)^2C_2}}. 
\]
Define $\Ncal(1/2,\pi)$ be a $1/2$-net on a support $\pi\in\Pi=\{\iota\subset\{i:i=1,\ldots p\}:|\iota|=2sm\}$. It follows that $|\Pi|=\binom{p}{2sm}$. Then taking union bound over all the possible $\nu$, we have
\begin{align*}
    P\rbr{\sup_{\pi\in\Pi}\sup_{\nu\in \Ncal(1/2,\pi)}\abr{N^{-1}\sum_{n=1}^NW^2(t_n,\nu)-\EE[W^2(t_n,\nu)]}
\geq t
}&\leq \binom{p}{2sm}5^{2sm}N\exp\cbr{\frac{-(tN)^\gamma}{(32B^2sm)^\gamma C_1}}\\
&\quad+ \binom{p}{2sm}5^{2sm}\exp\cbr{\frac{-t^2N}{(32B^2sm)^2C_2}}\\
&\leq \rbr{\frac{5e p}{2sm}}^{2sm}N\exp\cbr{\frac{-(tN)^\gamma}{(32B^2sm)^\gamma C_1}}\\
&\quad+ \rbr{\frac{5e p}{2sm}}^{2sm}\exp\cbr{\frac{-t^2N}{(32B^2sm)^2C_2}}.
\end{align*}
Therefore,
\begin{align*}
    P\rbr{\sup_{\nu\in K(2sm)}\abr{N^{-1}\sum_{n=1}^NW^2(t_n,\nu)-\EE[W^2(t_n,\nu)]}
\geq t
}&\leq N\exp\cbr{\frac{-(tN)^\gamma}{(32B^2s)^\gamma C_1}+C_3 sm\log p}\\
&\quad +\exp\cbr{\frac{-t^2N}{(32B^2s)^2C_2}+C_3 sm\log p},
\end{align*}
for some constant $C_3>0$.

\emph{Step 2: Uniform concentration over all vectors.} To find the uniform concentration on all vectors, we apply Lemma~12 in~\citet{loh2012high}, restated in Lemma~\ref{lemma:sparse2all} and obtain that
\begin{equation}\label{eq:tmp7}
N^{-1}\abr{\sum_{n=1}^NW^2(t_n,\nu)-\EE [W^2(t_n,\nu)]}\geq 27t\rbr{\norm{\nu}_2^2+\frac{1}{sm}\norm{\nu}_1^2},
\end{equation}
for any $\nu\in\RR^{pm}$ with probability at least
\begin{equation}\label{eq:tmp8}
1-N\exp\cbr{\frac{-(tN)^\gamma}{(32B^2sm)^\gamma C_1}+C_3 sm\log p}
-\exp\cbr{\frac{-t^2N}{(32B^2sm)^2C_2}+C_3 sm\log p}.
\end{equation}
Recall that $\sigma_{\min}(N^{-1}\sum_{n=1}^N\EE[\hat{w}_{\Theta,\ell}(t_n)\Psi(t_n)\Psi(t_n)^\top])\leq \mu$, we can obtain the following inequality from~\eqref{eq:tmp7}:
\begin{align*}
    N^{-1}\sum_{n=1}^NW^2(t_n,\nu)&\geq \mu\norm{\nu}_2^2-27t\rbr{\norm{\nu}_2^2+\frac{1}{sm}\norm{\nu}_1^2}.
    \intertext{Select $t=\mu/54$, the above display is equal to}
    &= \frac{\mu}{2}\norm{\nu}_2^2-\frac{\mu}{2sm}\norm{\nu}_1^2.
    \intertext{Note that $\norm{\nu}_1\leq \sqrt{m}\norm{\nu}_{1,2}\leq \max_{i}(\sqrt{m}/\sigma_{\min}(\hat{K}_{\Psi_i}))\norm{\nu}_{1,\hat{K}_{\Psi}}$, and hence we can further lower bound the above display as}
    &\geq \frac{\mu}{2}\norm{\nu}_2^2-\frac{\mu}{2s}\max_{i=1,\ldots,p}\frac{1}{\sigma^2_{\min}(\hat{K}_{\Psi_i})}\norm{\nu}_{1,\hat{K}_\Psi}^2.
\end{align*}

\emph{Step 3: Select parameters.} We want 
\[
 \frac{(tN)^\gamma}{(32B^2sm)^\gamma C_1}\leq  \frac{t^2 N}{(32 B^2 sm)^2 C_2},
\]
which implies that
\[
N\geq \rbr{\frac{32B^2sm}{t}}^{(2-\gamma)/(1-\gamma)}\rbr{\frac{C_2}{C_1}}^{1/(1-\gamma)}=C_4B^5\rbr{\frac{sm}{\mu}}^{5/2},
\]
where the last equality follows by plugging $\gamma=1/3$ and $t=\mu/54$ and $C_4 = (24\sqrt{3})^5(C_2/C_1)^{3/2}$. Furthermore, we want 
\[2C_3sm\log p =   \frac{(tN)^\gamma}{(32B^2sm)^\gamma C_1}
\]
so that~\eqref{eq:tmp8} is at least $1-2N\exp(-C_3sm\log p)$. This implies
\[
\frac{1}{s}=m(24C_1C_3)^{3/4}\cbr{\frac{(\log p)^3 B^2}{\mu N}}^{1/4}.
\]
Setting $s\geq 1$, we have
\[
N\geq \frac{C_5B^2m^4(\log p)^3}{\mu},
\]
where $C_5=(24C_1C_3)^3$. Finally, let
\[
N\geq \max\cbr{\frac{C_5B^2m^4(\log p)^3}{\mu}, C_4B^5\rbr{\frac{sm}{\mu}}^{5/2}}.
\]
Then, 
\[
 N^{-1}\sum_{n=1}^NW^2(t_n,\nu)\geq \alpha\norm{\nu}_2^2-\tau\norm{\nu}_{1,\hat{K}_\Psi}^2,
\]
where $\alpha=\mu/2$, $\tau=m\alpha\{C_5(\log p)^3B^2/(\mu N)\}^{1/4}\max_{i}\sigma^{-2}_{\min}(\hat{K}_{\Psi_i})$ and with probability at least 
\[
1-2N\exp(-C_3ms\log p).
\]

\section{Proof of Theorem~\ref{theorem:graphrecovery}}

\begin{proof}[Proof of Theorem~\ref{theorem:graphrecovery}]
Before the start, we define a quantity
    \begin{align}\label{eq:define:deltae}
    \delta_e = \max\cbr{Nk\exp(-C_8ms\log p),  k\exp(-C_9m\log p), (Npk+1)\exp(-C_{10}), k^2\exp(-C_{11})}.
    \end{align}

    \emph{Step 1. Recovery of functions.} First, given fixed $i=1,\ldots,p$ we choose $\delta =\delta_e/3p$ for some constant $C_0$ and apply Proposition~\ref{prop:tailbound:wavelet} 
    \[
    P\rbr{
    \opnorm{\hat{X}_i-X_i}{}^2>C_0\rbr{\frac{\log N + \log p}{N}}^{2\alpha/(1+2\alpha)}
    }\leq \frac{\delta_e}{p}. 
    \]
    Taking the union bound over all $i=1,\ldots,p$, we have
    \[
    P\rbr{
    \max_{i=1,\ldots,p}\opnorm{\hat{X}_i-X_i}{}>C_0\rbr{\frac{\log N + \log p}{N}}^{\alpha/(1+2\alpha)}
    }\leq \delta_e.
    \]
    Since $N$ satisfies
    \[
    \frac{N}{\log N+\log p}\gtrsim \rbr{\frac{BDm}{c_\Psi}}^{(2\alpha+1)/\alpha}.
    \]
    Then, we have
    \[
    P\rbr{
    \max_{i=1,\ldots,p}\opnorm{\hat{X}_i-X_i}{}>\frac{c_{\Psi}}{BDm}}\leq \delta_e.
    \]
    \emph{Step 2. Induction on $\theta_{ij}^\ell$. }
    Now, we want to apply Lemma~\ref{lemma:oneupdate:theta:2}. Suppose that we start with a point $\Theta^0$ such that $\Theta^{(0)}\in B(r_0, r_q, r_\sigma, \Theta^{\star})$. Given proposed choice of $N$ and $r$, we want to apply Lemma~\ref{lemma:oneupdate:theta:2}.
    Define $\check{\Theta}=\argmax_{\tilde{\Theta}\in\Omega}\Lcal(\tilde{\Theta}\mid \Theta^{(0)})$ and 
    $\check{\theta}_{i\cdot}^\ell=(\check{\theta}_{i1}^{\ell\top},\ldots, \check{\theta}_{ip}^{\ell\top})$ for $i=1,\ldots,p$. 
    We define similarly $\theta_{i\cdot}^{\ell\star}= ({\theta}_{i1}^{\ell\star\top},\ldots, {\theta}_{ip}^{\ell\star\top})$, $\theta_{i\cdot}^{\ell(0)}= ({\theta}_{i1}^{\ell(0)\top},\ldots, {\theta}_{ip}^{\ell(0)\top})$ for $i=1,\ldots,p$. Denote $\Theta^{(1)} = M_n(\Theta^{(0)})$ and 
    apply Lemma~\ref{prop:mvt}, we arrive at
    \[
    \norm{\check{\theta}_{i\cdot}^\ell-\theta_{i\cdot}^{\ell\star}}_2<\norm{{\theta}_{i\cdot}^{\ell(0)}-\theta_{i\cdot}^{\ell\star}}_2.
    \]
    Taking the maximum on both side, 
    \[
    \max_{i,\ell}\norm{\check{\theta}_{i\cdot}^\ell-\theta_{i\cdot}^{\ell\star}}_2<\max_{i,\ell}\norm{{\theta}_{i\cdot}^{\ell(0)}-\theta_{i\cdot}^{\ell\star}}_2.
    \]
    Therefore, the choice of $\lambda$ satisfies the condition required by Lemma~\ref{lemma:oneupdate:theta:2}. Applying Lemma~\ref{lemma:oneupdate:theta:2}, we have
\begin{multline*}
\max_{i,\ell}\norm{{\theta}^{\ell(1)}_{i\cdot}-\theta_{i\cdot}^{\ell\star}}_2\leq {C_3c_{\Psi}^{-2}}\kappa\max_{i,\ell}\norm{{\theta}^{\ell(0)}_{i\cdot}-\theta_{i\cdot}^{\ell\star}}_2\\
+
C_4\cbr{\frac{m\sqrt{s\log p}}{\sqrt{N}}
+ ms\rbr{\frac{\log N+\log p}{N}}^{\alpha/(2\alpha+1)}+
\sqrt{ms}\rbr{1-\zeta^2\pi_{\min}^2}^{r-1}},
\end{multline*}
with probability at least $1-2\delta_e$. When $N, r$ is large enough, we can guarantee that the right hand side is smaller than $p^{-1/2}r_0$. Therefore, we can apply Lemma~\ref{lemma:oneupdate:theta:2} once more.
    
Similarly, denote $\Theta^{(2)} = M_n(\Theta^{(1)})$, $\Theta^{(1)} = M_n(\Theta^{(0)})$ and $\check{\Theta}^{(1)} = M(\Theta^{(1)})$.  First we want to check that $\lambda$ satisfies the condition required by Lemma~\ref{lemma:oneupdate:theta:2}. We denote 
\[
\lambda^{(1)} \geq \sqrt{m}\max\cbr{C_1
        \rbr{1-\zeta\pi_{\min}^2}^{r-1},C_1\sqrt{ms}\rbr{\frac{\log N+\log p}{N}}^{\alpha/(2\alpha+1)}, C_2\frac{1}{\sqrt{ms}}\max_{i,\ell}\norm{\check{\theta}_{i\cdot}^{\ell(1)}-\theta_{i\cdot}^{\ell\star}}}.
\]
We want to verify that $\lambda^{(1)}\leq \lambda$. This is equivalent as checking
\begin{equation}\label{eq:verifycondition}
\lambda\geq \frac{1}{\sqrt{s}}\max_{i,\ell}\norm{\check{\theta}_{i\cdot}^{\ell(1)}-\theta_{i\cdot}^{\ell\star}}.
\end{equation}
Apply Proposition~\ref{prop:mvt}, We can write
\begin{align*}
     \frac{1}{\sqrt{s}}\norm{\check{\theta}_{i\cdot}^{\ell(1)}-\theta_{i\cdot}^{\ell\star}}&\leq  \frac{\kappa}{\sqrt{s}} \norm{{\theta}_{i\cdot}^{\ell(1)}-\theta_{i\cdot}^{\ell\star}}.
    \intertext{Similar to~\eqref{eq:theta1}, we can apply Lemma~\ref{lemma:onestep:theta} and~\eqref{eq:Kpsi:ub} and obtain the upper bound:}
    &\leq \kappa 8 c_{\Psi}^{-2}\rbr{10\lambda+\frac{1}{\sqrt{s}}\norm{\check{\theta}_{i\cdot}^{\ell}-\theta_{i\cdot}^{\ell\star}}}.
    \intertext{Apply Proposition~\ref{prop:mvt} once more, we can upper bound the above term as}
        &\leq \kappa 8 c_{\Psi}^{-2}\rbr{10\lambda+\frac{\kappa}{\sqrt{s}}\norm{{\theta}_{i\cdot}^{\ell(0)}-\theta_{i\cdot}^{\ell\star}}}\\
        &\leq (80c_{\Psi}^{-2}\kappa+\kappa^2)\lambda.
    \end{align*}
    Using the fact that $80c_{\Psi}^{-2}\kappa+\kappa^2< 1$ and taking the maximum on both side, we can conclude~\eqref{eq:verifycondition}.

    Iterate the EM-algorithm  once more and apply Lemma~\ref{lemma:oneupdate:theta:2} , we arrive at
\begin{multline*}
\max_{i,\ell}\norm{{\theta}^{\ell(2)}_{i\cdot}-\theta_{i\cdot}^{\ell\star}}_2\leq \rbr{C_3c_{\Psi}^{-2}\kappa}^2\max_{i,\ell}\norm{{\theta}^{\ell(0)}_{i\cdot}-\theta_{i\cdot}^{\ell\star}}_2\\
+
(1+C_3c_{\Psi}^{-2}\kappa)C_4\cbr{\frac{m\sqrt{s\log p}}{\sqrt{N}}
+ ms\rbr{\frac{\log N+\log p}{N}}^{\alpha/(2\alpha+1)}+
\sqrt{ms}\rbr{1-\zeta^2\pi_{\min}^2}^{r-1}},
\end{multline*}
with probability at least $1-2\delta_e$. 

    Hence, if we update the EM-algorithm for $L$ times, we will have

\begin{multline*}
\max_{i,\ell}\norm{{\theta}^{\ell(L)}_{i\cdot}-\theta_{i\cdot}^{\ell\star}}_2\leq \rbr{C_3c_{\Psi}^{-2}\kappa}^{L}\max_{i,\ell}\norm{{\theta}^{\ell(0)}_{i\cdot}-\theta_{i\cdot}^{\ell\star}}_2\\
+
\frac{C_4}{1-C_3c_{\Psi}^{-2}\kappa}\cbr{\frac{m\sqrt{s\log p}}{\sqrt{N}}
+ ms\rbr{\frac{\log N+\log p}{N}}^{\alpha/(2\alpha+1)}+
\sqrt{ms}\rbr{1-\zeta^2\pi_{\min}^2}^{r-1}},
\end{multline*}
    with probability at least $1-2\delta_e$. 

\emph{Step 3. Induction on $\sigma$.} Apply Lemma~\ref{lemma:oneupdate:sigma:2} and Proposition~\ref{prop:mvt}, we arrive at
\begin{equation*}
\abr{{\sigma^{(1)}}^2-\sigma^{\star2}}\leq \frac{C_5 c_{\Psi}^{-2}msr_0^2}{\sqrt{N}}+
 C_6k \zeta^{-8}\pi_{\min}^{-2}\cbr{1-(\zeta\pi_{\min})^2}^{r-1}+\kappa\abr{{\sigma^{(0)}}^2-\sigma^{\star2}},
\end{equation*}
with probability at least $1-\delta_e$. Since, from \emph{Step 2}, we know that $\sum_{i}\norm{\theta_{i\cdot}^{\ell(1)}-\theta_{i\cdot}^{\ell\star}}_2^2\leq r_0^2$ with probability at least $1-2\delta_e$. Therefore we can apply Lemma~\ref{lemma:oneupdate:sigma:2} and Proposition~\ref{prop:mvt} again. Repeat the steps for $L-2$ times, we will have
\begin{multline*}
\abr{{\sigma^{(L)}}^2-\sigma^{\star2}}\leq \kappa^L\abr{{\sigma^{(0)}}^2-\sigma^{\star2}}+\frac{1}{1-\kappa}\sbr{
\frac{C_5 c_{\Psi}^{-2}msr_0^2}{\sqrt{N}}+
 C_6k \zeta^{-8}\pi_{\min}^{-2}\cbr{1-(\zeta\pi_{\min})^2}^{r-1}},
\end{multline*}
with probability at least $1-3\delta_e$.

\emph{Step 4. Induction on $Q$.} Apply Lemma~\ref{lemma:sc_w} to Lemma~\ref{lemma:onestep:q}, we see that 
\begin{equation}\label{eq:q1}
\sum_{\ell\neq\ell'}\abr{{q}_{\ell\ell'}^{(1)}-q_{\ell\ell'}^\star}\leq \frac{C_7k(k-1)}{\sqrt{N}} + \sum_{\ell\neq\ell'}\abr{\check{q}_{\ell\ell'}-q_{\ell\ell'}^\star}. 
\end{equation}
with probability at least $1-k^2\exp(-C_5)\leq 1-\delta_e$. 
Note that, we can apply Proposition~\ref{prop:mvt} to the right hand side of~\eqref{eq:q1}, and obtain
\begin{equation}\label{eq:q2}
\sum_{\ell\neq\ell'}\abr{{q}_{\ell\ell'}^{(1)}-q_{\ell\ell'}^\star}\leq \frac{C_7k(k-1)}{\sqrt{N}} + \sum_{\ell\neq\ell'}\kappa\abr{{q}_{\ell\ell'}^{(0)}-q_{\ell\ell'}^\star}, 
\end{equation}
with probability at least $1-\delta_e$.
Noting that $N$ is selected large enough such that the right hand side of~\eqref{eq:q2} is bounded by $r_q$. Hence repeat the above steps to analyze the distance of $\Theta^{(2)}=M_n(\Theta^{(1)})$ to $\Theta^\star$. Then, we repeat the above steps for $L-2$ more times, we arrive at
\begin{equation*}
\sum_{\ell\neq\ell'}\abr{{q}_{\ell\ell'}^{(L)}-q_{\ell\ell'}^\star}\leq \frac{1}{1-\kappa}\frac{C_7k(k-1)}{\sqrt{N}} + \sum_{\ell\neq\ell'}\kappa^L\abr{{q}_{\ell\ell'}^{(0)}-q_{\ell\ell'}^\star}, 
\end{equation*}
with probability at least $1-\delta_e$.

\end{proof}
\subsection{Proof of Corollary~\ref{corollary:main}}

\begin{proof}[Proof of Corollary~\ref{corollary:main}]
    Since 
    \[
    L\geq \frac{\log \epsilon_t-\log 4r_0}{\log(C_3c_\Psi^{-2}\kappa)},
    \]
    and under the assumptions of Corollary~\ref{corollary:main}, it follows that
    \[
    \max_{i,\ell}\norm{\theta_{i\cdot}^{\ell(L)}-\theta_{i\cdot}^{\ell\star}}_2\leq \frac{1}{2}\epsilon_t,
    \]
    with probability at least $1-2\delta_e$. This implies that
    \[
    \norm{\theta_{ij}^{\ell(L)}-\theta_{ij}^{\ell\star}}_2\geq \norm{\theta_{ij}^{\ell\star}}_2-\frac{1}{2}\epsilon_t,
    \]
    with probability at least $1-2\delta_e$. Note that if $\norm{\theta_{ij}^{\ell\star}}\neq 0$, namely $(j,i)\in E^{\ell}$, then $\norm{\theta_{ij}^{\ell\star}}\geq (3/2)\epsilon_t$ by construction of $\epsilon_t$. Therefore, for every  $(j,i)\in E^{\ell}$, then
    \[
    \norm{\theta_{ij}^{\ell(L)}}_2\geq \epsilon_t,
    \]
    with probability at least $1-2\delta_e$. Hence, $(j,i)\in\hat{E}^{\ell}$ with probability at least $1-2\delta_e$. In contrast, for every $(j,i)\not\in E^{\ell}$
    \[
    \norm{\theta_{ij}^{\ell(L)}}_2\leq \frac{1}{2}\epsilon_t,
    \]
    with probability at least $1-2\delta_e$. 
\end{proof}

\subsection{Proof of Lemma~\ref{lemma:oneupdate:theta:2}}
\begin{lemma}\label{lemma:oneupdate:theta:2}
    Assume Assumption~\ref{assumption:stationary}, \ref{assumption:mixingbeta}, \ref{assumption:population}, ~\ref{assumption:reversible}--~\ref{assumption:eigenvalue} hold and let $C_1,\ldots, C_6$ to be some universal constants. 
    Suppose that $Z(0)$ is sampled from the stationary distribution. Define $\check{\Theta}=\argmax_{\tilde{\Theta}\in\Omega}\Lcal(\tilde{\Theta}\mid \Theta)$ and $\check{\theta}_{i\cdot}^\ell=(\check{\theta}_{i1}^{\ell\top},\ldots, \check{\theta}_{ip}^{\ell\top})$ for $i=1,\ldots,p$. Let
    \[
    \sup_{t\in[0,1]}\max_{i,j}\abr{g_j(X_i(t))}\leq B,\quad
\sup_{t\in[0,1]}\max_{i,j}\abr{\dot{g}_j(X_i(t))}\leq D,
    \]
     for some absolute constants $B, D>0$.   
    Suppose that $N\gtrsim\cbr{m^4(\log p)^3\vee  m^{5/2}s^{5/2}}$, 
    $\max_{i}\opnorm{\hat{X}_i-X_i}{}=\delta_1\leq c_{\Psi}(4BDm)^{-1}$ and
    \[
    \lambda \geq C_1\sqrt{m}\max\cbr{
        \rbr{1-\zeta\pi_{\min}^2}^{r-1},\sqrt{ms}\delta_1, \sqrt{\frac{m\log p}{N}}}\vee \frac{C_2}{\sqrt{s}}\max_{i,\ell}\norm{\check{\theta}_{i\cdot}^\ell-\theta_{i\cdot}^{\ell\star}}
    \]
    Then, we have
\begin{multline*}
\max_{i,\ell}\norm{\hat{\theta}^\ell_{i\cdot}-\theta_{i\cdot}^{\ell\star}}_2\leq {C_3c_{\Psi}^{-2}}\kappa\max_{i,\ell}\norm{{\theta}^\ell_{i\cdot}-\theta_{i\cdot}^{\ell\star}}_2
+
C_4\cbr{\frac{m\sqrt{s\log p}}{\sqrt{N}}
+ms\delta_1+
\sqrt{ms}\rbr{1-\zeta^2\pi_{\min}^2}^{r-1}},
\end{multline*}
with probability at least $1-2k\max\{N\exp(-C_5ms\log p), \exp(-C_6m\log p)\}$. 
\end{lemma}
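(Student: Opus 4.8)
The plan is to feed the two high-probability verification lemmas into the deterministic lasso contraction of Lemma~\ref{lemma:onestep:theta}, and then to turn the residual population term $\norm{\check{\theta}_{i\cdot}^{\ell}-\theta_{i\cdot}^{\ell\star}}_2$ into a genuine $\kappa$-contraction of the input distance by means of Proposition~\ref{prop:mvt}. Because Assumptions~\ref{assumption:RE}--\ref{assumption:DB} and the one-step update are all stated for a fixed state $\ell$ (through the weights $\hat{w}_{\Theta,\ell}$), I would run the entire argument for a fixed pair $(i,\ell)$ and then take a union bound over $\ell=1,\dots,k$, which is exactly what produces the factor $k$ in the stated failure probability.

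First I would control the empirical second-moment matrices. Since $\max_i\opnorm{\hat{X}_i-X_i}{}=\delta_1\le c_\Psi/(4BDm)$, Lemma~\ref{lemma:empirical:eigenvalue} gives $\min_j\sigma_{\min}(\hat{K}_{\Psi_j})\ge c_\Psi/2$ and $\max_j\sigma_{\max}(\hat{K}_{\Psi_j})\le 2c_\Psi^{-1}$, hence $\max_j\sigma_{\max}(\hat{K}_{\Psi_j})/\sigma_{\min}(\hat{K}_{\Psi_j})\le 4c_\Psi^{-2}$; together with Assumption~\ref{assumption:eigenvalue} this also yields $\sigma_{\max}(K_\Psi)\le c_\Psi^{-1}$ and $\alpha=\mu/2\ge c_\Psi/2$. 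With these constants and the sample-size hypothesis $N\gtrsim m^4(\log p)^3\vee m^{5/2}s^{5/2}$, Lemma~\ref{lemma:RE:condition} certifies Assumption~\ref{assumption:RE} with an $\alpha,\tau$ meeting the slackness requirement $\alpha\ge 50\max_j\sigma_{\max}^2(\hat{K}_{\Psi_j})s\tau$, failing with probability at most $N\exp(-C ms\log p)$. In parallel I would assemble the deviation function: Lemma~\ref{lemma:deltaW} gives $\norm{\Delta_w}_{\infty,\hat{K}_\Psi^*}\lesssim \sqrt{m}(1-\zeta\pi_{\min}^2)^{r-1}$ deterministically, Lemma~\ref{lemma:deltaPsi} gives $\norm{\theta^\star\Delta_\Psi}_{\infty,\hat{K}_\Psi^*}\lesssim c_\Psi^{-1}\norm{\theta^\star}_2 BDm\sqrt{s}\,\delta_1$ deterministically, and Lemma~\ref{lemma:deviation} gives $\norm{\Delta_\varepsilon}_{\infty,\hat{K}_\Psi^*}\lesssim c_\Psi^{-1}\sqrt{m}B(\sigma^\star+\rho)\sqrt{m\log p/N}$ with probability at least $1-6\exp(-Cm\log p)$. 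Their maximum defines a valid $\QQ$, so Assumption~\ref{assumption:DB} holds.

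The stated lower bound on $\lambda$ is then exactly engineered so that $\lambda\ge 12\QQ$ and $\lambda\gtrsim s^{-1/2}\,\sigma_{\max}(K_\Psi)\,\max_j\sigma_{\max}(\hat{K}_{\Psi_j})^{-1}\max_{i,\ell}\norm{\check{\theta}_{i\cdot}^\ell-\theta_{i\cdot}^{\ell\star}}_2$, i.e.\ both prerequisites of Lemma~\ref{lemma:onestep:theta}. Invoking that lemma and inserting the eigenvalue bounds gives, for each $(i,\ell)$,
\[
\norm{\hat{\theta}_{i\cdot}^\ell-\theta_{i\cdot}^{\ell\star}}_2\le 8c_\Psi^{-2}\rbr{10\lambda\sqrt{s}+\norm{\check{\theta}_{i\cdot}^\ell-\theta_{i\cdot}^{\ell\star}}_2}.
\]
Bounding $\norm{\check{\theta}_{i\cdot}^\ell-\theta_{i\cdot}^{\ell\star}}_2\le\kappa\norm{\theta_{i\cdot}^\ell-\theta_{i\cdot}^{\ell\star}}_2$ via Proposition~\ref{prop:mvt}, and taking $\lambda$ at the order of its admissible lower bound, the product $10\lambda\sqrt{s}$ splits precisely into the statistical term $m\sqrt{s\log p}/\sqrt{N}$, the approximation term $ms\,\delta_1$, the truncation term $\sqrt{ms}(1-\zeta\pi_{\min}^2)^{r-1}$ (bounded above by the $\zeta^2$-form in the statement), and a further $\kappa$-multiple of $\max_{i,\ell}\norm{\theta_{i\cdot}^\ell-\theta_{i\cdot}^{\ell\star}}_2$. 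Taking the maximum over $(i,\ell)$ and the union bound over $\ell$ then delivers the claim with the advertised probability.

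The main obstacle is bookkeeping rather than a single sharp estimate: the genuinely delicate content—that the summands of $\Delta_\varepsilon$, $\Delta_w$ and of the restricted-eigenvalue quadratic form remain $\beta$-mixing and sub-Weibull after truncation—has already been discharged in the cited lemmas, whose Bernstein-type bounds absorb the dependence through a factor $\exp(Cr)$ that is harmless precisely because $r$ is held constant. The care required here is in (i) recognizing that $\lambda$ enters the one-step bound linearly, so it must be pinned to the order of its stated lower bound for the $\lambda$-free conclusion to be meaningful, and (ii) coordinating the constants $c_\Psi,\alpha,\tau,\QQ$ so that the slackness inequality $\alpha\ge 50\max_j\sigma_{\max}^2(\hat{K}_{\Psi_j})s\tau$ and the several $\lambda$-thresholds hold simultaneously under the single sample-size requirement.
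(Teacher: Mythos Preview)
Your proposal is correct and follows essentially the same route as the paper's proof: verify the eigenvalue sandwich via Lemma~\ref{lemma:empirical:eigenvalue}, certify Assumptions~\ref{assumption:RE}--\ref{assumption:DB} through Lemmas~\ref{lemma:RE:condition}, \ref{lemma:deltaW}, \ref{lemma:deltaPsi}, \ref{lemma:deviation}, feed everything into Lemma~\ref{lemma:onestep:theta}, and finish with Proposition~\ref{prop:mvt} plus a union bound over the states. Your remarks about pinning $\lambda$ to the order of its lower bound and about the $\beta$-mixing bookkeeping being already absorbed in the cited lemmas are exactly the points the paper leans on.
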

\begin{proof}[Proof of Lemma~\ref{lemma:oneupdate:theta:2}]
    Define $C_j$ for $j=0,\ldots,9$ be some universal constants. 
    Before the start, we define a quantity
    \begin{align}\label{eq:define:deltae0}
    \delta_{e_0} = \max\cbr{N\exp(-C_0ms\log p),  6\exp(-C_1m\log p)}.
    \end{align}
    \emph{Step 1. The deviation bound and restricted eigenvalue.} To apply Lemma~\ref{lemma:onestep:theta}, we want to check Assumption~\ref{assumption:RE} holds. 
    Note that since $\delta_1\leq {c_\Psi}({4BDm})^{-1}$, then by Lemma~\ref{lemma:empirical:eigenvalue}, 
    we have
    \begin{equation}\label{eq:Kpsi:ub}
    \max_j\sigma_{\min}(\hat{K}_{\Psi_j})\geq \frac{c_{\Psi}}{2},\quad \max_j\sigma_{\max}(\hat{K}_{\Psi_j})\leq 2 c_{\Psi}^{-1}. 
    \end{equation}
    This implies that 
    \begin{equation}\label{eq:ubcmax}
    \max_{j}\frac{\sigma_{\max}(\hat{K}_{\Psi_j})}{\sigma_{\min}(\hat{K}_{\Psi_j})}\leq 4 c_{\Psi}^{-2}. 
    \end{equation}
    With the above result and under the sample size assumption that
    $N\gtrsim\{m^4(\log p)^3\vee m^{5/2}s^{5/2}\}$, we can apply Lemma~\ref{lemma:RE:condition}. Then, it follows that Assumption~\ref{assumption:RE} holds with probability at least $1-\delta_{e_0}$. 

    Next, to verify Assumption~\ref{assumption:DB}, we want to show that each of the term $\norm{\Delta_w}_{\infty,\hat{K}_{\Psi}^*}$, $ \norm{\Delta_\varepsilon}_{\infty,\hat{K}_{\Psi}^*}$, $ \norm{\theta^\star\Delta_\Psi}_{\infty,\hat{K}_{\Psi}^*}$ is well controlled. 
    Lemma~\ref{lemma:deltaW} implies that
    \[
    \norm{\Delta_w}_{\infty, \hat{K}_{\Psi}^*}\leq C\sqrt{m}\cbr{1-\zeta\pi^2_{\min}}^{r-1}.
    \]
    Lemma~\ref{lemma:deltaPsi} implies that
    \[
\norm{\theta^\star\Delta_\Psi}_{\infty,\hat{K}_{\Psi}^*}\leq \frac{4}{c_{\Psi}}\norm{\theta^\star}_2 BDm\delta_1\sqrt{s}. 
    \]
    Lastly, Lemma~\ref{lemma:deviation}, implies that 
    \[
    \norm{\Delta_\varepsilon}_{\infty, \hat{K}_{\Psi}^*}\leq \frac{C\sqrt{m}B(\sigma+\rho)}{c_{\Psi}}\sqrt{\frac{m\log p}{N}},
    \]
    with probability at least $1-\delta_{e_0}$. Therefore, we 
    can choose 
    \begin{align*}
        \QQ(N, p, s, m, r,\delta_1)=C_2\sqrt{m}\max\cbr{
        \rbr{1-\zeta\pi_{\min}^2}^{r-1},\sqrt{ms}\delta_1, \sqrt{\frac{m\log p}{N}}
        }.
    \end{align*}
    Note that $\QQ$ becomes small if we (i) increase $r$, (ii) have small enough $\delta_1$, which is the error of the nonparametric regression and depends on the sample size $N$, and (iii) have large enough sample size $N$. We can then 
 conclude that $\QQ$ is well controlled. 

\emph{Step 2. Selecting $\lambda$.} With the results in the previous step, we are able to select $\lambda$ as
\[
\lambda \geq \sqrt{m}\max\cbr{C_2
        \rbr{1-\zeta\pi_{\min}^2}^{r-1},\sqrt{ms}C_2\delta_1, C_2\sqrt{\frac{m\log p}{N}}, \frac{C_3}{\sqrt{ms}}\max_{i,\ell}\norm{\check{\theta}_{i\cdot}^\ell-\theta_{i\cdot}^{\ell\star}}},
\]
where $C_2,C_3>0$ are some universal constants. 
Apply Lemma~\ref{lemma:onestep:theta} and~\eqref{eq:Kpsi:ub} , we have
\begin{align}\label{eq:theta1}
\norm{\hat{\theta}_{i\cdot}^\ell-\theta_{i\cdot}^{\ell\star}}_2\leq \frac{4c_{\Psi}^{-1}}{\alpha}\rbr{
10\lambda\sqrt{s}+\norm{\check{\theta}_{i\cdot}^\ell-\theta_{i\cdot}^{\ell\star}}_2},
\end{align}
with probability at least $1-2\delta_{e_0}$ and $\alpha=2^{-1}\sigma_{\min}\rbr{\EE\sbr{N^{-1}\sum_{n=1}^N\hat{w}_{\Theta,\ell}(t_n)\Psi(t_n)\Psi(t_n)^\top}}$. Take the maximum over $i=1,\ldots,p$, then
\[
\max_{i,\ell}\norm{\hat{\theta}_{i\cdot}^\ell-\theta_{i\cdot}^{\ell\star}}_2\leq \frac{4c_{\Psi}^{-1}}{\alpha}\rbr{
10\lambda\sqrt{s}+\max_{i,\ell}\norm{\check{\theta}_{i\cdot}^\ell-\theta_{i\cdot}^{\ell\star}}_2},
\]
with probability at least $1-2pk\delta_{e_0}$.

\emph{Step 3. Combining results.} 
By definition of $\lambda$, we can bound the term $10\lambda\sqrt{s}$ as
\begin{align*}
10\lambda\sqrt{s}\leq \max \cbr{C_4\sqrt{ms}(1-\zeta\pi_{\min}^2)^{r-1}, C_4ms\delta_1, C_4m\sqrt{\frac{s\log p}{N}} , C_5\max_{i,\ell}\norm{\check{\theta}_{i\cdot}^\ell-\theta_{i\cdot}^{\ell\star}}}. 
\end{align*}
Therefore, plug the above equation to~\eqref{eq:theta1}, we can conclude that
\begin{multline*}
\max_{i,\ell}\norm{\hat{\theta}_{i\cdot}^\ell-\theta_{i\cdot}^{\ell\star}}_2\\\leq \frac{4c_{\Psi}^{-1}}{\alpha}\cbr{
C_4\rbr{\sqrt{ms}(1-\zeta\pi_{\min}^2)^{r-1},+ms\delta_1+m\sqrt{\frac{s\log p}{N}} }
+(C_5+1)\max_{i,\ell}\norm{\check{\theta}_{i\cdot}^\ell-\theta_{i\cdot}^{\ell\star}}_2},
\end{multline*}
with probability at least $1-2pk\delta_{e_0}$. 

Apply Lemma~\ref{prop:mvt} and under Assumption~\ref{assumption:eigenvalue}, we can further obtain
\begin{multline*}
\max_{i,\ell}\norm{\hat{\theta}^\ell_{i\cdot}-\theta_{i\cdot}^{\ell\star}}_2\leq {C_6c_{\Psi}^{-2}}\kappa\max_{i,\ell}\norm{{\theta}^\ell_{i\cdot}-\theta_{i\cdot}^{\ell\star}}_2
+
C_7\cbr{\frac{m\sqrt{s\log p}}{\sqrt{N}}
+ms\delta_1+
\sqrt{ms}\rbr{1-\zeta^2\pi_{\min}^2}^{r-1}},
\end{multline*}
with probability at least $1-2k\max\{N\exp(-C_8ms\log p), \exp(-C_9m\log p)\}$.
\end{proof}

\subsection{Proof of Lemma~\ref{lemma:oneupdate:sigma:2}}

\begin{lemma}\label{lemma:oneupdate:sigma:2}
    Under Assumption~\ref{assumption:stationary}, \ref{assumption:mixingbeta}, \ref{assumption:population}, ~\ref{assumption:reversible}--~\ref{assumption:eigenvalue}. 
    Suppose that $Z(0)$ is sampled from the stationary distribution. 
    Suppose that $N\gtrsim\cbr{m^4(\log p)^3\vee  m^{5/2}s^{5/2}}$, 
    $\max_{i}\opnorm{\hat{X}_i-X_i}{}=\delta_1\leq c_{\Psi}(4BDm)^{-1}$ and $C_0, \ldots,C_2>0$ be some constants. 
    If $\sum_{i,j}\norm{\theta_{ij}^\ell-\theta_{ij}^{\ell\star}}_2^2\leq r_0^2$ for $\ell=1,\ldots, k$,  we have
\begin{equation*}
\abr{\hat{\sigma}^2-\sigma^{\star2}}\leq \frac{C_0 c_{\Psi}^{-2}msr_0^2}{\sqrt{N}}+
 C_1k \zeta^{-8}\pi_{\min}^{-2}\cbr{1-(\zeta\pi_{\min})^2}^{r-1}+\abr{\check{\sigma}^2-\sigma^{\star2}},
\end{equation*}
with probability at least $1-Npk\exp(-C_2)$. 
\end{lemma}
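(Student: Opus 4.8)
The plan is to invoke the one-step variance bound of Lemma~\ref{lemma:onestep:sigma} and to supply its two data-dependent inputs, namely the uniform residual constant $c_0$ (for which $R_{n,\ell},\hat R_{n,\ell}\le c_0^2 p$ as in~\eqref{eq:define:Rn}) and the empirical deviation $\delta_2$, so that the abstract bound there collapses into the advertised form. Recall that Lemma~\ref{lemma:onestep:sigma} gives
\[
\abr{\hat\sigma^2-\sigma^{\star2}}\le \frac{C\sqrt{ms}\,k\delta_1}{N}\max_j\frac{\sigma_{\max}(\hat K_{\Psi_j})}{\sigma_{\min}(\hat K_{\Psi_j})}+\delta_2+C'k\zeta^{-8}\pi_{\min}^{-2}\cbr{1-(\zeta\pi_{\min})^2}^{r-1}+\abr{\check\sigma^2-\sigma^{\star2}}.
\]
Thus everything reduces to (i) controlling the conditioning of $\hat K_{\Psi_j}$, (ii) exhibiting a valid $c_0$, and (iii) bounding $\delta_2$.

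For (i), the hypothesis $\delta_1\le c_\Psi(4BDm)^{-1}$ lets me apply Lemma~\ref{lemma:empirical:eigenvalue}, which yields $\max_j\sigma_{\min}(\hat K_{\Psi_j})\ge c_\Psi/2$ and $\max_j\sigma_{\max}(\hat K_{\Psi_j})\le 2c_\Psi^{-1}$ as in~\eqref{eq:Kpsi:ub}, and hence the conditioning bound $\max_j\sigma_{\max}(\hat K_{\Psi_j})/\sigma_{\min}(\hat K_{\Psi_j})\le 4c_\Psi^{-2}$ of~\eqref{eq:ubcmax}. For (ii), I decompose, for each $n,i,\ell$,
\[
Y_{n,i}-Y_{n-1,i}-\sum_j\theta_{ij}^\ell\Psi_j(t_n)=\varepsilon_{n,i}-\varepsilon_{n-1,i}+\rho_{n,i}-\sum_j\rbr{\theta_{ij}^\ell-\theta_{ij}^{\ell\star}}\Psi_j(t_n),
\]
together with the analogous identity using $\hat\Psi_j$. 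The two Gaussian terms are centred with variance $\sigma^{\star2}$, the bias $\rho_{n,i}$ is bounded by $\rho$ (see~\eqref{eq:define:rho_n}), and by H\"older with the $\ell_1$-to-$\ell_2$ conversion~\eqref{eq:l1tol2} and the conditioning bound~\eqref{eq:ubcmax} one gets $\abr{\sum_j(\theta_{ij}^\ell-\theta_{ij}^{\ell\star})\Psi_j(t_n)}\le\norm{\theta_{i\cdot}^\ell-\theta_{i\cdot}^{\ell\star}}_1\,\norm{\Psi(t_n)}_\infty\le 20c_\Psi^{-2}\sqrt{ms}\,r_0 B$, and likewise for $\hat\Psi$. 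Hence $c_0=C\sigma^\star+\rho+20c_\Psi^{-2}\sqrt{ms}\,r_0 B$ is admissible up to a single Gaussian tail event per triple $(n,i,\ell)$; a union bound over the $Npk$ such events produces exactly the failure probability $Npk\exp(-C_2)$ claimed in the lemma.

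For (iii), with this $c_0$ and the conditioning bound in hand I apply Lemma~\ref{lemma:sc_T2}, obtaining $\delta_2\lesssim(2\sigma^{\star2}+\rho^2+ms\kappa r_0^2B^2)/\sqrt N$ with $\kappa\le 4c_\Psi^{-2}$, which is $O(c_\Psi^{-2}msr_0^2/\sqrt N)$ after absorbing $\sigma^\star,\rho,B$ into the constant. Substituting (i)--(iii) into Lemma~\ref{lemma:onestep:sigma} and noting that the first summand, of order $\sqrt{ms}\,k\delta_1 c_\Psi^{-2}/N$, is dominated by the $\delta_2$ term precisely because $\delta_1\le c_\Psi(4BDm)^{-1}$, the two statistical contributions merge into $C_0c_\Psi^{-2}msr_0^2/\sqrt N$, yielding the claim. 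The main obstacle I anticipate is the uniform residual control in (ii): one must verify that a single constant $c_0$ simultaneously bounds every $R_{n,\ell}$ and $\hat R_{n,\ell}$ with the correct union-bound bookkeeping, since it is this step --- rather than the concentration of $\delta_2$ --- that governs the $Npk\exp(-C_2)$ probability and fixes the dependence of the constants on $(\sigma^\star,\rho,B,r_0)$.
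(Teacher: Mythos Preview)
Your proposal is correct and follows essentially the same approach as the paper's proof: both invoke Lemma~\ref{lemma:onestep:sigma}, supply the residual constant $c_0$ via the decomposition into Gaussian noise, approximation bias $\rho_{n,i}$, and the parameter-difference term bounded through~\eqref{eq:l1tol2} and~\eqref{eq:ubcmax}, control $\delta_2$ by Lemma~\ref{lemma:sc_T2}, and then observe that the $O(1/N)$ contribution from $\delta_1$ is dominated by the $O(1/\sqrt N)$ term from $\delta_2$. The only cosmetic difference is that you single out the conditioning bound~\eqref{eq:ubcmax} as a separate step (i), whereas the paper invokes it inline; the union-bound bookkeeping and the final absorption of constants are identical.
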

\begin{proof}[Proof of Lemma~\ref{lemma:oneupdate:sigma:2}]
We want to verify that there exist a valid constant $c_0$ such that $R_{n,\ell},\hat{R}_{n,\ell}$ defined in~\eqref{eq:define:Rn} 
 satisfy $R_{n,\ell},\hat{R}_{n,\ell}\leq c_0^2 p$ for $\ell=1,\ldots,k$ and $n=1,\ldots,N$. Then, we can apply Lemma~\ref{lemma:onestep:sigma}. It suffices to show that
 \begin{align*}
 &\max_{i,n,\ell}\abr{Y_{n,i}-Y_{n-1,i}-\sum_{j}\theta_{ij}^\ell\Psi_j(t_n)}\leq c_0;\\
  &\max_{i,n,\ell}\abr{Y_{n,i}-Y_{n-1,i}-\sum_{j}\theta_{ij}^\ell\hat{\Psi}_j(t_n)}\leq c_0.
 \end{align*}
 Note that we can decompose
 \[
 Y_{n,i}-Y_{n-1,i}-\sum_{j}\theta_{ij}^\ell\Psi_j(t_n) =
 \varepsilon_{n,i}-\varepsilon_{n-1,i} + \rho_{n,i}-\sum_{j}(\theta_{ij}^\ell-\theta_{ij}^{\ell\star})\Psi_j(t_n). 
 \]
 The first two variables are independent centered Gaussian random variables with variance $2\sigma^{\star 2}$. The third variable is $\rho_{n,i}$ is the bias induced by approximation and is bounded by $\rho$. Under the assumptions, 
 ~\eqref{eq:ubcmax} holds. Hence, we can upper bound the last term of the above equation using H\"older inequality as
 \[
 \abr{\sum_{j}(\theta_{ij}^\ell-\theta_{ij}^{\ell\star})\Psi_j(t_n)}\leq \norm{\theta^\ell_{i\cdot}-\theta_{i\cdot}^{\ell\star}}_1\norm{\Psi_j(t_n)}_\infty\leq 20 c_{\Psi}^{-2}\sqrt{ms}r_0B,
 \]
 where the last inequality follows from plugging~\eqref{eq:ubcmax} into ~\eqref{eq:l1tol2}. Similarly, we have
  \[
 \abr{\sum_{j}(\theta_{ij}^\ell-\theta_{ij}^{\ell\star})\hat{\Psi}_j(t_n)}\leq 20 c_{\Psi}^{-2}\sqrt{ms}r_0B.
 \]
 Then, there exist a valid finite constant $c_0=C_1\sigma^\star +\rho + 20 c_{\Psi}^{-2}\sqrt{ms}r_0B$ for some universally constant $C_1$ such that
  \begin{align*}
 P&\rbr{\abr{Y_{n,i}-Y_{n-1,i}-\sum_{j}\theta_{ij}^\ell\Psi_j(t_n)}< c_0}\geq \exp(-C_2);\\
 P&\rbr{\abr{Y_{n,i}-Y_{n-1,i}-\sum_{j}\theta_{ij}^\ell\hat{\Psi}_j(t_n)}< c_0}\geq  \exp(-C_2).
 \end{align*}
 Applying the union bound over all $\ell, i, n$, we have
 \begin{align*}
 P&\rbr{\max_{i,n,\ell}\abr{Y_{n,i}-Y_{n-1,i}-\sum_{j}\theta_{ij}^\ell\Psi_j(t_n)}> c_0}\leq 1 - Npk\exp(-C_2);\\
 P&\rbr{\max_{i,n,\ell}\abr{Y_{n,i}-Y_{n-1,i}-\sum_{j}\theta_{ij}^\ell\hat{\Psi}_j(t_n)}> c_0}\leq 1 - Npk\exp(-C_2).
 \end{align*}
 Along with this piece of result and Lemma~\ref{lemma:sc_T2}, we can apply Lemma~\ref{lemma:onestep:sigma} and arrive at
 \begin{align*}
 \abr{\hat{\sigma}^2-\sigma^{\star2}}\leq\frac{C_0\sqrt{s}kc^{-1}_\Psi}{N BD\sqrt{m}}+\frac{\rbr{\sigma^\star}^2+\rho^2 + c_{\Psi}^{-2}msr_0^2B^2}{\sqrt{N}}+
 C_1k \zeta^{-8}\pi_{\min}^{-2}\cbr{1-(\zeta\pi_{\min})^2}^{r-1}+\abr{\check{\sigma}^2-\sigma^{\star2}},
 \end{align*}
 with probability at least $1-Npk\exp(-C_2)$. Note that the second term of the above equation dominates the first term, and hence we can simplify the above results as
 \begin{equation}\label{eq:sigma1}
  \abr{\hat{\sigma}^2-\sigma^{\star2}}\leq \frac{C_0 c_{\Psi}^{-2}msr_0^2}{\sqrt{N}}+
 C_1k \zeta^{-8}\pi_{\min}^{-2}\cbr{1-(\zeta\pi_{\min})^2}^{r-1}+\abr{\check{\sigma}^2-\sigma^{\star2}},
 \end{equation}
where $C_0$ is a constant that depends on $\sigma^\star, \rho, B$.

\end{proof}

\begin{lemma}[Adapted from Lemma~3 in~\citet{chen2017network}]\label{lemma:empirical:eigenvalue}
    Suppose that Assumption~\ref{assumption:eigenvalue} holds, and $\max_i\opnorm{\hat{X}_i-X_i}{}=\max_i\{\int_0^1(\hat{X}_i(t)-X_i(t))^2\mathrm{d}t\}^{1/2}=\delta_1$. Assume that $|g_{j}(X_i(t))|\leq B$ and 
    $|g'_{j}(X_i(t))|\leq D$
    for $i=1,\ldots, p$ and $j=1,\ldots,m$ and $t\in[0,1]$. Then,
    \[
    c_{\Psi}- 2BDm\delta_1\leq\sigma_{\min}\rbr{\hat{K}_{\Psi_j}} \leq \sigma_{\max}\rbr{\hat{K}_{\Psi_j}}\leq 
     c_{\Psi}^{-1}+ 2BDm\delta_1.
    \]
\end{lemma}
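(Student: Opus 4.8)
The plan is a deterministic matrix-perturbation argument. The matrices $\hat{K}_{\Psi_j}=N^{-1}\sum_{n=1}^N\hat{\Psi}_j(t_n)\hat{\Psi}_j(t_n)^\top$ and $K_{\Psi_j}$ are built from the same sampling points $t_1,\dots,t_N$ and differ only through the substitution of the estimated path $\hat{X}_j$ for the true path $X_j$ inside the integrals $\Psi_{j,a}(t_n)=\int_{t_{n-1}}^{t_n}g_a(X_j(u))\,\mathrm{d}u$, so the discrepancy is a deterministic functional of the trajectory error and no concentration is required. By Weyl's inequality for the eigenvalues of symmetric positive semidefinite matrices,
\[
\sigma_{\min}(K_{\Psi_j})-\|\hat{K}_{\Psi_j}-K_{\Psi_j}\|_2\leq \sigma_{\min}(\hat{K}_{\Psi_j})\leq \sigma_{\max}(\hat{K}_{\Psi_j})\leq \sigma_{\max}(K_{\Psi_j})+\|\hat{K}_{\Psi_j}-K_{\Psi_j}\|_2 .
\]
Assumption~\ref{assumption:eigenvalue} places $\sigma_{\min}(K_{\Psi_j})\geq c_\Psi$ and $\sigma_{\max}(K_{\Psi_j})\leq c_\Psi^{-1}$, so the entire statement reduces to proving the operator-norm bound $\|\hat{K}_{\Psi_j}-K_{\Psi_j}\|_2\leq 2BDm\delta_1$.

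For that bound I would expand each summand with the rank-one telescoping identity
\[
\hat{\Psi}_j(t_n)\hat{\Psi}_j(t_n)^\top-\Psi_j(t_n)\Psi_j(t_n)^\top=\big(\hat{\Psi}_j(t_n)-\Psi_j(t_n)\big)\hat{\Psi}_j(t_n)^\top+\Psi_j(t_n)\big(\hat{\Psi}_j(t_n)-\Psi_j(t_n)\big)^\top ,
\]
and apply the triangle inequality together with sub-multiplicativity of the operator norm to the two rank-one pieces, reducing $\|\hat{K}_{\Psi_j}-K_{\Psi_j}\|_2$ to a control of $N^{-1}\sum_{n=1}^N\|\hat{\Psi}_j(t_n)-\Psi_j(t_n)\|_2\big(\|\hat{\Psi}_j(t_n)\|_2+\|\Psi_j(t_n)\|_2\big)$.

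The remaining ingredient is to bound the three vector norms. Boundedness $|g_a(X_j(u))|\le B$ yields $\|\Psi_j(t_n)\|_2,\|\hat{\Psi}_j(t_n)\|_2\le B\sqrt{m}$, one factor $\sqrt m$ from each vector. For the increment I would use the mean value theorem and $|g_a'|\le D$ to get, entrywise, $|\hat{\Psi}_{j,a}(t_n)-\Psi_{j,a}(t_n)|\le D\int_{t_{n-1}}^{t_n}|\hat{X}_j(u)-X_j(u)|\,\mathrm{d}u$, and then a cellwise Cauchy--Schwarz step to obtain $\|\hat{\Psi}_j(t_n)-\Psi_j(t_n)\|_2\le D\sqrt{m}\,e_n$ with $e_n=\{\int_{t_{n-1}}^{t_n}(\hat{X}_j-X_j)^2\}^{1/2}$. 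Summing over $n$ and applying Cauchy--Schwarz across the $N$ cells collapses $\sum_n e_n$ against $\{\sum_n e_n^2\}^{1/2}=\opnorm{\hat{X}_j-X_j}{}\le\delta_1$; assembling $B\sqrt m$, $D\sqrt m$ (the second factor $\sqrt m$ produces the $m$), and the two rank-one terms gives exactly $\|\hat{K}_{\Psi_j}-K_{\Psi_j}\|_2\le 2BDm\delta_1$, which combined with the Weyl bounds proves the lemma.

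The main obstacle is essentially bookkeeping rather than analysis: one must track the powers of $\sqrt{m}$ and the integral/cell-normalization factors consistently so that the perturbation lands precisely at $2BDm\delta_1$ and in the same scale under which Assumption~\ref{assumption:eigenvalue} normalizes $K_{\Psi_j}$. The only genuinely analytic content is the Lipschitz reduction $g_a(X_j)-g_a(\hat{X}_j)\mapsto |X_j-\hat{X}_j|$, which requires $\hat{X}_j$ to remain in the region where the derivative bound $D$ applies; this is exactly the high-probability event of Proposition~\ref{prop:tailbound:wavelet} on which the whole estimator is analyzed, so it comes for free in the downstream use of this lemma.
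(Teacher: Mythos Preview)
The paper does not give its own proof of this lemma; it is stated with the pointer ``Adapted from Lemma~3 in \citet{chen2017network}'' and no argument follows. Your deterministic matrix-perturbation route---Weyl's inequality to reduce the claim to $\|\hat K_{\Psi_j}-K_{\Psi_j}\|_2\le 2BDm\delta_1$, the rank-one telescoping $\hat v\hat v^\top-vv^\top=(\hat v-v)\hat v^\top+v(\hat v-v)^\top$, and the Lipschitz/Cauchy--Schwarz chain on the cell integrals---is the standard argument and is almost certainly what the cited reference does (indeed the paper carries out the identical integral estimate in \eqref{eq:tmp3} inside the proof of Lemma~\ref{lemma:deltaPsi}). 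The switching structure does not enter this bound, so nothing changes relative to Chen et~al.

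One point worth flagging: in this paper $K_{\Psi_j}$ is defined, just before Assumption~\ref{assumption:eigenvalue}, with an expectation, $K_{\Psi_j}=N^{-1}\sum_n\EE[\Psi_j(t_n)\Psi_j(t_n)^\top]$, whereas your perturbation connects $\hat K_{\Psi_j}$ to the \emph{sample} Gram $N^{-1}\sum_n\Psi_j(t_n)\Psi_j(t_n)^\top$. In Chen et~al.'s setting the trajectory is deterministic, so the two coincide, and the lemma here is stated without any probability qualifier, which strongly suggests the sample (pathwise) convention is intended; but if one reads Assumption~\ref{assumption:eigenvalue} literally, a separate concentration step from the sample Gram to its mean is needed on top of your perturbation bound, and neither your sketch nor the paper makes that explicit. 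A smaller issue: the bound $\|\hat\Psi_j(t_n)\|_2\le B\sqrt m$ requires $|g_a(\hat X_j)|\le B$, which the hypothesis only guarantees at the true path $X_j$; you already flag the analogous point for the derivative bound $D$, and the same remedy (restrict to the event of Proposition~\ref{prop:tailbound:wavelet}, or add and subtract $\Psi_j(t_n)$ and absorb the extra $D\sqrt m\,\delta_1$ term) handles it.
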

\section{Useful Lemmas}
\begin{lemma}[Lemma~5 in~\citet{wong2020lasso}]
Let $X$ be a random variable. Then the following statements are equivalent for every $\gamma>0$.   The constants $K_1$, $K_2$, $K_3$ differ from each other at most by a constant depending only on $\gamma$. 
\begin{enumerate}
    \item The tails of $X$ satisfy
    \[
    P(|X|>t)\leq 2\exp\{-(t/K_1)^\gamma\},\quad t\geq 0.
    \]
    \item The moments of $X$ satisfy 
    \[
    \norm{X}_p:=\rbr{\EE|X|^p}^{1/p}\leq K_2p^{1/\gamma},\quad p\geq \min(1,\gamma).
    \]
    \item The moment generating function of $|X|^\gamma$ is finite:
    \[
    \EE\sbr{\exp\rbr{\abr{X}/K_3}^\gamma}\leq 2. 
    \]
\end{enumerate}
\end{lemma}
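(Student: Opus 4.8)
The plan is to prove the three characterizations equivalent by establishing the cyclic chain of implications $(1)\Rightarrow(2)\Rightarrow(3)\Rightarrow(1)$, tracking constants at each step to confirm that $K_1,K_2,K_3$ differ only by factors depending on $\gamma$. This is the sub-Weibull analogue of the classical sub-Gaussian and sub-exponential equivalences, so each implication reduces to a standard estimate, and closing the cycle gives equivalence of all three.

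For $(1)\Rightarrow(2)$, I would use the layer-cake representation $\EE\abr{X}^p = \int_0^\infty p t^{p-1} P(\abr{X}>t)\,\mathrm{d}t$ and bound the tail by (1). After the substitution $u=(t/K_1)^\gamma$ the integral collapses to a Gamma integral, giving $\EE\abr{X}^p \le 2 K_1^p\,\Gamma(p/\gamma+1)$, where I have used $\tfrac{p}{\gamma}\Gamma(p/\gamma)=\Gamma(p/\gamma+1)$. Taking $p$-th roots and invoking Stirling's estimate $\Gamma(p/\gamma+1)^{1/p}\le C_\gamma\,p^{1/\gamma}$ then yields $\norm{X}_p \le K_2\, p^{1/\gamma}$ with $K_2 = C_\gamma K_1$, valid for all $p\ge\min(1,\gamma)$.

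For $(2)\Rightarrow(3)$, I would expand the moment generating function as the power series $\EE\sbr{\exp\rbr{(\abr{X}/K_3)^\gamma}} = \sum_{k\ge 0} \EE\abr{X}^{\gamma k}/(k!\,K_3^{\gamma k})$, apply (2) at $p=\gamma k$ to obtain $\EE\abr{X}^{\gamma k}\le K_2^{\gamma k}(\gamma k)^k$, and use $k!\ge (k/e)^k$ to reduce the series to a geometric one with ratio $\gamma e\,(K_2/K_3)^\gamma$. Choosing $K_3$ a large enough multiple of $K_2$, namely $K_3 = K_2(2\gamma e)^{1/\gamma}$, makes the ratio at most $1/2$, so the series is bounded by $2$ (the $k=0$ term contributing $1$). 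Finally, $(3)\Rightarrow(1)$ is a one-line Chernoff argument: $P(\abr{X}>t)\le \EE\sbr{\exp\rbr{(\abr{X}/K_3)^\gamma}}\exp\cbr{-(t/K_3)^\gamma} \le 2\exp\cbr{-(t/K_3)^\gamma}$, recovering (1) with $K_1=K_3$.

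The only delicate point is the bookkeeping of constants needed to confirm the claimed dependence on $\gamma$ alone. I must verify the Stirling bound on $\Gamma(p/\gamma+1)^{1/p}$ holds uniformly over $p\ge\min(1,\gamma)$, and ensure the moment bound (2) is invoked only at exponents $p=\gamma k\ge\gamma\ge\min(1,\gamma)$ so that it always applies. These steps are routine, but they are where the $\gamma$-dependence of the comparison constants originates; the remaining manipulations are mechanical.
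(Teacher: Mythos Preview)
Your cyclic argument $(1)\Rightarrow(2)\Rightarrow(3)\Rightarrow(1)$ is correct and is the standard route for establishing the sub-Weibull equivalences; the constant tracking you describe is accurate, including the care you take that $(2)$ is only invoked at $p=\gamma k\ge\gamma=\min(1,\gamma)\cdot\max(1,\gamma/\gamma)$, i.e.\ at admissible exponents. There is nothing to compare against here: the paper does not supply its own proof of this lemma but simply quotes it as Lemma~5 of \citet{wong2020lasso}, so your write-up would in fact add content the paper omits.
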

\begin{lemma}[Adapted from Lemma~13 in~\citet{wong2020lasso} and Theorem~1 in~\citet{merlevede2011bernstein}]\label{lemma:tailbound:mixing}
Let $\{X_n\}_{n\in\NN}$ be a strictly stationary sequence of zero mean random variables that are sub-Weibull($\gamma_2$) with sub-Weibull norm $K$.  Suppose that $\{X_n\}_{n\in\NN}$ is $\beta$-mixing with coefficients satisfying $\beta(n)\leq 2\exp(-cn^{\gamma_1})$. Let $S_N=\sum_{n=1}^NX_n$ and define $1/\gamma=1/\gamma_1+1/\gamma_2$. Assume $\gamma<1$. Then for $N>4$ and any $t>1/N$, 
\[
P\rbr{\abr{N^{-1}S_N}>t}\leq N\exp\rbr{-\frac{(tN)^\gamma}{K^\gamma C_1}}+\exp\rbr{-\frac{t^2N}{K^2C_2}},
\]
where the constants $C_1,C_2$ depend only on $\gamma_1, \gamma_2$ and $c$. 
\end{lemma}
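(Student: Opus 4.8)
The plan is to reduce the statement to the bounded-variable Bernstein inequality of~\citet{merlevede2011bernstein} (their Theorem~1) by a truncation argument, in the spirit of Lemma~13 in~\citet{wong2020lasso}. Writing $\lambda = tN$, the event becomes $\{|S_N| > \lambda\}$ and the two terms to be produced are a heavy-tail/large-deviation term $N\exp(-(\lambda/K)^\gamma/C_1)$ and a Gaussian term $\exp(-\lambda^2/(NK^2 C_2))$: the first records the cost of the sub-Weibull tails together with the sub-geometric mixing decay, while the second is the genuine Bernstein regime with effective variance of order $NK^2$. The sub-Weibull($\gamma_2$) hypothesis means $P(|X_1|>u)\le 2\exp(-(u/K)^{\gamma_2})$, and the mixing hypothesis is $\beta(n)\le 2\exp(-cn^{\gamma_1})$.

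First I would fix a truncation level $M$ (to be chosen) and decompose $X_n = X_n\mathbf{1}\{|X_n|\le M\} + X_n\mathbf{1}\{|X_n|>M\}$. Setting $\mu_M = \EE[X_1\mathbf{1}\{|X_1|\le M\}]$ and $Y_n = X_n\mathbf{1}\{|X_n|\le M\} - \mu_M$, the $Y_n$ are centered, bounded by $2M$, and, being a fixed measurable function of $X_n$, remain strictly stationary and $\beta$-mixing with the same coefficients. Then $S_N = \sum_n Y_n + N\mu_M + \sum_n X_n\mathbf{1}\{|X_n|>M\}$. The last sum vanishes on $\{\max_{n\le N}|X_n|\le M\}$, whose complement has probability at most $2N\exp(-(M/K)^{\gamma_2})$ by the sub-Weibull tail and a union bound. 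Since $\EE[X_1]=0$, the bias obeys $|N\mu_M| \le N\,\EE[|X_1|\mathbf{1}\{|X_1|>M\}]$, which is exponentially small in $(M/K)^{\gamma_2}$ and hence negligible against $\lambda$ for the chosen $M$ (here the hypotheses $N>4$ and $t>1/N$, i.e.\ $\lambda>1$, exclude the degenerate small-$\lambda$ regime).

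Next I would apply the Merlev\`ede--Peligrad--Rio inequality to $\sum_{n=1}^N Y_n$. For a bounded, centered, sub-geometrically $\beta$-mixing sequence it yields a bound of the form $N\exp(-(\lambda/M)^{\gamma_1}/C) + \exp(-\lambda^2/(C(NV^2 + M\lambda(\log N)^{1/\gamma_1})))$, where $V^2$ is the effective variance. The crucial step is the choice $M \asymp K(\lambda/K)^{\gamma/\gamma_2}$, equivalently $M\asymp K^{\gamma/\gamma_1}\lambda^{1-\gamma/\gamma_1}$. Using $1/\gamma = 1/\gamma_1 + 1/\gamma_2$ one verifies that $(M/K)^{\gamma_2} = (\lambda/K)^\gamma$ and $(\lambda/M)^{\gamma_1} = (\lambda/K)^\gamma$, with the exponents of both $\lambda$ and $K$ matching; consequently the extreme-value term and the Merlev\`ede block term both collapse to $N\exp(-(\lambda/K)^\gamma/C_1)$, producing the first term. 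In the complementary moderate-deviation regime the second Merlev\`ede factor reduces to $\exp(-\lambda^2/(NK^2 C_2))$, and I would combine the two regimes by taking the larger bound and restating everything via $t = \lambda/N$.

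The main obstacle is the variance bookkeeping in the Gaussian term. I must bound the effective variance $V^2 = \operatorname{Var}(Y_1) + 2\sum_{j\ge 1}|\operatorname{Cov}(Y_1,Y_{1+j})|$ by a constant multiple of $K^2$ \emph{uniformly in} $M$; this combines the sub-Weibull moment bound $\operatorname{Var}(Y_1)\lesssim K^2$ with the $\beta$-mixing covariance inequality to make the covariance series summable with a $\gamma_1$-dependent constant. I must also confirm that the denominator corrections $M\lambda(\log N)^{1/\gamma_1}$ and $M^2$ do not degrade the clean $\exp(-t^2N/(K^2C_2))$ form in the regime where the Gaussian term is binding, i.e.\ that up to the crossover scale these are absorbed into $NV^2$; this is precisely where the assumption $\gamma<1$ and the specific polynomial scaling of $M$ in $\lambda$ are used.
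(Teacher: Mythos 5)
Your outline is workable, but it takes a genuinely different route from the paper's, and it is worth being precise about what the paper actually does: the paper gives no proof of this lemma at all — the statement is imported, essentially verbatim, from Lemma~13 of \citet{wong2020lasso}, and the key point there is that Theorem~1 of \citet{merlevede2011bernstein} is already stated for \emph{unbounded} centered variables with sub-Weibull tails $\sup_n P(|X_n|>u)\leq \exp\{1-(u/b)^{\gamma_2}\}$ under strong mixing $\alpha(n)\leq \exp(-cn^{\gamma_1})$ with $\gamma<1$; since $\beta$-mixing dominates $\alpha$-mixing, no truncation is needed. The source derivation simply rescales by $K$, bounds the variance factor in Merlev\`ede et al.'s bound by a constant multiple of $K^2$ via a Davydov-type covariance inequality (summable thanks to the sub-geometric mixing rate), and uses $N>4$ and $t>1/N$ (so $\lambda=tN>1$) to absorb the third term of their three-term conclusion into the two displayed terms after enlarging $C_1,C_2$. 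Your truncation-plus-bounded-Bernstein plan instead re-proves the unbounded case from scratch: it is more self-contained (only a bounded-variable inequality is needed, obtainable under $\beta$-mixing by Berbee coupling and blocking), and your exponent bookkeeping is exactly right — with $M\asymp K(\lambda/K)^{\gamma/\gamma_2}$ and $1/\gamma=1/\gamma_1+1/\gamma_2$, both $(M/K)^{\gamma_2}$ and $(\lambda/M)^{\gamma_1}$ equal $(\lambda/K)^{\gamma}$ — but it obliges you to redo the regime analysis that Merlev\`ede et al. have already packaged. To close your argument, the two steps you flag must be made explicit. First, dispose of the small-deviation regime by vacuousness rather than by asserting the bias is negligible: whenever $(\lambda/K)^{\gamma}\leq C\log N$ the first term $N\exp\{-(\lambda/K)^{\gamma}/C_1\}$ is of order one for a suitable $C_1$, so the claimed inequality holds trivially; outside that regime $N|\mu_M|\leq N(K+M)e^{-(\lambda/K)^{\gamma}}\leq \lambda/2$ and the union-bound residual $2Ne^{-(\lambda/K)^{\gamma}}$ folds into the first term. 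Second, for the denominator correction, split at the crossover $M\lambda(\log N)^{1/\gamma_1}\asymp NK^2$: below it the Gaussian term survives intact, while above it the resulting exponent $\lambda/(M(\log N)^{1/\gamma_1})=(\lambda/K)^{\gamma/\gamma_1}(\log N)^{-1/\gamma_1}$ is again of order $(\lambda/K)^{\gamma}$ once $(\lambda/K)^{\gamma}$ exceeds a suitable power of $\log N$ (and otherwise vacuousness applies again), so it folds into the first term. With those two clauses written out, your proof is complete and recovers the stated dependence of $C_1,C_2$ on $\gamma_1,\gamma_2,c$ only; the trade-off relative to the paper's one-line citation is length against self-containedness.
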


\begin{definition}
    For every $\gamma>0$, the sub-Weibull norm is defined as
    \[
    \norm{X}_{\psi_{\gamma}}:=\sup_{p\geq 1}(\EE|X|^p)^{1/p}p^{-1/\gamma}.
    \]
\end{definition}

\begin{lemma}\label{lemma:subweibull:product}
    Let $X$ be a sub-Weibull($\alpha$) random variable and $Y$ be a sub-Weibull($\beta$) random variable. Let $1/\gamma=1/\alpha+1/\beta$. Then $XY$ is a sub-Weibull($\gamma$) random variable with sub-Weibull norm bounded by
    \[
    \norm{XY}_{\psi_\gamma}\leq \rbr{\frac{\alpha+\beta}{\beta}}^{\frac{1}{\alpha}}\rbr{\frac{\alpha+\beta}{\alpha}}^{\frac{1}{\beta}}\norm{X}_{\psi_\alpha}\norm{Y}_{\psi_\beta}. 
    \]
\end{lemma}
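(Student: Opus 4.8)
The plan is to bound the moments of $XY$ via H\"older's inequality and then translate the moment bound back into a sub-Weibull norm bound using the moment characterization of the sub-Weibull norm. First I would recall that, by definition, $\norm{X}_{\psi_\alpha} = \sup_{q\geq 1}(\EE|X|^q)^{1/q}q^{-1/\alpha}$, so that for every $q \geq 1$ we have the moment bound $(\EE|X|^q)^{1/q} \leq \norm{X}_{\psi_\alpha}\, q^{1/\alpha}$, and analogously $(\EE|Y|^q)^{1/q}\leq \norm{Y}_{\psi_\beta}\, q^{1/\beta}$.

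Next I would fix $p \geq 1$ and apply H\"older's inequality with a pair of conjugate exponents $r,s > 1$ satisfying $1/r + 1/s = 1$, giving $\EE|XY|^p \leq (\EE|X|^{pr})^{1/r}(\EE|Y|^{ps})^{1/s}$. Applying the moment bounds at $q = pr \geq 1$ and $q = ps \geq 1$ yields $(\EE|X|^{pr})^{1/r} \leq \norm{X}_{\psi_\alpha}^p(pr)^{p/\alpha}$ and $(\EE|Y|^{ps})^{1/s} \leq \norm{Y}_{\psi_\beta}^p(ps)^{p/\beta}$, so that after taking the $p$-th root, $(\EE|XY|^p)^{1/p} \leq \norm{X}_{\psi_\alpha}\norm{Y}_{\psi_\beta}(pr)^{1/\alpha}(ps)^{1/\beta}$.

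The crux is the final bookkeeping: multiplying by $p^{-1/\gamma}$ and using $1/\gamma = 1/\alpha + 1/\beta$, so that $p^{-1/\gamma} = p^{-1/\alpha}p^{-1/\beta}$, the explicit powers of $p$ cancel and leave only $r^{1/\alpha}s^{1/\beta}$. Taking the supremum over $p\geq 1$ then gives $\norm{XY}_{\psi_\gamma} \leq r^{1/\alpha}s^{1/\beta}\norm{X}_{\psi_\alpha}\norm{Y}_{\psi_\beta}$ for any admissible conjugate pair $(r,s)$. Finally I would choose $r = (\alpha+\beta)/\beta$ and $s = (\alpha+\beta)/\alpha$, which one checks are conjugate since $1/r + 1/s = \beta/(\alpha+\beta) + \alpha/(\alpha+\beta) = 1$, and which reproduce exactly the stated constant $((\alpha+\beta)/\beta)^{1/\alpha}((\alpha+\beta)/\alpha)^{1/\beta}$.

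There is no genuine obstacle here; the argument is a one-line H\"older estimate dressed up with the moment characterization of the sub-Weibull norm. The only points requiring care are verifying that $pr, ps \geq 1$ so that the moment bounds apply, which holds because $p\geq 1$ and both $r = 1 + \alpha/\beta$ and $s = 1 + \beta/\alpha$ exceed $1$, and confirming that the chosen exponents are precisely those that collapse the generic factor $r^{1/\alpha}s^{1/\beta}$ to the claimed expression.
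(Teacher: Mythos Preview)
Your proposal is correct and is essentially identical to the paper's proof: both apply H\"older's inequality to $\EE|XY|^p$ with conjugate exponents $r=\alpha/\gamma=(\alpha+\beta)/\beta$ and $s=\beta/\gamma=(\alpha+\beta)/\alpha$, invoke the moment characterization of the sub-Weibull norm, and observe that the powers of $p$ combine to $p^{1/\gamma}$. The only cosmetic difference is that the paper writes the H\"older exponents directly as $\alpha/\gamma$ and $\beta/\gamma$ from the outset, whereas you introduce generic $r,s$ first and specialize at the end.
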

\begin{proof}
First, by H\"older inequality, we have
\[
\EE|XY|^p\leq \rbr{\EE|X|^{p\frac{\alpha}{\gamma}}}^{\frac{\gamma}{\alpha}}\rbr{\EE|Y|^{p\frac{\beta}{\gamma}}}^{\frac{\gamma}{\beta}}.
\]
By taking root-$p$ on both side, we obtain the inequality
\[
\norm{XY}_p\leq\norm{X}_{p\frac{\alpha}{\gamma}}\norm{Y}_{p\frac{\beta}{\gamma}}.
\]
By definition of the sub-Weibull norm, we can further bound the above term as 
\[
\norm{XY}_p\leq \norm{X}_{\psi_\alpha}\rbr{\frac{\alpha}{\gamma}}^{\frac{1}{\alpha}}p^{\frac{1}{\alpha}}\norm{Y}_{\psi_\beta}\rbr{\frac{\beta}{\gamma}}^{\frac{1}{\beta}}p^{\frac{1}{\beta}}=p^{\frac{1}{\gamma}}\rbr{\frac{\alpha+\beta}{\beta}}^{\frac{1}{\alpha}}\rbr{\frac{\alpha+\beta}{\alpha}}^{\frac{1}{\beta}}\norm{X}_{\psi_\alpha}\norm{Y}_{\psi_\beta}.
\]
Hence
\[
\norm{XY}_{\psi_\gamma}=\sup_{p\geq 1}p^{-\frac{1}{\gamma}}\norm{XY}_p\leq  \rbr{\frac{\alpha+\beta}{\beta}}^{\frac{1}{\alpha}}\rbr{\frac{\alpha+\beta}{\alpha}}^{\frac{1}{\beta}}\norm{X}_{\psi_\alpha}\norm{Y}_{\psi_\beta}. 
\]
\end{proof}

\begin{lemma}\label{lemma:mean:subweibull}
    Let $X$ be a sub-Weibull($\gamma$) random variable, we have
    \[
    \norm{\EE X}_{\psi_\gamma}\leq \norm{ X}_{\psi_\gamma}
    \]
\end{lemma}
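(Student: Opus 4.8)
The plan is to reduce the claim to two elementary facts: the sub-Weibull norm of a deterministic constant equals its absolute value, and $\EE|X|$ already appears as the $p=1$ term in the supremum defining $\norm{X}_{\psi_\gamma}$. Everything then follows by chaining with Jensen's inequality.

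First I would evaluate the left-hand side. Since $\EE X$ is a deterministic scalar, set $c=\EE X$. For every $p\geq 1$ we have $(\EE|c|^p)^{1/p}=|c|$, and hence
\[
\norm{\EE X}_{\psi_\gamma}=\sup_{p\geq 1}|c|\,p^{-1/\gamma}=|c|\sup_{p\geq 1}p^{-1/\gamma}=|c|,
\]
where the last equality holds because $\gamma>0$ makes the exponent $-1/\gamma$ negative, so $p^{-1/\gamma}$ is nonincreasing on $[1,\infty)$ and attains its maximum value $1$ at $p=1$. Thus $\norm{\EE X}_{\psi_\gamma}=|\EE X|$.

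Next I would lower bound the right-hand side by restricting the supremum to the single index $p=1$:
\[
\norm{X}_{\psi_\gamma}=\sup_{p\geq 1}(\EE|X|^p)^{1/p}\,p^{-1/\gamma}\geq (\EE|X|)\cdot 1^{-1/\gamma}=\EE|X|.
\]
Finally, Jensen's inequality gives $|\EE X|\leq \EE|X|$. Combining the three displays yields
\[
\norm{\EE X}_{\psi_\gamma}=|\EE X|\leq \EE|X|\leq \norm{X}_{\psi_\gamma},
\]
which is the desired conclusion.

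There is no genuine obstacle here: the statement is a one-line consequence of the definition of $\norm{\cdot}_{\psi_\gamma}$ together with Jensen. The only point that warrants any care is the evaluation of the sub-Weibull norm of a constant, where one uses that the supremum is attained at $p=1$; this is precisely where the positivity of $\gamma$ enters, since it forces the weight $p^{-1/\gamma}$ to be largest at $p=1$.
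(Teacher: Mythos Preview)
Your proof is correct and follows essentially the same route as the paper: reduce $\norm{\EE X}_{\psi_\gamma}$ to $|\EE X|$, bound this by $\EE|X|=\norm{X}_1$ via Jensen, and observe that $\norm{X}_1\leq\norm{X}_{\psi_\gamma}$ since the $p=1$ term sits inside the supremum. Your write-up is in fact slightly more careful than the paper's, which writes $|\EE X|=\norm{X}_1$ where strictly an inequality is meant.
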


\begin{proof}[Proof of Lemma~\ref{lemma:mean:subweibull}]
Write
\begin{align*}
    \norm{\EE X}_{\psi_\gamma} &\leq \abr{\EE X}\\
    &=\norm{X}_1\\
    &\leq \norm{X}_{\psi_\gamma}. 
\end{align*}
    
\end{proof}

\begin{lemma}[Theorem~5.1 in~\citet{bradley2005basic}]\label{lemma:betamix:upperbound}
   Suppose that $\Acal_n$, $\Bcal_n$ for $n\in\NN$ are $\sigma$-fields, The $\sigma$-fields $\sigma(\{\Acal_n,\Bcal_n\})$ for $n\in\NN$ are independent.
   Then,
   \[
   \beta\rbr{\sigma(\{\Acal_n:{n\in\NN}\}), \sigma(\{\Bcal_n:{n\in\NN}\})}\leq \sum_{n\in\NN}\beta(\Acal_n,\Bcal_n). 
   \]
\end{lemma}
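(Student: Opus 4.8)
The plan is to identify the $\beta$-mixing coefficient between two $\sigma$-fields with a total variation distance between two laws, and then reduce the claim to the subadditivity of total variation for product measures. Represent the families by generating random elements: write $\Acal_n = \sigma(X_n)$ and $\Bcal_n = \sigma(Y_n)$, so that the independence hypothesis states that the pairs $(X_n, Y_n)$, $n \in \NN$, are mutually independent. By the definition of the coefficient (matching Definition~\ref{definition:discretebetamixing}), for any sub-$\sigma$-fields $\Acal, \Bcal$ one has $\beta(\Acal, \Bcal) = \norm{P_{\Acal\vee\Bcal} - P_\Acal\otimes P_\Bcal}_{\text{TV}}$, the total variation distance between the joint law and the product of the marginals. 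Writing $\mu_n$ for the joint law of $(X_n, Y_n)$ and $\nu_n$ for the product of the marginal laws of $X_n$ and $Y_n$, this gives $\beta(\Acal_n, \Bcal_n) = \norm{\mu_n - \nu_n}_{\text{TV}}$.

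The key observation is that independence across $n$ turns both laws appearing in $\beta\rbr{\sigma\rbr{\bigcup_n \Acal_n}, \sigma\rbr{\bigcup_n \Bcal_n}}$ into infinite product measures. Indeed, the joint law of $\rbr{(X_n)_n, (Y_n)_n}$ factorizes as $\bigotimes_n \mu_n$, while the marginal laws of $(X_n)_n$ and of $(Y_n)_n$ factorize into $\bigotimes_n \mu_n^X$ and $\bigotimes_n \mu_n^Y$, so the product of the two marginals is $\bigotimes_n \nu_n$. Hence the target inequality becomes the purely measure-theoretic statement
\[
\norm{\textstyle\bigotimes_n \mu_n - \bigotimes_n \nu_n}_{\text{TV}} \leq \sum_n \norm{\mu_n - \nu_n}_{\text{TV}}.
\]

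I would prove the finite version first by a hybrid (telescoping) argument. Introduce the interpolating measures $\pi_k = \mu_1\otimes\cdots\otimes\mu_k\otimes\nu_{k+1}\otimes\cdots\otimes\nu_N$ for $0 \le k \le N$, so that $\pi_0 = \bigotimes_{n\le N}\nu_n$ and $\pi_N = \bigotimes_{n\le N}\mu_n$. The triangle inequality gives $\norm{\pi_N - \pi_0}_{\text{TV}} \le \sum_{k=1}^N \norm{\pi_k - \pi_{k-1}}_{\text{TV}}$, and since $\pi_k$ and $\pi_{k-1}$ share every factor except the $k$-th, the elementary identity $\norm{\rho\otimes\alpha - \rho\otimes\alpha'}_{\text{TV}} = \norm{\alpha-\alpha'}_{\text{TV}}$ for a common probability factor $\rho$ reduces each summand to $\norm{\mu_k - \nu_k}_{\text{TV}}$. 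Summing yields the finite-$N$ bound $\beta\rbr{\sigma(X_1,\dots,X_N), \sigma(Y_1,\dots,Y_N)} \le \sum_{n\le N}\norm{\mu_n - \nu_n}_{\text{TV}} \le \sum_n \beta(\Acal_n, \Bcal_n)$.

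Finally, I would pass to the infinite product. Since $\sigma\rbr{\bigcup_n \Acal_n}$ is the increasing limit of the finitely generated $\sigma$-fields $\sigma(X_1,\dots,X_N)$ (and likewise for $\Bcal$), and the $\beta$-coefficient is monotone under enlargement of either argument, it suffices to show that $\beta$ is continuous from below, i.e.\ $\beta\rbr{\sigma\rbr{\bigcup_n \Acal_n}, \sigma\rbr{\bigcup_n \Bcal_n}} = \sup_N \beta\rbr{\sigma(X_1,\dots,X_N), \sigma(Y_1,\dots,Y_N)}$; letting $N\to\infty$ in the finite bound then concludes. The main obstacle I expect is precisely this last reduction: justifying that the $\beta$-coefficient over the full generated $\sigma$-fields equals the increasing limit of the coefficients over the finite sub-families. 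This uses the partition (supremum) form of the coefficient, in which any $\Acal$- and $\Bcal$-measurable partitions entering the supremum are approximated in probability by partitions measurable with respect to finitely many coordinates (the finite cylinder algebras are dense, by a monotone-class / martingale-convergence argument), together with the continuity of the total-variation functional under such approximations.
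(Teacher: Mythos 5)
Your proof is correct, but note that the paper itself does not prove this lemma at all: it is imported verbatim, with citation, as Theorem~5.1 of \citet{bradley2005basic}, so there is no in-paper argument to compare against. Your route is essentially the canonical proof of Bradley's theorem: identify $\beta(\Acal,\Bcal)$ with $\norm{\mu-\nu}_{\text{TV}}$ for the joint law versus the product of marginals (valid here via the identity-map representation $X_n:\Omega\to(\Omega,\Acal_n)$, so no structural assumption on the $\sigma$-fields is needed), observe that independence of the pair $\sigma$-fields $\Acal_n\vee\Bcal_n$ factorizes both laws into infinite products, and then prove $\norm{\bigotimes_n\mu_n-\bigotimes_n\nu_n}_{\text{TV}}\leq\sum_n\norm{\mu_n-\nu_n}_{\text{TV}}$ by the hybrid/telescoping argument, using $\norm{\rho\otimes\kappa}_{\text{TV}}=\norm{\kappa}_{\text{TV}}$ for a probability factor $\rho$ and a signed measure $\kappa$. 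One simplification is available at your final step, which you flag as the main obstacle: you do not actually need continuity from below of the $\beta$-coefficient as a functional of increasing $\sigma$-fields, with its attendant partition-fixing gymnastics. After the reduction, both measures $\bigotimes_n\mu_n$ and $\bigotimes_n\nu_n$ are \emph{fixed}, and for two fixed probability measures the total variation distance over $\sigma(\Fcal_0)$ equals the supremum of $\abr{\mu(A)-\nu(A)}$ over the generating algebra $\Fcal_0$ (Carath\'eodory approximation applied to $\abr{\mu-\nu}$); taking $\Fcal_0$ to be the algebra of finite-dimensional cylinders immediately gives $\norm{\bigotimes_n\mu_n-\bigotimes_n\nu_n}_{\text{TV}}=\sup_N\norm{\bigotimes_{n\leq N}\mu_n-\bigotimes_{n\leq N}\nu_n}_{\text{TV}}$, and the finite bound concludes. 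Your partition-approximation sketch is also valid (the error incurred by replacing a partition $\{A_i\}$ with cylinder approximants is controlled by $\sum_i P(A_i\triangle A_i')$, uniformly over the $\Bcal$-partition), so either way the argument closes.
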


\begin{lemma}[Measurable Mapping of Stationary Process]\label{lemma:meaureablemapping} Let $(\Xcal,\mathscr{B}(\Xcal))$ and $(\Ycal,\mathscr{B}(\Ycal))$ be two measurable spaces and $\mathscr{B}(\cdot)$ denotes the Borel $\sigma$-field. Let  
$(X_t\in\Xcal)_{t\geq 0}$ be a strictly stationary process and $g:\Xcal\rightarrow\Ycal$ be a Borel measurable function. Then, $(g(X_t)\in\Ycal)_{t\geq 0}$ is strictly stationary. 
\end{lemma}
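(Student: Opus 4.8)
The plan is to verify strict stationarity directly from its definition in terms of finite-dimensional distributions. Recall that $(g(X_t))_{t\geq 0}$ is strictly stationary precisely when, for every $n\in\NN$, every choice of time points $t_1,\ldots,t_n\geq 0$, and every shift $\tau\geq 0$, the random vectors $(g(X_{t_1}),\ldots,g(X_{t_n}))$ and $(g(X_{t_1+\tau}),\ldots,g(X_{t_n+\tau}))$ share the same law on the product space $(\Ycal^n,\mathscr{B}(\Ycal)^{\otimes n})$. So I would fix these objects and reduce the claim to the corresponding statement for $(X_t)_{t\geq 0}$, which holds by hypothesis.

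First I would introduce the product map $G:\Xcal^n\to\Ycal^n$ defined coordinatewise by $G(x_1,\ldots,x_n)=(g(x_1),\ldots,g(x_n))$. Since the $i$-th coordinate of $G$ equals $g\circ\pi_i$, the composition of the measurable projection $\pi_i$ with the Borel measurable $g$, each coordinate of $G$ is measurable; and because a map into a finite product of measurable spaces is measurable with respect to the product $\sigma$-field if and only if each of its coordinates is measurable, $G$ is $(\mathscr{B}(\Xcal)^{\otimes n},\mathscr{B}(\Ycal)^{\otimes n})$-measurable. Consequently $G^{-1}(B)\in\mathscr{B}(\Xcal)^{\otimes n}$ for every $B\in\mathscr{B}(\Ycal)^{\otimes n}$.

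The heart of the argument is then a pushforward identity. For any $B\in\mathscr{B}(\Ycal)^{\otimes n}$ I would write
\[
P\bigl((g(X_{t_1}),\ldots,g(X_{t_n}))\in B\bigr)
= P\bigl((X_{t_1},\ldots,X_{t_n})\in G^{-1}(B)\bigr),
\]
apply the strict stationarity of $(X_t)_{t\geq 0}$ to the measurable set $G^{-1}(B)$ to replace each $t_i$ by $t_i+\tau$, and then undo the first equality to recover $P\bigl((g(X_{t_1+\tau}),\ldots,g(X_{t_n+\tau}))\in B\bigr)$. Since $B$, the $t_i$, the shift $\tau$, and $n$ are arbitrary, every finite-dimensional distribution of $(g(X_t))_{t\geq 0}$ is shift-invariant, which is exactly strict stationarity.

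There is essentially no serious obstacle here; the result is routine. The one place that calls for a little care is keeping the $\sigma$-fields straight: strict stationarity is a statement about laws on the product $\sigma$-field $\mathscr{B}(\Ycal)^{\otimes n}$, which is the natural domain of finite-dimensional distributions, so I would phrase the entire argument with product $\sigma$-fields rather than the Borel $\sigma$-field of a product topology, thereby sidestepping the usual separability caveat that these two can differ for non-separable $\Ycal$. With that convention fixed, both the measurability of $G$ and the pushforward identity are immediate.
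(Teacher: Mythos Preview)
Your proposal is correct and follows essentially the same approach as the paper: both verify strict stationarity of $(g(X_t))$ by pulling back through $g$ and invoking the stationarity of $(X_t)$ on the resulting preimage sets. The only minor difference is that the paper checks measurable rectangles $B_1\times\cdots\times B_k$ (implicitly relying on the $\pi$--$\lambda$ theorem to extend to the full product $\sigma$-field), whereas you work directly with arbitrary $B\in\mathscr{B}(\Ycal)^{\otimes n}$ via the coordinatewise product map $G$, which is slightly cleaner but not substantively different.
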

\begin{proof}[Proof of Lemma~\ref{lemma:meaureablemapping}]
A stochastic process is called strictly stationary if for any finite set of random variables, $X_{t_1}, \ldots, X_{t_k}$, from the stochastic process, the following equation is satisfied
    \begin{align*}
        (X_{t_1}, \ldots, X_{t_k}) \stackrel{\text{d}}{=} (X_{t_1+\tau}, \ldots, X_{t_k+\tau}),\quad \text{for all }\tau\in\RR.
    \end{align*}
    That is, the joint distribution of any finite set of random variables are invariant to times shifts.
    
    Let $Y_t=g(X_t)$ and write
    \begin{align*}
        P(Y_{t_1}\in B_1,\ldots, Y_{t_k}\in B_k)&=
        P(X_{t_1}\in g^{-1}(B_1),\ldots, X_{t_k}\in g^{-1}(B_k))\\
        &=P(X_{t_1+\tau}\in g^{-1}(B_1),\ldots, X_{t_k+\tau}\in g^{-1}(B_k))\\
        &=P(Y_{t_1+\tau}\in B_1,\ldots, Y_{t_k+\tau}\in B_k).
    \end{align*}
    Therefore, we complete the proof. 
\end{proof}
\begin{lemma}[Strictly Stationary Markov Process]\label{lemma:strictstationary}
    Suppose that $(X_t\in\Xcal)_{t\geq 0}$ is a time-homogeneous Markov process with stationary distribution $\pi$. Let $x_0$ be the initial point sampled from the stationary distribution $\pi$. Then, the stochastic process $(X_t)_{t\geq 0}$ with $X_0=x_0$ is a strictly stationary process. Additionally, let $g:\Xcal\rightarrow\Ycal$ be a Borel measurable function. Define the stochastic process $(Y_n)_{n\in\NN}$ as $Y_n=\int_{t_{n-1}}^{t_n}g(X_u)\mathrm{d}u$, where $t_{n}-t_{n-1}=h$ for some fixed $h>0$ and $n\in\NN$. Then, $(Y_n)_{n\in\NN}$ is a strictly stationary process.
\end{lemma}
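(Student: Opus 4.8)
The plan is to prove the two assertions in sequence: first establish strict stationarity of $(X_t)_{t\geq 0}$ directly from the time-homogeneous Markov structure, and then transfer this property to the integral functionals $(Y_n)_{n\in\NN}$ by a discretization that reduces everything to the finite-dimensional distributions of $(X_t)$.

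For the first assertion I would compute the finite-dimensional distributions explicitly. Fixing times $0\leq t_1<\cdots<t_k$ and Borel sets $A_1,\ldots,A_k$, the Markov property together with time-homogeneity lets me write
\[
P(X_{t_1}\in A_1,\ldots,X_{t_k}\in A_k)=\int_{\Xcal}\pi(\mathrm{d}x_0)\int_{A_1}p(x_0,t_1,\mathrm{d}x_1)\cdots\int_{A_k}p(x_{k-1},t_k-t_{k-1},\mathrm{d}x_k),
\]
where $p(\cdot,\cdot,\cdot)$ is the transition function. The key observation is that stationarity of $\pi$ gives $\int_{\Xcal}\pi(\mathrm{d}x_0)\,p(x_0,s,\cdot)=\pi(\cdot)$ for every $s\geq 0$, so the marginal law at the first time collapses to $\pi$ regardless of what that first time is. Applying this once with first time $t_1$ and once with first time $t_1+\tau$ shows that both $P(X_{t_1}\in A_1,\ldots,X_{t_k}\in A_k)$ and $P(X_{t_1+\tau}\in A_1,\ldots,X_{t_k+\tau}\in A_k)$ reduce to the identical expression $\int_{A_1}\pi(\mathrm{d}x_1)\int_{A_2}p(x_1,t_2-t_1,\mathrm{d}x_2)\cdots$, because the consecutive gaps $t_{i+1}-t_i$ are invariant under the shift. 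A monotone-class argument over measurable rectangles then upgrades this to equality of the full finite-dimensional laws, which is precisely strict stationarity.

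For the second assertion I would exploit that each $Y_n$ is a functional of the path on the block $[t_{n-1},t_n]$, and that all blocks share the same length $h$. The cleanest route is discretization: writing $Y_n^{(M)}=\tfrac{h}{M}\sum_{l=1}^{M}g(X_{t_{n-1}+lh/M})$, each vector $(Y_{n_1}^{(M)},\ldots,Y_{n_k}^{(M)})$ is a fixed measurable function of finitely many coordinates of $(X_t)$ at times that, under the index shift $n\mapsto n+m$, are uniformly translated by $mh$ (since $t_{n+m-1}=t_{n-1}+mh$). By the shift-invariance established above, $(Y_{n_1}^{(M)},\ldots,Y_{n_k}^{(M)})\stackrel{d}{=}(Y_{n_1+m}^{(M)},\ldots,Y_{n_k+m}^{(M)})$ for every $M$ and every shift $m$. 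Letting $M\to\infty$, the Riemann sums converge to the integrals $Y_n$ (almost surely for continuous paths, or in $L^1$ after invoking joint measurability of $(u,\omega)\mapsto g(X_u(\omega))$ and dominated convergence), and since equality in distribution is preserved under almost-sure limits I obtain $(Y_{n_1},\ldots,Y_{n_k})\stackrel{d}{=}(Y_{n_1+m},\ldots,Y_{n_k+m})$. Equivalently, one may package this by forming the block process $\tilde X_n=(X_{t_{n-1}+s})_{s\in[0,h]}$, observing that it is strictly stationary on path space (its finite-dimensional laws are finite-dimensional laws of $(X_t)$, hence shift-invariant), and applying Lemma~\ref{lemma:meaureablemapping} to the fixed measurable functional $\Phi(\omega)=\int_0^h g(\omega(s))\,\mathrm{d}s$.

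The main obstacle I anticipate is the measurability and limit bookkeeping in the second part rather than any conceptual difficulty: one must verify that $\Phi$, equivalently the integral $\int_{t_{n-1}}^{t_n}g(X_u)\,\mathrm{d}u$, is a genuinely Borel-measurable functional of the path, which requires joint measurability of $(u,\omega)\mapsto g(X_u(\omega))$ together with an appeal to Fubini/Tonelli, and that the Riemann-sum approximation converges in a mode under which distributional identities pass to the limit. Both are routine under the standing assumption that the switching-ODE process has (piecewise) continuous sample paths and that $g$ is bounded on the relevant state space, but they are the only points where care is needed; the remainder is a direct translation of the stationarity of $(X_t)$.
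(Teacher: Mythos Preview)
Your argument for the first assertion is essentially identical to the paper's: both decompose the finite-dimensional distribution via the Markov property and time-homogeneity, observe that the consecutive gaps $t_{i+1}-t_i$ are shift-invariant, and use $\pi$-invariance to identify the law at the first time point. The only cosmetic difference is that you write the integrals against the transition kernel explicitly while the paper works at the level of conditional probabilities.

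For the second assertion your route differs from the paper's. The paper argues directly at the path level: it conditions on $X_{t_0}=x$, invokes the time-homogeneous Markov property to identify the conditional law of the path functional $(Y_1,\ldots,Y_k)$ given $X_{t_0}=x$ with that of $(Y_{s+1},\ldots,Y_{s+k})$ given $X_{t_s}=x$, and then integrates against $P(X_{t_0})=P(X_{t_s})=\pi$. No discretization or path-space measurability is ever invoked. Your primary approach via Riemann sums $Y_n^{(M)}$ and limits is correct but introduces the very technicalities you flag (joint measurability, mode of convergence), which the paper sidesteps entirely. Your alternative packaging via the block process $\tilde X_n=(X_{t_{n-1}+s})_{s\in[0,h]}$ and Lemma~\ref{lemma:meaureablemapping} is in spirit closer to the paper's argument---both rest on the fact that the shifted path segments have identical conditional laws given the same starting point---but the paper's conditioning formulation avoids having to equip a path space with a $\sigma$-field and verify that the integral functional is Borel on it. In short: your arguments are valid, but the paper's is shorter and lighter on measure-theoretic bookkeeping precisely because it works with conditional distributions of real-valued functionals rather than distributions on infinite-dimensional path space.
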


\begin{proof}[Proof of Lemma~\ref{lemma:strictstationary}]

    To show the first statement, we use the Markov property. Without the loss of generality, assume $t_1\leq\cdots\leq t_k$. Then, 
    \begin{align*}
    P(X_{t_1}, \ldots, X_{t_k} )&=P(X_{t_k}\mid X_{t_{k-1}})P(X_{t_{k-1}}\mid X_{t_{k-2}})\cdots P(X_{t_2}\mid X_{t_1})P(X_{t_1})\\
    &=P(X_{t_k+\tau}\mid X_{t_{k-1+\tau}})P(X_{t_{k-1}+\tau}\mid X_{t_{k-2}+\tau})\cdots P(X_{t_2+\tau}\mid X_{t_1+\tau})P(X_{t_1}).
    \intertext{Since $P(X_0){=}\pi$, we have $P(X_{t_1}){=}P(X_{t_1+\tau}){=} \pi$, therefore we can write the above term as}
    &=P(X_{t_k+\tau}\mid X_{t_{k-1+\tau}})P(X_{t_{k-1}+\tau}\mid X_{t_{k-2}+\tau})\cdots P(X_{t_2+\tau}\mid X_{t_1+\tau})P(X_{t_1+\tau})\\
    &=P(X_{t_1+\tau}, \ldots, X_{t_k+\tau}).
    \end{align*}
    To show the second statement, we first condition on $X_{t_{0}}=x$ and write
    \begin{align*}
        P(Y_{1}, \ldots, Y_{k}\mid X_{t_{0}}=x) &= 
        P\rbr{\int_{t_0}^{t_1} X_u\mathrm{d}u, \ldots, \int_{t_{k-1}}^{t_{k}} X_u\mathrm{d}u\;\bigg\vert\; X_{t_{0}}=x}.
        \intertext{Apply the time-homogeneous Markov property, we can write the above term as}
        &=P\rbr{\int_{t_s}^{t_{s+1}} X_u\mathrm{d}u, \ldots, \int_{t_{s+k-1}}^{t_{s+k}} X_u\mathrm{d}u\;\bigg\vert\; X_{t_{s}}=x}\\
        &= P(Y_{s+1}, \ldots, Y_{s+k}\mid X_{t_{s}}=x)
    \end{align*}
    Since $P(X_{t_0})=P(X_{t_s})$, we can write
    \[
    P(Y_{1}, \ldots, Y_{k}\mid X_{t_{0}})P(X_{t_0})=
    P(Y_{s+1}, \ldots, Y_{s+k}\mid X_{t_{s}})P(X_{t_s}).
    \]
    Marginalize both side, we obtain
    \[
    P(Y_{1}, \ldots, Y_{k})=P(Y_{s+1}, \ldots, Y_{s+k}). 
    \]
\end{proof}

\begin{lemma}[Lemma~12 in~\citet{loh2012high}]\label{lemma:sparse2all} Denote the set $\KK(s)=\{v\in \RR^p:\norm{v}_0\leq s, \norm{v}_2\leq 1\}$. Let $A\in\RR^{p\times p}$ be a fixed matrix, $\delta>0$ be the tolerance. Suppose that
\[
\abr{\nu^\top A \nu}\leq \delta\quad\forall \nu\in\KK(2s). 
\]
Then
\[
\abr{\nu^\top A \nu}\leq 27\delta\rbr{\norm{\nu}_2^2+s^{-1}\norm{\nu}_1^2}. 
\]
\end{lemma}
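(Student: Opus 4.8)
The plan is to establish the conclusion with the sharper constant $2$ in place of $27$ (which suffices a fortiori), via the standard peeling argument that lifts a sparse quadratic bound to all vectors. The assertion is homogeneous of degree two in $\nu$ and trivial for $\nu=0$, so its real content is a bilinear estimate. The crucial first step is to upgrade the hypothesis from a quadratic to a bilinear bound on sparse vectors: for any two $s$-sparse $u,v\in\RR^p$ I would show $\abr{u^\top A v}\le\delta\norm{u}_2\norm{v}_2$. By homogeneity it is enough to treat unit vectors $\norm{u}_2=\norm{v}_2=1$. Then $u+v$ and $u-v$ are both $2s$-sparse, so the hypothesis (rescaled to norm $\le 1$, which gives $\abr{w^\top A w}\le\delta\norm{w}_2^2$ for every $2s$-sparse $w$) combined with the polarization identity and the parallelogram law $\norm{u+v}_2^2+\norm{u-v}_2^2=2\norm{u}_2^2+2\norm{v}_2^2=4$ yields $\abr{u^\top A v}=\frac{1}{4}\abr{(u+v)^\top A(u+v)-(u-v)^\top A(u-v)}\le\frac{\delta}{4}\cdot 4=\delta$; rescaling back produces the claimed product bound.

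Next I would decompose an arbitrary $\nu$ by coordinate magnitude. Sorting the entries of $\nu$ in nonincreasing absolute value, I partition the index set into consecutive blocks $T_1,T_2,\dots$ of size $s$ (the last possibly smaller), with $T_1$ indexing the $s$ largest coordinates. The key block inequality is $\norm{\nu_{T_i}}_2\le s^{-1/2}\norm{\nu_{T_{i-1}}}_1$ for $i\ge 2$: every coordinate in $T_i$ is dominated by the smallest coordinate of $T_{i-1}$, hence by its average $\frac{1}{s}\norm{\nu_{T_{i-1}}}_1$, so $\norm{\nu_{T_i}}_\infty\le s^{-1}\norm{\nu_{T_{i-1}}}_1$ and $\norm{\nu_{T_i}}_2\le\sqrt{s}\,\norm{\nu_{T_i}}_\infty$. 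Summing over $i\ge 2$ and telescoping the $\ell_1$ masses of the blocks gives $\sum_{i\ge 2}\norm{\nu_{T_i}}_2\le s^{-1/2}\norm{\nu}_1$, while $\norm{\nu_{T_1}}_2\le\norm{\nu}_2$, so altogether $\sum_{i\ge 1}\norm{\nu_{T_i}}_2\le\norm{\nu}_2+s^{-1/2}\norm{\nu}_1$.

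Finally I would combine the two ingredients. Writing $\nu=\sum_i\nu_{T_i}$ and expanding the quadratic form bilinearly, the triangle inequality and the bilinear bound from the first step give $\abr{\nu^\top A\nu}\le\sum_{i,j}\abr{\nu_{T_i}^\top A\nu_{T_j}}\le\delta\rbr{\sum_i\norm{\nu_{T_i}}_2}^2$. Inserting the block-sum estimate and using $(a+b)^2\le 2(a^2+b^2)$ yields $\abr{\nu^\top A\nu}\le\delta\rbr{\norm{\nu}_2+s^{-1/2}\norm{\nu}_1}^2\le 2\delta\rbr{\norm{\nu}_2^2+s^{-1}\norm{\nu}_1^2}$, which is the assertion with constant well below $27$. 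I expect the only nonroutine point to be the first paragraph: the naive polarization consequence $\abr{u^\top A v}\le\frac{\delta}{2}(\norm{u}_2^2+\norm{v}_2^2)$ is useless here, because summing it across infinitely many blocks diverges; it is essential to extract the product form $\delta\norm{u}_2\norm{v}_2$, which is precisely what collapses $\sum_{i,j}\abr{\nu_{T_i}^\top A\nu_{T_j}}$ into a squared sum of block norms. Everything else is bookkeeping on the sorted coordinates.
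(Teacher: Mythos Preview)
The paper does not supply its own proof of this lemma; it is quoted verbatim as Lemma~12 of \citet{loh2012high} and used as a black box. Your argument is correct and, with one small caveat, is essentially the standard proof---indeed it yields the sharper constant $2$.

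The caveat: the polarization identity you invoke, $u^\top A v=\tfrac14\bigl[(u+v)^\top A(u+v)-(u-v)^\top A(u-v)\bigr]$, is valid only when $A$ is symmetric; for general $A$ the right-hand side equals $\tfrac12(u^\top A v+v^\top A u)$. This does no damage because the quantity of interest $\nu^\top A\nu$ depends only on the symmetric part $(A+A^\top)/2$, so you may replace $A$ by its symmetrization at the outset---but you should say so explicitly before the polarization step. With that fix, your bilinear bound $|u^\top A v|\le\delta\|u\|_2\|v\|_2$ for $s$-sparse $u,v$ and the Cand\`es--Tao block peeling give exactly $\abr{\nu^\top A\nu}\le 2\delta(\norm{\nu}_2^2+s^{-1}\norm{\nu}_1^2)$.

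For comparison, the original Loh--Wainwright argument also partitions into $s$-blocks but bounds the cross terms $\nu_{T_i}^\top A\nu_{T_j}$ without first extracting the product form $\delta\norm{\nu_{T_i}}_2\norm{\nu_{T_j}}_2$; the looser bookkeeping is what produces the constant $27$. Your observation that the product bound collapses the double sum into $\bigl(\sum_i\norm{\nu_{T_i}}_2\bigr)^2$ is the reason your constant is so much smaller.
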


\section{Details about data generation process}

In this section, we provide details about the data generation process. 

\subsection{Details of data generation process}\label{ssec:details:dgp1}
The parameters of~\eqref{eq:dgp1:1}--\eqref{eq:dgp1:2} are discussed in below. In state $\ell=1$, for each $i=1,\ldots 10$, we have the following parameter:
\begin{align*}
    \theta_{11}^{1\star} &= (1.2, 0.3, -0.6) ,\; \theta_{12}^{1\star} = (0.1,0.2,0.2);\\
    \theta_{21}^{1\star} &= (-2.0, 0.0, 0.4) ,\; \theta_{22}^{1\star} = (0.5,0.2,-0.3);\\ 
    \theta_{33}^{1\star} &= (0.0, 0.0, 0.0),\;\;\;\theta_{34}^{1\star} = (-0.3,0.4,0.1);\\
    \theta_{43}^{1\star} & = (0.2,-0.1,-0.2),\;\theta_{44}^{1\star}= (0.0, 0.0, 0.0);\\
    \theta_{55}^{1\star} &= (0.0, 0.0, 0.0),\;\theta_{56}^{1\star} = (0.1,0.0,-0.8);\\
    \theta_{65}^{1\star} & = (0.0,0.0,0.5),\;\theta_{66}^{1\star} = (0.0, 0.0, 0.0). 
\end{align*}
For the remaining parameters, we have $\theta_{ij}^{1\star}=(0.0,0.0,0.0)$. The graph associated with state $\ell=1$ in presented in Figure~\ref{fig:dgp1:1}. In the second state $\ell=2$, we have
\begin{align*}
    \theta_{55}^{2\star} &= (1.2, 0.3, -0.6) ,\; \theta_{56}^{2\star} = (0.1,0.2,0.2);\\
    \theta_{65}^{2\star} &= (-2.0, 0.0, 0.4) ,\; \theta_{66}^{2\star} = (0.5,0.2,-0.3);\\ 
    \theta_{77}^{2\star} &= (0.0, 0.0, 0.0),\;\;\;\theta_{78}^{2\star} = (-0.3,0.4,0.1);\\
    \theta_{87}^{2\star} & = (0.2,-0.1,-0.2),\;\theta_{88}^{2\star}= (0.0, 0.0, 0.0);\\
    \theta_{99}^{2\star} &= (0.0, 0.0, 0.0),\;\theta_{910}^{2\star} = (0.1,0.0,-0.8);\\
    \theta_{10}^{2\star} & = (0.0,0.0,0.5),\;\theta_{1010}^{2\star} = (0.0, 0.0, 0.0). 
\end{align*}
For the remaining parameters, we have $\theta_{ij}^{2\star}=(0.0,0.0,0.0)$. The graph associated with state $\ell=1$ in presented in Figure~\ref{fig:dgp1:2}. Given the initial state $X(0)$ and $Q^\star$ stated in Section~\ref{sec:simulation}, the simulated trajectories are presented in Figure~\ref{fig:dgp1_xy}.

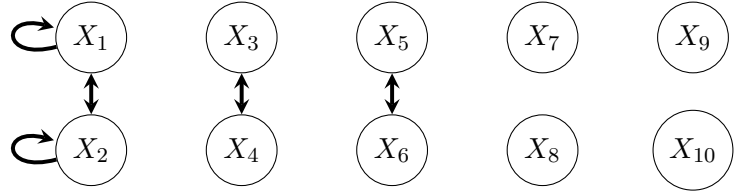
\begin{figure}
    \centering
    \begin{tikzpicture}[>=stealth, node distance=5cm]

\foreach \x in {1,2,...,10} {
    \ifodd\x
        \node[circle, draw] (X\x) at ({(\x-1)},1) {$X_{\x}$};
    \else
        \node[circle, draw] (X\x) at ({(\x-2)},-.5) {$X_{\x}$};
    \fi
}

\foreach \x in {1,2}
    \draw[->, line width=1.5pt] (X\x) edge [loop left] ();

\draw[<->, line width=1.5pt] (X1) to  (X2);
\draw[<->, line width=1.5pt] (X3) to  (X4);
\draw[<->, line width=1.5pt] (X5) to  (X6);
\end{tikzpicture}

    \caption{The graph associated with state $\ell=1$ has self-loops on nodes $X_1$ and $X_2$, and bidirectional edge between nodes $X_1$ and $X_2$, $X_3$ and $X_4$, $X_5$ and $X_6$.}
    \label{fig:dgp1:1}
\end{figure}

\begin{figure}
    \centering
    \begin{tikzpicture}[>=stealth, node distance=5cm]

\foreach \x in {1,2,...,10} {
    \ifodd\x
        \node[circle, draw] (X\x) at ({(\x-1)},1) {$X_{\x}$};
    \else
        \node[circle, draw] (X\x) at ({(\x-2)},-.5) {$X_{\x}$};
    \fi
}

\foreach \x in {5,6}
    \draw[->, line width=1.5pt] (X\x) edge [loop left] ();

\draw[<->, line width=1.5pt] (X5) to  (X6);
\draw[<->, line width=1.5pt] (X7) to  (X8);
\draw[<->, line width=1.5pt] (X9) to  (X10);
\end{tikzpicture}
    \caption{The graph associated with state $\ell=2$ has self-loops on nodes $X_5$ and $X_6$, and bidirectional edge between nodes $X_5$ and $X_6$, $X_7$ and $X_8$, $X_9$ and $X_{10}$.}
    \label{fig:dgp1:2}
\end{figure}
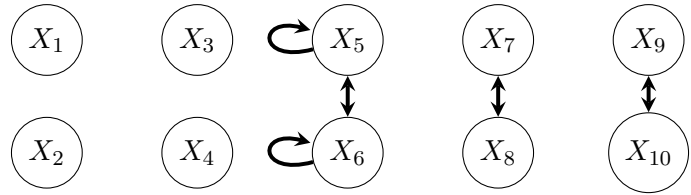

\begin{figure}
    \centering
    \includegraphics[width=1.\textwidth]{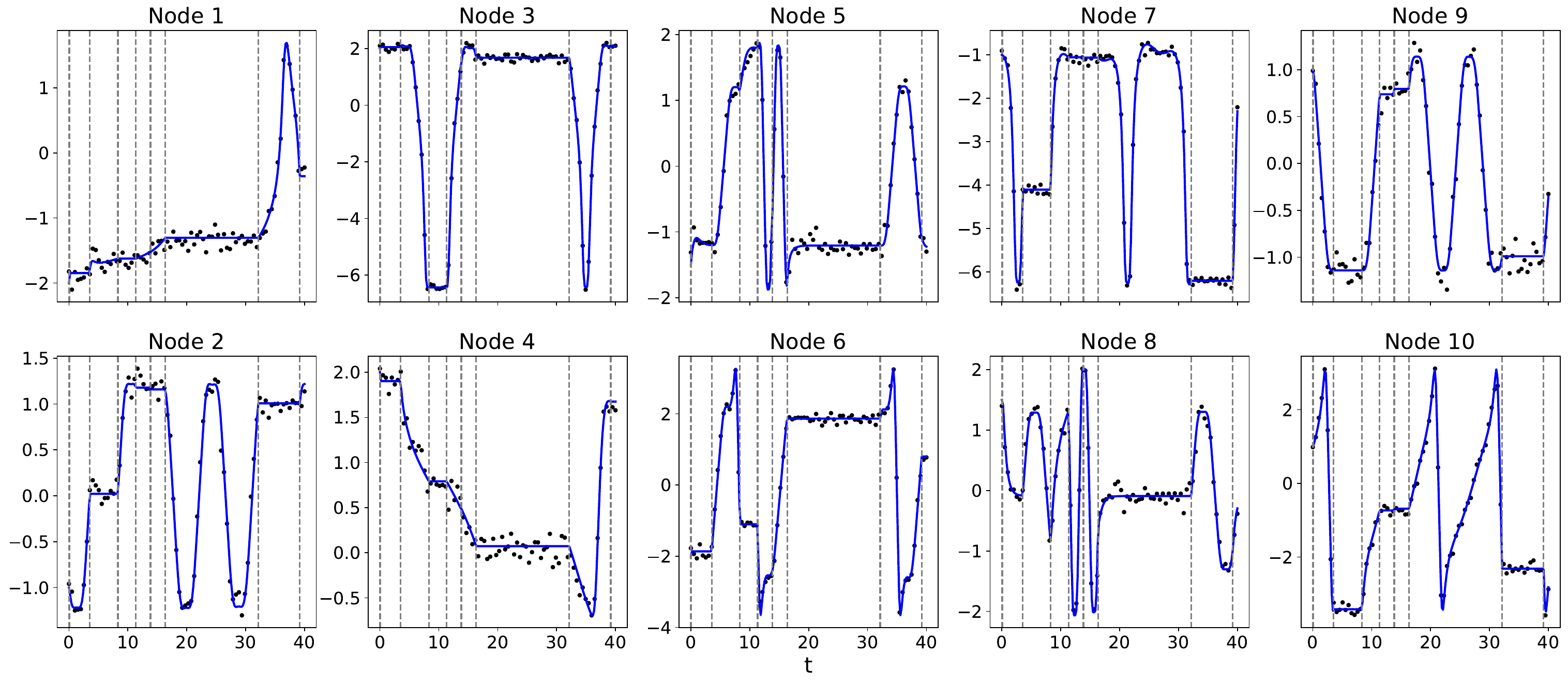}
    \caption{The simulated trajectories associated with the data generation process 1 defined in~\eqref{eq:dgp1:1}--\eqref{eq:dgp1:2} and parameters defined in Section~\ref{ssec:details:dgp1}. The blue solid lines denote $X_i(t)$ for $i=1,\ldots, 10$, the black dots denote the discrete noisy observations of $Y_{n,i}$ for $i=1,\ldots,10$ and $n=1,\ldots, N$, and the grey dashed lines denote the time that the hidden Markov chain $Z(t)$ switches states. }
    \label{fig:dgp1_xy}
\end{figure}

\begin{figure}
    \centering
        \begin{tikzpicture}[>=stealth, node distance=10cm]
        \def\nodesize{1cm}
        \foreach \c in {1,6}{
        \node[minimum size=\nodesize,circle, draw] (center) at (\c,0) {$X_{\c}$};
        \foreach \i in {1,...,4} {
            \pgfmathparse{int(\c + \i)}
            \edef\sum{\pgfmathresult}
            \node[minimum size=\nodesize,circle, draw] (X\i) at ({1.8*cos(\i*90)+\c},{1.8*sin(\i*90)}) {$X_{\sum}$};
        }

        \foreach \i in {1,...,4} {
            \draw[<->, line width=1.5pt] (center) -- (X\i);
        }
        }

        \foreach \c in {11,16}{
        \node[minimum size=\nodesize,circle, draw] (center) at (\c-10,-5.5) {$X_{\c}$};
        \foreach \i in {1,...,4} {
            \pgfmathparse{int(\c + \i)}
            \edef\sum{\pgfmathresult}
            \node[minimum size=\nodesize,circle, draw] (X\sum) at ({1.8*cos(\i*90)+\c-10},{1.8*sin(\i*90)-5.5}) {$X_{\sum}$};
        }

        \foreach \i in {1,...,4} {
            \pgfmathparse{int(\c + \i)}
            \edef\sum{\pgfmathresult}
            \draw[<->, line width=1.5pt] (center) -- (X\sum);
        }
        }

    \end{tikzpicture}
    \caption{The graph of state $1$ of the data generation process $2$.}
    \label{fig:dgp2:1}
\end{figure}

\begin{figure}
    \centering
        \begin{tikzpicture}[>=stealth, node distance=5cm]
        \def\nodesize{1cm}

        \foreach \x in {1,...,10} {
            \node[minimum size=\nodesize,circle, draw] (X\x) at (\x*1.5,1) {$X_{\x}$};
        }

        \foreach \x [count=\i from 11] in {11,...,20} {
            \node[minimum size=\nodesize,circle, draw] (X\x) at (21*1.5-\i*1.5,-0.5) {$X_{\x}$};
        }
    
        \foreach \x [evaluate=\x as \nextx using int(\x+1)] in {1,...,19} {
            \draw[<->, line width=1.5pt] (X\x) -- (X\nextx);
        }

        \draw[<->, line width=1.5pt] (X1) to (X20);
        
    \end{tikzpicture}
    \caption{The graph of state $2$ of the data generation process $2$.}
    \label{fig:dgp2:2}
\end{figure}

\begin{figure}
    \centering
    \includegraphics[width=1.\textwidth]{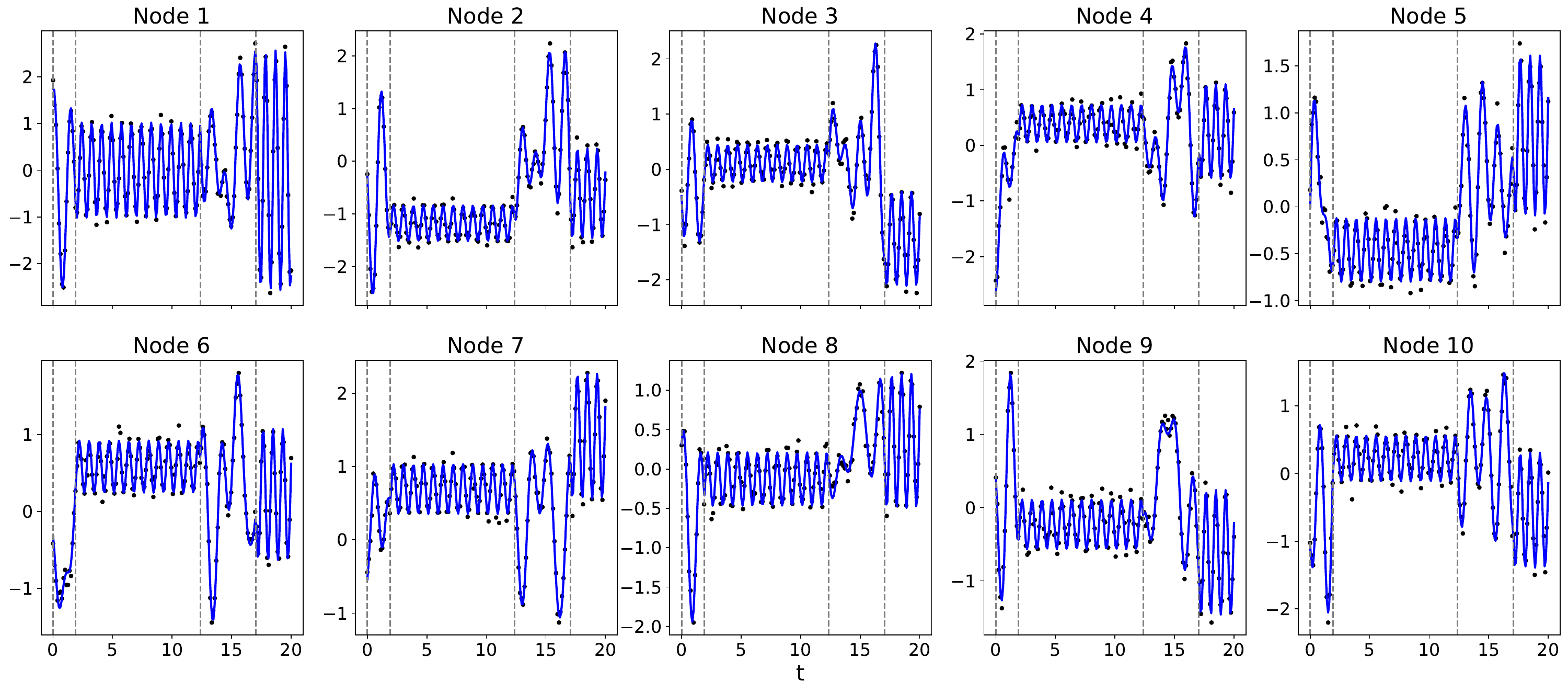}
    \caption{The simulated trajectories associated with the data generation process 2 defined in~\eqref{eq:dgp2:1}--\eqref{eq:dgp2:2} and parameters defined in Section~\ref{ssec:details:dgp1}. We plot the first half of the session $T=[0,20]$ of the entire session $T=[0,40]$ and the first $10$ nodes of the entire set of nodes $p=20$. 
    The blue solid lines denote $X_i(t)$, the black dots denote the discrete noisy observations of $Y_{n,i}$ for $i=1,\ldots,10$ and $n=1,\ldots, N/2$, and the grey dashed lines denote the time that the hidden Markov chain $Z(t)$ switches states. }
    \label{fig:dgp2_xy}
\end{figure}

\FloatBarrier
\newpage

\putbib[bu2]
\end{bibunit}
\end{document}